\newtheorem{theorem}{Theorem}[section] 
\newtheorem{lemma}[theorem]{Lemma} 
\newtheorem{proposition}[theorem]{Proposition}
\newtheorem{fact}{Fact}
\newtheorem{corollary}[theorem]{Corollary}
\newtheorem*{problem}{Problem}
\newtheorem*{principle}{RG principle}
\theoremstyle{definition}
\newtheorem{definition}{Definition}[section]
\newtheorem{example}{Example}[section]
\theoremstyle{remark}
\newtheorem{remark}{Remark}
\newcommand{\vev}[1]{\left< #1 \right>} 
\newcommand{\ket}[1]{\left| #1 \right>} 
\newcommand{\bra}[1]{\left< #1 \right|} 
\newcommand{\ca}{\mathcal A}
\newcommand{\cp}{\mathcal{P}}
\newcommand{\D}{\mathcal D}
\newcommand{\A}{\mathcal A}
\newcommand{\Z}{\mathbb Z}
\newcommand{\Q}{\mathbb Q}
\newcommand{\C}{\mathbb C}
\newcommand{\R}{\mathbb R}
\def\CircleArrowleft{\ensuremath{%
  \reflectbox{\rotatebox[origin=c]{270}{$\circlearrowleft$}}}}
\def\CircleArrowright{\ensuremath{%
  \reflectbox{\rotatebox[origin=c]{90}{$\circlearrowright$}}}}
\renewenvironment{abstract}
 {\small
  \begin{center}
  \bfseries \abstractname\vspace{-.5em}\vspace{0pt}
  \end{center}
  \list{}{
    \setlength{\leftmargin}{.0cm}%
    \setlength{\rightmargin}{\leftmargin}%
  }%
  \item\relax}
 {\endlist}
\title{\Huge Categorical Webs and $S$-duality\\ in 4d $\mathcal{N}=2$ QFT}
\author{\\ \\ Matteo Caorsi\footnote{e-mail: {\tt matteocao@gmail.com}} \ and Sergio Cecotti\footnote{e-mail: {\tt cecotti@sissa.it}}\\
\centerline{SISSA, via Bonomea 265, I-34100 Trieste, ITALY}}
\date{}
\begin{document}
\begin{titlepage}
\clearpage\maketitle
\thispagestyle{empty}
\begin{abstract}
We  review the categorical approach to the BPS sector of a 4d $\mathcal{N}=2$ QFT, clarifying many tricky issues and presenting a few  novel results.\\
 To a given $\mathcal{N}=2$ QFT one associates  several triangle categories: they describe various kinds of BPS objects from different physical viewpoints (e.g.\! IR versus UV). These diverse categories are related by a web of exact functors expressing physical relations between the various objects/pictures. A basic theme of this review is the emphasis on the full web of categories, rather than on  what we can learn from a single description. A second general theme is viewing the cluster category as a sort of `categorification' of 't Hooft's theory of quantum phases for a 4d non-Abelian gauge theory.\\  
The $S$-duality group is best described as the auto-equivalences of the full web of categories. This viewpoint leads to a combinatorial algorithm to search for $S$-dualities of the  given $\mathcal{N}=2$ theory. If the ranks of the gauge and flavor groups are not too big, the algorithm may be effectively run on a laptop. This viewpoint also leads to a clearer view of $3d$ mirror symmetry.\\
For class $\mathcal{S}$ theories, all the relevant triangle categories may also be constructed in terms of geometric objects on the Gaiotto curve, and we present the dictionary between triangle categories and  the WKB approach of GMN. We also review how the VEV's of UV line operators are related to cluster characters.
\end{abstract}

\vfill
July, 2017
\newpage
\end{titlepage}

\tableofcontents
\newpage
\section{Introduction and Overview}

The BPS objects of a supersymmetric theory are naturally described in terms of ($\C$-linear) triangle categories \cite{aspinwall2009dirichlet} and their stability conditions \cite{bridgeland2007stability}.
The BPS sector of a given physical theory $\mathcal{T}$ is described by a plurality of different triangle categories $\mathfrak{T}_{(a)}$ depending on:\footnote{\ The index $a$ take values in some index set $I$.}
\begin{itemize}
\item[a)] the class of BPS objects (particles, branes, local or non-local operators,...)\! we are interested in;
\item[b)] the physical picture (fundamental UV theory, IR effective theory,...);
\item[c)] the particular engineering of $\mathcal{T}$ in QFT/string/M-/F-theory. 
\end{itemize} 
The diverse BPS categories $\mathfrak{T}_{(a)}$ are related by a web of exact functors, $\mathfrak{T}_{(a)}\xrightarrow{\mathsf{c}_{(a,b)}} \mathfrak{T}_{(b)}$, which express physical consistency conditions between the different physical pictures and BPS objects. The simplest instance is given by two different engineerings of the same theory: the duality $\mathcal{T}\leftrightarrow \mathcal{T}^\prime$ induces equivalences of triangle categories
$\mathfrak{T}_{(a)}\xrightarrow{\mathsf{d}_{(a)}} \mathfrak{T}^\prime_{(a)}$ for all objects and all physical descriptions $a\in I$. An example  is mirror symmetry between 
IIA and IIB string theories compactified on a pair of mirror Calabi-Yau 3-folds, $\mathcal{M}$, $\mathcal{M}^\vee$ which induces on the BPS branes \emph{homological mirror symmetry}, that is, the equivalences of triangle categories \cite{kapustin2008homological}
$$D^b(\mathsf{Coh}\,\mathcal{M})\cong D^b(\mathsf{Fuk}\,\mathcal{M}^\vee),\qquad
D^b(\mathsf{Coh}\,\mathcal{M}^\vee)\cong D^b(\mathsf{Fuk}\,\mathcal{M}).$$
In the same way, the functor relating the IR and UV descriptions of the BPS sector may be seen as \textit{homological Renormalization Group,} while the functor relating particles and branes may be seen as describing properties of the combined system. 

A duality induces a family of equivalences $\mathsf{d}_{(a)}$, one for each category $\mathfrak{T}_{(a)}$, and these equivalences should be compatible with the functors $\mathsf{c}_{(a,b)}$, that is, they should give an equivalence of the full web of categories and functors. Our philosophy is that the study of equivalences of the full functorial web is a very efficient tool  to detect  dualities.
We shall focus on the case of $4d$ $\mathcal{N}=2$ QFTs, but the strategy has general validity. We are particularly concerned with $S$-dualities, i.e.\! auto-dualities of the theory $\mathcal{T}$ which act non trivially on the UV degrees of freedom.

Building on previous work by several people\footnote{\ References to previous work are provided in the appropriate sections of the paper.}, we present our proposal for the triangle categories describing different BPS objects,  both from the UV and IR points of view,
and study the functors relating them. This leads, in particular, to a categorical understanding of the $S$-duality
 groups and of the \textsc{vev} of UV line operators. The categorical language unifies in a systematic way 
all aspects of the BPS physics, and leads to new powerful techniques to compute \textsc{susy} protected quantities in $\mathcal{N}=2$ $4d$
theories. We check in many explicit examples that the results obtained from this more abstract viewpoint reproduce the ones obtained by more traditional techniques.
However the categorical approach may also be used to tackle problems which look too hard for other techniques. 

\paragraph{Main triangle categories and functors.}
The basic example of a web of functors relating distinct BPS categories for $4d$ $\mathcal{N}=4$ QFT is the 
 following exact sequence of triangle categories (Theorem 5.6 of \cite{keller2011cluster}):
\begin{equation}\label{exactse}
0 \to D^b\Gamma \xrightarrow{\ \mathsf{s}\ } \mathfrak{Per}\,\Gamma \xrightarrow{\ \mathsf{r}\ } \mathcal{C}(\Gamma)\to 0,
\end{equation}
where (see \S.\,\ref{sec:math} for precise definitions and details):
\begin{itemize}
\item $\Gamma$ is the \emph{Ginzburg algebra} \cite{ginzburg2006calabi} of a  quiver with superpotential \cite{aspinwall2006superpotentials} 
associated to the $\mathcal{N}=2$ theory at hand;
\item 
$D^b\Gamma$ is the \emph{bounded derived} category of $\Gamma$. $D^b\Gamma$ may be seen as a ``universal envelope'' of 
the categories describing,
in the deep IR, the BPS
particle spectrum in the several BPS chambers. To discuss states in the IR  we need to fix a Coulomb vacuum $u$; this datum
defines a stability condition $Z_u$ on $D^b\Gamma$. The category which describes the BPS particles in the $u$ vacuum is 
the subcategory of $D^b\Gamma$ consisting of objects which are semi-stable for $Z_u$. The BPS particles
 arise from (the quantization of the moduli
 of) the simple objects in this subcategory. Its Grothendieck group $K_0(D^b\Gamma)$ is identified with 
 the Abelian group of the IR additive 
conserved quantum numbers (electric, magnetic, and flavor charges) which take value in the lattice $\Lambda\cong K_0(D^b\Gamma)$. $D^b\Gamma$ 
is a \textit{3-Calabi-Yau} (3-CY)\footnote{\ See \S.\,\ref{sec:math} for precise definitions. \textit{Informally,} a triangle category is $k$-CY iff it behaves as the derived category of coherent sheaves, $D^b\,\mathsf{coh}\,\mathcal{M}_k$, on a Calabi-Yau $k$-fold $\mathcal{M}_k$. } triangle category,
which implies that 
its Euler form 
$$
\chi(X,Y)\equiv \sum_{k\in \Z} (-1)^k\; \dim \mathrm{Hom}_{D^b\Gamma}(X,Y[k]),\qquad X,\;Y\in D^b\Gamma
$$
is a \emph{skew-symmetric} form $\Lambda\times \Lambda\to \Z$ whose physical meaning is the Dirac electro-magnetic pairing between the
 charges $[X]$, $[Y]\in \Lambda$ carried by the states associated to the stable objects $X$, $Y\in D^b\Gamma$;
\item $\mathcal{C}(\Gamma)$ is the \emph{cluster category} of $\Gamma$  which describes\footnote{\ This is slightly imprecise. Properly 
speaking, the line operators correspond to the generic objects on the
 irreducible components of
the moduli spaces of isoclasses of objects of $\mathcal{C}(\Gamma)$.} the BPS UV line operators. This identification 
is deeply related to the Kontsevitch-Soibelmann wall-crossing formula \cite{kontsevich2014wall}, see \cite{cecotti2010r,keller2011cluster}. The Grothendieck group $K_0(\mathcal{C}(\Gamma))$ 
then corresponds to the, additive as well as multiplicative,
 UV quantum numbers of the line operators. These quantum numbers, in particular the \textit{multiplicative} ones follow from the analysis by 't Hooft of the quantum phases of a $4d$ non-Abelian gauge theory being determined by the \emph{topology of the gauge group}
\cite{hooft1978phase,hooft1979property,hooft1980confinement,hooft1980topological}. 't Hooft arguments are briefly reviewed in \S.\,\ref{sec:UVcharge}: the UV line quantum numbers take value in a finitely generated Abelian group whose torsion part consists of two copies of the fundamental group of the gauge group while its free part describes flavor. 
 The fact that $K_0(\mathcal{C}(\Gamma))$ is automatically equal to the correct UV group, as predicted  by 't Hooft (detecting the precise topology of the gauge group!), yields convincing evidence for the proposed identification, see \S.\,\ref{kkkas12}.  $\mathcal{C}(\Gamma)$ is a 2-CY category, and hence its Euler form induces
a \emph{symmetric} form on the additive UV charges, which roughly speaking has the form
$$
 K_0(\mathcal{C}(\Gamma))\big/K_0(\mathcal{C}(\Gamma))_\text{torsion}\; \bigotimes\, 
 K_0(\mathcal{C}(\Gamma))\big/K_0(\mathcal{C}(\Gamma))_\text{torsion}\to
\Z,
$$
but whose precise definition is slightly more involved\footnote{\ The subtleties in the definition are immaterial when the QFT is UV superconformal (as contrasted to  asymptotically-free) and all chiral operators have integral dimensions.} (see \S.\,\ref{rrem}). We call this pairing the \emph{Tits form} of $\mathcal{C}(\Gamma)$. Its physical meaning is simple: while in the IR the masses break (generically) the flavor group to its maximal torus $U(1)^f$, in the deep UV the masses become irrelevant and the flavor group gets  enhanced to its maximal non-Abelian form $F$. Then the UV category should see the full $F$ and not just its Cartan torus. The datum of the group $F$ may be given as its weight lattice together with its Tits form; the cluster Tits form is equal to the Tits form of the non-Abelian flavor group $F$, and we may read $F$ directly from the cluster category. In fact the cluster category also detects the global topology of the flavor group, distinguishing (say) $SO(N)$ and $\mathrm{Spin}(N)$ flavor groups.\footnote{\ In facts, the cluster Grothendieck group $K_0(\mathcal{C}(\Gamma))$ should contain even more detailed informations on the flavor. For instance, in $SU(2)$ gauge theory with $N_f$ flavors the states of even magnetic charge are in tensor representations of the flavor $SO(2N_f)$ while states of odd magnetic charge are in spinorial representation of $\mathrm{Spin}(8)$; $K_0(\mathcal{C}(\Gamma))$ should know the correlation between the parity of the magnetic charge and $SO(2N_f)$ vs.\! $\mathrm{Spin}(2N_f)$ flavor symmetries (and it does). } 
For objects of $\mathcal{C}(\Gamma)$ there is also a weaker notion of `charge', taking value in the lattice $\Lambda$ of electric/magnetic/flavor charges, namely the \emph{index}, which is the quantity referred to as `charge' in many treatments.
Since $\mathcal{C}(\Gamma)$ yields an UV description of the theory, there must exist relations between its mathematical properties and the physical conditions assuring UV completeness of the associated QFT. We shall point of some of them in \S.\,\ref{lllaz231};
\item 
$\mathfrak{Per}\,\Gamma$ is the \emph{perfect derived category} of $\Gamma$. From eqn.\eqref{exactse} we see that, morally speaking,
the triangle category $\mathfrak{Per}\,\Gamma$ describes all 
possible BPS IR object generated by the insertion of UV line operators, dressed (screened)
by particles, in all possible vacua. This rough idea is basically correct. Perhaps the most convincing argument comes from consideration of  class $\mathcal{S}$ theories,
where we have a geometric construction of the perfect category $\mathfrak{Per}\,\Gamma$ \cite{qiu2016decorated}
as well as a detailed  understanding of the BPS physics \cite{gaiotto2013wall,gaiotto2013framed}.
In agreement with this identification, the Grothendieck group $K_0(\mathfrak{Per}\,\Gamma)$ is isomorphic to the IR group $\Lambda$.
$\mathfrak{Per}\,\Gamma$ is not CY, instead the Euler form defines a \emph{perfect} pairing
$$
K_0(D^b\Gamma)\bigotimes K_0(\mathfrak{Per}\,\Gamma)\to \Z;
$$
\item the exact functor $\mathsf{r}$ in eqn.\eqref{exactse} may be seen as the homological (inverse) RG flow.
\end{itemize} 
\medskip

\paragraph{Dualities.} The (self)-dualities of an $\mathcal{N}=2$ theory should relate BPS objects to BPS objects of the same kind, and
hence should be (triangle) auto-equivalences of the above categories which are consistent with the functors relating them (e.g.\! $\mathsf{s}$, $\mathsf{r}$ in eqn.\eqref{exactse}). We may
describe the physical situation from different viewpoints.
In the IR picture one would have the putative `duality' group
$\mathrm{Aut}\,D^b\Gamma$; however a subgroup acts trivially on all observables \cite{caorsi2016homological},
and the physical IR `duality' group is\footnote{\ For the precise definition of $\mathrm{Auteq}\,D^b\Gamma$, see \S.\,5. $\mathrm{Aut}(Q)$ is the group of automorphisms of the quiver $Q$ modulo the subgroup which fixes all nodes.}
\begin{equation}\label{kkkkzz10c}
\mathcal{S}_\text{IR}\equiv \mathrm{Aut}\,D^b\Gamma\Big/\big\{\text{physically trivial autoequivalences}\big\}
= \mathrm{Auteq}\,D^b\Gamma \rtimes \mathrm{Aut}(Q).  
\end{equation}
In the UV (that is, at the operator level) the natural candidate `duality' group is
$$\mathcal{S}_\text{UV}\equiv \mathrm{Aut}\,\mathcal{C}(\Gamma)\Big/\big\{\text{physically trivial}\big\}$$

From the explicit description of $\mathrm{Aut}\, D^b\Gamma$ (see \S.5) we learn that $\mathcal{S}_\text{IR}$ extends to a group
of autoequivalences of $\mathfrak{Per}\,\Gamma$ which preserve
$D^b\Gamma$ (by definition). Hence the exact functor $\mathsf{r}\colon \mathfrak{Per}\,\Gamma\to \mathcal{C}(\Gamma)$ 
in eqn.\eqref{exactse} induces a group homomorphism
$$
\mathcal{S}_\text{IR} \xrightarrow{\;r\;} \mathcal{S}_\text{UV},
$$
whose image is
\begin{equation}\label{sduality}
 \mathbb{S}= r\Big(\mathrm{Auteq}\,D^b\Gamma\Big)\rtimes \mathrm{Aut}\,Q.
\end{equation}
$\mathbb{S}$ is a group of auto-equivalences whose action is defined at the operator level, that is, independently of a choice of vacuum.
They are equivalences of the full web of BPS categories in eqn.\eqref{exactse}. Thus $\mathbb{S}$ is the natural candidate for the role of
the (extended) \emph{$S$-duality group} of our $\mathcal{N}=2$ model. Indeed, in the examples where we know the  $S$-duality group from more 
conventional considerations,
it coincides with our categorical group $\mathbb{S}$. In this survey we take equation \eqref{sduality} as the definition of the 
$S$-duality group.

Clearly, the essential part of $\mathbb{S}$ is the group $r(\mathrm{Auteq}\,D^b\Gamma)$. It turns out that
precisely this group is an object of central interest in the mathematical literature which provides an 
explicit combinatorial
description of it \cite{fock2009cluster}. This combinatorial description is the basis of an algorithm for computer search of $S$-dualities, see \S.\,\ref{duacccom}. If our $\mathcal{N}=2$ theory is not too complicated (that is, the ranks of the gauge and flavor group are not too big) the algorithm may be effectively implemented on a laptop, see \S.\,\ref{duacccom} for explicit examples. 

For class $\mathcal{S}$ theories, the above combinatorial description of $S$-duality has a nice geometric
intepretation as the (tagged) mapping class group of the Gaiotto surface, in agreement with the predictions of \cite{gaiotto2012n}(see also \cite{assem2012cluster}), see \S.\,5.2. More generally,
for class $\mathcal{S}$ theories all categorical constructions have a simple geometric realization which makes manifest their physical meaning. 

The IR group $\mathcal{S}_\text{IR}$ may be understood in terms of duality walls, see \S.\ref{sec:dualitywalls}. 

\paragraph{Cluster characters and vev of line operators.} The datum of a Coulomb vacuum $u$ defines a map
$$
\langle\, -\, \rangle_u \colon \mathrm{GenOb}(\mathcal{C}\big(\Gamma)\big) \to \C, 
$$
given by taking the \textsc{vev} in the vacuum $u$ of the UV line operator associated to a given \textit{generic} object of the cluster category
$\mathcal{C}(\Gamma)$.
Physically, the renormalization group implies that the
 map $\langle\,-\,\rangle_u$ factors through the (Laurent) ring $\Z[\mathcal{L}]$ of line operators 
in the effective (Abelian) IR theory. The associated map
$$
\mathrm{GenOb}(\mathcal{C}\big(\Gamma)\big) \to \Z[\mathcal{L}]
$$ 
 is called a \emph{cluster character} and is well understood in the mathematical literature. Thus the theory of cluster character solves 
(in principle) the problem of computing the \textsc{vev} of arbitrary BPS line operators (see \S.\ref{sec:linop}).

\paragraph{Organization of the paper.} The rest of this paper is organized as follows. In section \ref{sec:math} we review the mathematics of derived DG categories,
cluster categories, and related topics in order to provide the reader with a
language which allows to unify and generalize several previous analysis of the BPS sector of a $\mathcal{N}=2$ $4d$ theory.
In section \ref{sec:physprel} we recall some general physical properties that our categories should enjoy in order to be valid descriptions of the various BPS objects. Here we stress the various notions of charge, with particular reference to the 't Hooft charges in the UV description.
Section \ref{sec:physinterpr} is the core of the paper, where we present our physical interpretation of the categories described mathematically in section \ref{sec:math} and show that they satisfy all physical requirements listed in section \ref{sec:physprel}. In section \ref{sec:sdualmap} we discuss $S$-duality from the point of view of triangle categories, present first examples and describe the relation to $3d$ mirror symmetry. In section \ref{duacccom} we introduce our combinatorial algorithm to find $S$-dualities and give a number of examples. In section \ref{sec:surfaces} we consider class $\mathcal{S}[A_1]$ theories and describe all the relevant triangle categories in geometric terms \emph{\`a la} Gaiotto. In particular, this shows that our categorical definition of $S$-duality is indeed equivalent to more conventional physical definitions.
In section \ref{sec:linop} we discuss \textsc{vev}'s of line operators from the point of view of cluster characters. Explicit computer codes are presented in the appendices.

\section{Mathematical background}
\label{sec:math}
In this section we recall the basic definitions of DG categories \cite{keller2006differential}, 
cluster categories \cite{amiot2011generalized}, stability conditions for Abelian 
and triangulated categories \cite{bridgeland2007stability} and show some concrete examples. 
We then specialize these definitions to the Ginzburg algebra \cite{ginzburg2006calabi} $\Gamma$ associated
 to a BPS quiver with (super)potential $(Q,W)$ \cite{cecotti2012quiver}. 
 
 Some readers may prefer to skip this section  and refer back to it when looking for definitions and/or details on some mathematical tool used in the main body of this survey.
 
\subsection{Differential graded categories}
\label{sec:basicdef}
The main reference for this section is \cite{keller2006differential}.
Let $k$ be a commutative ring,\footnote{\ In all our physical applications $k$ will be the (algebraically closed) field of complex numbers $\mathbb{C}$.} for example a field or the ring of
integers $\Z$. We will write $\otimes$ for the tensor product over $k$. 
\begin{definition}A \textit{$k$-algebra}
is a $k$-module $A$ endowed with a $k$-linear associative multiplication $A \otimes_k A \to A$
admitting a two-sided unit $1 \in A$. 
\end{definition}
For example, a $\Z$-algebra is just a (possibly non-commutative)
ring. A $k$-category $\A$ is a ``$k$-algebra with several objects''. Thus, it is the datum of a class of objects obj($\A$), of a $k$-module
$\A(X, Y )$ for all objects $X, Y$ of $\A$, and of $k$-linear associative composition maps
$$\A(Y, Z) \otimes \A(X, Y ) \to \A(X, Z),\qquad (f, g) \mapsto fg$$
admitting units $1_X \in \A(X, X)$. For example, we can interpret $k$-algebras as $k$-categories
with only one object. The category $\mathsf{mod}\,A$ of finitely generated right $A$-modules over a $k$-algebra $A$ is
an example of a $k$-category with many objects. It is also an example of a $k$-linear
category (i.e. a $k$-category which admits all finite direct sums).
\begin{definition}A \textit{graded $k$-module} is a $k$-module $V$ together with a decomposition indexed by
the positive and the negative integers:
$$V =\bigoplus_{p\in \Z}V^p.$$
The shifted module $V [1]$ is defined by $V [1]^p = V^{ p+1}, p \in \Z$. A morphism $f : V \to
V $ of graded $k$-modules of degree $n$ is a $k$-linear morphism such that $f (V^p) \subset V^{p+n}$
for all $p \in \Z$. 
\end{definition}
\begin{definition}
The tensor product $V \otimes W$ of two graded $k$-modules $V$ and $W$ is the
graded $k$-module with components
$$(V \otimes W )^n =\bigoplus_{p+q=n}V^p \otimes W^q , \ n \in \Z.$$
The tensor product $f \otimes g$ of two maps $f : V \to V $ and $g : W \to W$ of graded
$k$-modules is defined using the Koszul sign rule: we have
$$(f \otimes g)(v \otimes w) = (-1)^{pq}\,f (v) \otimes g(w)$$
if $g$ is of degree $p$ and $v$ belongs to $V^q$.
\end{definition}
\begin{definition}
A \textit{graded $k$-algebra} is a graded $k$-module $A$
endowed with a multiplication morphism $A \otimes A \to A$ which is graded of degree 0,
associative and admits a unit $1 \in A^0$.
\end{definition}
\noindent An ``ordinary'' $k$-algebra may be identified with a graded $k$-algebra concentrated in degree 0. 
\begin{definition}
A \textit{differential graded} (=DG) \textit{$k$-module} is a $\Z$-graded $k$-module $V$ endowed with a
differential $d_V$, i.e.\! a map $d_V\colon V \to V$ of degree 1 such that $d^2_V = 0$. Equivalently, $V$
is a complex of $k$-modules. The shifted DG module $V [1]$ is the shifted graded module
endowed with the differential $-d_V$ . 
\end{definition}
The tensor product of two DG $k$-modules is the
graded module $V \otimes W$ endowed with the differential $d_V \otimes 1_W + 1_V \otimes d_W $.
\begin{definition}
A \textit{differential graded $k$-algebra $A$} is a DG $k$-module endowed with a multiplication morphism $A \otimes A \to A$ graded 
of degree 0 and associative. Moreover, the differential satisfies the graded Leibnitz rule:
$$d(ab)=(da) b+(-1)^{\deg(a)}\,a(db), \quad \forall\, a,b\in A \ \text{and } a \text{ homogeneous}.$$
\end{definition}
The cohomology of a DG algebra is defined as $H^*(A):=\mathrm{ker}\, d/\mathrm{im}\,d$.
Let $\mathsf{mod}\,A$ denote the category of finitely generated  DG modules over the DG algebra $A$.
\begin{definition}
The \textit{derived category} $D(A):=D(\mathsf{mod}\,A)$ is the localization
of the category $\mathsf{mod}\,A$ with respect to the class of quasi-isomorphisms. 
\end{definition}
Thus, the objects of $D(A)$ are the DG modules and its morphisms are obtained from morphisms of DG
modules by formally inverting all quasi-isomorphisms. The bounded derived category of $\mathsf{mod}\,A$, denoted $D^bA$, 
is the triangulated subcategory of $D(A)$ whose objects are quasi-isomorphic to objects with bounded cohomology.
\begin{definition}
The \textit{perfect derived category} of a DG algebra $A$, $\mathfrak{Per}\,A$, is the smallest full triangulated subcategory of $D(A)$ containing
$A$ which is stable under taking shifts, extensions and direct summands.
\end{definition}
\subsection{Quivers and mutations}
\label{sec:quiv}
In this section we follow \cite{keller2011derived}. Let $k$ be an algebraically closed field. 
\begin{definition}
A (finite) \emph{quiver} $Q$ is a (finite) oriented graph (possibly with loops and 2-cycles). We denote its set of vertices by $Q_0$ and its
set of arrows by $Q_1$. For an arrow $a$ of $Q$, let $s(a)$ denote its source node and $t(a)$ denote
its target node. The lazy path corresponding to a vertex $i$ will be denoted by $e_i$.
\end{definition}
\begin{definition}
The \emph{path algebra} $k\hat Q$ is the associative unital algebra whose 
elements are finite compositions of arrows of $Q$, where the composition of $a,b \in Q_1$ is denoted $ab$ and 
it is nonzero iff $s(b)=t(a).$ The complete path algebra $kQ$ is the completion of the path algebra with 
respect to the ideal $I$ generated by the arrows of $Q$. 
\end{definition}
Let $I$ be the ideal of $kQ$ generated by
the arrows of $Q$. A potential $W$ on $Q$ is an element of the closure of the space
generated by all non trivial cyclic paths of $Q$. We say two potentials are \emph{cyclically
equivalent} if their difference is in the closure of the space generated by all differences
$a_1 ... a_s - a_2 ... a_sa_1$, where $a_1 ... a_s$ is a cycle.
\begin{definition}
Let $u,p$ and $v$ be nontrivial paths of $Q$ such that $c=upv$ is a nontrivial cycle. For the path $p$ of $Q$, we define $$\partial_p : kQ \to kQ$$
as the unique continuous linear
map which takes a cycle $c$ to the sum $\sum_{
c=upv} vu$ taken over all decompositions of the
cycle $c$ (where $u$ and $v$ are possibly lazy paths).
\end{definition}
Obviously two cyclically equivalent
potentials have the same image under $\partial_p$. If $p = a$ is an arrow of $Q$, we call $\partial_a$ the
cyclic derivative with respect to $a$.
Let $W$ be a potential on $Q$ such that $W$ is in $I^2$ and no two cyclically equivalent
cyclic paths appear in the decomposition of $W$. Then the pair $(Q, W)$
is called a \textit{quiver with potential.} 
\begin{definition}
Two quivers with potential $(Q, W)$ and $(Q', W')$
are \textit{right-equivalent} if $Q$ and $Q'$ have the same set of vertices and there exists an
algebra isomorphism $\phi : kQ \to kQ'$ whose restriction on vertices is the identity
map and $\phi(W) $ and $W'$
are cyclically equivalent. Such an isomorphism $\phi$ is called a
right-equivalence.
\end{definition}
\begin{definition}
The \emph{Jacobian algebra} of a quiver with potential $(Q, W)$, denoted by $J(Q, W)$, is
the quotient of the complete path algebra $kQ$ by the closure of the ideal generated by
$\partial_a W$, where $a$ runs over all arrows of Q:
$$J(Q,W):=kQ/\vev{\partial_aW}.$$
We say that the quiver with potential $(Q,W)$ is \emph{Jacobi-finite} if the Jacobian algebra 
$J(Q,W)$ is finite-dimensional over $k$.
\end{definition}
It is clear that two right-equivalent quivers
with potential have isomorphic Jacobian algebras. A quiver with potential is called
\textit{trivial} if its potential is a linear combination of cycles of length 2 and its Jacobian
algebra is the product of copies of the base field $k$.

\subsubsection{Quiver mutations}
Let $(Q, W)$ be a quiver with potential. Let $i\in Q_0$ a vertex. Assume the following conditions:
\begin{itemize}
\item the quiver $Q $ has no loops;
\item the quiver $Q$ does not have 2-cycles at $i$;
\end{itemize}
We define a new quiver with potential $\tilde \mu_i(Q, W) = (Q', W')$ as
follows. The new quiver $Q'$ is obtained from $Q$ by
\begin{enumerate}
\item  For each arrow $\beta$ with target $i$ and each arrow $\alpha$ with source $i$, add a new
arrow $[\alpha \beta]$ from the source of $\beta$ to the target of $\alpha$.
\item Replace each arrow $\alpha$ with source or target $i$ with an arrow $\alpha^*$ in the opposite
direction.
\end{enumerate}
If we represent the quiver with its exchange matrix $B_{ij}$, i.e.
 the matrix such that 
 \begin{equation}\label{exchanger}
 B_{ij}=\#\{\text{ arrows from } i \text{ to } j\}-\#\{\text{ arrows from } j \text{ to } i\}
 \end{equation} 
then the transformation that $B_{ij}$ undergoes is
$$
B'_{ij}=\begin{cases}
- B_{ij}, & i=k \ \text{or }j=k\\
B_{ij}+\max[-B_{ik},0]\,B_{kj}+B_{ik}\max[B_{kj},0] & \text{otherwise.}
\end{cases}
$$
The new potential $W'$ is the sum of two potentials $W'_1$ and $W'_2$. The potential $W'_1$
is obtained from $W$ by replacing each composition $\alpha \beta$ by $[\alpha \beta]$, where $\beta$ is an arrow
with target $i$. The potential $W'_2$ is given by \cite{derksen2008quivers}
$$
W'_2 =\sum_{\alpha,\beta}[\alpha \beta]\beta^*\alpha^*,
$$
the sum ranging over all pairs of arrows $\alpha$ and $\beta$ such that $\beta$ ends at $i$ and $\alpha$
starts at $i$.
\begin{definition}
Let $I$ be the ideal in $kQ$ generated by all arrows. Then, a quiver
with potential is called \emph{reduced} if $\partial_aW$ is contained in $I^2$ for all arrows $a$ of $Q$.
\end{definition}
\noindent One shows that all quivers with potential $(Q,W)$ are right-equivalent to the direct sum of a 
reduced quiver with potential and a trivial one.\footnote{\ In terms of the corresponding SQM system, the process
of replacing the pair $(Q,W)$ by its reduced part $(Q_\text{red.},W_\text{red.})$ corresponds to integrate away
the massive Higgs bifundamentals.}

We can now give the definition of the mutated quiver: we define $\mu_i(Q, W)$ as the reduced part 
of $\tilde \mu_i(Q, W),$ and call $\mu_i$ the mutation at the vertex $i$. An example will clarify all these concepts.
\begin{example}[$A_3$ quiver]
Consider the quiver $A_3$ given by $Q:\ \bullet_1 \overset{\alpha}{\leftarrow} \bullet_2 \overset{\beta}{\to} \bullet_3$ with $W=0$. 
Let us consider the quiver $ \mu_1(Q): \ \bullet_1 \overset{\alpha^*}{\to} \bullet_2 \overset{\beta}{\to} \bullet_3$ with $W=0$. 
Now apply the mutation at vertex $2$: we get
\begin{displaymath}
    \xymatrix{  &  \bullet_2  \ar[dl]_{\alpha} &  \\
            \bullet_1  \ar[rr]^{[\alpha\beta]}&   & \bullet_3 \ar[ul]_{\beta^*}  \\
 & \mu_2(\mu_1(Q))& }
\end{displaymath}
with potential $W=\alpha\beta^*[\alpha\beta]$.
\end{example}
We conclude this subsection with the following 
\begin{theorem}[\cite{keller2011derived}]\hskip-12pt
\begin{enumerate}
\item The right-equivalence class of $\tilde \mu_i(Q, W)$ is
determined by the right-equivalence class of $(Q, W)$.
\item The quiver with potential $\tilde \mu^2_i(Q, W)$ is right-equivalent to the direct sum
of $(Q, W)$ with a trivial quiver with potential.
\item The correspondence $\mu_i$ acts as an involution on the right equivalence
classes of reduced quivers with potential.
\end{enumerate}
\end{theorem}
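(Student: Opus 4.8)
The plan is to follow the Derksen--Weyman--Zelevinsky strategy, whose backbone is the \emph{splitting theorem}: every quiver with potential is right-equivalent to a direct sum $(Q_{\mathrm{red}},W_{\mathrm{red}})\oplus(Q_{\mathrm{triv}},W_{\mathrm{triv}})$ of a reduced and a trivial QP, with the reduced part unique up to right-equivalence. I would establish this first, since all three items rest on it. Write $W=W^{(2)}+W^{(\ge 3)}$ with $W^{(2)}$ the quadratic part; a linear change of arrows puts $W^{(2)}$ into the normal form $\sum_j a_jb_j$ (the future trivial summand) plus the remaining arrows, which do not occur in degree $2$. One then eliminates each pair $a_j,b_j$ from the higher-order terms by iterated substitutions of the form $a_j\mapsto a_j-\partial_{b_j}(\text{higher terms})$; this converges in the $I$-adic topology because every step strictly raises the order of the terms still involving $a_j$ or $b_j$. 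Uniqueness of the reduced part follows by checking that any right-equivalence can be corrected, by a further right-equivalence, so as to respect this decomposition.

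For part (1): given a right-equivalence $\varphi\colon kQ\to kQ'$ with $\varphi(W)$ cyclically equivalent to $W'$, I would write down an explicit right-equivalence $\tilde\varphi$ between $\tilde\mu_i(Q,W)$ and $\tilde\mu_i(Q',W')$. On arrows not incident to $i$ it is just $\varphi$; on the reversed arrows $\alpha^*$ it is assembled from the components of $\varphi$ on the relevant $e_i$-(bi)module pieces; and on the new arrows it sends $[\alpha\beta]$ to the composite dictated by $\varphi$ applied to the $2$-path through $i$. The whole content is the verification that $\tilde\varphi$ carries $W'_1+W'_2$ for $(Q,W)$ to something cyclically equivalent to $W'_1+W'_2$ for $(Q',W')$: the $W_2$ part matches almost formally because of its universal bilinear shape $\sum[\alpha\beta]\beta^*\alpha^*$, and the $W_1$ part matches because $\tilde\varphi$ was manufactured precisely so as to intertwine the ``splicing'' $2\text{-path}\mapsto[\alpha\beta]$ with the action of $\varphi$ on $W$.

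For part (2): I would compute $\tilde\mu_i^2(Q,W)$ by hand. After the first mutation the arrows at $i$ are the reversed arrows $\alpha^*$; the arrows $[\alpha\beta]$ are not incident to $i$, and in the potential $[W]+\sum[\alpha\beta]\beta^*\alpha^*$ the starred arrows occur only in the second sum. Mutating again at $i$ reverses the $\alpha^*$ back to arrows $\alpha^{**}$ (to be identified with the original $\alpha$), creates new arrows $c_{\alpha\beta}$ that are the reverses of the $[\alpha\beta]$, and adds a quadratic term; the result is, up to cyclic equivalence, $[W]+\sum c_{\alpha\beta}[\alpha\beta]+\sum c_{\alpha\beta}\beta^{**}\alpha^{**}$. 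Here every $c_{\alpha\beta}$ sits in a $2$-cycle with $[\alpha\beta]$, so by the splitting theorem the pairs $(c_{\alpha\beta},[\alpha\beta])$ can be integrated out; performing the substitution $[\alpha\beta]\mapsto-\beta^{**}\alpha^{**}$ that this dictates turns $[W]$ back into $W$ (the $2$-paths through $i$ are restored), leaving exactly $(Q,W)$ plus a trivial QP carried by the $c_{\alpha\beta}$. This bookkeeping — in particular keeping track of the signs and of the auxiliary sign-flipping right-equivalence on the arrows at $i$ needed to land on $W$ on the nose — is the main obstacle, and is where I expect most of the work to be.

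Part (3) is then formal. By definition $\mu_i$ is the reduced part of $\tilde\mu_i$; uniqueness in the splitting theorem makes it well defined on right-equivalence classes; part (1) makes it well defined on right-equivalence classes of \emph{reduced} QPs; and part (2), together with the fact that passing to the reduced part kills any trivial direct summand, gives $\mu_i^2=\mathrm{id}$ on right-equivalence classes of reduced quivers with potential. Hence $\mu_i$ is an involution, as claimed.
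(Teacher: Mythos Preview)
The paper does not give its own proof of this theorem; it is simply quoted as a known result from the literature (originally Derksen--Weyman--Zelevinsky, here cited via \cite{keller2011derived}), and the text moves on immediately afterward. So there is no ``paper's proof'' to compare against.

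Your proposal is the standard DWZ argument and is correct in outline: the splitting theorem is indeed the backbone, part (1) is proved by transporting a right-equivalence through the explicit formulas for $\tilde\mu_i$, part (2) is the direct computation of $\tilde\mu_i^2$ followed by an application of the splitting theorem to kill the $2$-cycles, and part (3) is then formal. One small slip: in your part (2), with the paper's composition convention $s(b)=t(a)$ for $ab$, the term added at the second mutation is $\sum [\beta^*\alpha^*]\,\alpha^{**}\beta^{**}$, not $\sum c_{\alpha\beta}\beta^{**}\alpha^{**}$, and accordingly the substitution reads $[\alpha\beta]\mapsto -\alpha^{**}\beta^{**}$. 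This is just bookkeeping and does not affect the argument.
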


\subsection{Cluster algebras}
We follow \cite{reiten2010tilting}. Let $Q$ be a 2-acyclic quiver
with vertices $1, 2, ..., n$, and let $F = \Q(x_1, ... , x_n)$ be the function field in $n$ indeterminates
over $\Q$. Consider the pair $(\vec x, Q)$, where $\vec x = \{x_1,... , x_n\}$. The cluster
algebra $C(\vec x, Q)$ will be defined to be a subring of $F$.

The pair $(\vec x, Q)$ consisting of a transcendence basis $\vec x$ for $F$ over the rational numbers
$\Q$, together with a quiver with $n$ vertices, is called a \emph{seed}. For $i = 1, ... , n$ we
define a mutation $\mu_i$ taking the seed $(\vec x, Q)$ to a new seed $(\vec x', Q')$, where $Q' = \mu_i(Q)$
as discussed in \ref{sec:quiv}, and $\vec x'$ is obtained from $\vec x$ by replacing $x_i$ by a new element $x'_i$ in
$F$. Here $x'_i$ is defined by $$x_ix'_i = m_1 + m_2,$$
where $m_1$ is a monomial in the variables
$x_1, ... , x_n$, where the power of $x_j$
is the number of arrows from $j$ to $i$ in $Q$, and $m_2$ is the
monomial where the power of $x_j$
is the number of arrows from $i$ to $j$. (If there is no
arrow from $j$ to $i$, then $m_1 = 1$, and if there is no arrow from $i$ to $j$, then $m_2 = 1$.)
Note that while in the new seed the quiver $Q'$ only depends on the quiver $Q$, then
$x'$ depends on both $x$ and $Q$. We have $$\mu^2_i(\vec x, Q) = (\vec x, Q).$$
The procedure to get the full cluster algebra is iterative. We perform this mutation operation for all $i = 1, ... , n,$ then we perform it on the new seeds and so on. Either we get new seeds or we get back one of the seeds already computed. The $n$-element subsets
$\vec x, \vec x', \vec x'', ...$ occurring are by definition the \emph{clusters}, the elements in the clusters are
the \emph{cluster variables}, and the \emph{seeds} are all pairs $(\vec x', Q')$ occurring in the iterative procedure. The corresponding
cluster algebra $C(\vec x, Q)$, which as an algebra only depends on $Q$, is the subring of $F$
generated by the cluster variables.
\begin{example}\label{kkkazm987} Let $Q$ be the quiver $1 \to 2 \to 3$ and $\vec x = \{x_1, x_2, x_3\}$, where
$x_1, x_2, x_3$ are indeterminates, and $F = \Q(x_1, x_2, x_3)$. We have $\mu_1(\vec x, Q) = (x', Q')$,
where $Q' = \mu_1(Q)$ is the quiver $1 \leftarrow 2 \to 3$ and $\vec x' = \{x'_1, x_2, x_3\}$, where $x_1x'_1 = 1+x_2,$
so that $x'_1 =\frac{1+x_2}{x_1}.$ And so on.
The clusters are: $$\{x_1, x_2, x_3\},\{\frac{1+x_2}{x_1}, x_2, x_3\},\{ x_1, \frac{x_1+x_3}{x_2}, x_3\},\{x_1, x_2,\frac{1+x_2}{x_3}\},$$ 
$$\{\frac{1+x_2}{x_1},\frac{x_1+(1+x_2)x_3}{x_1x_2}, x_3\},\{\frac{1+x_2}{x_1}, x_2,\frac{1+x_2}{x_3}\},\{\frac{x_1+(1+x_2)x_3}{x_1x_2},\frac{x_1+x_3}{x_2}, x_3\},$$
$$ \{x_1,\frac{x_1+x_3}{x_2}, \frac{(1+x_2)x_1+x_3}{x_2x_3}\}, \{x_1,\frac{(1+x_2)x_1+x_3}{x_2x_3}, \frac{1+x_2}{x_3}\},$$
$$ \{\frac{1+x_2}{x_1}, \frac{x_1+(1+x_2)x_3}{x_1x_2}, \frac{(1+x_2)x_1+(1+x_2)x_3}{x_1x_2x_3}\},\{\frac{1+x_2}{x_1}, \frac{(1+x_2)x_1+(1+x_2)x_3}{x_1x_2x_3},\frac{1+x_2}{x_3}\},$$ $$\{\frac{x_1+(1+x_2)x_3}{x_1x_2}, \frac{x_1+x_3}{x_2}, \frac{(1+x_2)x_1+(1+x_2)x_3}{x_1x_2x_3}\},$$ $$ \{\frac{(1+x_2)x_1+(1+x_2)x_3}{x_1x_2x_3},\frac{x_1+x_3}{x_2},\frac{(1+x_2)x_1+x_3}{x_2x_3}\}, $$
$$ \{\frac{(1+x_2)x_1+(1+x_2)x_3}{x_1x_2x_3},\frac{(1+x_2)x_1+x_3}{x_2x_3}, \frac{1+x_2}{x_3}\},$$ 
and the cluster variables are: $$x_1,x_2, x_3,\frac{1+x_2}{x_1},\frac{x_1+x_3}{x_2},\frac{1+x_2}{x_3},\frac{x_1+(1+x_2)x_3}{x_1x_2},\frac{(1+x_2)x_1+x_3}{x_2x_3},\frac{(1+x_2)x_1+(1+x_2)x_3}{x_1x_2x_3}.$$

\subsubsection{The cluster exchange graph (CEG)}\label{CEG}
If $Q'$
is a quiver mutation equivalent to $Q$, then the cluster algebras $C(Q')$ and
$C(Q)$ are isomorphic. The $n$-regular connected graph whose vertices are the seeds of $C(\vec x,Q)$ (up to simultaneous renumbering of rows, columns and variables) and whose edges connect the seeds related by a single mutation is called \emph{cluster exchange graph} (=CEG). The CEG for \textbf{Example \ref{kkkazm987}} is represented in figure \ref{exceg}.
\begin{figure}
\centering
\includegraphics[width=1.1\textwidth]{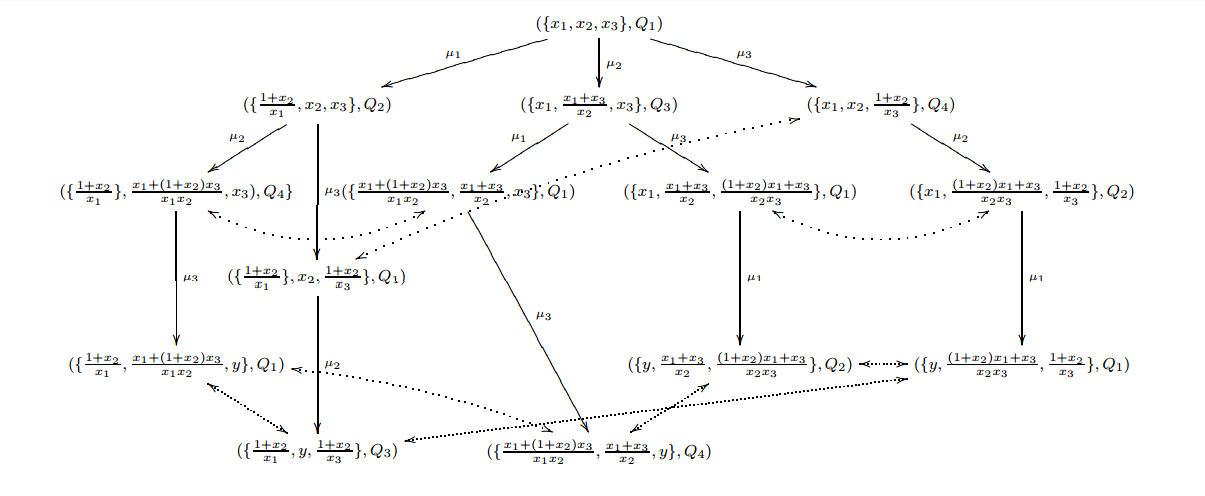}
\caption{\label{exceg}This figure represent the CEG of the $A_3$ cluster algebra. The dotted arrows are identifications up to permutations of the variables. The plain arrows represent mutations.}
\end{figure}
\end{example}

\subsection{Ginzburg DG algebras}
Given a quiver $Q$ with potential $W$, we can associate to it the Jacobian algebra $J(Q,W):=kQ/\vev{\partial W}$, where $kQ$ 
is the quiver path algebra (see section \ref{sec:quiv}). It is also possible to extend the path algebra $kQ$ to a DG algebra: the Ginzburg algebra.
\begin{definition} {(Ginzburg \cite{ginzburg2006calabi}).} Let $(Q, W)$ be a quiver with potential. Let $\hat Q$ be the graded quiver
with the same set of vertices as $Q$ and whose arrows are:
\begin{itemize}
\item the arrows of $Q$ (of degree 0);
\item an arrow $a^*: j \to i$ of degree $-1$ for each arrow $a : i \to j$ of $Q$;
\item a loop $t_i: i \to i$ of degree $-2$ for each vertex $i \in Q_0$.
\end{itemize}
The completed Ginzburg DG algebra $\Gamma(Q, W)$ is the DG algebra whose underlying
graded algebra is the completion\footnote{\ The completion is taken with respect to the $I$-adic topology, 
where $I$ is the ideal of the path algebra generated by all arrows of the quiver.} of the graded
path algebra $k\hat Q$. The differential of $\Gamma(  Q, W)$ is the unique continuous linear
endomorphism homogeneous of degree 1 which satisfies the Leibniz rule (i.e.
$d(uv) = (du)v + (-1)^p\,udv$ for all homogeneous $u$ of degree $p$ and all $v$), and takes
the following values on the arrows of $\hat Q$:
\begin{align*}d(a) &= 0\\ d(a^*) &= \partial_aW, &&\forall a \in Q_1;\\
 d(t_i) &= e_i\!\left(\sum_{a\in Q_1}[a, a^*]\right)\!e_i, &&\forall i \in Q_0.
\end{align*}
We shall write $\Gamma(Q,W)$ simply as $\Gamma$, unless we wish to stress its dependence on $(Q,W)$.
\end{definition}
From the definition of $\Gamma$ and $d$, one sees that $H^0\Gamma \cong J(Q,W)$.
\medskip

To the DG algebra $\Gamma$ we  associate three important triangle categories which we are now going to define and analyze in detail.

\subsection{The bounded and perfect derived categories}
\label{sec:cat}
The DG-category $\mathsf{mod}\,\Gamma$ is the category whose objects are finitely 
generated graded $\Gamma$-modules and the morphisms spaces have the structure of DG modules 
(cfr.\! section \ref{sec:basicdef}).
 The derived category $D\Gamma:=D(\mathsf{mod}\,\Gamma)$ 
\cite{keller2007derived,keller2006differential} is the localization of $\mathsf{mod}\,\Gamma$ 
at quasi-isomorphisms (the cohomology structure is given by the differential $d$ of the Ginzburg algebra). 
Thus, the objects of $D\Gamma$ are DG modules. There are two fundamental subcategories associated to $D\Gamma$:
\begin{itemize}
\item The bounded derived category $ D^b \Gamma$: it is the full subcategory of $D\Gamma$ such that its objects 
are graded modules $M$ for which, given a certain $N>0$, $H^n (M)=0 $ for all $|n| > N$.  This category is 3-CY (see below).
\item The perfect derived category $\mathfrak{Per}\,\Gamma$, i.e.\! the smallest full triangulated subcategory 
of $D\Gamma$ which contains $\Gamma$ and is closed under extensions, shifts in degree and taking direct summands. 
\end{itemize} 
Both $\mathfrak{Per}\,\Gamma$ and $D^b \Gamma$ are triangulated subcategories of $D\Gamma$ and in particular,
 $\mathfrak{Per}\,\Gamma \supset D^b \Gamma $ as a full subcategory (as explained in \cite{keller2011cluster}). 
Furthermore,\footnote{\ See \cite{keller2011cluster} for more details.} the category $D^b\Gamma$ 
has finite-dimensional morphism spaces (even its graded
morphism spaces are of finite total dimension) and is 3-Calabi-Yau (3-CY), by which we
mean that we have bifunctorial isomorphisms\footnote{\ More generally, we say that a triangle category is $\ell$-CY (for $\ell\in\mathbb{N}$) iff we have the bifunctorial isomorphism
$D\mathrm{Hom}(X,Y)\cong \mathrm{Hom}(Y,X[\ell])$.}
\begin{equation}\label{3cy}
D \mathrm{Hom}(X, Y ) \cong \mathrm{Hom}(Y, X[3]) ,
\end{equation}
where $D$ is the duality functor $\mathrm{Hom}_k(-, k)$ and $[1]$ the shift functor. The simple
$J(Q, W)$-modules $S_i$ can be viewed as $\Gamma$-modules via the canonical morphism
$$\Gamma \to H^0(\Gamma).$$ 
\begin{example}[$A_2$ quiver]
Consider the $A_2$ quiver $1\to 2$. The following is a graded indecomposable $\Gamma$-module:
$$t_1^*\CircleArrowright  (k[-1]\oplus k[-3]) \underset{a^*}{\overset{a}{\rightleftharpoons}} k \CircleArrowleft t_2^*, $$
where $a=0,\ a^*:k\overset{1}{\mapsto}k[-1],\ t_2^*=0, $ and $t_1^*:k[-1]\overset{1}{\mapsto} k[-3]$. 
This object can be generated from $S_1[-1],\ S_1[-3]$ and $S_2$ by successive extensions. Moreover, the modules $S_i, i =1,2$ 
and their shifts are enough to generate\footnote{\ In the triangulated category $\mathcal T$, a set of objects $S_i \in \mathcal T$ 
is a generating set if all objects of $\mathcal T$ can be obtained from the generating set via an iterated cone construction.} all
 (homologically finite) graded modules.
\end{example}

\subsubsection{Seidel-Thomas twists and braid group actions}\label{braidactions} Simple $\Gamma$-modules $S_i$ become 3-spherical objects in $D^b\Gamma$ (hence also in $D\Gamma$), that is,
 $$
\mathrm{Hom}(S,S[j])\cong k (\delta_{j,0}+\delta_{j,3}).
$$
 They yield the
Seidel-Thomas \cite{seidel2001braid,segal2016all} twist functors $T_{S_i}$. These are autoequivalences of $D\Gamma$ such that 
each object $X$ fits into a triangle
\begin{equation}\label{kkz12axl}\mathrm{Hom}^\bullet_D(S_i, X) \otimes_k S_i \to X \to T_{S_i}(X) \to .\end{equation}
By construction, $T_{S_i}$ restricts to an autoequivalence of the subcategory $D^b\Gamma\subset D\Gamma$. From the explicit realization of $T_{S_i}$ as a cone
in $D\Gamma$, eqn.\eqref{kkz12axl}, it is also clear that
it restricts to an auto-equivalence of $\mathfrak{Per}\,\Gamma$.

As shown in \cite{seidel2001braid}, the twist functors give rise to a (weak) action on $D\Gamma$ of the braid group
associated with $Q$, i.e.\! the group with generators $\sigma_i, i \in Q_0$, and relations $$\sigma_i\sigma_j =
\sigma_j\sigma_i$$
if $i$ and $j$ are not linked by an arrow in $Q$ and
$$\sigma_i\sigma_j\sigma_i = \sigma_j\sigma_i\sigma_j$$
if there is exactly one arrow between $i$ and $j$ (no relation if there are two or more
arrows). \medskip

\begin{definition}
We write $\mathrm{Sph}(D^b\Gamma)\subset \mathrm{Aut}(D^b\Gamma)$ for the subgroup of autoequivalences
 generated by the Seidel-Thomas twists associated to all simple objects $S_i\in D^b\Gamma$.
\end{definition}
 
\subsubsection{The natural $\boldsymbol{t}$-structure and the canonical heart} 
The category $D\Gamma$ admits a natural $\boldsymbol{t}$-structure whose truncation functors are
those of the natural $\boldsymbol{t}$-structure on the category of complexes of vector spaces
(because $\Gamma$ is concentrated in degrees $\leq 0$). Thus, we have an induced natural
$\boldsymbol{t}$-structure on $D^b\Gamma$. Its heart $\A$ is canonically equivalent to the category
$\mathsf{nil}\,J(Q, W)$ of nilpotent modules\footnote{\ If $(Q,W)$ is Jacobi-finite (as in our applications), $\mathsf{nil}\,J(Q, W)\equiv \mathsf{mod}\, J(Q, W)$.} \cite{keller2011cluster}. In particular, the inclusion of $\A$ into $D^b\Gamma$ induces an isomorphism
of Grothendieck groups
$$K_0(\A) \cong K_0(D^b\Gamma)\cong \bigoplus_i \Z [S_i].$$

\paragraph{The skew-symmetric form.}
Notice that the lattice $K_0(D^b\Gamma)$ carries the canonical Euler form defined by
\begin{equation}
\label{eq:euler}
\vev{X,Y}=\sum_{i=0}^3(-1)^i\dim \mathrm{Hom}_{D(\Gamma)}(X,Y[i]).
\end{equation}
It is skew-symmetric thanks to the 3-Calabi-Yau property \eqref{3cy}. Indeed it follows from the
Calabi-Yau property and from the fact that $\mathrm{Ext}^i_\A(L, M) = \mathrm{Hom}_{D^b\Gamma}(L, M[i])$ for $i = 0$
and $i = 1$ (but not $i > 1$ in general) that for two objects $L$ and $M$ of $\A\subset D^b\Gamma$, we have
$$\vev{L, M} = \dim \mathrm{Hom}(L, M) - \dim \mathrm{Ext}^1(L, M) + \dim \mathrm{Ext}^1(M, L) - \dim \mathrm{Hom}(M, L).$$
Since the dimension of $\mathrm{Ext}^1(S_i, S_j )$ equals the number of arrows in $Q$ from $j$ to
$i$ (Gabriel theorem \cite{assem2006elements}), we obtain that the matrix of $\vev{-,-}$ in the basis of the simples of $\A$ has its 
$(i, j)$-coefficient equal to the number of arrows from $i$ to $j$ minus the number of arrows
from $j$ to $i$ in $Q$, that is,
 (cfr.\! eqn.\eqref{exchanger})
 \begin{equation}\label{uuuia}\langle S_i,S_j\rangle= B_{ij}.\end{equation}

\subsubsection{Mutations at category level}
The main reference for this subsection is \cite{keller2011cluster}. Let $k$ be a vertex of the quiver $Q$ not 
lying on a 2-cycle and let $(Q', W')$ be the
mutation of $(Q, W)$ at $k$. Let $\Gamma'$ be the Ginzburg algebra
associated with $(Q', W').$ Let $\A'$ be the canonical heart in $D^b\Gamma'$. There are
two canonical equivalences 
$$D\Gamma' \to D\Gamma$$
given by functors $\Phi^\pm$ related by
$$T_{S_k} \circ \Phi^- \to \Phi^+.$$
where, again, $T_{S_k}$ is the Seidel-Thomas twist generated by the spherical object $S_k$.
If we put $P_i = \Gamma e_i, i \in Q_0$, and similarly for $\Gamma'$, then both $\Phi^+$ and $\Phi^-$ send $P'_i$ to
$P_i$ for $i \neq k$; the images of $P'_k$ under the two functors fit into triangles
\begin{equation}
P_k \to \bigoplus_{k \to i} P_i \to \Phi^-(P'_k) \to
\label{eq:m-}
\end{equation}
and
\begin{equation}
\Phi^+(P'_k) \to \bigoplus_{j\to k} P_j \to P_k .
\label{eq:m+}
\end{equation}
The functors $\Phi^\pm$ send $\A'$ onto the hearts $\mu_k^\pm(\A)$ of two new $\boldsymbol{t}$-structures. These
can be described in terms of $\A$ and the subcategory\footnote{\ Here and below, given a (collection of) object(s) $\mathcal{O}$ of a linear category $\mathfrak{L}$, by $\mathsf{add}\,\mathcal{O}$ we mean the \textit{additive closure of $\mathcal{O}$ in $\mathfrak{L}$,} that is, the full subcategory over the direct summands of finite direct sums of copies of $\mathcal{O}$.} 
$\mathsf{add}\,S_k$ as follows:
Let $S^\perp_k$ be the right orthogonal subcategory of $S_k$ in $\A$\footnote{\ Its objects are those $M$'s
with $\mathrm{Hom}(S_k, M) = 0$. It is a full subcategory of $\A$.}. Then $\mu^+_k(\A)$ is formed by the objects $X$ of $D^b\Gamma$ such that
the object $H^0(X)$ belongs to $S^\perp_k$, the object $H^1(X)$ belongs to $\mathsf{add}\,S_k$ and $H^p(X)$
vanishes for all $p \neq  0, 1$. Similarly, the subcategory $\mu^-_k(\A)$ is formed by the objects
$X$ such that the object $H^0(X)$ belongs to the left orthogonal subcategory $^\perp S_k$, the
object $H^{-1}(X)$ belongs to $\mathsf{add}\,S_k$ and $H^p(X)$ vanishes for all $p \neq -1, 0$. The
subcategory $\mu^+_k(\A)$ is the right mutation of $\A$ and $\mu^-_k(\A)$ is its left mutation.
By construction, we have
$$T_{S_k}(\mu^-_k(\A)) = \mu^+_k(\A).$$
Since the categories $\A$ and $\mu^\pm(\A)$ are hearts of bounded, non degenerate $\boldsymbol{t}$-structures
on $D^b\Gamma$, their Grothendieck groups identify canonically with that of $D^b\Gamma$. They
are endowed with canonical basis given by the simples. Those of $\A$ identify with
the simples $S_i, i \in Q_0,$ of $\mathsf{nil}\,J(Q, W)$. The simples of $\mu^+_k(\A)$ are $S_k[-1]$, the
simples $S_i$ of $\A$ such that $\mathrm{Ext}^1(S_k, S_i)$ vanishes and the objects $T_{S_k}(S_i)$ where
$\mathrm{Ext}^1(S_k, S_i)$ is of dimension $\geq 1$. By applying $T^{-1}_{S_k}$ to these objects we obtain the
simples of $\mu^-_k(\A).$

We saw that $D^b\Gamma \subset\mathfrak{Per}\, \Gamma$ as a full subcategory \cite{keller2011derived}: 
what is then the meaning of the Verdier quotient \cite{neeman2014triangulated} of these two triangulated categories?

\subsection{The cluster category}
\label{sec:cluster}
The next result is the main step in the construction
of new 2-CY categories with cluster-tilting object which generalize the acyclic
cluster categories introduced by Buan-Marsh-Reineke-Todorov to categorify the cluster agebras of Fomin and Zelevinski.
\begin{theorem}[Thm 2.1 of \cite{amiot2009cluster})]\label{amiotthm} Let $A$ be a DG-algebra with the following properties:
\begin{enumerate}
\item $A$ is homologically smooth (i.e. $A\in \mathfrak{Per}(A\otimes A^{op})$),
\item $H^p(A) = 0$ for all $p \geq 1$,
\item $H^0(A)$ is finite dimensional as a $k$-vector space,
\item $A$ is bimodule 3-CY, i.e. 
\begin{equation}\label{uuuu}
\mathrm{Hom}_{D(A)}(X, Y ) \cong D\mathrm{Hom}_{D(A)}(Y, X[3]),
\end{equation}
for any $X \in D(A)$ and $Y \in D^b(A)$.
\end{enumerate}
Then the triangulated category 
$$\mathcal{C}(A) = \mathfrak{Per}\,A\big/D^b A$$
 is Hom-finite, 2-CY, i.e.
 $$
\mathrm{Hom}_{\mathcal{C}(A)}(X, Y ) \cong D\mathrm{Hom}_{\mathcal{C}(A)}(Y, X[2]),\qquad X,Y\in \mathcal{C}(A).
$$
 and the
object $A$ is a \emph{cluster-tilting object}\footnote{\ See \textbf{Definition \ref{def:clustertilt}}.} with 
\begin{equation}\label{whatend}
\mathrm{End}_{\mathcal{C}(A)}(A) \cong H^0(A).
\end{equation}
\end{theorem}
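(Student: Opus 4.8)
The plan is to follow Amiot's strategy \cite{amiot2009cluster}: realize the Verdier quotient $\mathcal{C}(A)=\mathfrak{Per}\,A/D^bA$ concretely via a ``fundamental domain'' $\mathcal{F}$ inside $\mathfrak{Per}\,A$ on which quotient morphisms become computable. First I would record the well-posedness and the basic duality: homological smoothness (hypothesis 1) guarantees that $D^bA$ is a thick subcategory of $\mathfrak{Per}\,A$, so the Verdier quotient is defined, with projection functor $\pi\colon\mathfrak{Per}\,A\to\mathcal{C}(A)$; and hypothesis 4 yields the duality $\mathrm{Hom}_{D(A)}(X,Y)\cong D\mathrm{Hom}_{D(A)}(Y,X[3])$ whenever $X\in\mathfrak{Per}\,A$ and $Y\in D^bA$, which in particular makes $D^bA$ itself a $3$-CY triangulated category.

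\emph{The fundamental domain and Hom-finiteness.} By hypotheses 2 and 3 the algebra $A$ is concentrated in nonpositive degrees with $H^0(A)$ finite dimensional, so the standard $t$-structure of $D(A)$ restricts to a bounded $t$-structure on $D^bA$ with heart the finite-length category $\mathsf{mod}\,H^0(A)$. I would then set $\mathcal{F}\subset\mathfrak{Per}\,A$ to be the full subcategory of perfect objects $X$ with $H^p(X)=0$ for $p\notin\{-1,0\}$ (equivalently, two-term complexes of objects of $\mathsf{add}\,A$), and prove the two key facts: (a) $\pi|_{\mathcal{F}}$ is essentially surjective, any object of $\mathcal{C}(A)$ having a representative whose ``tails'', which have finite-dimensional homology and hence lie in $D^bA$, can be truncated away inside the quotient; and (b) for $X,Y\in\mathcal{F}$ the space $\mathrm{Hom}_{\mathcal{C}(A)}(X,Y)$ is a subquotient of the finite-dimensional space $\mathrm{Hom}_{\mathfrak{Per}\,A}(X,Y)$, the correction terms being again controlled by the $3$-CY duality. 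Together (a) and (b) give the Hom-finiteness of $\mathcal{C}(A)$.

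\emph{The $2$-CY property and the cluster-tilting object.} Granting Hom-finiteness, the $2$-CY property follows from hypothesis 4 by the general principle that quotienting a triangulated category by its $3$-CY thick subcategory lowers the Serre shift by one: concretely one checks the bifunctorial isomorphism $\mathrm{Hom}_{\mathcal{C}(A)}(X,Y)\cong D\mathrm{Hom}_{\mathcal{C}(A)}(Y,X[2])$ on objects of $\mathcal{F}$, by writing the quotient-Homs through roofs with denominator in $D^bA$, applying the $3$-CY duality termwise, and tracking the one-step degree shift introduced by the quotient. Finally, since $A$ is perfect one has $\mathrm{Hom}_{\mathfrak{Per}\,A}(A,A[n])=H^n(A)$, which by 2 and 3 is $H^0(A)$ for $n=0$ and vanishes for $n>0$; the fundamental domain controls the quotient corrections and shows these values persist in $\mathcal{C}(A)$, so $\mathrm{End}_{\mathcal{C}(A)}(A)\cong H^0(A)$ and $\mathrm{Ext}^1_{\mathcal{C}(A)}(A,A)=0$. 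The maximality of $\mathsf{add}\,A$ (the remaining part of the cluster-tilting condition) comes from the $\mathcal{F}$-description: every object is represented by a two-term complex $A_1\to A_0$ with $A_i\in\mathsf{add}\,A$, and $\mathrm{Ext}^1_{\mathcal{C}(A)}(A,X)=0$ together with the $2$-CY property forces the degree $-1$ term $A_1$ to vanish, so $X\cong A_0\in\mathsf{add}\,A$.

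\emph{Main obstacle.} The crux is the Hom-finiteness of the quotient. Morphisms in a Verdier localization are a priori only equivalence classes of roofs, and the whole purpose of the fundamental domain $\mathcal{F}$ is to replace them by genuine morphisms of $\mathfrak{Per}\,A$ modulo a finite-dimensional ambiguity; establishing that $\mathcal{F}$ does this job — essential surjectivity of $\pi|_{\mathcal{F}}$ together with the subquotient description of the Hom-spaces — is where the real work lies. Once it is in place, the $2$-CY property and the cluster-tilting statement follow by essentially formal manipulations.
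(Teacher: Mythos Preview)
The paper does not prove this theorem at all: it is quoted verbatim as Theorem~2.1 of Amiot \cite{amiot2009cluster} and used as a black box, so there is no ``paper's own proof'' to compare against. Your proposal is a faithful outline of Amiot's original argument --- the fundamental-domain technique, essential surjectivity of $\pi|_{\mathcal{F}}$, and the reduction of quotient-Homs to perfect-Homs --- and is correct in broad strokes.

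One point worth sharpening: in step (b) you describe $\mathrm{Hom}_{\mathcal{C}(A)}(X,Y)$ for $X,Y\in\mathcal{F}$ as ``a subquotient'' of $\mathrm{Hom}_{\mathfrak{Per}\,A}(X,Y)$. In a Verdier quotient, morphisms are colimits of roofs and can a priori be \emph{larger} than in the ambient category, so this is not automatic. What Amiot actually proves (her Proposition~2.9) is stronger and cleaner: the restriction $\pi|_{\mathcal{F}}\colon\mathcal{F}\to\mathcal{C}(A)$ is an \emph{equivalence of $k$-linear categories}, i.e.\ the Hom-spaces are genuinely isomorphic, not merely subquotients. This is the real content of the ``main obstacle'' you flag, and it is established via an explicit short exact sequence relating $\mathrm{Hom}_{\mathfrak{Per}\,A}(X,Y)$, $\mathrm{Hom}_{\mathcal{C}(A)}(\pi X,\pi Y)$, and a term controlled by the 3-CY duality. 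Once that equivalence is in hand, everything else (2-CY, cluster-tilting, $\mathrm{End}\cong H^0(A)$) follows as you indicate.
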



The category $\mathcal{C}(A)$ is called the \emph{generalized cluster category} and 
it reduces to the standard cluster category \cite{keller2005triangulated} in the acyclic case. 
It is triangulated since it is the Verdier quotient of triangulated 
categories.\footnote{\ The main references for these categorical facts are \cite{drinfeld2004dg,neeman2014triangulated}.
We recall the definition of Verdier quotient of triangle categories:\smallskip

\noindent\textbf{Lemma.}
\textit{Let $D$ be a triangulated category. Let $D' \subset D$ be a full triangulated subcategory. Let $S\subset \mathsf{Mor}(D)$
be the subset of morphisms such that there exists a distinguished triangle $(X,Y,Z,f,g,h)\in D$ with $Z$
isomorphic to an object of $D^\prime$. 
Then $S$ is a multiplicative system compatible with the triangulated structure on $D$.}\smallskip

\noindent\textbf{Definition.}
 Let $D$ be a triangulated category. Let $B$ be a full triangulated subcategory.
 We define the (Verdier) quotient category $D/B$ by the formula $D/B=S^{-1}D$, where $S$ is the multiplicative 
system of $D$ associated to $B$ via the previous lemma.}

\subsubsection{The case of the Ginzburg algebra of $(Q,W)$} In particular, we may specialize to the case where $A=\Gamma$, i.e. the Ginzburg algebra of a quiver with 
potential $(Q,W)$, and write the following sequence:
\begin{equation}\label{treexa}
0 \to D^b\Gamma \xrightarrow{\,\mathsf{s}\,} \mathfrak{Per}\, \Gamma \xrightarrow{\,\mathsf{r}\,} \mathcal{C}(\Gamma) \to 0
\end{equation}
the above theorem states that this sequence is exact and $\mathsf{r}(\Gamma)=T$, where $T$ is the \emph{canonical} cluster-tilting 
object\footnote{\ See \textbf{Definition \ref{def:clustertilt}}.} of $\mathcal{C}(\Gamma)$. The first map in eqn.\eqref{treexa} is the
 inclusion map: see \cite{keller2011derived} for details. 
\begin{remark}
Moreover, an object $M\in\mathfrak{Per}\,\Gamma$ belongs to the subcategory $D^b\Gamma$ if and only if
the space $\mathrm{Hom}_{\mathfrak{Per}\,\Gamma}(P, M)$ is finite-dimensional for each $P\in \mathfrak{Per}\,\Gamma$. 
In particular, this implies that there is a duality between the simple objects $S_i \in D^b\Gamma$ 
and the projective objects $\Gamma e_i \in\mathfrak{Per}\,\Gamma$
\begin{equation}\label{duadua}
\langle \Gamma e_i, S_j\rangle=\delta_{ij}.
\end{equation}
\end{remark}

\begin{theorem}[Keller \cite{keller2011deformed}] The completed Ginzburg DG algebra $\Gamma(  Q, W)$ is
homologically smooth and bimodule 3-Calabi-Yau.
\end{theorem}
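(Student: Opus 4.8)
The plan is to construct an explicit finite resolution of $\Gamma$ by finitely generated projective bimodules which is, up to a shift by $3$, self-dual; homological smoothness and the bimodule $3$-Calabi--Yau property then drop out at once. Write $R=\bigoplus_{i\in Q_0}k\,e_i$ for the semisimple subalgebra of idempotents and $\Gamma^{e}=\Gamma\otimes_k\Gamma^{\mathrm{op}}$ for the enveloping algebra; let $V$ be the $R$-bimodule with basis the arrows $a\in Q_1$ (placed in internal degree $0$, with $a=e_{t(a)}\,a\,e_{s(a)}$) and $V^{*}=\mathrm{Hom}_R(V,R)$ the dual bimodule, spanned by the reversed arrows $a^{*}$ (internal degree $-1$). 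The candidate is the complex of projective $\Gamma^{e}$-modules
$$
\mathbf{P}:\qquad 0\to \Gamma\otimes_R\Gamma\ \xrightarrow{\ \iota_3\ }\ \Gamma\otimes_R V^{*}\otimes_R\Gamma\ \xrightarrow{\ \iota_2\ }\ \Gamma\otimes_R V\otimes_R\Gamma\ \xrightarrow{\ \iota_1\ }\ \Gamma\otimes_R\Gamma\ \xrightarrow{\ \mu\ }\ \Gamma\to 0 ,
$$
with $\mu$ the multiplication, $\iota_1(1\otimes a\otimes 1)=a\otimes 1-1\otimes a$, while $\iota_2$ is assembled from the second derivatives of $W$ together with the commutator terms $[a,a^{*}]$, and $\iota_3$ from the degree $-2$ loops $t_i$. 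By construction the $\iota_k$ are the ``transposes'' of the Ginzburg differential $d$, and the pattern of generators $R,\,V,\,V^{*},\,R$ is palindromic, so $\mathbf{P}$ is manifestly self-dual; this self-duality is what will deliver the $3$-CY property.

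First I would check that $\mathbf{P}$ is a complex and a resolution of $\Gamma$. The identities $\iota_{k}\iota_{k+1}=0$ unwind to the defining relations of $\Gamma$ (the cyclic Leibniz rule for $W$ and $d^{2}=0$ on generators). For exactness I would filter $\Gamma$ and $\mathbf{P}$ by path length; the associated graded is the Ginzburg algebra $\Gamma(Q,0)$, for which $\mathbf{P}$ degenerates to a Koszul-type complex whose exactness one checks by hand, and, the filtration being complete and exhaustive with $\Gamma$ the completion, exactness lifts back. \emph{This is the main obstacle}: one must handle the completed (topological) tensor products carefully, checking that $-\otimes_R-$ stays exact on the relevant modules and that passage to $\varprojlim$ introduces no spurious homology -- this is the point where working with the \emph{completed} Ginzburg algebra, rather than with a polynomial $W$, genuinely costs something.

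Granting that $\mathbf{P}$ is a projective bimodule resolution of $\Gamma$, homological smoothness is immediate: $\mathbf{P}$ is a bounded complex of finitely generated projective $\Gamma^{e}$-modules quasi-isomorphic to $\Gamma$, so $\Gamma\in\mathfrak{Per}(\Gamma^{e})$. For the Calabi--Yau property I would compute the inverse dualizing complex $\mathrm{RHom}_{\Gamma^{e}}(\Gamma,\Gamma^{e})$ via $\mathbf{P}$: the $\Gamma^{e}$-dual of a term $\Gamma\otimes_R U\otimes_R\Gamma$ is $\Gamma\otimes_R U^{*}\otimes_R\Gamma$, and the palindromy of $R,\,V,\,V^{*},\,R$ -- with the degree $-2$ loops $t_i$ pairing the two copies of $R$ and the degree $-1$ arrows $a^{*}$ pairing $V$ with $V^{*}$ -- produces, once the internal degree shifts are accounted for, a canonical isomorphism of complexes of bimodules $\mathrm{Hom}_{\Gamma^{e}}(\mathbf{P},\Gamma^{e})\cong\mathbf{P}[-3]$. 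Hence $\mathrm{RHom}_{\Gamma^{e}}(\Gamma,\Gamma^{e})\cong\Gamma[-3]$ in $D(\Gamma^{e})$, which is exactly the bimodule $3$-CY condition.

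Finally, the ``pointwise'' forms \eqref{3cy} and \eqref{uuuu} follow from this by the standard Van den Bergh duality argument: for $X\in D(\Gamma)$ and $Y\in D^{b}\Gamma$ one writes $Y\cong\Gamma\otimes^{\mathbf{L}}_{\Gamma}Y$, uses smoothness to move the resolution across $\mathrm{RHom}_{\Gamma}(X,-)$, substitutes $\mathrm{RHom}_{\Gamma^{e}}(\Gamma,\Gamma^{e})\cong\Gamma[-3]$, and invokes the finite-dimensionality of the total cohomology on $D^{b}\Gamma$ to turn the resulting pairing into the $k$-linear duality $D\mathrm{Hom}(Y,X[3])$. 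Conceptually the cleanest route is to recognize $\Gamma(Q,W)$ as the $3$-Calabi--Yau completion $\Pi_{3}(kQ)$ of the complete path algebra, deformed by (a negative-cyclic lift of) the class of $W$, and to invoke Keller's general theorem \cite{keller2011deformed} that deformed CY completions of homologically smooth algebras are themselves homologically smooth and bimodule CY -- the complex $\mathbf{P}$ above being precisely the bimodule resolution that this machinery manufactures, with $kQ$ smooth via its length-one bimodule resolution.
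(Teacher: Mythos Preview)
The paper does not prove this statement: it is quoted verbatim as a result of Keller \cite{keller2011deformed} and immediately used to feed into \textbf{Corollary \ref{corco}}. So there is no ``paper's own proof'' to compare against; the authors treat it as a black box.

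That said, your sketch is essentially the argument Keller gives. The complex $\mathbf{P}$ you wrote down is (up to bookkeeping conventions) the canonical bimodule resolution constructed in \cite{keller2011deformed}, and your closing paragraph identifying $\Gamma(Q,W)$ with a deformed $3$-Calabi--Yau completion $\Pi_3(kQ)$ is precisely the conceptual framework of that paper. The one place where you correctly flag the real work is the exactness of $\mathbf{P}$ in the completed setting: Keller handles this by passing through the cofibrant resolution machinery for DG algebras rather than by a bare spectral-sequence argument on path-length filtrations, which sidesteps some of the $\varprojlim$ issues you worry about. If you wanted to make your version self-contained you would need to be more careful there, but as a roadmap to the cited proof your proposal is accurate.
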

We have already shown that $\Gamma(Q, W)$ is non zero only in negative degrees, and
that $H^0(\Gamma(  Q, W)) \cong  J(Q, W)$. Therefore by the theorem above we get the following
\begin{corollary}\label{corco}
Let $(Q, W)$ be a Jacobi-finite quiver with potential. Then the category
$$\mathcal{C}(\Gamma(Q,W)):= \mathfrak{Per}\,\Gamma(Q,W)\big/D^b(\Gamma(  Q, W))$$
is Hom-finite, 2-Calabi-Yau, and has a canonical cluster-tilting object\footnote{\ See \textbf{Definition \ref{def:clustertilt}}.} whose endomorphism
algebra is isomorphic to $J(Q, W)$.
\end{corollary}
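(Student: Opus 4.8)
The plan is to read off Corollary~\ref{corco} as a direct application of Amiot's Theorem~\ref{amiotthm} to the DG algebra $A=\Gamma(Q,W)$, so that the entire argument reduces to checking the four hypotheses of that theorem for the completed Ginzburg algebra, each of which has essentially already been recorded in this section. Once the hypotheses are in place, the conclusion (Hom-finiteness, the 2-CY property, existence of the canonical cluster-tilting object, and the computation of its endomorphism algebra) is exactly the output of Theorem~\ref{amiotthm}, specialised to the exact sequence \eqref{treexa}.

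Hypotheses (1) and (4) — homological smoothness $\Gamma\in\mathfrak{Per}(\Gamma\otimes\Gamma^{op})$ and the bimodule $3$-Calabi-Yau property — are precisely the content of Keller's theorem \cite{keller2011deformed} quoted immediately above, which holds for the completed Ginzburg DG algebra of \emph{any} quiver with potential; Jacobi-finiteness plays no role there. Hypothesis (2), namely $H^p(\Gamma)=0$ for all $p\geq 1$, follows from the explicit construction of $\Gamma(Q,W)$: the graded quiver $\hat Q$ has arrows only in degrees $0$, $-1$ and $-2$, so the completed graded path algebra $k\hat Q$, and hence $\Gamma$ together with its differential, is concentrated in non-positive degrees, as already noted in \S\,\ref{sec:cat}.

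Hypothesis (3) is the single place where the assumption of the corollary is used. We have seen that the canonical morphism $\Gamma\to H^0(\Gamma)$ identifies $H^0(\Gamma)\cong J(Q,W)$, and by definition $(Q,W)$ is Jacobi-finite exactly when $J(Q,W)$ is finite-dimensional over $k$; hence $H^0(\Gamma)$ is finite-dimensional under our hypothesis. With (1)--(4) verified, Theorem~\ref{amiotthm} yields at once that $\mathcal{C}(\Gamma)=\mathfrak{Per}\,\Gamma\big/D^b\Gamma$ is Hom-finite and $2$-Calabi-Yau, that $\mathsf{r}(\Gamma)$ is a cluster-tilting object, and that $\mathrm{End}_{\mathcal{C}(\Gamma)}(\mathsf{r}(\Gamma))\cong H^0(\Gamma)\cong J(Q,W)$, which is the assertion of Corollary~\ref{corco}.

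The one point I would write out with some care — and the closest thing to an obstacle, though it is really bookkeeping rather than a genuine difficulty — is matching the ``small'' subcategory appearing in Amiot's Verdier quotient (the finite-dimensional derived category $D_{fd}\Gamma$ of modules with finite-dimensional total cohomology) with the bounded derived category $D^b\Gamma$ as defined in \S\,\ref{sec:cat}. This identification again uses Jacobi-finiteness: then $\mathsf{nil}\,J(Q,W)=\mathsf{mod}\,J(Q,W)$ is the heart of a bounded, non-degenerate $\boldsymbol{t}$-structure on $D^b\Gamma$ whose simples $S_i$ generate, so $D^b\Gamma$ consists exactly of those objects of $\mathfrak{Per}\,\Gamma$ with finite-dimensional total cohomology — equivalently, as in the Remark preceding the statement, those $M$ with $\mathrm{Hom}_{\mathfrak{Per}\,\Gamma}(P,M)$ finite-dimensional for every $P\in\mathfrak{Per}\,\Gamma$. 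Granting this, the corollary is a formal consequence of the two cited theorems.
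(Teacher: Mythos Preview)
Your proof is correct and follows exactly the approach the paper takes: the corollary is deduced directly from Amiot's Theorem~\ref{amiotthm} by checking its four hypotheses for $A=\Gamma(Q,W)$, with (1) and (4) coming from Keller's theorem, (2) from the non-positive grading of $\Gamma$, and (3) from Jacobi-finiteness together with $H^0(\Gamma)\cong J(Q,W)$. Your added remark on identifying $D^b\Gamma$ with the finite-dimensional derived category is a useful point of care that the paper leaves implicit.
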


We shall write $\mathcal{C}(\Gamma(Q,W))$ simply as $\mathcal{C}(\Gamma)$ leaving $(Q,W)$ implicit.

\subsubsection{The cluster category of a hereditary category}\label{herditarycase}

The above structure simplifies in the case of a cluster category arising from a hereditary (Abelian) category $\mathcal{H}$ (with a Serre functor and a tilting object) \cite{lenzing1}. Physically this happens for the following list of complete $\mathcal{N}=2$ QFTs \cite{cecotti2011classification}: \textit{i)} Argyres-Douglas of type $ADE$, \textit{ii)} asymptotically-free $SU(2)$ gauge theories coupled to fundamental quarks and/or Argyes-Douglas models of type $D$, and \text{iii)} SCFT $SU(2)$ theories with the same kind of matter. In terms of quiver mutations classes, they correspond (respectively) to $ADE$ Dynkin quivers of the finite, affine, and elliptic type\footnote{\ In the elliptic type we are restricted to the four types $D_4$, $E_6$, $E_7$ and $E_8$, corresponding to the four tubular weighted projective lines \cite{geigle,lenzing1,kussin2013triangle}.
Elliptic $D_4$ is $SU(2)$ with $N_f=4$ \cite{cecotti2011classification}.} \cite{cecotti2011classification}. In all these case we have
an hereditary (Abelian) category $\mathcal{H}$, with the Serre functor $S=\tau [1]$ where $\tau$ is the Auslander-Reiten translation. That is, in their derived category we have
$$
\mathrm{Hom}_{D^b(\mathcal{H})}(X,Y)\cong D\mathrm{Hom}_{D^b(\mathcal{H})}(Y,\tau X[1])$$
$\tau$ is an auto-equivalence of
$D^b(\mathcal{H})$. The cluster category can be shown to be equivalent to the orbit category
\cite{keller2005triangulated,barot}
\begin{equation}\label{mzxa}
\mathcal{C}(\mathcal{H})\cong D^b(\mathcal{H})/\langle \tau^{-1}[1]\rangle^\Z.
\end{equation} 
For future reference, we list
the relevant categories $\mathcal{H}$
(further details may be found in \cite{cecotti2013categorical}):
\begin{itemize}
\item For Argyres-Douglas of type $ADE$, we have $\mathcal{H}\cong \mathsf{mod}\,k \vec{\mathfrak{g}}$, where $\vec{\mathfrak{g}}$ is a quiver obtained by choosing an orientation to the Dynkin graph of type $\mathfrak{g}\in ADE$ (all orientations being derived-equivalent). $\tau$ satisfies the equation (for more refined results see \cite{miyaki,caorsi2016homological})
\begin{equation}\label{gabeq}
\tau^h=[-2],
\end{equation}
where $h$ is the Coxeter number of the associated Lie algebra $\mathfrak{g}$;
\item for $SU(2)$ gauge theories coupled to Argyres-Douglas systems of types\footnote{\ In our conventions, $p_i=1$ means the empty matter system, while $p_i=2$ is a free quark doublet.} $D_{p_1},\cdots,D_{p_s}$,
we have $\mathcal{H}=\mathsf{coh}\,\mathbb{X}(p_1,\dots,p_s)$, the coherent sheaves over a weighted projective line of weights $(p_1,\dots, p_s)$ \cite{geigle,lenzing1}\footnote{\ For a review of the category of coherent sheaves on weighted projective lines and corresponding cluster categories  from a physicist prospective, see \cite{cecotti2015higher}.}. $\tau$ acts by multiplication by the canonical sheaf $\omega$, and hence is periodic iff $\deg \omega=0$; in general, $\deg\omega$ is minus the Euler characteristic of $\mathbb{X}(p_1,\dots,p_s)$, $\chi=2-\sum_i(1-1/p_i)$.
However, $\tau$ is always periodic of period $\text{lcm}(p_i)$ when restricted to the zero rank sheaves (`skyskrapers' sheaves).
\end{itemize}


\subsection{Mutation invariance}
We have already stated that mutations correspond to Seiberg-like dualities. Therefore, our categorical construction makes sense only if it is invariant by mutations: indeed, we do not want the categories representing the physics to change when we change the mathematical description of the same dynamics.

The following two results give a connection between the DG categories we just analyzed and quivers with potentials linked by mutations.
\begin{theorem}\label{ttthma} Let $(Q, W)$ be a quiver with potential without loops and $i \in Q_0$ not on a 2-cycle in
$Q$. Denote by $\Gamma := \Gamma(  Q, W)$ and $\Gamma':= \Gamma(  \mu_i(Q, W))$ the completed Ginzburg DG algebras.
\begin{enumerate}
\item {\rm \cite{keller2011deformed} }There are triangle equivalences
\begin{displaymath}
    \xymatrix{\mathfrak{Per}\, \Gamma \ar[r]^\sim & \mathfrak{Per}\,\Gamma^\prime \\
                     D^b\Gamma \ar[r]^{\sim}\ar@{^{(}->}[u]& D^b\Gamma^\prime \ar@{^{(}->}[u] }
\end{displaymath}
Hence we have a triangle equivalence $\mathcal{C}(\Gamma) \cong \mathcal{C}(\Gamma')$.
\item {\rm\cite{plamondon2011cluster} } We have a diagram
\begin{displaymath}
    \xymatrix{\mathfrak{Per}\, \Gamma \ar[rrr]^\sim\ar[d]^{H^0} & && \mathfrak{Per}\, \Gamma^\prime \ar[d]^{H^0}\\
                    \mathsf{mod}\, J(Q,W) \ar@{<.>}[rrr]^{mutation}& && \mathsf{mod}\, J(\mu_i(Q,W)) }
\end{displaymath}
\end{enumerate}
\end{theorem}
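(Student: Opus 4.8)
The plan is to follow the Keller--Yang construction: realise the mutation $\mu_i$ as a DG Morita equivalence induced by an explicit silting object of $\mathfrak{Per}\,\Gamma$, check that this equivalence respects the bounded subcategory $D^b$, and then descend to the Verdier quotient; I keep the notation of the functors $\Phi^\pm$ and of the triangles \eqref{eq:m-}--\eqref{eq:m+}. \emph{Step 1 (the silting object).} Write $P_j=\Gamma e_j$. Inside $\mathfrak{Per}\,\Gamma$ form
$$
T_i^-\;=\;\Big(\bigoplus_{j\neq i}P_j\Big)\ \oplus\ \mathrm{Cone}\!\Big(P_i\longrightarrow\bigoplus_{i\to \ell}P_\ell\Big),
$$
where the map is the tuple of left multiplications by the arrows out of $i$; this is precisely $\Phi^-(\Gamma')$, cf.\ \eqref{eq:m-}. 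One verifies that $T_i^-$ lies in $\mathfrak{Per}\,\Gamma$, generates it as a triangulated category, and is \emph{silting}, i.e.\ $\mathrm{Hom}_{D\Gamma}(T_i^-,T_i^-[p])=0$ for $p>0$. The companion object $T_i^+=T_{S_i}(T_i^-)$, built from \eqref{eq:m+}, plays the symmetric role.

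\emph{Step 2 (the endomorphism DG algebra: the main point).} The substantive step is to exhibit a quasi-isomorphism of DG algebras
$$
\mathbf{R}\mathrm{End}_\Gamma(T_i^-)\ \simeq\ \Gamma'\ :=\ \Gamma\big(\mu_i(Q,W)\big).
$$
Concretely one chooses a cofibrant model of $\mathbf{R}\mathrm{End}_\Gamma(T_i^-)$, computes its cohomology degree by degree from the triangle \eqref{eq:m-}, the $3$-CY duality \eqref{3cy} and the orthogonality \eqref{duadua}, and matches the resulting graded-quiver-with-differential against the Ginzburg data of $(Q',W')$: the composite arrows $[\alpha\beta]$ appear in degree $0$, the reversed arrows $\alpha^*$ in degree $-1$, the loops in degree $-2$, and the Leibniz differential reproduces $W'=W_1'+W_2'$ with $W_2'=\sum_{\alpha,\beta}[\alpha\beta]\beta^*\alpha^*$ (cf.\ \cite{derksen2008quivers}). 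This is where the combinatorial mutation rule is \emph{forced}, and it is the main obstacle: one must keep track of the splitting into reduced and trivial parts (i.e.\ cancel the spurious $2$-cycles) and, in the completed Ginzburg setting, carry out the whole computation $I$-adically. This is exactly Keller's theorem \cite{keller2011deformed} invoked in part (1).

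\emph{Step 3 (Morita equivalence and restriction to subcategories).} A cofibrant silting generator with endomorphism DG algebra $\Gamma'$ yields, by DG Morita theory, mutually quasi-inverse triangle equivalences
$$
\Phi^-=-\otimes^{\mathbf L}_{\Gamma'}T_i^-\colon D\Gamma'\xrightarrow{\ \sim\ }D\Gamma,\qquad
\mathbf{R}\mathrm{Hom}_\Gamma(T_i^-,-)\colon D\Gamma\xrightarrow{\ \sim\ }D\Gamma'.
$$
Since $T_i^-$ generates $\mathfrak{Per}\,\Gamma$ and $\Gamma'$ generates $\mathfrak{Per}\,\Gamma'$, this restricts to an equivalence $\mathfrak{Per}\,\Gamma\xrightarrow{\ \sim\ }\mathfrak{Per}\,\Gamma'$. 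To see that it also restricts to $D^b$, invoke the intrinsic description recorded in the Remark above: $M\in\mathfrak{Per}\,\Gamma$ lies in $D^b\Gamma$ iff $\mathrm{Hom}_{\mathfrak{Per}\,\Gamma}(P,M)$ is finite-dimensional for every $P$; a triangle equivalence preserving the perfect subcategory preserves this property, hence carries $D^b\Gamma$ onto $D^b\Gamma'$, and the square with the inclusions $D^b\hookrightarrow\mathfrak{Per}$ commutes because both horizontal arrows are one and the same functor restricted. (As a cross-check, on hearts one recovers the mutation of \S\ref{braidactions}: the surviving simples $S_j$ are fixed, $S_i\mapsto S_i[\pm1]$, and $S_j\mapsto T_{S_i}^{\pm1}(S_j)$ whenever $\mathrm{Ext}^1(S_i,S_j)\neq0$.)

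\emph{Step 4 (cluster category and part (2)).} Because the equivalence $\mathfrak{Per}\,\Gamma\xrightarrow{\ \sim\ }\mathfrak{Per}\,\Gamma'$ maps the triangulated subcategory $D^b\Gamma$ onto $D^b\Gamma'$, the universal property of the Verdier quotient (applied both to it and to its quasi-inverse) descends it to a triangle equivalence
$$
\mathcal C(\Gamma)=\mathfrak{Per}\,\Gamma/D^b\Gamma\ \xrightarrow{\ \sim\ }\ \mathfrak{Per}\,\Gamma'/D^b\Gamma'=\mathcal C(\Gamma'),
$$
which is part (1); the categories $D\Gamma$, $\mathfrak{Per}\,\Gamma$, $D^b\Gamma$, $\mathcal C(\Gamma)$ were already set up in \S\ref{sec:cluster}. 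For part (2), compose the equivalence of part (1) with the functor sending $M$ to $H^0(M)$, regarded as a module over $H^0(\Gamma)=J(Q,W)$; applying $H^0$ to the triangle \eqref{eq:m-} and to the twist $T_{S_i}$ and comparing with the combinatorial mutation of $J$-modules (decorated representations in the sense of Derksen--Weyman--Zelevinsky) yields the asserted commuting diagram. This is Plamondon's comparison \cite{plamondon2011cluster}.
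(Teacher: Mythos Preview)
Your sketch is essentially correct and is a faithful outline of the Keller--Yang argument; but note that the paper does not give its own proof of this statement at all. The theorem is stated with attributions to \cite{keller2011deformed} and \cite{plamondon2011cluster} and then used as a black box, so there is no in-paper argument to compare against. What you have written is, in effect, a condensed reconstruction of the proofs in those cited references (the silting mutation $T_i^\pm$, the identification $\mathbf{R}\mathrm{End}_\Gamma(T_i^-)\simeq\Gamma'$, and the descent to the Verdier quotient), which is the standard route and is consistent with how the surrounding text in the paper sets up $\Phi^\pm$ and the triangles \eqref{eq:m-}--\eqref{eq:m+}.

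One minor remark: the paper's citation \cite{keller2011deformed} is the Keller--Van den Bergh paper on deformed CY completions; the derived equivalence under mutation that you are sketching is more precisely the content of Keller--Yang \cite{keller2011derived}, which the paper also cites elsewhere. Your attribution to ``the Keller--Yang construction'' is therefore the right one, and indeed the DG-endomorphism computation in your Step~2 is the heart of that paper rather than of \cite{keller2011deformed}.
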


\begin{definition}
\label{def:clustertilt}
 Let $\mathcal{C}$ be a Hom-finite triangulated category. An object $T \in \mathcal{C}$ is
called \emph{cluster-tilting} (or \emph{2-cluster-tilting}) if $T$ is basic (i.e.\!\! with pairwise non-isomorphic direct summands) and if we have
$$\mathsf{add}\,T = \big\{X \in \mathcal{C}\; \big|\; \mathrm{Hom}_{\mathcal{C}}(X, T [1]) = 0\big\} = \big\{X \in \mathcal{C}\; \big|\; \mathrm{Hom}_{\mathcal{C}}(T, X[1]) = 0\big\}.$$
Note that a cluster-tilting object is maximal rigid (the converse is not always true, see
\cite{burban2008cluster}), and that the second equality in the definition always holds when $\mathcal{C}$ is
2-Calabi-Yau.
\end{definition}
If there exists a cluster-tilting object in a 2-CY category $\mathcal{C}$, then it is possible
to construct others by a recursive process resumed in the following:
\begin{theorem}[Iyama-Yoshino \cite{iyama2008mutation}] Let $\mathcal{C}$ be a Hom-finite 2-CY triangulated category
with a cluster-tilting object $T$. Let $T_i$ be an indecomposable direct summand
of $T \cong T_i \oplus T_0$. Then there exists a unique indecomposable $T^*_i$ non isomorphic
to $T_i$ such that $T_0 \oplus T^*_i$ is cluster-tilting. Moreover $T_i$ and $T^*_i$ are linked by the
existence of triangles
$$T_i \overset{u}{\to}B\overset{v}{\to}T^*_i \overset{w}{\to} T_i[1] \ \ and \ \ T^*_i\overset{u'}{\to}B'\overset{v'}{\to}T_i\overset{w'}{\to}T^*_i[1]$$
where $u$ and $u'$ are minimal left $\mathsf{add}\,T_0$-approximations and $v$ and $v'$ are minimal
right $\mathsf{add}\,T_0$-approximations.
\end{theorem}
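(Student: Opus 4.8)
The plan is to reproduce, in this Hom-finite $2$-CY setting, the mutation-pair construction of Iyama--Yoshino \cite{iyama2008mutation}. Write $\mathcal{D}=\mathsf{add}\,T_0$; since $\mathcal{C}$ is Hom-finite and Krull--Schmidt, every object has minimal left and right $\mathcal{D}$-approximations, unique up to isomorphism. First I would \emph{construct} $T_i^*$: take a minimal right $\mathcal{D}$-approximation $v'\colon B'\to T_i$ and complete $v'$ to a triangle $T_i^*\xrightarrow{u'}B'\xrightarrow{v'}T_i\xrightarrow{w'}T_i^*[1]$, which determines $T_i^*$ up to isomorphism. Two preliminary points must be settled. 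First, $u'$ is automatically a minimal left $\mathcal{D}$-approximation of $T_i^*$, and there is a companion triangle $T_i\xrightarrow{u}B\xrightarrow{v}T_i^*\xrightarrow{w}T_i[1]$ built from the \emph{same} object $T_i^*$, with $u$ a minimal left and $v$ a minimal right $\mathcal{D}$-approximation; both follow by feeding the defining property of $v'$ through the $2$-CY isomorphism $\mathrm{Hom}(X,Y)\cong D\mathrm{Hom}(Y,X[2])$, which interchanges the right- and left-$\mathcal{D}$-approximation conditions and, the Serre functor being a pure shift, makes the new triangle close up on $T_i^*$; in the language of \cite{iyama2008mutation}, $(T_i,T_i^*)$ is a $\mathcal{D}$-mutation pair. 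Second, minimality forces $T_i^*\not\cong T_i$ and $T_i^*\notin\mathcal{D}$, since otherwise $v'$ or $u'$ would split and exhibit $T_i$ as a summand of an object of $\mathsf{add}\,T_0$, contradicting that $T=T_0\oplus T_i$ is basic.

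Next I would prove that $T'=T_0\oplus T_i^*$ is \emph{rigid}, $\mathrm{Hom}(T',T'[1])=0$. The piece $\mathrm{Hom}(T_0,T_0[1])$ vanishes since $T$ is rigid. Applying $\mathrm{Hom}(T_0,-)$ to the exchange triangle and using that $v'$ is a $\mathcal{D}$-approximation (so $\mathrm{Hom}(T_0,B')\to\mathrm{Hom}(T_0,T_i)$ is onto) together with $\mathrm{Hom}(T_0,B'[1])=0$ gives $\mathrm{Hom}(T_0,T_i^*[1])=0$; $2$-CY duality then yields $\mathrm{Hom}(T_i^*,T_0[1])=0$. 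For the last piece $\mathrm{Hom}(T_i^*,T_i^*[1])=0$: given $\phi\colon T_i^*\to T_i^*[1]$, the composite $\phi\circ v$ vanishes because $\mathrm{Hom}(T_0,T_i^*[1])=0$, so $\phi$ factors as $\bar\psi[1]\circ w$ for some $\bar\psi\colon T_i\to T_i^*$; since $\mathrm{Hom}(T_i,T_i[1])=0$, every map $T_i\to T_i^*$ factors through $v$, so $\bar\psi=v\circ\tau$ and $\phi=v[1]\circ\tau[1]\circ w$, which is zero because $\tau[1]\circ w$ factors through $B[1]$ and $\mathrm{Hom}(T_i^*,T_0[1])=0$. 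Hence $T'$ is rigid.

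The main obstacle, I expect, is \emph{maximality}: showing $T'$ is actually cluster-tilting, i.e.\ $\mathsf{add}\,T'=\{X\in\mathcal{C}\mid\mathrm{Hom}(T',X[1])=0\}$ (in the $2$-CY setting this set also equals $\{X\mid\mathrm{Hom}(X,T'[1])=0\}$, and the inclusion $\subseteq$ is exactly the rigidity just proved). Given $X$ with $\mathrm{Hom}(T',X[1])=0$, I would use that the cluster-tilting object $T$ provides a length-one $\mathsf{add}\,T$-resolution triangle $T^1\to T^0\to X\to T^1[1]$ with $T^0\to X$ a minimal right $\mathsf{add}\,T$-approximation. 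From $\mathrm{Hom}(T_0,X[1])=0$ (and $2$-CY) the $\mathsf{add}\,T_0$-part of the connecting map $X\to T^1[1]$ vanishes; from $\mathrm{Hom}(T_i^*,X[1])=0$ one shows every map $T_i\to X$ factors through $\mathcal{D}$, and minimality then lets one take $T^0\in\mathcal{D}\subseteq\mathsf{add}\,T'$. An octahedron applied to the induced factorization of the connecting map through $\mathsf{add}\,T_i$, combined with the companion exchange triangle $T_i\to B\to T_i^*\to T_i[1]$ applied summand-wise, rewrites $X$ as an extension of an object of $\mathsf{add}\,T_i^*$ by objects of $\mathsf{add}\,T_0$; a final rigidity argument (using $\mathrm{Hom}(T',X[1])=0=\mathrm{Hom}(X,T'[1])$) shows the extension splits, so $X\in\mathsf{add}\,T'$. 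Getting all these diagram chases and minimality reductions exactly right is the technical heart.

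Finally, \emph{uniqueness}: once maximality is known, $T_0\oplus T_i^*$ is cluster-tilting, so $\mathsf{add}(T_0\oplus T_i^*)=\{Z\mid\mathrm{Hom}(T_0\oplus T_i^*,Z[1])=0\}$. Let $Y$ be indecomposable with $T_0\oplus Y$ cluster-tilting. If $\mathrm{Hom}(T_i,Y[1])=0$ then $\mathrm{Hom}(T,Y[1])=0$, so $Y\in\mathsf{add}\,T$, and $Y$ is not a summand of $T_0$ because $T_0\oplus Y$ is basic, hence $Y\cong T_i$. Otherwise $\mathrm{Hom}(T_i,Y[1])\neq 0$, and the exchange triangles together with $2$-CY duality (using $\mathrm{Hom}(T_0,Y[1])=0$) are used to extract $\mathrm{Hom}(T_i^*,Y[1])=0$, whence $Y\in\mathsf{add}(T_0\oplus T_i^*)$ and $Y\cong T_i^*$ by Krull--Schmidt. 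Thus $T_i^*$ is the unique complement of $T_0$ distinct from $T_i$, and the two displayed triangles are the exchange triangle of the construction and its companion, with the asserted minimality of $u,u',v,v'$.
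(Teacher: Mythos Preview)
The paper does not give its own proof of this theorem; it is stated as a cited result from Iyama--Yoshino \cite{iyama2008mutation} and used as background machinery, so there is no in-paper argument to compare against. Your outline reproduces the standard Iyama--Yoshino mutation-pair argument and is essentially correct in structure: the construction of $T_i^*$ via the approximation triangle, the rigidity computation, the maximality step via the $\mathsf{add}\,T$-resolution, and the uniqueness dichotomy are all the right ingredients.

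One small point worth tightening: in the maximality step you say ``an octahedron applied to the induced factorization \ldots\ rewrites $X$ as an extension of an object of $\mathsf{add}\,T_i^*$ by objects of $\mathsf{add}\,T_0$''. This is the correct idea, but as written it is only a gesture; in the actual argument one needs to be careful that after replacing each $T_i$-summand of $T^1$ by the corresponding $B$ via the exchange triangle, the resulting object really does lie in $\mathsf{add}\,T'$ and that the two-step filtration collapses as claimed. This is routine but is exactly the place where a reader would want to see the diagram drawn. Your acknowledgement that this is ``the technical heart'' is accurate, and it is where most of the work in \cite{iyama2008mutation} goes.
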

These triangles allow to make a mutation of the cluster-tilting object: they are called IY-mutations.
\begin{proposition}[Keller-Reiten \cite{keller2007cluster}]\label{kellerrein} Let $\mathcal{C}$ be a 2-CY triangulated category with
a cluster-tilting object $T$. Then the functor
\begin{equation}\label{krfunctor}
F_T = \mathrm{Hom}_{\mathcal{C}}(T, -)\colon  \mathcal{C} \to \mathsf{mod}\, \mathrm{End}_{\mathcal{C}}(T)
\end{equation}
induces an equivalence 
$$
\mathcal{C}\big/\mathsf{add}\,T [1] \cong \mathsf{mod}\, \mathrm{End}_{\mathcal{C}}(T ).
$$
If the objects $T$ and $T^\prime$ are linked by an IY-mutation, then the categories
$\mathsf{mod}\, \mathrm{End}_{\mathcal{C}}(T )$ and $\mathsf{mod}\,\mathrm{End}_{\mathcal{C}}(T^\prime)$ are \emph{nearly Morita equivalent,} that is, there exists
a simple $\mathrm{End}_{\mathcal{C}}(T)$-module $S$, and a simple $\mathrm{End}_{\mathcal{C}}(T^\prime)$-module $S^\prime$, and an equivalence
of categories 
$$\mathsf{mod}\, \mathrm{End}_{\mathcal{C}}(T)\big/\mathsf{add}\,S \cong \mathsf{mod}\, \mathrm{End}_{\mathcal{C}}(T^\prime)\big/\mathsf{add}\, S^\prime.$$
Moreover, if $X$ has no direct summands in $\mathsf{add}\,T[1]$, then $F_TX$ is \emph{projective} (resp.\! \emph{injective}) if and only if $X$ lies in $\mathsf{add}\,T$ (resp.\! in $\mathsf{add}\,T[2]$).
\end{proposition}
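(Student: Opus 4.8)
\medskip
\noindent\textbf{Proof plan.} The three assertions will be treated in turn; the equivalence in the first sentence is the technical core, and the statements about near Morita equivalence and about projective/injective objects are then read off from it together with the $2$-Calabi-Yau property and the Iyama--Yoshino triangles. Throughout write $B:=\mathrm{End}_{\mathcal C}(T)$.

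First I would prove that $F_T=\mathrm{Hom}_{\mathcal C}(T,-)$ induces the stated equivalence $\mathcal C/\mathsf{add}\,T[1]\xrightarrow{\ \sim\ }\mathsf{mod}\,B$. Since $T$ is cluster-tilting, $\mathsf{add}\,T$ is functorially finite, so for every $X\in\mathcal C$ one picks a right $\mathsf{add}\,T$-approximation $p\colon T_0\to X$ and completes it to a triangle $T_1\to T_0\xrightarrow{p}X\to T_1[1]$; applying $\mathrm{Hom}_{\mathcal C}(T,-)$ and using that $p$ is an approximation shows $\mathrm{Hom}_{\mathcal C}(T,T_1[1])=0$, hence $T_1\in\mathsf{add}\,T$ by the cluster-tilting identity $\mathsf{add}\,T=\{Z:\mathrm{Hom}_{\mathcal C}(T,Z[1])=0\}$. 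The same long exact sequence gives an exact sequence $F_TT_1\to F_TT_0\to F_TX\to 0$, i.e.\ a projective presentation of $F_TX$ over $B$: this shows at once that $F_TX$ is finitely presented and, reading it backwards (every finitely presented $B$-module is a cokernel of a map of projectives, which lifts along the equivalence $\mathrm{Hom}_{\mathcal C}(T,-)\colon\mathsf{add}\,T\xrightarrow{\sim}\mathrm{proj}\,B$ and whose cone then maps to it), that $F_T$ is dense. Fullness follows by lifting a module map along such presentations, again via the approximation property. Finally $\ker F_T=\{Y:\mathrm{Hom}_{\mathcal C}(T,Y)=0\}$, which by the cluster-tilting identity (applied to $Y[-1]$) is exactly $\mathsf{add}\,T[1]$; and a morphism $f$ with $F_Tf=0$ becomes zero after precomposition with an $\mathsf{add}\,T$-approximation, hence factors through $\mathsf{add}\,T[1]$, so the functor induced on $\mathcal C/\mathsf{add}\,T[1]$ is faithful. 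A full, faithful, dense functor is an equivalence.

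Next, the statement on projectives and injectives. If $X\in\mathsf{add}\,T$ then $F_TX\in\mathsf{add}\,\mathrm{Hom}_{\mathcal C}(T,T)=\mathsf{add}\,B$, the projectives; dually the $2$-CY isomorphism $D\mathrm{Hom}_{\mathcal C}(T,X)\cong\mathrm{Hom}_{\mathcal C}(X,T[2])$ gives $F_T(T[2])=\mathrm{Hom}_{\mathcal C}(T,T[2])\cong DB$, so $X\in\mathsf{add}\,T[2]$ forces $F_TX\in\mathsf{add}\,DB$, the injectives. For the converses I would assume $X$ has no summand in $\mathsf{add}\,T[1]$ and choose the approximation triangle $T_1\to T_0\to X\to T_1[1]$ to be minimal; then $F_TT_0\to F_TX$ is a projective cover and $F_TT_1\to F_TT_0\to F_TX\to 0$ is a \emph{minimal} projective presentation. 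If $F_TX$ is projective, minimality forces $F_TT_1=0$, i.e.\ $T_1\in\mathsf{add}\,T\cap\mathsf{add}\,T[1]=0$ (because $\mathrm{Hom}_{\mathcal C}(T,T_1)=0$ with $T_1\in\mathsf{add}\,T$ implies $T_1=0$), whence $X\cong T_0\in\mathsf{add}\,T$. The injective case is obtained by running the same argument for the opposite algebra, i.e.\ for the dual equivalence $\mathcal C/\mathsf{add}\,T[-1]\cong\mathsf{mod}\,B^{op}$, and then shifting by $[2]$ with the $2$-CY duality.

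Finally, near Morita equivalence. Write $T=T_i\oplus T_0$, $T'=T_i^*\oplus T_0$, $B'=\mathrm{End}_{\mathcal C}(T')$, and let $T_i^*\to M_0'\xrightarrow{v'}T_i\xrightarrow{w'}T_i^*[1]$ be the Iyama--Yoshino triangle with $M_0'\in\mathsf{add}\,T_0$ and $v'$ a minimal right $\mathsf{add}\,T_0$-approximation, so that the cone of $v'$ is $T_i^*[1]$. Applying $\mathrm{Hom}_{\mathcal C}(T,-)$ and using $\mathrm{Hom}_{\mathcal C}(T,M_0'[1])=0$ gives $F_T(T_i^*[1])=\mathrm{coker}\!\big(\mathrm{Hom}_{\mathcal C}(T,M_0')\to\mathrm{Hom}_{\mathcal C}(T,T_i)\big)$; since $v'$ is a radical minimal approximation the image of this map is the submodule of $P_i:=\mathrm{Hom}_{\mathcal C}(T,T_i)$ of morphisms factoring through $\mathsf{add}\,T_0$, which equals $\mathrm{rad}\,P_i$, so $S:=F_T(T_i^*[1])$ is the simple top of $P_i$ — in particular a simple $B$-module, and $T_i^*[1]\notin\mathsf{add}\,T[1]$. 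By the first part, $\mathsf{mod}\,B\cong\mathcal C/\mathsf{add}(T_i[1]\oplus T_0[1])$ and under this equivalence $\mathsf{add}\,S$ corresponds to $\mathsf{add}\,T_i^*[1]$, so factoring out gives
$$\mathsf{mod}\,B/\mathsf{add}\,S\;\cong\;\mathcal C\big/\mathsf{add}\!\big(T_i[1]\oplus T_0[1]\oplus T_i^*[1]\big).$$
The Iyama--Yoshino triangles are symmetric under $T_i\leftrightarrow T_i^*$, so the same computation with $T$ and $T'$ exchanged identifies the simple $B'$-module $S'=F_{T'}(T_i[1])$ and gives $\mathsf{mod}\,B'/\mathsf{add}\,S'\cong\mathcal C/\mathsf{add}(T_i^*[1]\oplus T_0[1]\oplus T_i[1])$, the very same quotient; composing the two equivalences yields $\mathsf{mod}\,B/\mathsf{add}\,S\cong\mathsf{mod}\,B'/\mathsf{add}\,S'$.

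\medskip
\noindent\textbf{Main obstacle.} The formal part (triangles, long exact sequences, Verdier/ideal quotients, $2$-CY duality) is routine; the delicate points are the \emph{minimality} statements — that $F_T$ carries a minimal $\mathsf{add}\,T$-approximation to a projective cover and a minimal triangle to a minimal projective presentation — and the identification $\mathrm{im}\big(\mathrm{Hom}_{\mathcal C}(T,M_0')\to P_i\big)=\mathrm{rad}\,P_i$ (equivalently, that every radical endomorphism of the indecomposable $T_i$ factors through $\mathsf{add}\,T_0$). This last point rests on the fact that the Gabriel quiver of $B$ has no loops, itself a consequence of the $2$-Calabi-Yau property, and on Krull--Schmidt/radical arguments in $\mathsf{add}\,T$; together with checking that the induced quotient functor in the first paragraph is genuinely faithful (kernel is the whole ideal $\mathsf{add}\,T[1]$, not merely the objects), these are where the real work lies. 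All of this is available in the cluster-tilting literature (Keller--Reiten, Buan--Marsh--Reiten, Iyama--Yoshino).
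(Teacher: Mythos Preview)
The paper does not prove this proposition at all: it is simply quoted from Keller--Reiten \cite{keller2007cluster} as background, with no argument given. So there is no ``paper's own proof'' to compare against.

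That said, your proof plan is correct and is essentially the standard argument found in the cluster-tilting literature (Keller--Reiten, Buan--Marsh--Reiten, Iyama--Yoshino). The approximation-triangle construction of projective presentations, the identification of $\ker F_T$ with $\mathsf{add}\,T[1]$ via the cluster-tilting condition, the computation $F_T(T_i^*[1])\cong S_i$ from the IY-triangle, and the symmetric-quotient argument for near Morita equivalence are all the right moves. Your ``Main obstacle'' paragraph correctly flags the genuinely delicate points (minimality transfer, no-loops for the radical identification, faithfulness on the ideal level); these are handled in the cited sources exactly as you indicate.
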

Thus, from \textbf{Theorem \ref{amiotthm}} and the above \textbf{Proposition}, we get that in the Jacobi-finite
case, for any cluster-tilting object $T \in \mathcal{C}(\Gamma)$ which is IY-mutation equivalent to
the canonical one, we have:
\begin{displaymath}
    \xymatrix{ & \mathcal{C}(\Gamma) \ar[dl]_{F_{T^\prime}}\ar[dr]^{F_T}& \\
                    \mathsf{mod}\, \mathrm{End}_{\mathcal{C}(\Gamma)}(T) \ar@{<.>}[rr]^{mutation}& & \mathsf{mod}\,\mathrm{End}_{\mathcal{C}(\Gamma)}(T^\prime)}
\end{displaymath}

\subsection{Grothendieck groups,
skew-symmetric pairing, and the index}

\subsubsection{Motivations from physics}\label{kkkzzz21mmc}

In a quantum theory there are two distinct notions of `quantum numbers': the  quantities which are conserved in all physical processes and, on the other hand, the numbers which are used to label (i.e.\! to distinguish) states and operators.  
If a class of BPS objects is described (in a certain physical set-up) by the triangle category $\mathfrak{T}$, 
these two notions of `quantum numbers' get identified as follows:
\begin{itemize}
\item \textbf{conserved quantities:}  numerical invariants of objects $X\in\mathfrak{T}$ which only depend on their Grothendieck class $[X]\in K_0(\mathfrak{T})$.\footnote{\ In general, the conserved quantum \emph{numbers} take value in the \emph{numeric} Grothendieck group $K_0(\mathfrak{T})_\text{num}$. For the categories we consider in this paper, the Grothendieck group is a finitely generated Abelian group and the two groups coincide.} This is the free Abelian group over the isoclasses of objects of $\mathfrak{T}$ modulo the relations given by distinguished triangles of $\mathfrak{T}$;
\item \textbf{labeling  numbers:} correspond to  numerical invariants of the objects $X\in\mathfrak{T}$ which are well-defined, that is, depend only on its isoclass (technically, on their class in the split-Grothendieck group).
\end{itemize}

Of course, conserved quantities are in particular labeling  numbers.
Depending on the category $\mathfrak{T}$ there may be or not be   enough conserved quantities $K_0(\mathfrak{T})$ to label all the relevant BPS objects.
\medskip

In the categorical approach to the BPS sector of a supersymmetric  theory, the basic problem takes the form:
\begin{problem}
Given a class of BPS objects $A$ in a specified physical set-up, determine the corresponding triangulated category $\mathfrak{T}_A$.
\end{problem} 

The Grothendieck group is a very handy tool to solve this \textbf{Problem}. Indeed, the BPS objects of $A$ carry certain conserved quantum numbers which satisfy a number of physical consistency requirements. The allowed quantum numbers take value in an Abelian group $\mathsf{Ab}_A$, and the consistency requirements endow the group with some extra mathematical structures. Both the group $\mathsf{Ab}_A$ and the extra structures on it are known from physical considerations (we shall review the ones of interest in \S.\,\ref{physpre}). Then suppose we have a putative solution $\mathfrak{T}_A$ of the above problem. We compute its Grothendieck group; if $K_0(\mathfrak{T}_A)\not\cong \mathsf{Ab}_A$, we can rule out
$\mathfrak{T}_A$ as a solution of the above \textbf{Problem}. Even if 
$K_0(\mathfrak{T}_A)\cong \mathsf{Ab}_A$, but $K_0(\mathfrak{T}_A)$ is not naturally endowed with the required extra structures, we may rule out $\mathfrak{T}_A$. On the other hand, if we find that $K_0(\mathfrak{T}_A)\cong \mathsf{Ab}_A$ and the Grothendieck group is canonically equipped with the physically expected structures, we gain confidence on the proposed solution, especially if the requirements on $K_0(\mathfrak{T}_A)$ are quite restrictive.  

Therefore, as a preparation for the discussion of their physical interpretation in section 4, we need to analyze in detail the Grothendieck groups of the three triangle categories $D^b\Gamma$, $\mathfrak{Per}\,\Gamma$, or $\mathcal{C}(\Gamma)$. These  categories are related by the functors $\mathsf{s}$, $\mathsf{r}$ which, being exact, induce group homomorphisms between the corresponding Grothendieck groups.  In all three cases $K_0(\mathfrak{T})$ is a finitely generated Abelian group carrying additional structures;
later in the paper we shall compare this structures with the one required by quantum physics. 

\subsubsection{The lattice $K_0(D^b\Gamma)$ and the skew-symmetric form}
The group $K_0(D^b\Gamma)$ is easy to compute using the following
\begin{proposition}
[Keller \cite{keller2011cluster}] The Abelian category $\mathsf{nil}\,J(Q,W)$ is the heart of a bounded $\boldsymbol{t}$-structure in $D^b\Gamma$.
\end{proposition}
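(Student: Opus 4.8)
The plan is to exhibit the natural $t$-structure on $D\Gamma$ coming from cohomological truncation of the underlying complex, restrict it to $D^b\Gamma$, and identify its heart with $\mathsf{nil}\,J(Q,W)$. The crucial structural input is that the completed Ginzburg algebra $\Gamma$ is concentrated in non-positive degrees as a graded algebra (its extra arrows $a^*$, $t_i$ have degrees $-1$, $-2$), so that left multiplication by $\Gamma$ never raises the degree and, moreover, $d$ vanishes on $\Gamma^0$ since $\Gamma^{1}=0$. Consequently, for any DG $\Gamma$-module $M$ the complex-level truncation
\begin{equation*}
(\tau_{\le n}M)^p=\begin{cases} M^p,& p<n\\ \ker\bigl(d\colon M^n\to M^{n+1}\bigr),& p=n\\ 0,& p>n\end{cases}
\end{equation*}
is again a DG $\Gamma$-submodule of $M$, with quotient (canonically quasi-isomorphic to) the upper truncation $\tau_{\ge n+1}M$: stability under the action and under $d$ is checked degree by degree, and this is exactly where $\Gamma^{>0}=0$ and $d|_{\Gamma^0}=0$ are used. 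Thus $\tau_{\le n}$, $\tau_{\ge n}$ descend to functors on $\mathsf{mod}\,\Gamma$ realizing the standard truncations on cohomology.

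First I would use these truncations to set up the $t$-structure on $D\Gamma$, with $D^{\le 0}=\{M:H^pM=0\ \text{for }p>0\}$ and $D^{\ge 1}=\{M:H^pM=0\ \text{for }p\le 0\}$. The short exact sequence of DG modules $0\to\tau_{\le 0}M\to M\to\tau_{\ge 1}M\to 0$ provides, for every $M$, the required triangle $\tau_{\le 0}M\to M\to\tau_{\ge 1}M\to(\tau_{\le 0}M)[1]$ in $D\Gamma$; the shift inclusions are trivial; and the orthogonality $\mathrm{Hom}_{D\Gamma}(X,Y)=0$ for $X\in D^{\le 0}$, $Y\in D^{\ge 1}$ follows by truncating a representing roof (applying $\tau_{\le 0}$ to the quasi-isomorphism in the roof and using $\tau_{\le 0}X\xrightarrow{\sim}X$). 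This is the natural $t$-structure already mentioned, standard for DG algebras concentrated in non-positive degrees; I would cite \cite{keller2011cluster} for the details.

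Next I would restrict this $t$-structure to $D^b\Gamma$. Because $\tau_{\le 0}$ and $\tau_{\ge 1}$ only annihilate cohomology, they preserve both boundedness and finite total dimension of homology, so the $t$-structure restricts; and since every $M\in D^b\Gamma$ has $H^pM=0$ for $|p|\gg 0$, it lies in $D^{\ge -N}\cap D^{\le N}$ for some $N$, so the restricted $t$-structure is bounded. It then remains to identify the heart $D^{\le 0}\cap D^{\ge 0}$. An object $M$ of the heart has $H^pM=0$ for $p\ne 0$, hence the canonical maps $M\leftarrow\tau_{\le 0}\tau_{\ge 0}M\to H^0(M)$ are quasi-isomorphisms, so $M$ is isomorphic in $D\Gamma$ to $H^0(M)$ placed in degree $0$ with zero differential; the $\Gamma$-action on it factors through $H^0(\Gamma)\cong J(Q,W)$. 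Finiteness of the total homology forces $H^0(M)$ to be finite-dimensional, and continuity of the action of the complete path algebra $kQ$ for the $I$-adic topology forces the arrows to act nilpotently; conversely every finite-dimensional nilpotent $J(Q,W)$-module arises this way. Hence the heart is canonically equivalent to $\mathsf{nil}\,J(Q,W)$, which is therefore the heart of a bounded $t$-structure in $D^b\Gamma$.

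The two places requiring genuine care are the verification that the complex-level truncations are compatible with the $\Gamma$-module structure — the step that uses $\Gamma^{>0}=0$ essentially — and the matching of the heart with the \emph{nilpotent} modules rather than with all finite-dimensional $J(Q,W)$-modules; I expect the latter to be the main obstacle to making the argument fully precise, and it is exactly the content of Keller's identification in \cite{keller2011cluster}.
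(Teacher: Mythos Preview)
Your proposal is correct and follows exactly the approach the paper itself sketches (in \S2.5.2): the paper does not give a full proof but simply notes that because $\Gamma$ is concentrated in degrees $\le 0$ the truncation functors of the natural $t$-structure on complexes of vector spaces restrict to $D\Gamma$ and hence to $D^b\Gamma$, and then cites Keller for the identification of the heart with $\mathsf{nil}\,J(Q,W)$. Your write-up is a faithful expansion of precisely this outline.
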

Hence, since we assume $(Q,W)$ to be Jacobi-finite, $\mathsf{nil}\,J(Q,W)\cong \mathsf{mod}\,J(Q,W)$ and
$$K_0(D^b\Gamma)\simeq K_0(\textsf{mod}\,J(Q,W))$$
 is isomorphic to the free Abelian group over the isoclasses $[S_i]$ of the simple Jacobian modules $S_i$, that is,  $K_0(D^b\Gamma)\cong\Z^n$
($n$ being the number of nodes of $Q$).

 $D^b\Gamma$ is 3-CY, and then the lattice $K_0(D^b\Gamma)$ is equipped with an intrinsic skew-symmetric pairing given by the Euler characteristics, see discussion around eqn.\eqref{eq:euler}. This pairing has an intepretation in terms of modules of the Jacobian algebra $B\equiv J(Q,W)\cong \mathrm{End}_{\mathcal{C}(\Gamma)}(\Gamma)$.

\begin{proposition}[Palu \cite{palu2009grothendieck}]\label{ppaq} Let  $X,Y\in\mathsf{mod}\,B$. Then the  form
$$\langle X, Y\rangle_a= 
\dim \mathrm{Hom}(X,Y)-\dim \mathrm{Ext}^1(X,Y)-\dim \mathrm{Hom}(Y,X)+\dim \mathrm{Ext}^1(Y,X)$$
descends to an antisymmetric form on $K_0(\mathsf{mod}\,B)$. Its matrix in the basis of simples $\{S_i\}$ is the exchange matrix $B$ of the quiver $Q$ {\rm (cfr.\! eqn.\eqref{uuuia}).}
\end{proposition}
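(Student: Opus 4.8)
The plan is to recognise Palu's form $\langle-,-\rangle_a$ as the skew-symmetric Euler pairing $\vev{-,-}$ of $D^b\Gamma$, eqn.\eqref{eq:euler}, transported to $\mathsf{mod}\,B$ through the identification of $\mathsf{mod}\,B$ with the canonical heart $\A\subset D^b\Gamma$; everything then follows from facts already established. \emph{Step 1.} Since $(Q,W)$ is Jacobi-finite, $B=J(Q,W)$ is finite dimensional and $\mathsf{mod}\,B=\mathsf{nil}\,J(Q,W)=\A$, the heart of the canonical bounded $\boldsymbol t$-structure on $D^b\Gamma$ (Keller's Proposition quoted above). The inclusion $\A\hookrightarrow D^b\Gamma$ induces an isomorphism $K_0(\mathsf{mod}\,B)\xrightarrow{\ \sim\ }K_0(D^b\Gamma)$ sending $[S_i]$ to the class of the corresponding simple object of $D^b\Gamma$; in particular $K_0(\mathsf{mod}\,B)$ is free on $\{[S_i]\}$.

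\emph{Step 2: $\langle-,-\rangle_a$ agrees with $\vev{-,-}$ on the heart.} For $X,Y\in\A$ one has $\mathrm{Hom}_{D^b\Gamma}(X,Y[k])=0$ for $k<0$ (a property of a heart), and hence also for $k>3$ since \eqref{3cy} gives $\mathrm{Hom}_{D^b\Gamma}(X,Y[k])\cong D\,\mathrm{Hom}_{D^b\Gamma}(Y,X[3-k])$; therefore
\[
\vev{X,Y}=\sum_{k=0}^{3}(-1)^k\dim\mathrm{Hom}_{D^b\Gamma}(X,Y[k]).
\]
Now $\mathrm{Hom}_{D^b\Gamma}(X,Y[0])=\mathrm{Hom}_B(X,Y)$ and $\mathrm{Hom}_{D^b\Gamma}(X,Y[1])=\mathrm{Ext}^1_B(X,Y)$ — these two identifications hold for a heart, even though the analogous one fails for $k\ge 2$ — while \eqref{3cy} rewrites the remaining terms as $\mathrm{Hom}_{D^b\Gamma}(X,Y[2])\cong D\,\mathrm{Ext}^1_B(Y,X)$ and $\mathrm{Hom}_{D^b\Gamma}(X,Y[3])\cong D\,\mathrm{Hom}_B(Y,X)$. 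Substituting,
\[
\vev{X,Y}=\dim\mathrm{Hom}_B(X,Y)-\dim\mathrm{Ext}^1_B(X,Y)+\dim\mathrm{Ext}^1_B(Y,X)-\dim\mathrm{Hom}_B(Y,X)=\langle X,Y\rangle_a,
\]
which is exactly the computation already carried out for the skew form on $D^b\Gamma$ in the lines around \eqref{uuuia}.

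\emph{Step 3: conclusion.} The Euler form $\vev{-,-}$ is well defined on $K_0(D^b\Gamma)$ because $D^b\Gamma$ has finite-dimensional total morphism spaces, and it is skew-symmetric by \eqref{3cy}. By Step 2 the form $\langle-,-\rangle_a$ is its pullback along $K_0(\mathsf{mod}\,B)\cong K_0(D^b\Gamma)$, so $\langle-,-\rangle_a$ descends to $K_0(\mathsf{mod}\,B)$ and is antisymmetric. Finally, its Gram matrix in the basis $\{[S_i]\}$ coincides with that of $\vev{-,-}$ in the basis of simples, which was computed in \eqref{uuuia} to be the exchange matrix $B=(B_{ij})$ of \eqref{exchanger}. (Equivalently, compute directly: $\dim\mathrm{Hom}_B(S_i,S_j)=\delta_{ij}$, so the Hom-terms cancel, and $\dim\mathrm{Ext}^1_B(S_i,S_j)$ counts the arrows between $i$ and $j$ in the Gabriel quiver of $B$, which is $Q$ once $(Q,W)$ is taken reduced; whence $\langle S_i,S_j\rangle_a=B_{ij}$.)

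I expect no serious obstacle in this Ginzburg-algebra formulation: the only delicate point is the one flagged in Step 2, namely that the $k=2,3$ contributions must be handled through the 3-CY duality rather than identified with $\mathrm{Ext}^{2,3}_\A$ (which would be wrong, $B$ having infinite global dimension). Real work is needed only for Palu's statement in its original generality, for $B=\mathrm{End}_{\mathcal C}(T)$ with $\mathcal C$ an arbitrary Hom-finite $2$-CY category: there is then no ambient $3$-CY category, and one must argue intrinsically in $\mathcal C$ — applying $\mathrm{Hom}_{\mathcal C}(-,Y)$ and $\mathrm{Hom}_{\mathcal C}(Y,-)$ to a triangle lifting a short exact sequence of $B$-modules, and checking that the resulting ``defect'' terms, which live in the $\mathrm{Hom}_{\mathcal C}(-,-[1])$ groups, have equal dimension by the $2$-CY duality $\mathrm{Hom}_{\mathcal C}(U,V[1])\cong D\,\mathrm{Hom}_{\mathcal C}(V,U[1])$. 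That matching is the only place where care is genuinely required.
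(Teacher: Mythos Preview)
Your argument is correct and is precisely the one the paper uses: the proposition is cited from Palu, but the paper's own justification is the computation in the lines surrounding eqn.~\eqref{uuuia}, where the 3-CY duality \eqref{3cy} together with $\mathrm{Ext}^i_\A=\mathrm{Hom}_{D^b\Gamma}(-,-[i])$ for $i=0,1$ gives $\vev{X,Y}=\langle X,Y\rangle_a$ on the heart and the Gabriel count yields $\langle S_i,S_j\rangle_a=B_{ij}$. Your closing remark about Palu's intrinsic 2-CY argument being needed only for the general $\mathrm{End}_{\mathcal C}(T)$ case is also on point.
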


In conclusion, for the 3-CY category $D^b\Gamma$, the Grothendieck group is a rank $n$ lattice equipped with a skew-symmetric bilinear form $\langle-,-\rangle$.
We shall refer to the radical of this form as the \emph{flavor lattice}
$\Lambda_\text{flav}=\mathrm{rad}\,\langle-,-\rangle$.

\subsubsection{$K_0(\mathfrak{Per}\,\Gamma)\cong K_0(D^b\Gamma)^\vee$}\label{k0per}

More or less by definition, $K_0(\mathfrak{Per}\,\Gamma)$ is the free Abelian group over the classes $[\Gamma_i]$ of indecomposable summands of $\Gamma$. Since the general perfect object has infinite homology, there is no well-defined Euler bilinear form. However,
eqn.\eqref{uuuu} implies that for $X\in\mathfrak{Per}\,\Gamma$, $Y\in D^b\Gamma$,
$$\mathrm{Hom}_{\mathfrak{Per}}(X,Y[k])=\mathrm{Hom}_{\mathfrak{Per}}(Y,X[k])=0\quad \text{for }k<0\ \text{or }k>3$$
and hence we have a Euler pairing
$$
K_0(\mathfrak{Per}\,\Gamma)\times K_0(D^b\Gamma)\to \Z,
$$
under which
$$
\langle \Gamma_i, S_j\rangle=-\langle S_j, \Gamma_i\rangle=\delta_{ij}.
$$
Thus $[S_i]$ and $[\Gamma_i]$ are dual basis and both Grothendieck groups are free (i.e.\! lattices) of rank $n$. Then we have two group isomorphisms
\begin{align}
&\Z^n\to K_0(D^b\Gamma) &&(m_1,m_2,\dots,m_n) \longmapsto \bigoplus_{i=1}^n m_i [S_i]\\
&K_0(\mathfrak{Per}\,\Gamma)\to\Z^n
&& [X]\longmapsto \Big(\langle X, S_1\rangle,\langle X, S_2\rangle,\dots,
\langle X, S_n\rangle\Big).
\end{align} 
The image of $K_0(D^b\Gamma)$ inside $K_0(\mathfrak{Per}\,\Gamma)\cong Z^n$ is isomorphic to the image of $B\colon \Z^n\to\Z^n$ where $B$ is the exchange matrix of the quiver $Q$.\footnote{\ Note that this image is invariant under quiver mutation.}
We have the obvious isomorphism
$$
K_0(\mathfrak{Per}\,\Gamma)\cong K_0(\mathsf{add}\,\Gamma).$$

\subsubsection{The structure of $K_0(\mathcal{C}(\Gamma))$}\label{k0clust}

From the basic exact sequence of categories \eqref{treexa} we get
$$\xymatrix{0 \ar[r] & K_0(D^b \Gamma) \ar[r]^s & K_0(\mathfrak{Per}\,\Gamma) \ar[r]^r & K_0(\mathcal C(\Gamma))\ar[r] & 0}$$
hence 
\begin{equation}\label{kkkaq12}
K_0(\mathcal{C}(\Gamma))\cong \Z^n\big/B\cdot \Z^n.
\end{equation}
$K_0(\mathcal{C}(\Gamma))$ is not a free Abelian group (in general) but has a torsion part which we denote as $\mathsf{tH}$ (and call the \emph{'t Hooft group})
\begin{equation}\label{uvgroup}
K_0(\mathcal{C}(\Gamma))= K_0(\mathcal{C}(\Gamma))_\text{free}\oplus\mathsf{tH}\cong \Z^f\oplus \mathsf{A}\oplus \mathsf{A}
\end{equation}
where $f=\mathrm{corank}\,B$ and $\mathsf{A}$ is the torsion group
$$
\mathsf{A}= \bigoplus_s \Z/d_s \Z,\qquad d_s\mid d_{s+1}
$$
where the $d_s$ are the positive integers in the normal form of $B$ \cite{intematrix,barot2008grothendieck}
\begin{equation}\label{normaaaal}
B \xrightarrow{\;\text{normal form}\;}
\overbrace{\,0\oplus 0\oplus \cdots \oplus 0\,}^{f\ \text{summands}}\;\oplus \begin{bmatrix}0 & d_1\\
-d_1 &0\end{bmatrix}\oplus
\begin{bmatrix}0 & d_2\\
-d_2 &0\end{bmatrix}\oplus \cdots\oplus \begin{bmatrix}0 & d_\ell\\
-d_\ell &0\end{bmatrix}
\end{equation}

\subsubsection{The index of a cluster object}\label{kkazqw}

Since the rank of the Abelian group $K_0(\mathcal{C}(\Gamma))$ is (in general) smaller than $n$, the Grothendieck class is not sufficient to label different objects (modulo deformation). We need to introduce other `labeling quantum numbers' which do the job. This corresponds to the math concept of \emph{index} (or, dually, \emph{coindex}).

\begin{lemma}[Keller-Reiten \cite{keller2007cluster}] For each object $L\in\mathcal{C}(\Gamma)$ there is a triangle
$$T_1\to T_0\to L\to\qquad\text{with } T_1,T_0\in\mathsf{add}\,\Gamma.$$
The difference
$$[T_0]-[T_1]\in K_0(\mathsf{add}\,\Gamma)$$
does not depend on the choice of this triangle.
\end{lemma}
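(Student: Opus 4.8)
The statement to prove is the Keller--Reiten lemma: for each $L \in \mathcal{C}(\Gamma)$ there is a triangle $T_1 \to T_0 \to L \to$ with $T_1, T_0 \in \mathsf{add}\,\Gamma$, and the class $[T_0] - [T_1] \in K_0(\mathsf{add}\,\Gamma)$ is independent of the choice. My plan is to split this into an \emph{existence} part and a \emph{well-definedness} part, using the cluster-tilting property of $T = \mathsf{r}(\Gamma)$ and the $2$-Calabi-Yau structure from Theorem~\ref{amiotthm}.

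For existence, I would argue as follows. Since $\Gamma$ (viewed in $\mathcal{C}(\Gamma)$ via $\mathsf{r}$) is a cluster-tilting object, every object $L$ of $\mathcal{C}(\Gamma)$ admits an $\mathsf{add}\,\Gamma$-approximation. Concretely, $\mathrm{Hom}_{\mathcal{C}(\Gamma)}(\Gamma, L)$ is a finitely generated module over $\mathrm{End}_{\mathcal{C}(\Gamma)}(\Gamma) \cong J(Q,W)$ (Hom-finiteness), so one picks a finite set of generators; these assemble into a map $T_0 \to L$ with $T_0 \in \mathsf{add}\,\Gamma$ which is a (right) $\mathsf{add}\,\Gamma$-approximation, i.e.\ surjective on $\mathrm{Hom}_{\mathcal{C}(\Gamma)}(\Gamma, -)$. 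Completing to a triangle $T_0' \to T_0 \to L \to T_0'[1]$, one checks using $\mathrm{Hom}_{\mathcal{C}(\Gamma)}(\Gamma, \Gamma[1]) = 0$ (cluster-tilting) and the long exact sequence that $\mathrm{Hom}_{\mathcal{C}(\Gamma)}(\Gamma, T_0'[1]) = 0$; since $T = \Gamma$ is cluster-tilting, the vanishing $\mathrm{Hom}_{\mathcal{C}(\Gamma)}(\Gamma, T_0'[1]) = 0$ forces $T_0' \in \mathsf{add}\,\Gamma$, and setting $T_1 := T_0'$ gives the desired triangle. (Alternatively one invokes that $\mathsf{add}\,\Gamma$ is \emph{functorially finite} and that the resolution dimension with respect to a cluster-tilting subcategory is $\le 1$ in a $2$-CY category — but the approximation argument is self-contained.)

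For well-definedness, suppose $T_1 \to T_0 \to L \to T_1[1]$ and $T_1' \to T_0' \to L \to T_1'[1]$ are two such triangles. I would show $[T_0] - [T_1] = [T_0'] - [T_1']$ in $K_0(\mathsf{add}\,\Gamma)$. The cleanest route: apply the functor $F_\Gamma = \mathrm{Hom}_{\mathcal{C}(\Gamma)}(\Gamma, -)\colon \mathcal{C}(\Gamma) \to \mathsf{mod}\,\mathrm{End}_{\mathcal{C}(\Gamma)}(\Gamma)$ from Proposition~\ref{kellerrein}. Because $\mathrm{Hom}_{\mathcal{C}(\Gamma)}(\Gamma, \Gamma[1]) = 0$ and (by $2$-CY) $\mathrm{Hom}_{\mathcal{C}(\Gamma)}(\Gamma, \Gamma[-1]) \cong D\mathrm{Hom}_{\mathcal{C}(\Gamma)}(\Gamma, \Gamma[3])$... — here one needs $\mathrm{Hom}(\Gamma,\Gamma[j])=0$ for $j=\pm1$, which holds since $T_i, T_0 \in \mathsf{add}\,\Gamma$ and $\Gamma$ is rigid — the triangle maps to a \emph{short} exact sequence $0 \to F_\Gamma T_1 \to F_\Gamma T_0 \to F_\Gamma L \to 0$ of $\mathrm{End}$-modules, where $F_\Gamma T_i$ are projective (Proposition~\ref{kellerrein}, last sentence: $F_\Gamma X$ is projective iff $X \in \mathsf{add}\,\Gamma$, provided $X$ has no summands in $\mathsf{add}\,\Gamma[1]$ — which is automatic here). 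Thus $[T_0] - [T_1]$ maps to $[F_\Gamma L]$ under the isomorphism $K_0(\mathsf{add}\,\Gamma) = K_0(\mathrm{proj}\,\mathrm{End}) \cong K_0(\mathsf{mod}\,\mathrm{End})$ given by taking a projective presentation — and since $F_\Gamma$ is faithful on $\mathsf{add}\,\Gamma$, this class depends only on $F_\Gamma L$, hence only on $L$. Equivalently and more directly: $[F_\Gamma T_0] - [F_\Gamma T_1] = [F_\Gamma L] = [F_\Gamma T_0'] - [F_\Gamma T_1']$ in $K_0(\mathsf{mod}\,\mathrm{End})$, and since $[T_i] \mapsto [F_\Gamma T_i]$ identifies $K_0(\mathsf{add}\,\Gamma)$ with the subgroup of classes of projectives (on which the map is injective), we conclude.

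\textbf{Main obstacle.} The subtle point is the passage from the triangle in $\mathcal{C}(\Gamma)$ to a genuine \emph{short} exact sequence of $\mathrm{End}_{\mathcal{C}(\Gamma)}(\Gamma)$-modules after applying $F_\Gamma$: this requires controlling $\mathrm{Hom}_{\mathcal{C}(\Gamma)}(\Gamma, T_i[1])$ and the connecting map, i.e.\ the rigidity and cluster-tilting vanishing must be used precisely at the point where the long exact cohomology sequence is truncated to length three. One must also be careful that "$F_\Gamma X$ projective $\iff X \in \mathsf{add}\,\Gamma$" in Proposition~\ref{kellerrein} carries the hypothesis that $X$ has no summand in $\mathsf{add}\,\Gamma[1]$, so in the existence step I should arrange (by minimality of the approximation) that $T_1$ has no such summands, or else argue that adding/removing summands of $\Gamma[1]$ does not affect the class. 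The existence half, by contrast, is routine once functorial finiteness of $\mathsf{add}\,\Gamma$ in the Hom-finite $2$-CY category $\mathcal{C}(\Gamma)$ is invoked.
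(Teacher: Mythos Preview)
The paper does not prove this lemma; it is stated with attribution to Keller--Reiten and then immediately used to define the index. So there is no in-paper argument to compare against, and your proposal must be assessed on its own.

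Your existence argument is correct and is the standard one: a right $\mathsf{add}\,\Gamma$-approximation $T_0\to L$ exists by Hom-finiteness, and the cone $T_1$ lands in $\mathsf{add}\,\Gamma$ by the cluster-tilting characterization.

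Your well-definedness argument, however, has a genuine gap. You assert that applying $F_\Gamma=\mathrm{Hom}_{\mathcal{C}(\Gamma)}(\Gamma,-)$ to the triangle yields a \emph{short} exact sequence $0\to F_\Gamma T_1\to F_\Gamma T_0\to F_\Gamma L\to 0$, and hence that $[T_0]-[T_1]$ is recovered from $[F_\Gamma L]$. Surjectivity on the right is fine, but injectivity on the left fails in general. Take $L=\Gamma_i[1]$: then $\Gamma_i\to 0\to \Gamma_i[1]\to$ is a valid triangle with $T_1=\Gamma_i$, $T_0=0$; here $F_\Gamma T_1\neq 0$, $F_\Gamma T_0=0$, $F_\Gamma L=0$, and the index $[T_0]-[T_1]=-[\Gamma_i]$ is certainly not determined by $[F_\Gamma L]=0$. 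This is exactly why the Palu formula quoted immediately after the lemma carries a separate clause for $X\cong\Gamma_i[1]$. Your ``main obstacle'' paragraph misidentifies where the hypothesis ``no summand in $\mathsf{add}\,\Gamma[1]$'' is needed: it must be imposed on $L$, not on $T_1$ (which is already in $\mathsf{add}\,\Gamma$), and you do not explain how to treat the bad summands of $L$.

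The clean argument avoids $F_\Gamma$ altogether and is a Schanuel-type trick inside $\mathcal{C}(\Gamma)$. Given two triangles $T_1\to T_0\xrightarrow{g}L\to$ and $T_1'\to T_0'\xrightarrow{g'}L\to$, complete $(g,-g')\colon T_0\oplus T_0'\to L$ to a triangle $E\to T_0\oplus T_0'\to L\to$. The octahedral axiom applied to the factorization $T_0\hookrightarrow T_0\oplus T_0'\to L$ (whose composite is $g$) yields a triangle $T_1\to E\to T_0'\to T_1[1]$; since $\mathrm{Hom}_{\mathcal{C}(\Gamma)}(T_0',T_1[1])=0$ by rigidity of $\Gamma$, this splits, so $E\cong T_1\oplus T_0'$. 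By symmetry $E\cong T_1'\oplus T_0$, whence $T_1\oplus T_0'\cong T_1'\oplus T_0$ and $[T_0]-[T_1]=[T_0']-[T_1']$ in $K_0(\mathsf{add}\,\Gamma)$.
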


\begin{definition}\label{def:index}The quantity
$$\mathsf{ind}(L)\equiv [T_0]-[T_1]\in
K_0(\mathsf{add}\,\Gamma)\equiv
K_0(\mathsf{proj}\,J(Q,W))\cong K_0(\mathfrak{Per}\,\Gamma)\cong\Lambda^\vee$$
is called the \emph{index} of the object $L\in\mathcal{C}(\Gamma)$.
\end{definition}

It is clear from the \textbf{Lemma} that the class $[L]\in K_0(\mathcal{C}(\Gamma))$ is the image of $\mathsf{ind}(L)$ under the projection
$$\Lambda^\vee\to \Lambda^\vee/B\cdot \Lambda.$$

As always, we use the canonical cluster-tilting object $\Gamma$; the modules $F_\Gamma \Gamma_i\in \mathsf{mod}\,J(Q,W)$;
 are the indecomposable projective modules (cfr.\! \textbf{Proposition \ref{kellerrein}}). We write $S_i\equiv \mathsf{Top}\,F_\Gamma \Gamma_i\in\mathsf{mod}\,J(Q,W)$  for the simple with support at the $i$-th node.

\begin{lemma}[Palu \cite{palu2008cluster}]  Let $X\in\mathcal{C}(\Gamma)$ be indecomposable. Then
$$
\mathsf{ind}\, X=\begin{cases}-[\Gamma_i] &X\cong \Gamma_i[1]\\
\sum_{i=1}^n \langle F_\Gamma X,S_i\rangle\,[\Gamma_i] &\text{otherwise,}
\end{cases}$$
where $\langle -,-\rangle$ is the Euler form in $\mathsf{mod}\,J(Q,W)$.
\end{lemma}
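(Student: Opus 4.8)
\emph{Proof plan.} The statement is proved by cases, the main one by transporting the defining triangle of the index through the Keller--Reiten functor $F_\Gamma=\mathrm{Hom}_{\mathcal C(\Gamma)}(\Gamma,-)$ of \textbf{Proposition \ref{kellerrein}}. First, if $X\cong\Gamma_i[1]$, then the triangle $\Gamma_i\to 0\to\Gamma_i[1]\xrightarrow{\ \mathrm{id}\ }\Gamma_i[1]$ has $T_0=0$ and $T_1=\Gamma_i\in\mathsf{add}\,\Gamma$, so \textbf{Definition \ref{def:index}} gives $\mathsf{ind}(\Gamma_i[1])=[0]-[\Gamma_i]=-[\Gamma_i]$. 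This case is genuinely exceptional: $F_\Gamma(\Gamma_i[1])=\mathrm{Hom}_{\mathcal C(\Gamma)}(\Gamma,\Gamma_i[1])=0$ because $\Gamma_i\in\mathsf{add}\,\Gamma$, so the ``otherwise'' formula would wrongly give $0$; it fails precisely because here $T_0=0$ leaves nothing to cancel $T_1$ against.

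Now let $X$ be indecomposable with $X\not\cong\Gamma_i[1]$ for every $i$, equivalently with no direct summand in $\mathsf{add}\,\Gamma[1]$. Choose a minimal right $\mathsf{add}\,\Gamma$-approximation $g\colon T_0\to X$ and complete it to a triangle $T_1\xrightarrow{f}T_0\xrightarrow{g}X\xrightarrow{h}T_1[1]$. Since $g$ is a right $\mathsf{add}\,\Gamma$-approximation, $\mathrm{Hom}_{\mathcal C(\Gamma)}(\Gamma,g)$ is onto, and the long exact sequence of $\mathrm{Hom}_{\mathcal C(\Gamma)}(\Gamma,-)$ then forces $\mathrm{Hom}_{\mathcal C(\Gamma)}(\Gamma,T_1[1])=0$, i.e.\ $T_1\in\mathsf{add}\,\Gamma$; so this is an index triangle, and $\mathsf{ind}(X)=[T_0]-[T_1]$ by the Keller--Reiten Lemma preceding \textbf{Definition \ref{def:index}}. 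Applying $F_\Gamma$ and using $\mathrm{Hom}_{\mathcal C(\Gamma)}(\Gamma,T_1[1])=0$ produces an exact sequence of $B$-modules ($B:=J(Q,W)\cong\mathrm{End}_{\mathcal C(\Gamma)}(\Gamma)$)
\[
F_\Gamma T_1\xrightarrow{\ F_\Gamma f\ }F_\Gamma T_0\xrightarrow{\ F_\Gamma g\ }F_\Gamma X\to 0 ,
\]
in which, by \textbf{Proposition \ref{kellerrein}}, $F_\Gamma T_0$ and $F_\Gamma T_1$ are projective, $F_\Gamma\Gamma_i$ is the $i$-th indecomposable projective $P_i$, and $[T_j]=\sum_i n_{ij}[\Gamma_i]$ corresponds to $[F_\Gamma T_j]=\sum_i n_{ij}[P_i]$ with $n_{ij}$ the multiplicity of $\Gamma_i$ in $T_j$. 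Since $F_\Gamma$ induces isomorphisms $\mathrm{Hom}_{\mathcal C(\Gamma)}(\Gamma',Y)\cong\mathrm{Hom}_B(F_\Gamma\Gamma',F_\Gamma Y)$ for $\Gamma'\in\mathsf{add}\,\Gamma$, a minimal right $\mathsf{add}\,\Gamma$-approximation is carried to a projective cover, and — using that $X$ has no $\mathsf{add}\,\Gamma[1]$-summand, so that $f$ is radical — the displayed sequence is in fact a \emph{minimal} projective presentation of $F_\Gamma X$; in particular $\mathrm{im}(F_\Gamma f)\subset\mathrm{rad}(F_\Gamma T_0)$. Applying $\mathrm{Hom}_B(-,S_i)$ and using $\mathrm{Ext}^1_B(F_\Gamma T_0,S_i)=0$ and $\mathrm{im}(F_\Gamma f)\subset\mathrm{rad}(F_\Gamma T_0)$ gives $\dim\mathrm{Hom}_B(F_\Gamma X,S_i)=n_{i0}$ and $\dim\mathrm{Ext}^1_B(F_\Gamma X,S_i)=n_{i1}$, hence $\langle F_\Gamma X,S_i\rangle=n_{i0}-n_{i1}$ and $\sum_i\langle F_\Gamma X,S_i\rangle[\Gamma_i]=[T_0]-[T_1]=\mathsf{ind}(X)$.

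\emph{Main obstacle.} The homological algebra is routine; the substance lies in the dictionary between the two categories — that $F_\Gamma$ sends a minimal right $\mathsf{add}\,\Gamma$-approximation in the $2$-CY category $\mathcal C(\Gamma)$ to an honest projective cover in $\mathsf{mod}\,B$ (so that $\mathrm{Ext}^1_B(F_\Gamma X,S_i)$ really reads off the multiplicities in the \emph{minimal} presentation), plus the automatic $T_1\in\mathsf{add}\,\Gamma$. Both are standard in cluster theory, but they are exactly the steps that break down for $X\in\mathsf{add}\,\Gamma[1]$ and so must be handled with care. A secondary subtlety: $B=J(Q,W)$ need not have finite global dimension, so ``Euler form'' here means the truncated pairing $\langle M,N\rangle:=\dim\mathrm{Hom}_B(M,N)-\dim\mathrm{Ext}^1_B(M,N)$; the higher $\mathrm{Ext}^k_B(F_\Gamma X,S_i)$, $k\ge 2$, are irrelevant. (One may instead lift $X$ along $\mathsf r$ to $\widehat X\in\mathfrak{Per}\,\Gamma$, note $\mathsf{ind}(X)=[\widehat X]$ under $K_0(\mathfrak{Per}\,\Gamma)\cong\Lambda^\vee$, use the dual-basis relation $\langle\Gamma_i,S_j\rangle=\delta_{ij}$ of \S\ref{k0per} to get $\mathsf{ind}(X)=\sum_i\langle\widehat X,S_i\rangle[\Gamma_i]$ at once, and then match $\langle\widehat X,S_i\rangle_{\mathfrak{Per},D^b}$ with $\langle F_\Gamma X,S_i\rangle_B$ — trading the bookkeeping above for a Koszul-duality-type identity between the two pairings.)
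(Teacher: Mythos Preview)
The paper does not give its own proof of this lemma: it is stated as a result of Palu with a citation to \cite{palu2008cluster} and no argument is supplied. So there is no in-paper proof to compare against.

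Your argument is correct and is essentially the one in Palu's original paper: take the (minimal) $\mathsf{add}\,\Gamma$-approximation triangle, push it through $F_\Gamma$ to obtain a minimal projective presentation of $F_\Gamma X$ in $\mathsf{mod}\,B$, and read off the multiplicities of the $P_i$ as $\dim\mathrm{Hom}_B(F_\Gamma X,S_i)$ and $\dim\mathrm{Ext}^1_B(F_\Gamma X,S_i)$. Your remark that the ``Euler form'' here is the truncated pairing $\dim\mathrm{Hom}-\dim\mathrm{Ext}^1$ (since $B$ need not have finite global dimension) is exactly right and is the intended meaning; the alternative route you sketch via a lift to $\mathfrak{Per}\,\Gamma$ and the duality $\langle\Gamma_i,S_j\rangle=\delta_{ij}$ is also valid and closer in spirit to how the paper itself organizes the Grothendieck-group bookkeeping in \S\ref{k0per}.
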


\begin{remark} The dual notion to the index is the \emph{coindex} \cite{palu2008cluster}. For $X\in\mathcal{C}(\Gamma)$ one has
\begin{gather}
\mathsf{ind}\,X=-\mathsf{coind}\,X[1]\\
\mathsf{coind}\,X-\mathsf{ind}\,X=\sum_{i=1}^n \langle S_i, F_\Gamma X\rangle_\text{a}\,[\Gamma_i]\label{oooq2}\\
\mathsf{coind}\,X-\mathsf{ind}\,X\ \ \text{depends only on }F_\Gamma X\in \mathsf{mod}\,J(Q,W).
\end{gather}
From \eqref{oooq2} it is clear that the projections in $K_0(\mathcal{C}(\Gamma))$ of the index and coindex agree.
\end{remark}

The precise mathematical statement corresponding to the rough idea that the `index yields enough quantum numbers to distinguish operator'
is the following
\begin{theorem}[Dehy-Keller \cite{dehy2008combinatorics}] Two rigid objects of $\mathcal{C}(\Gamma)$ are isomorphic if and only if their indices are equal.
\end{theorem}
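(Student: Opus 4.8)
The plan is to reduce the statement to the injectivity of the index map on rigid objects, using the IY-mutation structure of cluster-tilting objects and the explicit formula for $\mathsf{ind}$ from Palu's lemma. One direction is trivial: if $X\cong Y$ then certainly $\mathsf{ind}\,X=\mathsf{ind}\,Y$ by the well-definedness part of the Keller--Reiten lemma. So the content is the converse. First I would recall that any basic rigid object $R\in\mathcal{C}(\Gamma)$ can be completed to a cluster-tilting object, and that by \textbf{Theorem \ref{amiotthm}} together with mutation-invariance (\textbf{Theorem \ref{ttthma}}) every cluster-tilting object reachable by IY-mutations arises as $\mathsf{r}(\Gamma')$ for a Ginzburg algebra $\Gamma'$ of a quiver with potential mutation-equivalent to $(Q,W)$. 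Passing to such a $\Gamma'$, I may therefore assume that the indecomposable summands of the rigid object in question are among the canonical cluster-tilting summands $\Gamma_i$ shifted appropriately, or are modules $F_\Gamma X\in\mathsf{mod}\,J(Q,W)$ that are rigid $J$-modules.

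The core computation uses Palu's formula: for an indecomposable non-shifted $X$, $\mathsf{ind}\,X=\sum_i\langle F_\Gamma X,S_i\rangle\,[\Gamma_i]$, while $\mathsf{ind}\,\Gamma_i[1]=-[\Gamma_i]$. Given two rigid objects $X,Y$ with $\mathsf{ind}\,X=\mathsf{ind}\,Y$, decompose each into indecomposables and split off the summands lying in $\mathsf{add}\,\Gamma[1]$ (these contribute the negative coordinates of the index). Since $\mathsf{ind}$ is additive on direct sums and the coordinates coming from $\mathsf{add}\,\Gamma[1]$-summands are exactly the ``negative part'' of the index vector while the remaining summands in $\mathcal{C}(\Gamma)\setminus\mathsf{add}\,\Gamma[1]$ have indices expressible through effective data of $F_\Gamma X$, one can match the two parts separately. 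For the $\mathsf{add}\,\Gamma[1]$-part this is immediate. For the remaining part I would invoke that $F_\Gamma$ restricted to objects with no summands in $\mathsf{add}\,T[1]$ is fully faithful onto $\mathsf{mod}\,\mathrm{End}_{\mathcal{C}(\Gamma)}(T)$ up to the standard kernel (\textbf{Proposition \ref{kellerrein}}), so it suffices to show that a rigid module over $J(Q,W)$ is determined by the integer vector $(\dim\mathrm{Hom}(M,S_i))_i$, equivalently by its class under the Euler form paired against the simples. This is where rigidity enters decisively: for a rigid module $\mathrm{Ext}^1(M,M)=0$, so the Euler form $\langle M,M\rangle=\dim\mathrm{Hom}(M,M)-\dim\mathrm{Ext}^1(M,M)$ is as large as possible, and one shows that among rigid modules the dimension vector (hence the module, by a count of its projective cover and the fact that rigid modules in these categories are determined by dimension vectors) is recovered from $\mathsf{ind}$.

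The main obstacle I anticipate is precisely the last step: proving that the assignment $X\mapsto\mathsf{ind}\,X$ is injective on rigid objects rather than merely on, say, cluster-tilting objects, because a priori two different rigid modules could have the same pairing vector against the simples. The resolution is to use that $\mathsf{ind}\,X$ actually records $[T_0]-[T_1]$ for a \emph{minimal} presentation $T_1\to T_0\to X\to$ with $T_i\in\mathsf{add}\,\Gamma$, and for a rigid object this presentation can be taken with no common summands between $T_0$ and $T_1$ (minimality of the approximations in the IY-triangles), so $[T_0]$ and $[T_1]$ are individually recovered as the positive and negative parts of the index vector in the basis $\{[\Gamma_i]\}$. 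Then $T_0,T_1$ are determined up to isomorphism, and $X$ is determined as the cone, up to the ambiguity of the connecting map; rigidity of $X$ together with $\mathrm{Hom}(T_1,T_0[1])$-considerations pins down the cone uniquely. I would organize the write-up so that this minimality argument is the technical heart, with the mutation reduction and Palu's formula as supporting scaffolding.
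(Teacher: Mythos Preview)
The paper does not supply a proof of this theorem: it is quoted as a result of Dehy--Keller with a citation to \cite{dehy2008combinatorics} and then immediately used. There is therefore no ``paper's own proof'' to compare against; what follows is an assessment of your sketch on its own merits and relative to the original Dehy--Keller argument.

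Your final paragraph is close to the actual Dehy--Keller strategy: one shows that for a \emph{rigid} $X$ the minimal $\mathsf{add}\,\Gamma$-approximation triangle $T_1\to T_0\to X\to$ has $T_0$ and $T_1$ with no common indecomposable summands, so that the positive and negative parts of $\mathsf{ind}\,X=[T_0]-[T_1]$ in the basis $\{[\Gamma_i]\}$ recover $T_0$ and $T_1$ separately. That is indeed the technical heart. However, your middle paragraph --- reducing to rigid $J(Q,W)$-modules and asserting that these are ``determined by dimension vectors'' --- is a detour that does not work as stated (rigid modules over a general Jacobian algebra are \emph{not} determined by dimension vector without further argument; this is essentially what you are trying to prove), and you rightly abandon it for the minimality approach.

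The genuine gap is the last step: knowing $T_0$ and $T_1$ up to isomorphism does \emph{not} immediately determine the cone, since different maps $T_1\to T_0$ can have non-isomorphic cones. Your phrase ``rigidity of $X$ together with $\mathrm{Hom}(T_1,T_0[1])$-considerations pins down the cone uniquely'' is where the real work lies and you have not indicated how to do it. In Dehy--Keller this is handled by a careful analysis showing that, for fixed $T_0,T_1$, the locus of morphisms $f\colon T_1\to T_0$ whose cone is rigid is such that all resulting cones are isomorphic (one route is via a deformation/orbit argument in $\mathrm{Hom}(T_1,T_0)$ under the action of $\mathrm{Aut}\,T_0\times\mathrm{Aut}\,T_1$, using that rigidity forces the orbit to be open and dense). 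Without this, your argument is incomplete.
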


\begin{remark}
We shall show in \S.\ref{sec:UVcharge} how this is related to UV completeness of the corresponding QFT.
\end{remark}

\subsection{Periodic subcategories,  the normalized Euler and Tits  forms}\label{rrem}

We have seen that the group $K_0(D^b\Gamma)$ has an extra structure namely a skew-symmetric pairing. It is natural to look   
 for additional structures on the group $K_0(\mathcal{C}(\Gamma))$.
The argument around \eqref{eq:euler} implies that the Euler form of the 2-CY category $\mathcal{C}(\Gamma)$ \emph{\underline{if defined}} is symmetric:
\begin{equation*}
\begin{split}\langle X,Y\rangle_{\mathcal{C}(\Gamma)}&\equiv\sum_{k\in \Z} (-1)^k\,\dim \mathrm{Hom}_{\mathcal{C}(\Gamma)}(X,Y[k])=\\
&=\sum_{k\in \Z} (-1)^{2-k}\,\dim \mathrm{Hom}_{\mathcal{C}(\Gamma)}(Y, X[2-k])=\langle Y, X\rangle_{\mathcal{C}(\Gamma)}.
\end{split}
\end{equation*}
However the sum in the \textsc{rhs}  is typically not defined, since it is \emph{not true} (in general) that $\mathrm{Hom}_{\mathcal{C}(\Gamma)}(X,Y[k])=0$ for  $k\ll 0$. In order to remediate this, we  introduce an alternative concept.

\begin{definition}\label{def:periodi} We say that a full subcategory $\mathcal{F}(p)\subset\mathcal{C}(\Gamma)$, closed under shifts, direct sums and summands, is \emph{$p$-periodic} ($p\in\mathbb{N}$) iff
the functor $[p]$ restricts to an equivalence in $\mathcal{F}(p)$, and $\mathcal{F}(p)$ is \emph{maximal} with respect to these properties.
Note that we do not require $p$ to be the minimal period.
\end{definition}

\begin{lemma} A $p$-periodic sub-category, $\mathcal{F}(p)\subset \mathcal{C}(\Gamma)$, is triangulated and 2-CY\footnote{\ $\mathcal{F}(p)$ is linear, Hom-finite, and 2-CY. However, it is not necessarily a generalized cluster category since it may or may not have a tilting object.  The prime examples of such a category without a tilting object are the \emph{cluster tubes}, see \cite{barot2008grothendieck,barot}. Sometimes the term `cluster categories' is extended also to such categories.} and the inclusion functor
$\mathcal{F}(p)\xrightarrow{\mathsf{p}} \mathcal{C}(\Gamma)$ is exact.
\end{lemma}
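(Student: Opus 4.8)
The plan is to verify the three asserted properties of $\mathcal{F}(p)$ in turn: that it is a triangulated subcategory, that it is $2$-CY, and that the inclusion $\mathsf{p}$ is exact. The guiding principle is that $\mathcal{F}(p)$ is the ``$p$-periodic part'' of a $2$-CY triangulated category, so most structure is inherited, and the only real content is closure under cones.

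First I would establish that $\mathcal{F}(p)$ is a triangulated subcategory of $\mathcal{C}(\Gamma)$. By definition $\mathcal{F}(p)$ is full, closed under shifts and under direct summands, so by the standard criterion (e.g.\ Neeman's characterization of triangulated subcategories) it suffices to show it is closed under cones: if $X\to Y\to Z\to X[1]$ is a triangle in $\mathcal{C}(\Gamma)$ with $X,Y\in\mathcal{F}(p)$, then $Z\in\mathcal{F}(p)$. Here I would use the maximality clause in \textbf{Definition \ref{def:periodi}}: let $\mathcal{F}'$ be the full subcategory of $\mathcal{C}(\Gamma)$ generated by $\mathcal{F}(p)$ together with $Z$ under shifts, cones and summands. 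Since $[p]$ is a triangle autoequivalence of $\mathcal{C}(\Gamma)$ and $X[p]\cong X$, $Y[p]\cong Y$ functorially, applying $[p]$ to the triangle shows $Z[p]\cong Z$ as well (the cone of $X\to Y$ is unique up to isomorphism), so $[p]$ restricts to an equivalence on $\mathcal{F}'$; by maximality $\mathcal{F}'=\mathcal{F}(p)$, hence $Z\in\mathcal{F}(p)$. This closure argument, together with closure under shifts already in the definition, gives that $\mathcal{F}(p)$ is triangulated with triangles those of $\mathcal{C}(\Gamma)$ lying in $\mathcal{F}(p)$, and makes $\mathsf{p}$ exact by construction. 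I expect this maximality-plus-uniqueness-of-cones step to be the main (though mild) subtlety: one must be careful that enlarging by cones does not break $p$-periodicity, which is exactly what the functoriality of the isomorphism $\mathrm{id}\xrightarrow{\sim}[p]$ on $\mathcal{F}(p)$ guarantees.

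Next, linearity and Hom-finiteness of $\mathcal{F}(p)$ are immediate, being inherited from the ambient Hom-finite $k$-linear category $\mathcal{C}(\Gamma)$ (Corollary \ref{corco}). For the $2$-CY property: the bifunctorial isomorphism $\mathrm{Hom}_{\mathcal{C}(\Gamma)}(X,Y)\cong D\mathrm{Hom}_{\mathcal{C}(\Gamma)}(Y,X[2])$ holds for all objects of $\mathcal{C}(\Gamma)$, in particular for $X,Y\in\mathcal{F}(p)$; since $X[2]\in\mathcal{F}(p)$ (closure under shifts) and $\mathcal{F}(p)$ is full, both Hom-spaces are computed inside $\mathcal{F}(p)$, so the Serre duality restricts verbatim. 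Thus $\mathcal{F}(p)$ is $2$-CY.

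Finally I would note that the inclusion $\mathsf{p}\colon\mathcal{F}(p)\hookrightarrow\mathcal{C}(\Gamma)$ is exact simply because the triangulated structure on $\mathcal{F}(p)$ was defined by declaring a triangle distinguished exactly when it is distinguished in $\mathcal{C}(\Gamma)$; commuting with shifts is automatic. The only point worth a sentence is the footnote's caveat that $\mathcal{F}(p)$ need not admit a cluster-tilting object, so one should not try to prove $2$-CY-ness via a tilting object — the direct restriction of Serre duality above sidesteps this entirely.
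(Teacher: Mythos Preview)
Your proposal is correct and follows essentially the same route as the paper: the paper's proof reduces to closure under cones and argues that applying $[p]$ to a triangle $X\to Y\to Z\to$ with $X,Y\in\mathcal{F}(p)$ yields $Z[p]\simeq Z$, which (implicitly via maximality) forces $Z\in\mathcal{F}(p)$. Your write-up is simply more explicit about invoking maximality and about why the 2-CY property and exactness of the inclusion are inherited, points the paper leaves tacit.
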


\begin{proof} Since $\mathcal{F}(p)$ is closed under shifts, direct sums, and summands in $\mathcal{C}(\Gamma)$, it suffices to verify that $X,Y\in \mathcal{F}(p)$ implies $Z\in\mathcal{F}(p)$ for all triangles $X\to Y\to Z\to$ in $\mathcal{C}(\Gamma)$. Applying $[p]$ to the triangle, one gets $Z[p]\simeq Z$.
\end{proof}

\begin{definition}
Let $\mathcal{F}(p)\subset\mathcal{C}(\Gamma)$ be $p$-periodic. We define the \emph{normalized Euler form as}
\begin{equation}\label{euflavor}
\langle\!\langle X,Y\rangle\!\rangle=\langle\!\langle Y,X\rangle\!\rangle =\frac{1}{p}\sum_{k=0}^{p-1}(-1)^k\dim \mathrm{Hom}_{\mathcal{C}(\Gamma)}(X,Y[k]),\quad\  X,Y\in\mathcal{F}(p).\end{equation}
Note that it is independent of the chosen $p$ as long as $Y[p]\cong Y$.
\end{definition}

\begin{remark} If $p$ is odd, $\langle\!\langle-,-\rangle\!\rangle\equiv0$.
\end{remark}

\begin{proposition}
The normalized Euler form $\langle\!\langle -,-\rangle\!\rangle$ induces a symmetric form on the group
$$
K_0(\mathcal{F}(p))/K_0(\mathcal{F}(p))_\text{torsion},$$
which we call the \emph{Tits form} of $\mathcal{F}(p)$.
\end{proposition}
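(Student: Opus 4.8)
The plan is to reduce the Proposition to a single statement — that the assignment $(X,Y)\mapsto\langle\!\langle X,Y\rangle\!\rangle$ is additive along distinguished triangles of $\mathcal{F}(p)$ in each variable — and then let formal arguments finish it. Granting additivity, $\langle\!\langle-,-\rangle\!\rangle$ descends to a $\Q$-valued (in fact $\tfrac1p\Z$-valued) bilinear form on $K_0(\mathcal{F}(p))$; since $\Q$ is torsion-free, this form automatically annihilates torsion, because $n[L]=0$ with $n>0$ forces $n\langle\!\langle L,M\rangle\!\rangle=\langle\!\langle nL,M\rangle\!\rangle=0$, hence it factors through $K_0(\mathcal{F}(p))/K_0(\mathcal{F}(p))_\text{torsion}$ in each slot. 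Symmetry is already in hand: it is built into the formula defining $\langle\!\langle-,-\rangle\!\rangle$, and is established by the $2$-CY manipulation displayed around \eqref{eq:euler} — the substitution $k\mapsto 2-k$ together with $\dim\mathrm{Hom}_{\mathcal{C}(\Gamma)}(X,Y[k])=\dim\mathrm{Hom}_{\mathcal{C}(\Gamma)}(Y,X[2-k])$, now legitimate because $\mathcal{F}(p)$ is Hom-finite so the sums are finite (with the $p$ odd case covered by the Remark). So the entire content is additivity.

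To prove additivity: if $p$ is odd then $\langle\!\langle-,-\rangle\!\rangle\equiv0$ by the Remark and there is nothing to prove, so assume $p$ even. Fix a distinguished triangle $X\to Y\to Z\to X[1]$ in $\mathcal{C}(\Gamma)$ with $X,Y\in\mathcal{F}(p)$, so that $Z\in\mathcal{F}(p)$ by the preceding Lemma, and fix $W\in\mathcal{F}(p)$. Applying the cohomological functors $\mathrm{Hom}_{\mathcal{C}(\Gamma)}(-,W[k])$ and using $\mathrm{Hom}(U[j],W)=\mathrm{Hom}(U,W[-j])$, the long exact sequences for all $k$ splice into the doubly infinite exact sequence
\begin{equation*}
\cdots\to\mathrm{Hom}(Z,W[k])\to\mathrm{Hom}(Y,W[k])\to\mathrm{Hom}(X,W[k])\to\mathrm{Hom}(Z,W[k+1])\to\cdots,
\end{equation*}
all of whose terms are finite-dimensional (Hom-finiteness of $\mathcal{C}(\Gamma)$, inherited by $\mathcal{F}(p)$) and which is periodic of period $3p$ because $W[p]\cong W$; as $p$ is even, $3p$ is even. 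Now I would invoke the elementary fact that an $N$-periodic doubly infinite exact sequence $\cdots\to V_{i-1}\to V_i\to V_{i+1}\to\cdots$ of finite-dimensional spaces with $N$ even satisfies $\sum_{i=0}^{N-1}(-1)^i\dim V_i=0$ — proved by writing $\dim V_i=\dim B_i+\dim B_{i+1}$ with $B_i=\mathrm{im}(V_{i-1}\to V_i)$ and telescoping, the boundary terms cancelling because $B_{i+N}=B_i$ and $N$ is even. Applying this with $N=3p$, where the consecutive terms $\mathrm{Hom}(Z,W[k])$, $\mathrm{Hom}(Y,W[k])$, $\mathrm{Hom}(X,W[k])$ carry signs $(-1)^k$, $-(-1)^k$, $(-1)^k$, one gets
\begin{equation*}
\sum_{k=0}^{p-1}(-1)^k\big(\dim\mathrm{Hom}(X,W[k])-\dim\mathrm{Hom}(Y,W[k])+\dim\mathrm{Hom}(Z,W[k])\big)=0,
\end{equation*}
that is $\langle\!\langle Y,W\rangle\!\rangle=\langle\!\langle X,W\rangle\!\rangle+\langle\!\langle Z,W\rangle\!\rangle$; additivity in the second variable then follows from symmetry, and with the reductions of the first paragraph the Proposition follows.

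I expect the only genuine obstacle to be the bookkeeping of the unbounded exact sequence: the familiar statement ``the alternating sum of dimensions in an exact sequence vanishes'' needs boundedness, so here one must instead exploit the $3p$-periodicity and, crucially, the parity of the period — this is what forces one to split off the (trivial) case $p$ odd, and to verify carefully that the sign pattern in the periodic sequence collapses the telescoping sum to exactly $p$ times the claimed identity.
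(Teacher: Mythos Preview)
The paper states this Proposition without proof, treating it as routine. Your argument is correct and is the standard one: the only non-obvious point is additivity of $\langle\!\langle-,-\rangle\!\rangle$ along triangles, and your use of the $3p$-periodicity of the long exact $\mathrm{Hom}$-sequence together with the telescoping identity for even-period exact sequences handles this cleanly. The bookkeeping with signs and the reduction of the odd-$p$ case via the Remark are both handled correctly, so your proof fills the gap the paper leaves.
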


\begin{remark}
We shall see in \S.\,\ref{kkkaq12} the physical meaning of the periodic sub-categories and their Tits form.
\end{remark}

\subsubsection{Example: cluster category of the projective line of weights (2,2,2,2)}\label{spi8}

As an example of Tits form in the sense of the above \textbf{Proposition}, we consider the cluster category (see \S.\ref{herditarycase})
$$
\mathcal{C}= D^b(\mathcal{H})\big/\langle\tau^{-1}[1]\rangle^\Z,\qquad \text{where }\mathcal{H}=\mathsf{coh}\,\mathbb{X}(2,2,2,2)
$$
which corresponds to $SU(2)$ SQCD with $N_f=4$ \cite{cecotti2013categorical,cecotti2015higher}. We may think of this cluster category as having the same objects as $\mathsf{coh}\,\mathbb{X}(2,2,2,2)$ and extra arrows \cite{barot}. In this case $\deg\omega=0$, and hence the category $\mathcal{C}$ is triangulated and periodic of period $p=2$ in the sense of \textbf{Definition \ref{def:periodi}}, so $\mathcal{F}(2)$ is the full cluster category $\mathcal{C}$.
We write $\mathcal{O}$ for the structure sheaf and $\mathcal{S}_{i,0}$ for the unique simple sheaf with support at the $i$-th special point such that $\mathrm{Hom}_{\mathsf{coh}\,\mathbb{X}}(\mathcal{O},\mathcal{S}_{i,0})\cong k$. The cluster Grothedieck group
$K_0(\mathcal{C})$ is generated by
$[\mathcal{O}]$ and $[\mathcal{S}_{i,0}]$ ($i=1,2,3,4$) subjected to the relation \cite{barot2008grothendieck}
\begin{equation}\label{kkaqzxcc}
2[\mathcal{O}]=\sum_{i=1}^4[\mathcal{S}_{i,0}].
\end{equation}
Thus we may identify
$$
K_0(\mathcal{C})\cong \Big\{ (w_1,w_2,w_3,w_4)\in \left(\tfrac{1}{2}\Z\right)^2\;\Big|\; w_i=w_j\mod 1\Big\}\equiv\Gamma_{\text{weight},\,\mathfrak{spin}(8)}.
$$
by writing a class as $\sum_iw_i[\mathcal{S}_{i,0}]$.
The Tits pairing is
$$
\langle\!\langle [\mathcal{S}_{i,0}],
[\mathcal{S}_{j,0}]\rangle\!\rangle=\delta_{i,j},
$$
Then $K_0(\mathcal{C})$ equipped with this pairing is isomorphic to the $\mathfrak{spin}(8)$ weight lattice equipped with its standard inner product 
(valued in $\tfrac{1}{2}\Z$) dual to the even one given on the root lattice by the Cartan matrix. We remark that a class in $K_0(\mathcal{C})$ is a \emph{spinorial} $\mathfrak{spin}(8)$ weight iff it is of the form
$k[\mathcal{O}]+\sum_i m_i[\mathcal{S}_{i,0}]$ ($m_i\in\Z$)
with $k$ \emph{odd}. The physical meaning of this statement and eqn.\eqref{kkaqzxcc} will be clear in \S.\,\ref{nonabeenh}.

\subsection{Stability conditions for Abelian and triangulated categories}
\label{sec:stab}
We start with the Abelian category case, since it all boils down to it. The main reference for this part is \cite{bridgeland2007stability}.
Let $ \ca $ be an Abelian category and $K_0( \ca ) $ its Grothendieck group.
\begin{definition} 
A \emph{Bridgeland stability condition} on an Abelian category $\mathcal A$ is a group homomorphism
$$Z : K_0( \ca) \to \C,$$
satisfying certain properties:\footnote{\ If $[X]\in K_0(\ca)$ is the class of $X\in\ca$, we write simply $Z(X)$ for $Z([X])$. }
\begin{enumerate}
\item $Z(\ca) \subset \overline{\mathbb{H}} \setminus \mathbb{R}_{>0}$, the closed upper half plane minus the positive reals;
\item If $Z(E) = 0$, then $E = 0$. This allows to define the map $$\arg Z(-)\colon K_0(\ca)\setminus \{0\}\to (0,\pi];$$
\item The Harder-Narasimhan (HN) property. Every object $E \in  \ca $ admits a filtration
$$0 = E_0 \subset E_1 \subset E_2 \subset \cdot \cdot \cdot \subset E_n = E,$$
such that, for each $i$:
\begin{itemize}
\item $E_{i+1}/E_i$ is $Z$-semistable;\footnote{\ See below \textbf{Definition \ref{semi-sta}} of  semistability of objects in an abelian category.}
\item $\arg Z(E_{i+1}/E_i) > \arg Z(E_{i+2}/E_{i+1})$.
\end{itemize}
\end{enumerate}
\end{definition}
We also have the following
\begin{definition}\label{semi-sta} An object $E\in\ca$ is called \emph{$Z$-stable} if for all nonzero proper subobjects
$E_0 \subset E$,
$$\arg Z(E_0) < \arg Z(E).$$
If $\leq$ replaces $<$, then we get the definition of $Z$-semistable.
\end{definition}
We are now going to give the corresponding definitions for the triangulated categories. The definition is more involved since there is no concept of subobject.

\begin{definition}\label{defslicing} A \emph{slicing} $ \cp$ of a triangulated category $\D$ is a collection of full
additive subcategories $ \cp(\phi)$ for each $\phi \in \R$ satisfying
\begin{enumerate}
\item $ \cp(\phi + 1) =  \cp(\phi)[1]$;
\item For all $\phi_1 > \phi_2$ we have $\mathrm{Hom}(\cp(\phi_1), \cp(\phi_2)) = 0$;
\item For each $0 \neq E \in \D$ there is a sequence $\phi_1 > \phi_2 > \cdot\cdot\cdot > \phi_n$ of real
numbers and a sequence of exact triangles
\begin{displaymath}
    \xymatrix{0=E_0 \ar[rr]& & E_1 \ar[r]\ar[dl]&  \cdots\cdots\ar[r]& E_{n-1} \ar[rr]&&  E_n=E \ar[dl] \\
               & A_1 \ar@{.>}[ul] & & \cdots && A_n \ar@{.>}[ul]& }
\end{displaymath}
with $A_i\in \cp(\phi_i)$ (which we call the Harder-Narasimhan filtration of $E$).
\end{enumerate}
\end{definition}
\begin{remark}
 We call the objects in $ \cp(\phi)$ semistable of phase $\phi$.
\end{remark}
And finally, the definition of stability conditions in a triangulated category.
\begin{definition} A stability condition on a triangulated category $\D$ is a pair $(Z, \cp)$
where $Z : K_0(\D) \to  \C$ is a group homomorphism (called \emph{central charge}) and $ \cp$ is
a slicing, so that for every $0 \neq E \in  \cp(\phi)$ we have
$$Z(E) = m(E)\,  e^{i\pi \phi}$$
for some $m(E) \in \R>0$.
\end{definition}
Indeed, the following proposition shows that to some extent (once we identify a
$\boldsymbol{t}$-structure), stability is intrinsically defined. It also describes how stability conditions
are actually constructed:
\begin{proposition}[\cite{bridgeland2007stability}]\label{prop:bridge} Giving a stability condition $(Z, \cp)$
on a triangulated category $\D$ is equivalent to giving a heart $ \ca$ of a bounded $\boldsymbol{t}$-structure with a stability
function $Z_ \ca : K_0( \ca) \to \C$ such that $( Z_ \ca, \ca)$ have the
Harder-Narasimhan property, i.e.\! any object in $\ca$ has a HN-filtration by $Z_ \ca$-stable
objects.
\end{proposition}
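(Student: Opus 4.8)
\emph{Overview.} The plan is to prove the equivalence by exhibiting the two mutually inverse constructions and verifying the defining axioms on each side: from a stability condition $(Z,\mathcal{P})$ one extracts the heart $\mathcal{A}=\mathcal{P}((0,1])$ (using the interval notation $\mathcal{P}(I)$ for the extension closure of $\bigcup_{\phi\in I}\mathcal{P}(\phi)$), and from a heart $\mathcal{A}$ with stability function $Z_\mathcal{A}$ one builds back a slicing.

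\emph{Forward direction.} Starting from $(Z,\mathcal{P})$, I would first check that $\mathcal{A}=\mathcal{P}((0,1])$ is the heart of a bounded $\boldsymbol{t}$-structure. The pair of subcategories $\mathcal{P}((0,\infty))$ and $\mathcal{P}((-\infty,0])$ satisfies the orthogonality requirement by slicing axiom (2), is stable under the relevant shifts by construction, and admits truncation triangles for every object because slicing axiom (3) supplies the triangle splitting the HN filtration of $E$ at phase $0$; boundedness is automatic since HN filtrations are finite. Restricting $Z$ to $K_0(\mathcal{A})\cong K_0(\mathcal{D})$ then gives a stability function: a $\mathcal{P}$-semistable object of phase $\phi\in(0,1]$ has $Z$-value $m\,e^{i\pi\phi}$ with $m>0$, so its image sweeps out $\overline{\mathbb{H}}\setminus\mathbb{R}_{>0}$, and since that region is closed under addition every object of $\mathcal{A}$ lands there. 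The HN property for $(Z,\mathcal{A})$ follows by restricting the $\mathcal{P}$-HN filtration of an object of $\mathcal{A}$ (all its factors lie in $\mathcal{P}((0,1])\subset\mathcal{A}$), once one notes that a $\mathcal{P}$-semistable object lying in $\mathcal{A}$ is $Z_\mathcal{A}$-semistable in the abelian sense: any $\mathcal{A}$-subobject is in particular the first vertex of a triangle, and slicing axiom (2) forces the phase inequality $\arg Z(\text{sub})\le\arg Z(\text{object})$.

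\emph{Reverse direction.} Starting from $(\mathcal{A},Z_\mathcal{A})$, for $\phi\in(0,1]$ I would set $\mathcal{P}(\phi)$ to be the full subcategory of $Z_\mathcal{A}$-semistable objects $A\in\mathcal{A}$ with $\tfrac1\pi\arg Z_\mathcal{A}(A)=\phi$, add the zero object, and extend to all $\phi\in\mathbb{R}$ by $\mathcal{P}(\phi+1):=\mathcal{P}(\phi)[1]$, which builds in axiom (1). For axiom (2), writing $\phi_j=m_j+\psi_j$ with $\psi_j\in(0,1]$ and reducing $\mathrm{Hom}(A[m_1],B[m_2])$ to $\mathrm{Hom}_\mathcal{D}(A,B[m_2-m_1])$ for $A,B\in\mathcal{A}$: the case $m_1>m_2$ vanishes because $\mathcal{A}$ is a $\boldsymbol{t}$-structure heart, so $\mathrm{Hom}(\mathcal{A},\mathcal{A}[n])=0$ for $n<0$; the case $m_1=m_2$ reduces $\phi_1>\phi_2$ to $\psi_1>\psi_2$ and is the standard fact that there are no nonzero maps from a higher-phase semistable to a lower-phase semistable inside an abelian category; and the case $m_1<m_2$ cannot occur, since then $\phi_1\le m_1+1\le m_2<\phi_2$. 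For axiom (3), given $0\neq E\in\mathcal{D}$, I would take its finite cohomology filtration for the $\boldsymbol{t}$-structure, with graded pieces the shifts $H^i(E)[-i]\in\mathcal{A}[-i]$, then refine each piece by its abelian HN filtration; splicing yields a filtration of $E$ whose factors are $\mathcal{P}$-semistable and whose phases, those coming from cohomological degree $i$ lying in the window $(-i,-i+1]$, are automatically strictly decreasing because the windows are disjoint and correctly ordered. Finally I would check the two assignments are mutually inverse: $\mathcal{P}((0,1])$ recovers $\mathcal{A}$ (by construction of the $\boldsymbol{t}$-structure from a slicing and uniqueness of HN filtrations), and the slicing rebuilt from $\mathcal{A}$ agrees with $\mathcal{P}$ since $\mathcal{P}(\phi)$ for $\phi\in(0,1]$ is pinned down as the $Z$-semistables of phase $\phi$ inside $\mathcal{P}((0,1])$.

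\emph{Main obstacle.} The delicate step is the last part of the reverse direction: confirming that the spliced ``cohomology-then-HN'' sequence genuinely is a Harder--Narasimhan filtration in the triangulated sense, i.e.\ that its graded objects are $\mathcal{P}$-semistable \emph{and} that their phases strictly decrease. The strict decrease is exactly where the hypothesis that $Z_\mathcal{A}$ is a stability function valued in $\overline{\mathbb{H}}\setminus\mathbb{R}_{>0}$ is essential, since it confines the phases of objects of $\mathcal{A}$ to the half-open interval $(0,1]$ and so keeps the contributions of distinct cohomological degrees in non-overlapping windows; without this normalization the windows could overlap and one would have to interleave the two filtrations by hand. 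A secondary point requiring care throughout is matching the triangulated notion of semistability (governed by slicing axiom (2)) with the abelian one inside the heart, in both directions.
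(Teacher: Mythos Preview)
Your proposal is correct and follows essentially the same route as the paper (which in turn is Bridgeland's original argument): set $\mathcal{A}=\mathcal{P}((0,1])$ in one direction, define $\mathcal{P}(\phi)$ from $Z_{\mathcal{A}}$-semistables and extend by shift in the other, and build the triangulated HN filtration by splicing the cohomology filtration with the abelian HN filtrations of its graded pieces. The paper is terser on the forward direction and on the Hom-vanishing case analysis (where you give more detail), while it is slightly more explicit than you on the splicing step, invoking the octahedral axiom to identify the filtration quotients of the combined filtration; you may want to mention that axiom when you say ``splicing yields a filtration.''
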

We will focus on how to obtain a stability condition from the datum $( Z_ \ca, \ca)$, as
this is how stability conditions are actually constructed:
\begin{proof} If $\ca$ is the heart of a bounded $\boldsymbol{t}$-structure on $\D$, then we have $K_0(\D) =
K_0( \ca)$, so clearly $Z$ and $Z_ \ca$ determine each other.
Given $( Z_ \ca, \ca)$, we define $ \cp(\phi)$ for $\phi \in (0, 1]$ to be the $Z_ \ca$-semistable objects
in $ \ca$ of phase $\phi(E) = \phi$. This is extended to all real numbers by $ \cp(\phi + n) =
 \cp(\phi)[n] \subset  \ca[n]$ for $\phi \in (0, 1]$ and $0 \neq n \in \mathbb Z$. The compatibility condition 
$$\frac{1}{\pi}\,\arg Z(E)=\phi$$
is satisfied by construction,
so we just need show that $ \cp$ satisfies the remaining properties in our definition of slicing.
The Hom-vanishing condition in definition \ref{defslicing} follows from the definition of heart of a bounded $\boldsymbol{t}$-structure. Finally, given
$E \in \D$, its filtration by cohomology objects $A_i \in  \ca[k_i]$, and the
HN-filtrations $0 \to A_{i1} \to A_{i2} \to \dots \to A_{im_i} = A_i$ given by the HN-property
inside $ \ca$ can be combined into a HN-filtration of $E$: it begins with
$$0 \to F_1 = A_{11}[k_1]\to F_2 = A_{12}[k_1]\to \cdots \to F_{m_1} = A_1[k_1] = E_1,$$
i.e.\! with the HN-filtration of $A_1$. Then the following filtration steps $F_{m_{1+i}}$ are an
extensions of $A_{2i}[k_2]$ by $E_1$ that can be constructed as the cone of the composition
$A_{2i}[k_2]\to A_2[k_2] \to ^{[1]} E_1$ (the octahedral axiom shows that these have the same
filtration quotients as $0 \to A_{21}[k_2] \to A_{22}[k_2]\cdots$ ); continuing this way we obtain a
filtration of $E$ as desired.
Conversely, given the stability condition, we set $ \ca =  \cp((0, 1])$ as before; by
the compatibility condition, the central charge $Z(E)$ of any $ \cp$-semistable
object $E$ lies in $\overline{\mathbb{H}}\setminus \R_{>0}$; since any object in $ \ca$ is an extension of semistable
ones, this follows for all objects in $ \ca$ by the additivity. Finally, it is fairly straightforward
to show that $Z$-semistable objects in $ \ca$ are exactly the semistable objects
with respect to $ \cp$.
\end{proof}

\section{Some physical preliminaries}\label{physpre}
\label{sec:physprel}
In the next section we shall relate the various triangle categories introduced in the previous section to the BPS objects of a $4d$
$\mathcal{N}=2$ QFT as described  from two different points of view: \textit{i)} the microscopic UV description (i.e.\! in terms of a UV complete Lagrangian description or a UV  fixed-point SCFT), and \textit{ii)} the effective Seiberg-Witten IR description. Before doing that, we discuss some general properties of these physical systems. As discussed in \S.\,\ref{kkkzzz21mmc},
the categories $\mathfrak{T}_A$ which describe the BPS objects  should enjoy the categorical versions of these physical properties in order to be valid solutions to the \textbf{Problem} in \S.\,\ref{kkkzzz21mmc}.

\subsection{IR viewpoint}

\subsubsection{IR conserved charges}\label{IRpicture}

The Seiberg-Witten theory \cite{seiberg1994monopoles} describes, in a quantum exact way, the low-energy physics of our $4d$ $\mathcal{N}=2$ model in a given vacuum $u$ along its Coulomb branch. Assuming $u$ and the mass deformations to be generic, the effective theory is an Abelian gauge theory $U(1)^r$ coupled to states carrying both electric and magnetic charges. The flavor group is also Abelian $U(1)^f$, so that the IR conserved charges consist of $r$ electric, $r$ magnetic, and $f$ flavor charges. In a non-trivial theory the gauge group is compact, and the flavor group is always compact, so these charges are quantized. Then the conserved charges take value in a lattice $\Lambda$ (a free Abelian group) of rank
$$
n=2r+f.
$$
The lattice $\Lambda$ is equipped with an extra structure, namely a skew-symmetric quadratic form
$$
\langle-,-\rangle\colon \Lambda\times\Lambda\to \Z,$$
given by the Dirac electro-magnetic pairing. The radical of this form,
$$
\Lambda_\text{flav}=\mathrm{rad}\,\langle-,-\rangle\equiv \Big\{\lambda\in\Lambda\;\Big|\; \langle \mu,\lambda\rangle=0\ \ \forall\;\mu\in\Lambda\Big\}\subset\Lambda,
$$ 
is the lattice of flavor charges and has rank $f$. The effective theory has another bosonic complex-valued conserved charge, namely the central charge of the $4d$
$\mathcal{N}= 2$ superalgebra $Z := \epsilon^{\alpha\beta}\epsilon_{AB}\{Q^A_\alpha, Q^B_\beta\}$. $Z$ is not an independent charge but a linear combination of the charges in $\Lambda$ with complex coefficients which depend on all IR data, and in particular on the vacuum $u$. Hence,
for a given $u$, the susy central charge is a linear map
(group homomorphism)
$$
Z_u\colon \Lambda\to\C.$$
Any given state of charge $\lambda \in \Lambda$ has mass greater than or equal to $|Z_u(\lambda)|$. BPS states are the ones which saturate this bound.

In the case of a $4d$ $\mathcal{N}=2$ with a UV Lagrangian formulation, $r$ and $f$ are the ranks of the (non-Abelian) gauge $G$ and flavor $F$ groups, respectively. At extreme weak coupling, the IR electric and flavor charges are the weights under the respective maximal tori.

\subsubsection{The IR landscape vs.\! the swampland}\label{kkkazx32}

The IR $\mathcal{N}=2$ theories we consider are not generic Abelian gauge theories with electric and magnetic charged matter.
They belong to the \emph{landscape} (as opposed to the \emph{swampland}), that is, they have a well defined UV completion. Such theories have special properties.

One property which seems to be true in the landscape, is that there are ``enough'' conserved IR charges to label all BPS states, so we don't need extra quantum numbers to distinguish the BPS objects in the IR description.
This condition is certainly not sufficient to distinguish the landscape from the swampland, but it plays a special role in our discussion.

To support the suggestion that being UV complete is related to $\Lambda$ being large enough to label IR objects, we mention a simple fact.

\begin{fact} Let the UV theory consists of a $\mathcal{N}=2$ gauge theory with semi-simple gauge group $G$ and quark half-hypermultiplets in the (reducible) quaternionic representation $\boldsymbol{H}$. Assume that the beta-functions of all simple factor of $G$ are non-positive.
In the IR theory along the Coulomb branch, consider the BPS hypermultiplets $h_i$ with zero magnetic charge and write $[h_i]$ for their IR charges in $\Lambda$. Then 
$$
[h_i]=[h_j]\ \text{and }h_i\neq h_j\quad\Rightarrow\quad [h_i]\in\Lambda_\text{flav}.
$$
That is, the charges in $\Lambda$ are enough to distinguish (zero magnetic charge) hypermultiplets unless they carry only flavor charge (i.e.\! are electrically neutral).  
\end{fact}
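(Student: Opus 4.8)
The plan is to exhibit the statement as a statement about weights of the flavor group $F$ together with the way the IR charge lattice $\Lambda$ arises from the UV data. First I would set up notation: let $G$ be the semisimple gauge group with maximal torus $T_G$ of rank $r$, let $F$ be the flavor group with maximal torus $T_F$ of rank $f$, and recall from \S\ref{IRpicture} that in the deep IR at a generic Coulomb vacuum the conserved charges take value in $\Lambda\cong \Lambda_{\mathrm{el}}\oplus \Lambda_{\mathrm{mag}}\oplus \Lambda_{\mathrm{flav}}$ with $\mathrm{rank}\,\Lambda_{\mathrm{el}}=\mathrm{rank}\,\Lambda_{\mathrm{mag}}=r$ and $\mathrm{rank}\,\Lambda_{\mathrm{flav}}=f$, and the electric/flavor part of the charge of a weakly coupled BPS state is just its weight under $T_G\times T_F$. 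A BPS hypermultiplet $h_i$ of zero magnetic charge thus has $[h_i]=(\mu_i,0,\nu_i)$ where $\mu_i$ is a weight of $G$ (in the representation content $\boldsymbol H$) and $\nu_i$ is the corresponding weight of $F$.

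The key step is to show that the correspondence $h_i\mapsto \mu_i$ (the $G$-weight) is ``almost injective'' in the sense that two distinct hypers with the same $G$-weight must carry no gauge charge at all, and that a gauge-neutral hyper has $[h_i]\in\Lambda_{\mathrm{flav}}$ by definition. Concretely: the hypermultiplets $h_i$ arise, at weak coupling, from the weight vectors of the quaternionic representation $\boldsymbol H$ of $G$; two weight vectors of an irreducible $G$-representation that share the same $T_G$-weight are distinguished inside that irrep, but the statement concerns \emph{states}, i.e.\ after the flavor decomposition $\boldsymbol H=\bigoplus_a R_a\otimes W_a$ with $W_a$ a flavor multiplet — the multiplicity of a given $G$-weight in $R_a$ being exactly the corresponding $F$-weight multiplicity in $W_a$. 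Hence if $h_i\neq h_j$ but $\mu_i=\mu_j$ and both sit in the same irreducible summand $R_a$, then they are distinguished purely by their position in $W_a$, i.e.\ by $\nu_i\neq\nu_j$; and if $\mu_i=\mu_j=0$ then by definition $[h_i]$ lies in the radical $\Lambda_{\mathrm{flav}}$ of the Dirac pairing. The remaining case — $h_i$ and $h_j$ in \emph{different} irreducible summands $R_a\neq R_b$ but with coincident $(\mu,0,\nu)$ — is where I expect to actually use the hypothesis that all beta functions are non-positive: asymptotic freedom (or conformality) bounds the matter content, and I would argue that a nonzero $G$-weight $\mu$ determines, given the constraint on $\boldsymbol H$, the summand $R_a$ it comes from together with its flavor label, up to flavor-weight ambiguity. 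In practice this is the content of the classification of $\mathcal N=2$ gauge theories with $\beta\le 0$: the allowed $\boldsymbol H$ are so restricted that no ``accidental'' coincidence of full IR charges between genuinely different hypers of nonzero gauge charge occurs.

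So the order of the argument is: (1) identify $[h_i]$ at weak coupling with a weight of $T_G\times T_F$ sitting in $\boldsymbol H$, and observe that zero magnetic charge is preserved; (2) reduce the claim to: if $\mu_i=\mu_j\neq 0$ then $h_i=h_j$, since $\mu_i=\mu_j=0$ puts $[h_i]$ in $\Lambda_{\mathrm{flav}}$ trivially; (3) prove (2) by the representation-theoretic bookkeeping above, invoking the $\beta\le 0$ hypothesis to exclude cross-summand coincidences. The main obstacle is step (3) in the cross-summand case: one must rule out two hypermultiplets living in different flavor sectors yet carrying identical electric, magnetic, and flavor charges, and the cleanest route is probably to lean on the finite list of gauge theories with non-positive beta functions rather than attempt a uniform abstract argument. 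A secondary subtlety is that ``$h_i$'' really denotes a BPS \emph{state} in the IR, not just a weight vector, so one should be slightly careful that the weak-coupling identification of states with weight vectors of $\boldsymbol H$ is faithful — which it is, generically, precisely because the masses break $F$ to $T_F$ and lift all flavor degeneracies except the one the statement explicitly allows.
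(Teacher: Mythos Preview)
Your overall structure---identify $[h_i]$ with a pair $(\mu_i,\nu_i)$ of gauge and flavor weights, then argue that $\mu_i=\mu_j\neq 0$ forces $h_i=h_j$---is sound and is essentially what the paper does. But you have mislocated where the $\beta\le 0$ hypothesis actually does work, and this creates a genuine gap.

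In your ``same summand'' case you assert that if $h_i,h_j\in R_a\otimes W_a$ share the $G$-weight $\mu$, then ``they are distinguished purely by their position in $W_a$, i.e.\ by $\nu_i\neq\nu_j$''. This is true only if the weight $\mu$ has multiplicity~$1$ in $R_a$. If $\mathrm{mult}(\mu,R_a)=m>1$, then for each flavor weight $\nu$ in $W_a$ there are $m$ linearly independent states with identical full IR charge $(\mu,0,\nu)$, and your argument collapses. You never verify $m=1$, and without it the same-summand case is simply unproven. The paper's argument is exactly this verification: one lists the finite set of irreducible $G$-representations compatible with $\beta\le 0$ and checks, via the Weyl character formula, that every nonzero weight in every such representation has multiplicity~$1$. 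That single computation is the whole proof.

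So your emphasis is inverted. You treat the within-summand case as automatic and reserve the $\beta\le 0$ constraint for the cross-summand case, whereas the cross-summand case is comparatively mild (states in distinct $R_a\otimes W_a$ and $R_b\otimes W_b$ carry flavor charges in independent Cartan directions of $F$, so coincidence of the full $(\mu,0,\nu)$ is already very constrained), and the real content---weight multiplicity one for nonzero weights---lives in the case you skipped over. Once you add that check, your outline becomes correct and coincides with the paper's.
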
  

\begin{remark} The hypermultiplets with purely flavor charge (called ``everywhere light'' since their mass is independent of the Coulomb branch parameters) just decouple in the IR, so in a sense they are no part of the IR picture.
\end{remark}

To show the above fact, just list for all possible gauge group all representations compatible with non-positivity of the beta-function. Check, using Weyl formula, that the multiplicities of all weights for these representations is 1 except for the zero weight.

\subsection{UV line operators and the 't Hooft group} 
\label{sec:UVcharge}

\subsubsection{'t Hooft theory of quantum phases of gauge theories}\label{extendedthof}
We start by recalling the classical arguments by 't Hooft on the quantum phases of a $4d$ gauge theory \cite{hooft1978phase,hooft1979property,hooft1980confinement,hooft1980topological}. 
The basic order operator in a gauge theory is the Wilson line associated to a  (real) curve $C$
in space time and a representation $\boldsymbol{R}$ of the gauge group $G$,
\begin{equation}\label{wilsonline}
W_{\boldsymbol{R}}(C)= \mathrm{tr}_{\boldsymbol{R}}\,e^{-\int_C A}.
\end{equation}
Here $C$ is either a closed loop or is stretched out to infinity. In the second case we don't take the trace and hence the operator
depends on a choice of a weight $w$ of the representation $\boldsymbol{R}$
modulo che action of Weyl group. In the $\mathcal{N}=2$ case, the Wilson line \eqref{wilsonline} is replaced by its half-BPS counterpart \cite{gaiotto2013framed} which, to preserve half supersymmetries should be stretched along a straight line $L$;
we still denote this operator as $W_w(L)$.\footnote{\ The half-BPS lines are also parametrized by an angle $\vartheta$ which specifies which susy subalgebra leaves them invariant. We suppress $\vartheta$ from the notation.}  

What are the quantum numbers carried by $W_w(L)$?
This class of UV line operators is \emph{labelled} by (the Weyl orbit of) the weight $w$, so gauge weights  are useful quantum numbers. However, these numbers do not correspond to conserved quantities in a general gauge theory. For instance, consider pure (super-)Yang-Mills theory and let $\boldsymbol{R}$ be the adjoint representation. Since an adjoint Wilson line may terminate at the location of a colored particle transforming in the adjoint representation, a gluon (gluino) particle-antiparticle pair may be dynamically created out of the vacuum, breaking the line, see figure \ref{fig:break}.
\begin{figure}[H]
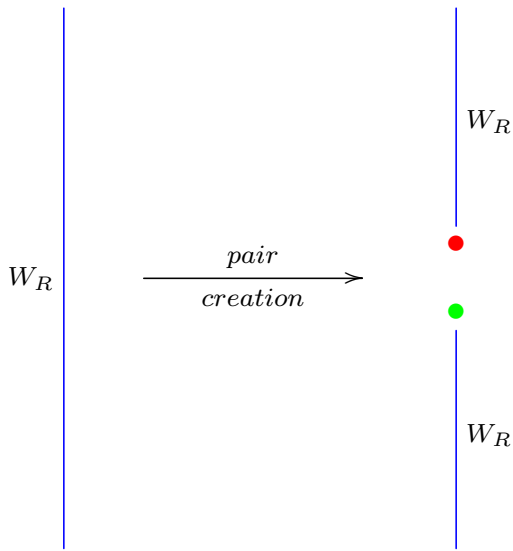

\centering
\resizebox{0.42\textwidth}{!}{\xygraph{
!{<0cm,0cm>;<1.5cm,0cm>:<0cm,1.2cm>::}
!{(0,0) }*+{}="a"
!{(0,-5) }*+{}="b"
!{(2.8,0) }*+{}="c"
!{(2.8,-5)}*+{}="d"
!{(2.8,-2.2)}*+[red]{\bullet}="au"
!{(2.8,-2.8)}*+[green]{\bullet}="ad"
!{(2.2,-2.5)}*+{}="ar1"
!{(0.5,-2.5)}*+{}="ar2"
"a"-@[blue]"b"_{W_R} "c"-@[blue]"au"^{W_{R}} "ad"-@[blue]"d"^{W_{R}} "ar2":"ar1"^{pair}_{creation}
}}
\caption{\textsc{Left}: an electric flux tube line created by an adjoint Wilson line. \textsc{Right}: the adjoint flux line is broken by the creation of a gluon-antigluon pair out of the vacuum.}
\label{fig:break}
\end{figure}

If our gauge theory is in the confined phase,  breaking the line $L$ is energetically favorable, so the line label $w$ does not correspond to a conserved quantity. 
On the contrary, a Wilson line in the \emph{fundamental} representation cannot break in pure $SU(N)$
(S)YM, since there is no dynamical particle which can be created out of the vacuum where it can terminate.
The obstruction to breaking the line is the center $\boldsymbol{Z}(SU(N))\cong \Z_N$ of the gauge group under which all \emph{local} degrees of freedom are inert while the fundamental Wilson line is charged. Stated differently, the gluons may screen all color degrees of freedom of a physical state  but the center of the gauge group. The conclusion is that the conserved quantum numbers of the line operators $W_{\boldsymbol{R}}(L)$ consist of  the representation $\boldsymbol{R}$ seen as a
representation of the center of the gauge group, $\boldsymbol{Z}(G)$,
which take value in the dual group $\boldsymbol{Z}(G)^\vee\cong\boldsymbol{Z}(G)$.
On the other hand, in $SU(N)$
(S)QCD we have quarks transforming in the fundamental representation; hence  
a quark-antiquark pair may be created to break a fundamental Wilson line. Then, in presence of fundamental matter, Wilson lines do not carry any conserved quantum number. In general, the conserved quantum numbers of the Wilson lines of a gauge theory with gauge group $G$ take value in the finite Abelian group
$\pi_1(G_\text{eff})^\vee\cong \pi_1(G_\text{eff})$,
where  $G_\text{eff}$ is 
the quotient group of $G$ which acts \emph{effectively}
on the microscopic UV degrees of freedom.

For clarity of presentation, the above discussion was in the confined phase. This is not the case of the $\mathcal{N}=2$ theory which we assume to be realized in its Coulomb phase.
In the physically realized phase the line $W_w(L)$ may be stable; then  its labeling quantum number $w$ becomes an \emph{emergent} conserved quantity of the IR description (see \S.\ref{llzaq10}). However, from the UV perspective, the only strictly conserved quantum numbers are still the (multiplicative) characters of $\pi_1(G_\text{eff})$ which take value in the group
$$
\pi_1(G_\text{eff})^\vee \equiv \mathrm{Hom}(\pi_1(G), U(1))\cong \pi_1(G_\text{eff}).
$$

More generally, we may have Wilson-'t Hooft lines \cite{hooft1978phase,hooft1979property,hooft1980confinement,hooft1980topological} which carry both electric and magnetic weights. Their multiplicative conserved quantum numbers take value in the (Abelian) \emph{'t Hooft group}
$$
\mathsf{tH}= \pi_1(G_\text{eff})^\vee\oplus
\pi_1(G_\text{eff}),
$$
equipped with the canonical skew-symmetric bilinear pairing (the \emph{Weil pairing})\footnote{\ As always, $\boldsymbol{\mu}$ denotes the group of roots of unity. The name `Weil pairing' is due to its analogy with the Weil pairing in the torsion group of a polarized Abelian variety which arises in exactly the same way.}
$$
\mathsf{tH}\times \mathsf{tH}\to \boldsymbol{\mu},\qquad (x,y)\times (x^\prime,y^\prime)\mapsto x(y^\prime)\,x^\prime(y)^{-1}.
$$

The 't Hooft multiplicative quantum numbers of a line operator, written additively, are just its electric/magnetic weights
$(w_e,w_m)$ modulo the weight lattice of $G_\text{eff}$.
\medskip

The best way to understand the proper UV conserved quantum numbers of line operators is to consider the different sectors in which we may decompose the microscopic path integral of the theory which preserve the symmetries of a line operator stretched in the 3-direction in space (that is, rotations in the orthogonal plane and translations).
In a $4d$ gauge theory quantized on a periodic $3$-box of size $L$
we may defined the 't Hooft twisted
path integral \cite{hooft1979property} (see \cite{hooft1980topological, hooft1980confinement} for nice reviews)
$$
e^{-\beta\,F(\vec e, \vec m, \theta, \mu_s,\beta)} \equiv \mathrm{Tr}_{\vec e, \vec m}\Big[e^{-\beta H+i\theta\nu+\mu_s F_s}\Big],\qquad 
\vec e\in \big(\pi_1(G_\text{eff})^\vee\big)^3,\ \vec m\in \pi_1(G_\text{eff})^3,
$$    
where $\vec e$, $\vec m$ are 't Hooft (multiplicative) electric and magnetic fluxes, $\theta$ is the instanton angle, and $\mu_s$ are chemical potentials in the Cartan algebra of the flavor group $F$. Imposing rotational invariance in the $1-2$ plane and taking the Fourier transform with respect to the $\mu_s$ we remain (at fixed $\theta$)
with the quantum numbers
\begin{equation}\label{kkkaqw}
(e_3, m_3, w)\in \pi_1(G_\text{eff})^\vee\oplus \pi_1(G_\text{eff})\oplus \big(\text{weight lattice of $F$}\big).
\end{equation}
We shall call the vector $(e_3, m_3, w)$ the \textit{'t Hooft charge} and the group in the \textsc{rhs} the \emph{extended 't Hooft group.}

We stress that the structure of the Weil pairing is required in order to relate the Euclidean path integral in given topological sectors  to the free energy $F(\vec e, \vec m, \theta, \mu_s,\beta)$ with fixed non-abelian fluxes \cite{hooft1979property}.

\begin{remark} The boundary condition on the Euclidean box which corresponds to a given 't Hooft charge does no break any supercharges, that is, we do not need to specify a BPS angle $\vartheta$ to define it. 
\end{remark}

\subsubsection{Non-Abelian enhancement of the flavor group in the UV}\label{nonabenh}

Consider a UV complete $\mathcal{N}=2$ gauge theory. In the IR theory
the flavor group is (generically) Abelian of rank $f$. In the UV the masses are irrelevant and the flavor group enhances from the Abelian group $U(1)^f$ to some possibly non-Abelian rank $f$ Lie group $F$.
The free part of the 't Hooft group
\eqref{kkkaqw} is then the weight lattice of $F$. This weight lattice is equipped with a quadratic form dual to the Cartan form on the root lattice.
From the quadratic form we recover the non-Abelian Lie group $F$. 
In conclusion: 

\begin{fact}\label{factthof} The UV conserved quantities are encoded in the extended 't Hooft group, a finitely generated Abelian group  of the form
\begin{equation}\label{nnazq}
\pi_1(G_\text{eff})^\vee\oplus \pi_1(G_\text{eff}) \oplus \Gamma_\text{flav,weight},
\end{equation} 
whose free part has rank $f$. The extended 't Hooft group \eqref{nnazq} is equipped with two additional structures: \emph{i)} the Weil pairing on the torsion part, \emph{ii)} the dual Cartan symmetric form on the free part. Moreover, \emph{iii)} the UV lines carry an adjoint action of the half quantum monodromy $\mathbb{K}$ {\rm (see \S.\,\ref{monoaa})} which acts on the 't Hooft group as $-1$. 
\end{fact}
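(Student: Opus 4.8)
The plan is to read Fact~\ref{factthof} as the synthesis of three essentially known ingredients: 't~Hooft's analysis of the charges carried by line operators in a non-Abelian gauge theory (reviewed in \S\,\ref{extendedthof}), the non-Abelian enhancement of the flavor symmetry in the deep UV (\S\,\ref{nonabenh}), and the action of the quantum monodromy on the UV operator algebra (\S\,\ref{monoaa}). Accordingly I would first pin down the underlying Abelian group, then install the two pairings, and finally establish the $\mathbb{K}$-action, which I expect to be the delicate point.

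\emph{The group.} By the screening argument of \S\,\ref{extendedthof}, a Wilson--'t~Hooft line labelled by an electric/magnetic weight $(w_e,w_m)$ of $G_\text{eff}$ can be broken by pair-creation of dynamical colored matter, the only obstruction being the class of $(w_e,w_m)$ modulo the weight lattice of $G_\text{eff}$, i.e.\ its image in $\pi_1(G_\text{eff})^\vee\oplus\pi_1(G_\text{eff})$. This fixes the torsion part of the conserved UV line charges. For the free part one notes that in the UV the mass parameters are irrelevant, so a line may additionally be decorated by a weight $w$ of the full flavor group $F$ --- precisely the Fourier conjugate of the flavor chemical potentials $\mu_s$ in the 't~Hooft twisted path integral, cf.\ eqn.~\eqref{kkkaqw} --- and such flavor weights are conserved because no gauge-dynamical process changes them. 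Hence the conserved UV quantum numbers form the finitely generated Abelian group \eqref{nnazq}, with free part of rank $f$.

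\emph{The two pairings.} For (i) I would derive the Weil pairing directly from the 't~Hooft twisted boundary conditions on the spatial $3$-torus: inserting a unit of electric flux and a unit of magnetic flux through orthogonal $2$-planes fails to commute by exactly the phase $x(y')\,x'(y)^{-1}$, a cocycle relation in 't~Hooft's construction \cite{hooft1979property}; keeping the flux component aligned with the line yields the antisymmetric pairing $\mathsf{tH}\times\mathsf{tH}\to\boldsymbol{\mu}$. For (ii) this is the standard fact that the weight lattice of a compact semisimple Lie group carries a canonical symmetric form dual to the Cartan form on the root lattice; the only content is that the UV flavor symmetry is genuinely $F$ rather than its maximal torus $U(1)^f$, so that this lattice-with-form is an honest invariant of the theory, recoverable (root system and all) from the free part of \eqref{nnazq}.

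\emph{The monodromy action and the main obstacle.} For (iii) I would use that $\mathbb{K}$ from \S\,\ref{monoaa} implements the shift $\vartheta\to\vartheta+\pi$ of the BPS phase; a half-BPS line of 't~Hooft charge $(e_3,m_3,w)$ at angle $\vartheta$ is taken to its CPT conjugate at angle $\vartheta+\pi$, so conjugation by $\mathbb{K}$ reverses the sign of every additive charge and therefore acts on \eqref{nnazq} as $-\mathrm{id}$; consistency with $\mathbb{K}^2$ (the full quantum monodromy, which acts trivially on UV quantum numbers) is automatic since $(-1)^2=1$. The step I expect to be the real obstacle is to make this rigorous at the operator level rather than by analogy with the IR charge lattice $\Lambda$: one must check that the CPT reversal is compatible with the quotient by the weight lattice of $G_\text{eff}$ so that it descends to the torsion summand, that $-\mathrm{id}$ respects the Weil pairing (it does, trivially) and the torsion structure, and --- a genuinely non-vacuous assertion whenever some invariant factor $d_s>2$ --- that $\mathbb{K}$ acts as $-\mathrm{id}$ and not merely as some other involution. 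I would verify the last point on the worked examples (e.g.\ $SU(2)$ with $N_f$ quarks, where $\mathsf{tH}$ has $\Z/2$ factors) and, for class $\mathcal{S}$, against the realization of $\mathbb{K}$ as a half-turn of the Gaiotto surface, with the clean general statement deferred to the later identification $\mathsf{tH}\cong K_0(\mathcal{C}(\Gamma))_\text{torsion}$, under which the $\mathbb{K}$-action is an explicit auto-equivalence of $\mathcal{C}(\Gamma)$ whose effect on $K_0$ reduces to $-\mathrm{id}$.
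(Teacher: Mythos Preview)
Your proposal is correct and follows essentially the same route as the paper: Fact~\ref{factthof} is presented there not as a theorem with a separate proof but as the ``In conclusion'' summary of the physical discussion in \S\,\ref{extendedthof}--\S\,\ref{nonabenh}, together with the forward reference to \S\,\ref{monoaa} where the PCT interpretation of $\mathbb{K}$ is stated (``The effect of the adjoint action of $\mathbb{K}$ on a line operator $L$ is to produce its PCT-conjugate. Then $\mathbb{K}$ inverts the 't Hooft charges''). Your three-step assembly --- screening for the torsion, flavor enhancement for the free part with its Cartan form, PCT for the $-1$ action --- is exactly the paper's logic; the additional ``obstacles'' you flag (descent of $-\mathrm{id}$ to the quotient, excluding other involutions) are more scrupulous than the paper itself, which simply asserts the physical picture and defers the categorical verification to \S\,\ref{kkkaq12}.
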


\paragraph{Finer structures on the 't Hooft group.} The 't Hooft group  \eqref{nnazq} detects the global topology of the gauge group $G_\text{eff}$; it also detects the topology of the flavor group $F$,
e.g.\! it distinguishes between the flavor groups $SO(N)$ and $\mathrm{Spin}(N)$, since they have different weight lattices
$$
\big[\,\Gamma_{\mathfrak{spin}(N)}: \Gamma_{\mathfrak{so}(N)}\,\big]=2.
$$
But there even finer informations on the flavor symmetry which we should be able to recover from the relevant categories. To illustrate the issue, consider $SU(2)$ SQCD with $N_f$ fundamental hypers. In the perturbative sector (states of zero magnetic charge) the flavor group is $SO(2N_f)$, but non-perturbatively it gets enhanced to $\mathrm{Spin}(2N_f)$. More precisely, states of \emph{odd} (resp.\! \textit{even}) magnetic charge are in spinorial (resp.\! tensorial) representations of the flavor group $\mathrm{Spin}(2N_f)$. This is due to the zero modes of the Fermi fields in the magnetic monopole background
\cite{seiberg1994monopoles}, which is turn are predicted by the Atiyah-Singer index theorem. The index theorem is an integrated version of the axial anomaly, so the correlation between magnetic charge and flavor representations should emerge from the same aspect of the category which expresses the $U(1)_R$ anomaly (and the $\beta$-function).

\subsubsection{The effective `charge' of a UV line operator}\label{llzaq10}
\label{sec:uvcharge}
We have two kinds of quantum numbers: conserved quantities and labeling numbers. In the IR we expect (see \S.\ref{kkkazx32}) that conserved quantities are (typically) sufficient to label BPS objects. However, the UV group of eqn.\eqref{kkkaqw} is too small to distinguish inequivalent BPS line operators.

We may introduce a different notion of `charge' for UV operators which takes value in a rank $n=2r+f$ lattice. This notion, albeit referred to UV objects, depends on a IR choice, e.g.\! the choice of a  vacuum $u$. Suppose that in this vacuum we have
$n$ species of stable lines $L_i$ ($i=1,\dots, n$) which are preserved by the the same susy sub-algebra preserving $L$ and  carry emergent IR quantum numbers $[L_i]$ which are $\mathbb{Q}$-linearly independent. We may consider the BPS state $|\{n_i\}\rangle$ in which
we have a configuration of parallel stable lines with $n_1$ of type $L_1$, $n_2$ of type $L_2$, and so on. Suppose that for our BPS line operator $L$
\begin{equation}\label{ooiq}
\big\langle \{n_i^\prime\}\big|\,L\,\big|\{n_i\}\big\rangle\neq 0
\end{equation}
It would be tempting to assign to the operator $L$ the `charge'
$$
\sum_i(n_i^\prime-n_i)[L_i]\in \bigoplus_i \Z[L_i].
$$
Such a charge would be well-defined on UV operators provided two conditions are satisfied: \textit{i)} for all $L$ we can find a pair of states $|\{n_i\}\rangle$, $|\{n_i^\prime\}\rangle$ such that eqn.\eqref{ooiq} holds, and moreover \textit{ii)} we can show that $n^\prime_i-n_i$ does not depend on the chosen $|\{n_i\}\rangle$, $|\{n_i^\prime\}\rangle$. The attentive reader may notice that this procedure is an exact parallel to the definition of the index of a cluster object (\textbf{Definition \ref{def:index}}). However the $i$-th `charge' $n^\prime_i-n_i$ is PCT-odd only if the lines $L$, $L_i$ carry `mutually local charge', that is, have trivial braiding; the projection of the `charge' so defined in the 't Hooft group \eqref{nnazq} is, of course, independent of all choices. This follows from the fact that the action of PCT on  the UV lines is given by the half quantum monodromy (see \S.\,\ref{monoaa}) which does not act as $-1$ on the present `effective' charges; of course, it acts as $-1$ on the 't Hooft charges as it should.

\subsection{The quantum monodromy}\label{monoaa}

There is one more crucial structure on the UV BPS operators, namely the quantum monodromy \cite{cecotti2010r,cecotti2014systems}. Let us consider first the case in which the UV fixed point is a good regular SCFT. At the UV fixed point the $U(1)_r$ $R$-symmetry is restored.
Let $e^{2\pi i r}$ be the operator implementing a $U(1)_r$ rotation by  
$2\pi$ (it acts on the supercharges as $-1$). $e^{2\pi i r}$ acts on a chiral primary operator of the UV SCFT as multiplication by $e^{2\pi i \Delta}$, where $\Delta$ is the scaling dimension of the chiral operator.
Suppose that for all chiral operators $\Delta\in \mathbb{N}$, then $e^{2\pi i r}=(-1)^F$
acts as 1 on all UV observables.
More generally, if all $\Delta\in m \mathbb{N}$ for some integer $m$,
the operator $(e^{2\pi i r})^m$ acts as 1 on observables \cite{cecotti2010r,cecotti2014systems}.

If the theory is asymptotically-free, meaning that the UV fixed point is approached with logarithmic deviations from scaling, the above relations get
also corrected, in a way that may be described rather explicitly, see  \cite{cecotti2014systems}. 

Now suppose we deform the SCFT by relevant operators to flow to the original $\mathcal{N}=2$ theory. We claim that, although the Abelian R-charge $r$ is no longer conserved,  $e^{2\pi ir}$ remain a symmetry in this set up\footnote{\ For the corresponding discussion in 2d, see \cite{Cecotti:2010qn}.} \cite{cecotti2010r}. This is obvious when the dimensions $\Delta$ are integral, since $e^{2\pi ir}$ commutes with the deforming operator. The quantum monodromy $\mathbb{M}$ is the 
operator induced in the massive theory from $e^{2\pi i r}$ in this way \cite{cecotti2010r,cecotti2014systems}.
It is well defined only up to conjugacy,\footnote{\ When the UV fixed point SCFT is non degenerated, the operator $\mathbb{M}$ is semisimple, and its conjugacy class is encoded in its spectrum, that is, the spectrum of dimensions of chiral operators $\Delta\mod1$.}
and may be written as a Kontsevitch-Soibelmann (KS) product of BPS factors ordered according to their phase\footnote{\ In eqn.\eqref{kkxza} we use the notations of \cite{cecotti2010r}: the product is over the BPS stable states of charge $\lambda\in\Lambda$ and spin $s_\lambda$ taken in the clockwise order in their phase $\arg Z_u(\lambda)$; $\psi(z;q)=\prod_{n\geq0}(1-q^{n+1/2}z)^{-1}$ is the quantum dilogarithm, and the $X_\lambda$ are quantum torus operators, i.e.\! they satisfy the algebra $X_\lambda X_{\lambda^\prime}=q^{\langle\lambda,\lambda^\prime\rangle/2} X_{\lambda+\lambda^\prime}$ with $\langle-,-\rangle$ the Dirac pairing. } \cite{cecotti2010r,cecotti2014systems}
\begin{equation}\label{kkxza}
\mathbb{M}=\prod^\circlearrowleft_{\lambda\in \text{BPS}} \Psi(q^{s_\lambda}X_\lambda;q)^{(-1)^{2s_\lambda}}.
\end{equation}
The KS wall-crossing formula \cite{Kontsevich:2008fj,kontsevich2014wall} is simply the statement that the conjugacy class of $\mathbb{M}$, being an UV datum, is independent of the particular massive deformation as well as of the particular BPS chamber we use to compute it (see \cite{cecotti2010r,cecotti2014systems}).

We may also define the half-monodromy $\mathbb{K}$, such that $\mathbb{K}^2=\mathbb{M}$ \cite{cecotti2010r}. The effect of the adjoint action of $\mathbb{K}$ on a line operator $L$ is to produce its PCT-conjugate. Then $\mathbb{K}$ inverts the 't Hooft charges.

We summarize this subsection in the following 

\begin{fact}\label{ppperdioc} If our $\mathcal{N}=2$ has a regular UV fixed-point SCFT and the dimension of all chiral operators satisfy $\Delta\in m\mathbb{N}$ for a certain integer $m$, then $\mathbb{K}^{2m}$ acts as the identity on the line operators. $\mathbb{K}$ acts as $-1$ on the 't Hooft charges.
\end{fact}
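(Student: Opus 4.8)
\emph{Proof idea.} The plan is to assemble the statement from the structural facts about the quantum monodromy recalled in \S\ref{monoaa}, reducing everything to properties of the UV $R$-symmetry operator $e^{2\pi i r}$, of which $\mathbb{M}=\mathbb{K}^{2}$ is the massive-theory incarnation. Two inputs are used: (a) $\mathbb{M}$ is, up to conjugacy, the operator induced in the deformed theory from $e^{2\pi i r}$ acting at the UV fixed point, the induction being legitimate because $e^{2\pi i r}$ survives the relevant deformation; and (b) the adjoint action of $\mathbb{K}$ on a half-BPS line produces its PCT conjugate. Since the property ``acting as the identity on the line operators'' concerns the outer (permutation) action and is therefore conjugacy invariant, it is harmless that $\mathbb{M}$ and $\mathbb{K}$ are defined only up to conjugacy.

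For the first assertion I would argue as follows. Being induced from $e^{2\pi i r}$, the operator $\mathbb{K}^{2m}=\mathbb{M}^{m}$ is induced from $(e^{2\pi i r})^{m}=e^{2\pi i m r}$, i.e.\ from a $U(1)_r$ rotation by $2\pi m$. On a chiral primary of scaling dimension $\Delta$ this acts by $e^{2\pi i m\Delta}$, and $\Delta\in m\mathbb{N}$ forces $m\Delta\in m^{2}\mathbb{N}\subset\mathbb{Z}$, so $e^{2\pi i m r}$ is the identity on the chiral ring; by the argument recalled in \S\ref{monoaa} it then acts as $1$ on the whole algebra of local UV observables. The remaining task — and this is the crux — is to promote this to the statement that $\mathbb{M}^{m}$ fixes every \emph{line} operator. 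Here one observes that a UV line operator is probed entirely by local data (its self-OPE, the local operators supported at its endpoints and junctions, and its 't Hooft charge), all of which live in the local operator algebra on which $e^{2\pi i m r}$ is trivial, so no invariant distinguishing lines can be moved by $\mathbb{M}^{m}$. A cleaner route, available whenever the categorical dictionary of this paper applies, is to realize $\mathbb{M}$ as an explicit autoequivalence of $\mathcal{C}(\Gamma)$ — in the hereditary cases of \S\ref{herditarycase} essentially a power of the Auslander-Reiten translation $\tau$, for which periodicity relations of the type $\tau^{h}=[-2]$ in \eqref{gabeq} make a finite order manifest, consistently with $\mathbb{M}$ being expressible as the closed mutation loop, equivalently the product of quantum-dilogarithm factors \eqref{kkxza} — and to check directly that $\mathbb{M}^{m}$ induces the identity on isoclasses of generic objects.

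The second assertion is essentially a restatement of input (b): $\mathrm{Ad}_{\mathbb{K}}$ is PCT, and PCT reverses every conserved charge; on the torsion part of the extended 't Hooft group \eqref{nnazq} this is inversion of characters $x\mapsto x^{-1}$, and on the free (flavor-weight) part it is $w\mapsto -w$, so $\mathbb{K}$ acts as $-1$ on \eqref{nnazq}. This is consistent with the first part, since then $\mathbb{M}=\mathbb{K}^{2}$ acts trivially there, matching the fact that $e^{2\pi i r}$ commutes with the gauge and flavor charges.

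I expect the main obstacle to be the crux step of the second paragraph: rigorously bridging ``$e^{2\pi i m r}=1$ on the \emph{local} operator algebra'' and ``$\mathbb{M}^{m}$ fixes every \emph{line} operator''. The heuristic ``lines are detected by local insertions'' is not a proof; what is really needed is the identification of $\mathbb{M}$ with an explicit functor on $\mathcal{C}(\Gamma)$, control of its order there, and the verification that the induction/RG correspondence $\mathsf{r}$ intertwines $e^{2\pi i m r}$ with $\mathbb{M}^{m}$ in a way compatible with the cluster structures. Handling the ``up to conjugacy'' ambiguity uniformly throughout, by always passing to the induced action on generic objects of $\mathcal{C}(\Gamma)$, is a subsidiary technical point.
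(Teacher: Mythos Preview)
Your approach is essentially the paper's own: the \textbf{Fact} is stated there without a separate proof, as a summary of the physical discussion in \S\ref{monoaa} --- $\mathbb{M}^m$ is induced from $(e^{2\pi i r})^m$, which acts trivially on chiral primaries once $m\Delta\in\mathbb{Z}$, and $\mathbb{K}$ implements PCT and hence inverts the 't~Hooft charges. The gap you correctly flag (passing from ``trivial on local observables'' to ``trivial on line operators'') is not addressed any more rigorously in the paper; it is taken as physical input, with categorical verification deferred to the explicit examples in \S\ref{kkkaq12}.
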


\section{Physical meaning of the categories $D^b\Gamma$, $\mathfrak{Per}\,\Gamma$, $\mathcal{C}(\Gamma)$}
\label{sec:physinterpr}
We start this section by reviewing as quivers with (super)potentials arise in the description of the BPS sector of a (large class of) $4d$ $\mathcal{N}=2$ theories, see \cite{cecotti2011classification,alim2013bps,alim2014mathcal,cecotti2012quiver,cecotti2013categorical,del2013four}.

\subsection{$\mathcal{N}=2$ BPS spectra and quivers}
\label{sec:physquiv}

We consider the IR physics of a $4d$ $\mathcal{N} = 2$ model at a generic vacuum $u$ along its Coulomb branch. 
We have the IR structures described in 
\S.\,\ref{IRpicture}: a charge lattice $\Lambda$ of rank $n=2r+f$,
 equipped with an integral skew-symmetric form given by the Dirac electro-magnetic pairing, and a complex linear form given by the $\mathcal{N}=2$ central charge: 
 $$\vev{-,-}\colon \Lambda \times \Lambda \to \Z,\qquad\quad Z_u\colon \Lambda\to\C.$$ 
 A $4d$ $\mathcal{N} = 2$ model has a BPS quiver at $u$
iff there exists a set of $n$ hypermultiplets, stable in the vacuum $u$, such that \cite{alim2014mathcal}: \textit{i)} their charges $e_i\in \Lambda$ generate $\Lambda$, i.e.\! $\Lambda\cong \oplus_i\Z e_i$, and \textit{ii)} the charge of each BPS states (stable in $u$), $\lambda \in \Lambda$, satisfies
$$
\lambda \in \Lambda_+ \ \ \text{or} \ \ -\lambda \in \Lambda_+,
$$
where $\Lambda_+ = \oplus_i \Z_+ e_i$ is the convex cone of `particles' \footnote{\ As contrasted with `antiparticles' whose charges belong to $-\Lambda_+$.}. 
The BPS quiver $Q$ is encoded in the
skew-symmetric $n\times n$ exchange matrix \begin{equation}\label{Bquiver}
B_{ij} :=\vev{e_i, e_j},\qquad i,j=1,\cdots,n.
\end{equation} The nodes of $Q$ are in one-to-one correspondence with the
generators $\{e_i\}$ of $\Lambda$. If $B_{ij} \geq 0$ then there are $|B_{ij}|$ arrows from node $i$ to node $j$; viceversa for $B_{ij}<0$.
\medskip

To find the spectrum of particles with given charge $\lambda = \sum_i m_i e_i \in \Lambda_+$ we may study the effective theory on their world-line. This is a SQM model with four supercharges \cite{Denef:2002ru,alim2014mathcal}, corresponding to the subalgebra of $4d$ susy which preserves the world-line. A particle is BPS in the $4d$ sense iff it
is invariant under 4 supersymmetries, that is, if it is a susy vacuum state of the world-line SQM. The 4-supercharge SQM is based on the quiver $Q$ defined in eqn.\eqref{Bquiver} \cite{Denef:2002ru,alim2014mathcal}: to the $i$--th node there correspond a $1d$ $U(m_i)$ gauge multiplet, while to an arrow $i\to j$ a $1d$ chiral multiplet  in the $(\boldsymbol{\overline{m}}_i,\boldsymbol{m}_j)$ bifundamental representation of the groups at its two ends.
To each oriented cycle in $Q$ there is associated a single-trace gauge invariant chiral operator, namely the trace of the product of the Higgs fields along the cycle.
The (gauge invariant) superpotential of the SQM is a complex linear combination of such operators associated to cycles of $Q$ \cite{alim2014mathcal}. Since we are interested only in the susy vacua, we are free to integrate out all fields entering quadratically in the superpotential.
We remain with a SQM system described by a \emph{reduced} quiver with (super)potential $(Q,W)$ in the sense of section 2.

Then the solutions of the SQM $F$-term
equations are exactly the modules $X$ of the Jacobian algebra\footnote{\ From now on the ground field $k$ is taken to be $\C$.} 
$J(Q, W)$ with dimension vector $\dim X = \lambda\in\Lambda$.

The $D$-term equation is traded for the stability condition \cite{alim2014mathcal}. Given the central charge $Z_u(-)$, we can 
choose a phase $\theta \in [0,2\pi)$ such that $Z_u(\Lambda_+)$ lies inside\footnote{\ $\mathbb{H}$ denotes the upper half-plane $\mathbb{H}:=\{z\in \C\,|\,\mathrm{Im}\,z>0\}$.} $\mathbb H_\theta:=e^{i\theta }\mathbb H$. 
Given a
module $X\in \mathsf{mod}\,J(Q, W)$, we define its stability function as $\zeta(X) := e^{-i\theta}Z_u(X) \in \mathbb H$. The module $X$ is stable iff
$$\arg\zeta(Y)<\arg\zeta(X), \quad \forall\, Y\subset X \text{ proper submodule.}$$
A stable module $X$ is always a \emph{brick,} i.e.\!
$\mathrm{End}_{\mathsf{mod}\,J(Q,W)}X\cong\C$ \cite{cecotti2013categorical}.

Keeping into account gauge equivalence,
the SQM classical vacuum space is the compact
K\"ahler variety \cite{alim2014mathcal}
\begin{equation}\label{whatMgamma}
M_\lambda:=\Big\{
X \in \textsf{mod}\,J(Q,W) \;\Big|\: X\  \text{stable},\ \dim X = \lambda
\Big\}
\Big/\prod_iGL(m_i
, \C),
\end{equation}
that is, the space of isoclasses of stable Jacobian modules of the given dimension $\lambda$.
The space of SQM quantum vacua is then $H^\ast(M_\lambda,\C)$ which carries a representation $\boldsymbol{R}$ of $SU(2)$ by hard Lefschetz \cite{GH,alim2014mathcal}, whose maximal spin is $\dim M_\lambda/2$; the space-time spin content of the charge $\lambda$ BPS particle is\footnote{\ The Cartan generator of $SU(2)_R$ acting on a BPS particle described by a $(p,q)$-harmonic form on $M_\lambda$ is $(p-q)$; however, it is conjectured that only trivial representations of $SU(2)_R$ appear \cite{gaiotto2013framed, del2014absence}.} 
$$\Big(\boldsymbol{0}\oplus \boldsymbol{2}\Big)\otimes \boldsymbol{R}.
$$
 For example, the charge $\lambda$ BPS states consist of a (half) hypermultiplet iff the corresponding moduli
space is a point, i.e.\! if the module $X$ is rigid.
\medskip

The splitting between particles
and antiparticles is conventional:
different choices lead to different pairs $(Q,W)$. However all these $(Q,W)$ should lead to \emph{equivalent} SQM quiver models. Indeed, distinct pairs are related by a chain of $1d$ Seiberg dualities \cite{seiberg1995electric}.
The Seiberg dualities act on $(Q,W)$ as the quiver mutations described in section 2. Indeed, the authors of \cite{derksen2008quivers} modeled their construction on  Seiberg's original work \cite{seiberg1995electric}.
\medskip 

The conclusion of this subsection
is that to a (continuous family of)
$4d$ $\mathcal{N}=2$ QFT (with the quiver property) there is associated a full \emph{mutation-class} of quivers with potentials $(Q,W)$. All $(Q,W)$ known to arise from consistent QFTs are Jacobi-finite, and we assume this condition throughout.
\medskip

Using the mathematical constructions reviewed in \S.\,2,
to such an $\mathcal{N}=2$ theory we naturally associate the three triangle categories
$D^b\Gamma$, $\mathfrak{Per}\,\Gamma$, and $\mathcal{C}(\Gamma)$, together with the functors $\mathsf{s}$, $\mathsf{r}$ relating them. We stress that the association is \emph{intrinsic,} in the sense that  
the  categories are independent of the choice of $(Q,W)$ in the  mutation-class  modulo triangle equivalence (cfr.\! \textbf{Theorem \ref{ttthma}}). 
Our next task is to give a physical interpretation to these three naturally defined categories. We start from the simpler one, $D^b\Gamma$. 

\subsection{Stable objects of $D^b\Gamma$ and BPS states}
Let $\Gamma$ be the Ginzburg algebra associated to the pair $(Q,W)$. Keller proved \cite{keller2011cluster} that the Abelian category $\mathsf{mod}\,J(Q,W)$ is the heart of a bounded $\boldsymbol{t}$-structure in $D^b\Gamma$. In particular, its Grothendieck group is
\begin{equation}\label{grogro}
K_0(D^b\Gamma)\cong K_0\big(\mathsf{mod}\,J(Q,W)\big)\equiv \Lambda,
\end{equation}
that is the lattice of the IR conserved charges (\S.\,\ref{IRpicture}).
Thus, given a stability condition on the Abelian category $\mathsf{mod}\,J(Q,W)$, we can  extend it to the entire triangular category $D^b\Gamma$. In particular, since the semi-stable objects of $D^b\Gamma$ are  the elements of $\cp(\phi)$ (cfr.\! the proof of \textbf{Proposition \ref{prop:bridge}}), we have two possibilities:
\begin{itemize}
\item $\phi \in (0,1],$ then the only semistable objects are the semistable objects of $\mathsf{mod}\,J(Q,W)$ in the sense of ``Abelian category stability'' plus the zero object of $D^b\Gamma$;
\item $\phi \not\in (0,1],$ then the only semistable objects are the shifts of the semistable objects of $\mathsf{mod}\,J(Q,W)$  in the sense of ``Abelian category stability''.
\end{itemize}
In other words, a generic object $E \in D^b\Gamma$ is \emph{unstable if it has a nontrivial HN filtration}. Thus, up to shift $[n]$, the only possible semistable objects in $D^b\Gamma$ are those objects belonging to the heart $\mathsf{mod}\,J(Q,W)$ that are ``Abelian''-stable in it. 
We have already seen that the category $\mathsf{mod}\,J(Q,W)$ describes the BPS spectrum of our $4d$ $\mathcal{N}=2$ QFT: by what we just concluded, the isoclasses of stable objects $X$ of $D^b\Gamma$ with Grothendieck class $[X]=\lambda\in\Lambda$ are parametrized, up to even shifts\footnote{\ Since the shift by $[1]$ acts on the BPS states as PCT, it is quite natural to identify the BPS states associated to stable objects differing by even shifts.}, by the K\"ahler manifolds $M_\lambda\cong M_{-\lambda}$ in eqn.\eqref{whatMgamma} whose cohomology yields the BPS states.

The category $\mathcal{P}(\phi)$ is an Abelian category in its own right.
The stable objects with BPS phase
$e^{i\pi\phi}$ are the simple objects in this category; in particular they are bricks in $\mathcal{P}(\phi)$ hence bricks in $\mathsf{mod}\,J(Q,W)$,
that is,
$$
X\ \text{stable}\quad\Rightarrow\quad\mathrm{End}_{\mathsf{mod}\,J(Q,W)}(X)\cong \C.
$$

\subsection{Grothendieck groups vs.\! physical charges}\label{34zaq}

When the triangle category $\mathcal{T}$ describes a class of BPS objects, the Abelian group
$K_0(\mathcal{T})$ is identified with the conserved quantum numbers carried by those objects. In particular, the group $K_0(\mathcal{T})$ should carry all  the additional structures required by the physics of the corresponding BPS objects, as described in \S.\,\ref{physpre}.

Let us pause a while to discuss the Grothendieck groups of the three triangle categories $K_0(\mathcal{T})$, where $\mathcal{T}=D^b\Gamma$, $\mathfrak{Per}\,\Gamma$, or $\mathcal{C}(\Gamma)$, and check that they indeed possess all properties and additional structures as required by their proposed physical interpretation. 

\subsubsection{$K_0(D^b\Gamma)$} 

Since $D^b\Gamma$ describes BPS particles, $K_0(D^b\Gamma)$ is just the IR charge lattice $\Lambda$, see eqn.\eqref{grogro}.
Physically, the charge lattice carries the structure of a skew-symmetric integral bilinear form, namely the Dirac electromagnetic pairing.
This matches with the fact that, 
since $D^b\Gamma$ is 3-CY,  its Euler form \eqref{eq:euler} is skew-symmetric and is identified with the Dirac pairing (compare eqn.\eqref{Bquiver} and the last part of \textbf{Proposition \ref{ppaq}}). We stress that the pairing is intrinsic (independent of all choices) as it should be on physical grounds.

\subsubsection{$K_0(\mathcal{C}(\Gamma))$: structure}\label{kkkas12}
The structure of the group $K_0(\mathcal{C}(\Gamma))$ was described in \S.\,\ref{k0clust}.
We have
\begin{equation}\label{uvgroup}
K_0(\mathcal{C}(\Gamma))= \Z^f\oplus \mathsf{A}^\vee\oplus \mathsf{A}
\end{equation}
where $\mathsf{A}$ is the torsion group\footnote{\ Of course, $\mathsf{A}^\vee\cong \mathsf{A}$; however it is natural to distinguish the group and its dual.}
$$
\mathsf{A}= \bigoplus_s \Z/d_s \Z,\qquad d_s\mid d_{s+1}
$$
where the $d_s$ are the positive integers appearing in the normal form of $B$, see eqn.\eqref{normaaaal}.

The physical meaning of the Grothendieck group \eqref{uvgroup} is easily understood by considering the case of pure $\mathcal{N}=2$ super-Yang-Mills with gauge group $G$. 
Then one shows \cite{toappear}
$$\mathsf{A}=\boldsymbol{Z}(G)\equiv \text{the center of the (simply-connected) gauge group }G$$
that is
$$K_0(\mathcal{C}(\Gamma_{\text{SYM,}G}))\cong
\boldsymbol{Z}(G)^\vee\oplus 
\boldsymbol{Z}(G).$$  
This is exactly the group of multiplicative quantum numbers labeling the UV Wilson-'t Hooft line operators in the pure SYM case \cite{hooft1978phase},
as reviewed in \S.\,\ref{extendedthof}.
This strongly suggests the identification of the cluster Grothendieck group $K_0(\mathcal{C}(\Gamma))$ with
the group of additive and multiplicative quantum numbers carried by the UV line operators. 

This is confirmed by of the example of $\mathcal{N}=2$ SQCD with (semi-simple) gauge group $G$ and quark hypermultiplets in a (generally reducible) representation $\boldsymbol{R}$. One finds \cite{toappear}
$$
K_0(\mathcal{C}(\Gamma_\text{SQCD}))\cong \Z^{\mathrm{rank}\, F} \oplus \pi_1(G_\text{eff})^\vee\oplus \pi_1(G_\text{eff}),
$$
where $F$ is the flavor group and $G_\text{eff}$ is the quotient of $G$
acting effectively on the UV degrees of freedom. Again, this corresponds to the UV extended 't Hooft group as defined in \S.\,\ref{extendedthof}.
More generally one has:

\begin{fact} In all $\mathcal{N}=2$ theories with a Lagrangian formulation (and a BPS quiver) we have
$$
K_0(\mathcal{C}(\Gamma))\cong \big(\text{the extended 't Hooft group of \S.\,\ref{extendedthof}}\big).$$
\end{fact}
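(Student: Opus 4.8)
The plan is to establish the isomorphism separately for the free part and the torsion part of $K_0(\mathcal{C}(\Gamma))$, using the purely algebraic description $K_0(\mathcal{C}(\Gamma))\cong \Z^n/B\cdot\Z^n$ from eqn.\eqref{kkkaq12} together with the explicit form of the exchange matrix $B$ determined by the Lagrangian data. First I would recall that for an $\mathcal{N}=2$ gauge theory with semi-simple gauge group $G$ and quark hypermultiplets in a quaternionic representation $\boldsymbol{H}=\bigoplus_a \boldsymbol{R}_a$, there is a canonical BPS quiver obtained from a weak-coupling (``Lagrangian'') duality frame: it has $2\,\mathrm{rank}\,G$ ``gauge'' nodes (a W-boson node and a magnetic monopole node for each simple root / Cartan generator, organized as an $\widehat{A}(1,1)$-type Kronecker double arrow decorated by the Cartan matrix of $\mathfrak{g}$) together with $\mathrm{rank}\,F$ ``flavor'' nodes, one for each weight in a chosen set of fundamental matter multiplets. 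In this distinguished frame the exchange matrix $B$ has a block structure: the gauge-gauge block is the antisymmetrized Cartan matrix of $\mathfrak{g}$ (times $2$, reflecting the magnetic pairing), the gauge-flavor block encodes the weights of $\boldsymbol{R}$ under the maximal torus, and the flavor-flavor block vanishes.

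**Key steps.** (1) Compute $\mathrm{corank}\,B = \mathrm{rank}\,\Lambda_{\mathrm{flav}}$ and identify it with $\mathrm{rank}\,F = f$; this is the statement that in the IR the flavor symmetry is $U(1)^f$ and gives the $\Z^f$ summand. (2) For the torsion, invoke Proposition on the Smith normal form: the elementary divisors $d_s$ of $B$ are the invariant factors of the sublattice $B\cdot\Z^n\subset\Z^n$, and $\mathsf{A}=\bigoplus_s\Z/d_s\Z$. The claim to verify is $\mathsf{A}\cong \pi_1(G_{\mathrm{eff}})$. Here the arithmetic input is that $\pi_1(G_{\mathrm{eff}})$ is computed as the cokernel of the map from the root lattice of $\mathfrak{g}$ to the lattice generated by the roots together with the weights appearing in $\boldsymbol{R}$ (the matter screens exactly those line charges which become dynamical), i.e. $\pi_1(G_{\mathrm{eff}}) = \big(\text{weight lattice of all UV local operators}\big)\big/\big(\text{root lattice}\big)$. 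One shows that this cokernel is exactly the torsion of $\Z^n/B\cdot\Z^n$: the Cartan block contributes the center $\boldsymbol{Z}(G)$ of the simply-connected group (as in the pure SYM computation $\mathsf{A}=\boldsymbol{Z}(G)$ quoted above), and the flavor-block columns quotient this further by the image of the matter weights, yielding $\pi_1(G_{\mathrm{eff}})$. (3) Because $B$ is skew-symmetric, its nonzero elementary divisors come in equal pairs (eqn.\eqref{normaaaal}), which is precisely why $\mathsf{A}$ appears twice in eqn.\eqref{uvgroup} as $\mathsf{A}^\vee\oplus\mathsf{A}$ — this matches the electric/magnetic doubling of the 't Hooft group $\mathsf{tH}=\pi_1(G_{\mathrm{eff}})^\vee\oplus\pi_1(G_{\mathrm{eff}})$. (4) Finally, check mutation-invariance is not an issue: the image $B\cdot\Z^n$ is invariant under mutation (footnote after eqn.\eqref{kkkaq12}), so the result is independent of the chosen quiver in the mutation class, and in particular of the choice of weak-coupling frame.

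**Main obstacle.** The hard part will be step (2) in the presence of matter: one must show that the Smith normal form of the full $n\times n$ matrix $B$ (gauge block plus matter columns) really reproduces $\pi_1(G_{\mathrm{eff}})$ rather than some coarser or finer quotient — i.e. that the matter weights enter the cokernel computation with exactly the right multiplicities and that no spurious torsion is produced by the flavor-flavor zero block interacting with the off-diagonal weight block. This is a linear-algebra-over-$\Z$ computation that has to be done case by case over the ADE (and non-simply-laced, via folding) classification of admissible $\boldsymbol{R}$, using the constraint that the one-loop $\beta$-function of each simple factor is non-positive; the finiteness of that list of representations (the same list entering \textbf{Fact} in \S.\ref{kkkazx32}) is what makes the check tractable. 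Once the gauge-with-fundamental-matter cases are settled, the general Lagrangian case follows by assembling simple factors and adding flavor nodes, since $B$ is block-diagonal over the simple factors up to the shared flavor torus.
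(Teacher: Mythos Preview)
The paper does not actually prove this \textbf{Fact} in the text: it is asserted immediately after the SQCD computation, which itself is attributed to the forward reference \cite{toappear} (Cecotti--Del Zotto, \emph{to appear}). There is therefore no in-text argument to compare your proposal against; the paper treats the statement as input from a companion work and then, for non-Lagrangian theories, turns it into a \emph{definition} of the extended 't Hooft group.

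Your proposal is a reasonable sketch of what such a proof should look like, and is almost certainly close in spirit to the deferred argument. Two refinements worth noting. First, your description of the gauge block is slightly garbled: for simply-laced $G$ the paper gives $B_{\text{gauge}}=C\otimes\left(\begin{smallmatrix}0&1\\-1&0\end{smallmatrix}\right)$ (see \S.\,6.4), so the doubling $\boldsymbol{Z}(G)^\vee\oplus\boldsymbol{Z}(G)$ arises directly from $\mathrm{coker}\big(C\otimes i\sigma_2\big)\cong(\mathrm{coker}\,C)^{\oplus2}$, with $\mathrm{coker}\,C=\boldsymbol{Z}(G)$ the standard root/weight-lattice identification --- no need to invoke the general skew-symmetric normal form for that step. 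Second, your ``main obstacle'' may be more uniform than you fear: in the weak-coupling quiver the matter columns of $B$ are literally the weight vectors of $\boldsymbol{R}$ in the Cartan basis, so the image $B\cdot\Z^n$ restricted to the electric sublattice is the lattice spanned by roots and matter weights, and its cokernel is \emph{by definition} $\pi_1(G_{\mathrm{eff}})$. This should go through as a single lattice argument rather than a case-by-case check over the $\beta\leq0$ list; the case analysis is needed only if one wants to name $\pi_1(G_{\mathrm{eff}})$ explicitly for each model.
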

This is already strong evidence that
the cluster category $\mathcal{C}(\Gamma)$ describes UV line operators. For $\mathcal{N}=2$ theories without a Lagrangian, we adopt the above \textbf{Fact} as the definition of the extended 't Hooft group.

From \textbf{Fact \ref{factthof}} we know that the physical 't Hooft group has three additional structures. Let us show that all three structures are naturally present in $K_0(\mathcal{C}(\Gamma))$.

\subsubsection{$K_0(\mathcal{C}(\Gamma))$: action of half-monodromy and periodic subcategories}\label{kkkaq12}

There is a natural candidate for the half-monodromy: on $X\in\mathcal{C}(\Gamma)$, $\mathbb{K}$ acts as $X\mapsto X[1]$ and hence the full monodromy as $X\mapsto X[2]$. 
Then $\mathbb{K}$ acts on $K_0(\mathcal{C}(\Gamma))$ as $-1$, as required. Let us check that this action has the correct physical properties e.g.\! the right periodicity as described in \textbf{Fact \ref{ppperdioc}}. 

\begin{example}[Periodicity for Argyres-Douglas models] We use the notations of \S.\,\ref{herditarycase}. We know that the quantum monodromy $\mathbb{M}$ has a periodicity\footnote{\ For the relation of this fact with the $Y$-systems, see \cite{cecotti2014systems}.} equal to (a divisor of) $h+2$ \cite{cecotti2010r}, corresponding to the fact that the dimension of the chiral operators $\Delta\in \tfrac{1}{h+2}\mathbb{N}$. Indeed, from the explicit description of the cluster category, eqn.\eqref{mzxa},
we have $\mathcal{C}(\mathfrak{g})= D^b(\mathsf{mod}\,\C \mathfrak{g})/\langle \tau^{-1}[1]\rangle^\Z$, so that $\tau\cong [1]$ in $\mathcal{C}(\mathfrak{g})$. Hence,
$$
\mathbb{M}^{h+2} \equiv [h+2] \cong\tau^h[2]=\mathrm{Id},
$$
where we used eqn.\eqref{gabeq}. 
\end{example}

Under the identification $\mathbb{K}\leftrightarrow [1]$, we may rephrase \textbf{Fact \ref{ppperdioc}}
in the form:

\begin{fact} Let $\mathcal{C}(\Gamma)$ be the cluster category associated to a $\mathcal{N}=2$ theory with a \emph{regular} UV fixed-point SCFT such that all chiral operators have dimensions $\Delta \in m\mathbb{N}$. Then $\mathcal{C}(\Gamma)$ is periodic with minimal period $p\mid 2m$. 
If the theory has flavor charges, $p$ is even (more in general: $p$ is even unless the 't Hooft group is a vector space over $\mathbb{F}_2$). In particular, for $\mathcal{N}=2$ theories with a regular UV fixed-point the cluster Tits form $\langle\!\langle [X],[Y]\rangle\!\rangle$ is well-defined.
\end{fact}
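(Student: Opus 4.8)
The plan is to reduce the whole statement to the single functorial identity $[2m]\cong\mathrm{Id}_{\mathcal{C}(\Gamma)}$ for the shift of the cluster category. Once this is established, the existence of a minimal period $p\mid 2m$ is automatic, its parity is read off from the action of $[1]$ on $K_0(\mathcal{C}(\Gamma))$, and the well-definedness of the Tits form follows immediately from the material of \S\ref{rrem}, since the hypotheses force $\mathcal{F}(p)=\mathcal{C}(\Gamma)$.

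First I would use the identification $\mathbb{K}\leftrightarrow[1]$ of \S\ref{kkkaq12} (so $\mathbb{M}\leftrightarrow[2]$) and feed it into \textbf{Fact \ref{ppperdioc}}: because every chiral operator has $\Delta\in m\mathbb{N}$, the operator $(e^{2\pi i r})^m$ is trivial on observables, so $\mathbb{K}^{2m}=\mathbb{M}^m$ acts as the identity on UV line operators, i.e.\! on the generic objects of the components of the moduli of objects of $\mathcal{C}(\Gamma)$. The step that needs care is the upgrade of this to the statement $X[2m]\cong X$ for \emph{every} $X\in\mathcal{C}(\Gamma)$, not just the generic ones. In the hereditary list of \S\ref{herditarycase} this can be checked directly and independently of the physical input: there $[1]\cong\tau$ in $\mathcal{C}(\mathcal{H})$ by \eqref{mzxa}, and $\tau$ is of finite order (e.g.\! $\tau^h=[-2]$ for finite Dynkin type, eqn.\eqref{gabeq}, and $\deg\omega=0$ in the tubular case), in agreement with the Example of \S\ref{kkkaq12}. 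Granting $[2m]\cong\mathrm{Id}$, the set $\{n\in\mathbb{Z}:[n]\cong\mathrm{Id}_{\mathcal{C}(\Gamma)}\}$ is a subgroup of $\mathbb{Z}$ containing $2m$, hence equal to $p\mathbb{Z}$ with $p\mid 2m$; and since $[p]$ then restricts to an equivalence on all of $\mathcal{C}(\Gamma)$, in the sense of \textbf{Definition \ref{def:periodi}} the $p$-periodic subcategory $\mathcal{F}(p)$ is the whole category.

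Next I would treat the parity. In any triangulated category the distinguished triangle $X\to 0\to X[1]\to$ gives $[X[1]]=-[X]$ in $K_0$, so $[1]$ acts on $K_0(\mathcal{C}(\Gamma))$ as $-1$ (consistently with $\mathbb{K}$ acting as $-1$ on the 't Hooft charges, \S\ref{monoaa}). Hence $[p]\cong\mathrm{Id}$ implies $(-1)^p=\mathrm{id}$ on $K_0(\mathcal{C}(\Gamma))$; if $p$ were odd this would force $2\,K_0(\mathcal{C}(\Gamma))=0$, i.e.\! $K_0(\mathcal{C}(\Gamma))\cong\mathbb{Z}^f\oplus\mathsf{A}^\vee\oplus\mathsf{A}$ (see \S\ref{k0clust}) would be an $\mathbb{F}_2$-vector space, which is impossible whenever $f>0$. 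Thus $p$ is even as soon as the theory carries flavor charges, and more generally $p$ is even unless $K_0(\mathcal{C}(\Gamma))$ --- and in particular its torsion part, the 't Hooft group $\mathsf{A}^\vee\oplus\mathsf{A}$ --- is an elementary abelian $2$-group with $f=0$; the boundary case is Argyres-Douglas of type $A_2$, where $\det B=1$, $K_0(\mathcal{C}(\Gamma))=0$ and $p=5$ is odd. Finally, since $\mathcal{F}(p)=\mathcal{C}(\Gamma)$, the normalized Euler form \eqref{euflavor} is defined on all of $\mathcal{C}(\Gamma)$, and by the Proposition of \S\ref{rrem} it descends to a symmetric pairing on $K_0(\mathcal{C}(\Gamma))/K_0(\mathcal{C}(\Gamma))_{\text{torsion}}$, which is by definition the cluster Tits form; this proves the last assertion (for odd $p$ the form is identically zero, as it must be).

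The hard part is the middle step: converting the physical periodicity of the quantum monodromy $\mathbb{M}$ --- which is naturally a statement about generic objects, hence ultimately about the $U(1)_r$ grading and the Kontsevich-Soibelman factorization \eqref{kkxza} --- into the honest categorical equality $[2m]\cong\mathrm{Id}_{\mathcal{C}(\Gamma)}$ of autoequivalences, i.e.\! into the claim that ``the UV fixed point is a regular SCFT with $\Delta\in m\mathbb{N}$'' forces the shift functor of $\mathcal{C}(\Gamma)$ to have finite order dividing $2m$. For the hereditary cases of \S\ref{herditarycase} this is purely algebraic via the Serre/Auslander-Reiten functor $\tau$, but outside that list (for instance for higher-rank class $\mathcal{S}$ SCFTs) there is no such shortcut, and one must lean on \textbf{Fact \ref{ppperdioc}} together with the fact that the shift functor on $\mathcal{C}(\Gamma)$ is pinned down by its action on the canonical cluster-tilting object and its IY-mutations.
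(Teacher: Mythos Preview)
Your approach is essentially the paper's: the paper introduces this \textbf{Fact} with the single sentence ``Under the identification $\mathbb{K}\leftrightarrow[1]$, we may rephrase \textbf{Fact~\ref{ppperdioc}} in the form:'' and gives no further argument. So the core input --- \textbf{Fact~\ref{ppperdioc}} plus the identification $\mathbb{K}\leftrightarrow[1]$ of \S\ref{kkkaq12} --- is exactly what you use.

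You are in fact more explicit than the paper on two points. First, the parity clause: your argument that $[1]$ acts as $-1$ on $K_0(\mathcal{C}(\Gamma))$, hence $[p]\cong\mathrm{Id}$ with $p$ odd forces $2K_0(\mathcal{C}(\Gamma))=0$, is a clean justification that the paper does not spell out. Second, you correctly flag the genuine gap --- passing from ``$\mathbb{K}^{2m}$ acts trivially on UV line operators'' (a statement about generic objects/observables) to the honest categorical identity $[2m]\cong\mathrm{Id}_{\mathcal{C}(\Gamma)}$ --- which the paper simply absorbs into the word ``rephrase''. Your honesty about this step, and your remark that it is checkable in the hereditary cases of \S\ref{herditarycase} but otherwise rests on the physical input, accurately reflects the status of the claim in the paper.
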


\paragraph{Asymptotically-free theories.}
It remain to discuss the  asymptotically-free theories. The associate cluster categories $\mathcal{C}(\Gamma)$ are not periodic. However, from the properties of the 't Hooft group, we expect that, whenever our theory has a non-trivial flavor symmetry, $\mathcal{C}(\Gamma)$ still contains a periodic sub-cluster category of even period. We give an informal argument corroborating this idea which may be checked in several explicit examples.

Sending all non-exactly marginal couplings to zero, our asymptotically-free theory reduces to a decoupled system of free glue and UV regular matter SCFTs. Categorically, this means that cluster category of each matter SCFT, $\mathcal{C}_\text{mat}$ embeds as an additive sub-category in $\mathcal{C}(\Gamma)$ closed under shifts (by PCT). The embedding functor $\iota$ is not exact (in general), so we take the triangular hull of the full subcategory over the objects in its image
$\mathsf{Hu}_\triangle\big((\iota\,\mathcal{C}_\text{mat})_\text{full}\big)\subset \mathcal{C}(\Gamma)$. If the model has non-trivial flavor, at least one matter subsector has non trivial flavor, and the corresponding category $\mathcal{C}_\text{matter}$ is periodic of even period $p$. Its objects satisfy $X[p]\cong X$ and this property is preserved by $\iota$.
The triangle category
$\mathsf{Hu}_\triangle\big((\iota\,\mathcal{C}_\text{mat})_\text{full}\big)$ is generated by these periodic objects and hence is again periodic of period $p$. Then set
$$
\mathcal{F}(p)=\mathsf{Hu}_\triangle\big((\iota\,\mathcal{C}_\text{mat})_\text{full}\big).
$$
Again, the flavor Tits form is well defined.
\medskip

The above discussion shows that the presence of a $p$-periodic subcategory $\mathcal{F}(p)\subset\mathcal{C}$ is related to the presence of a sector in the $\mathcal{N}=2$ theory described  by susy protected operators of dimension $$\Delta= \tfrac{2}{p}\mathbb{N}.$$ Let us present some simple examples.

\begin{example}[Pure $SU(2)$ SYM]\label{kkz19ab} The cluster category $\mathcal{C}_{SU(2)}$ is not periodic; this is a manifestation of the fact that the $\beta$-function of the theory is non zero \cite{cecotti2013categorical}.
However, let us focus on the perturbative ($\equiv$ zero magnetic charge) sector in the $g_\text{YM}\to0$ limit. The chiral algebra is generated by a single operator of dimension $\Delta=2$,
namely $\mathrm{tr}(\phi^2)$. Hence we expect that the zero-magnetic charge sector is described by a subcategory of 
$\mathcal{C}_{SU(2)}$ which is 1-periodic. Indeed, this is correct, $\mathcal{F}(1)$ being a $\mathbb{P}^1$ family of homogenous cluster tubes.
\end{example}

\begin{example}[$SU(2)$ SYM coupled to $D_p$ Argyres-Douglas]
In this case the matter is an Argyres-Douglas theory of type $D_p$; the matter half quantum monodromy $\mathbb{K}_\text{matter}$ has order $(h(D_p)+2)/\gcd(2,h(D_p))=p$ as we may read from the spectrum of chiral ring dimensions of the Argyres-Douglas model \cite{Cecotti:2010qn}. Thus the matter corresponds to a periodic subcategory $\mathcal{F}(p)\subset \mathcal{C}$. This category is a cluster tube of period $p$
\cite{barot2008grothendieck,barot}.
See also \cite{cecotti2013categorical}.
\end{example}

\begin{remark} Equivalently, we may understand that the presence of a non-trivial flavor group implies the existence of a $2$-periodic subcategory $\mathcal{F}(p)\subset \mathcal{C}$ by the fact that the corresponding conserved super-currents have canonical dimension $1$ which cannot be corrected by RG. 
\end{remark}

\subsubsection{$K_0(\mathcal{C}(\Gamma))$: non-Abelian enhancement of flavor}\label{nonabeenh}
As discussed in \S.\,\ref{nonabenh},
the IR flavor symmetry $U(1)^f$ gets enhanced in the $UV$ to a non-Abelian group $F$. The identification of $K_0(\mathcal{C}(\Gamma))$ with the extended 't Hooft group requires, in particular, that its free part is equipped with the correct dual Cartan form for the flavor group $F$.

In \S.\,\ref{rrem} we defined a Tits form associated to (a periodic subcategory of) $\mathcal{C}(\Gamma)$. This is a symmetric form on the free part of the Grothendieck group, and is the natural candidate for the dual Cartan form of the physical flavor group $F$. 
Let us check in a couple of examples that this identification yields the correct flavor group:
the cluster category knows the actual non-Abelian group.

\begin{example}[$SU(2)$ with $N_f\geq1$ fundamentals]\label{Nf3} We use the same notations\footnote{\ However we often write simply $\mathbb{X}$ instead of $\mathbb{X}(p_1,\dots,p_s)$ leaving the weights implicit.} as in \S.\,\ref{spi8}. The cluster category is
$$
\mathcal{C}_{N_f}=D^b\!\big(\mathsf{coh}\,\mathbb{X}(\,\overbrace{2,\cdots,2}^{N_f\ 2's}\,)\big)\big/\langle \tau^{-1}[1]\rangle^\Z.
$$
For $N_f\neq4$ this category is not periodic since the canonical sheaf has non-zero degree (in the physical language: the $\beta$-function is non-zero). We are in the situation discussed at the end of \S.\,\ref{kkkaq12}, and the present example is also an illustration of that issue.

The 2-periodic triangle 2-CY subcategory $\mathcal{F}(2)\xrightarrow{\mathsf{j}}\mathcal{C}(N_f)$ is given by the orbit category of the derived category of finite-length sheaves. 
It consists of a $\mathbb{P}^1$ family of cluster tubes; in $\mathbb{P}^1$ there are $N_f$ special points whose cluster tubes have period $2$.
Let $\mathcal{S}_{i,k}$, $k\in\Z/2\Z$, be the simples in the $i$-th special cluster tube, satisfying
$$\mathcal{S}_{i,k}[1]\cong \tau\mathcal{S}_{i,k}\cong \mathcal{S}_{i,k+1}$$
and let $\mathcal{S}_z$ be the simple over the regular point  $z\in\mathbb{P}^1$, $\tau \mathcal{S}_z\cong \mathcal{S}_z$. Thus $[\mathcal{S}_z]=0$ 
and $K_0(\mathcal{F}(2))$
 is generated by the $[\mathcal{S}_{i,0}]$ ($i=1,\dots, N_f$).
 The image of $K_0(\mathcal{F}(2))$
 in $K_0(\mathcal{C}(N_f))$ has index 2; indeed in $K_0(\mathcal{C}(N_f))$ we have an extra generator $[\mathcal{O}]$ and a relation \cite{barot2008grothendieck}
\begin{equation}\label{kkazqw5}
 2[\mathcal{O}]=\sum_{i=1}^{N_f}[\mathcal{S}_{i,0}]
\end{equation}
 Then as in \S.\,\ref{spi8} (for the special case $N_f=4$) we have
 $$
 K_0(\mathcal{C}(N_f))\cong\Big\{(w_1,\cdots,w_{N_f})\in \left(\tfrac{1}{2}\Z\right)^{N_f}\;\Big|\; w_i=w_j\mod1\Big\}\equiv \Gamma_{\text{weight},\,\mathfrak{spin}(2N_f)}
 $$
 with
 $$
 \langle\!\langle [\mathcal{S}_{i,0}],[\mathcal{S}_{j,0}]\rangle\!\rangle=\delta_{i,j},
 $$
 that is, $K_0(\mathcal{C}(N_f))$ is the $\mathfrak{spin}(2N_f)$ weight lattice equipped with the dual Cartan pairing which is the correct physical extended ' t Hooft group for this model  which has $\pi_1(G_\text{eff})=1$ and $F=\mathrm{Spin}(2N_f)$, as expected. \end{example}

\begin{remark}[$\mathrm{Spin}(8)$ triality] The case of $N_f=4$ was already presented in \S.\,\ref{spi8}. In that case $\deg\mathcal{K}=0$ (i.e.\! $\beta=0$), the theory is UV superconformal, and the cluster category is periodic.
The correlation between magnetic charge and $\mathrm{Spin}(8)$ representation becomes the fact that the modular group $PSL(2,\Z)$ acts on the flavor by triality \cite{seiberg1994monopoles}, see \cite{cecotti2015higher} for details from the cluster category viewpoint. 
\end{remark}

 \subsubsection{Example \ref{Nf3}: Finer flavor structures,  $U(1)_r$ anomaly, Witten effect}\label{pppq12x} The cluster category contains even more detailed information on the UV flavor physics of the corresponding $\mathcal{N}=2$ QFT. Let us illustrate the finer flavor structures in the case  of $SU(2)$ SYM coupled to $N_f$ flavors\footnote{\ Or, more generally, to several Argyres-Douglas systems of type $D$.} (\textbf{Example \ref{Nf3}}).

 Note that the sublattice
 $K_0(\mathcal{F}(2))\subset K_0(\mathcal{C}(N_f))$ is the weight lattice of $SO(2N_f)$; since 
 $\mathcal{F}(2)$ is the cluster sub-category of the `perturbative' (zero magnetic charge) sector, we recover the finer flavor structures mentioned at the end of \S.\,\ref{nonabenh}. In facts, eqn.\eqref{kkazqw5} is the image in the Grothendieck group of the equation which is the categorical expression of the $U(1)_r$ anomaly \cite{cecotti2013categorical}. Indeed, in the language of coherent sheaves, the $U(1)_r$ anomaly is measured by the non-triviality of the canonical sheaf $\mathcal{K}$  (think of a (1,1) $\sigma$-model: $\mathcal{K}$ trivial means the target space is Calabi-Yau, which is the condition of no anomaly). 
 The coefficient of the $\beta$-function, $b$, is (twice) its degree,\footnote{\ Notice that $\deg\mathcal{K}=0$ does not mean that $\mathcal{K}$ is trivial but only that it is a torsion sheaf in the sense that $\mathcal{K}^m\cong\mathcal{O}$ for some integer $m$.} $\deg\mathcal{K}=-\chi(\mathbb{X})$ \cite{cecotti2013categorical}. 
 As a preparation to the examples of \S.\,6, we briefly digress to recall how this comes about.

\paragraph{$\beta$-function and Witten effect.} The AR translation $\tau$ acts on $\mathsf{coh}\,\mathbb{X}$ as multiplication by the canonical sheaf \cite{geigle,lenzing1,cecotti2015higher}
 \begin{equation}\label{jzxce}\tau\colon \mathcal{A}\mapsto\mathcal{A}\otimes \mathcal{K}\equiv\mathcal{A}\otimes\mathcal{O}(\vec\omega).
 \end{equation} 
 Hence the $U(1)_R$ anomaly and $\beta$-function may be read from the action of $\tau$ on the derived category $D^b\mathsf{coh}\,\mathbb{X}$ which we may identify as the IR category of BPS particles.\footnote{\ Indeed, for $N_f\leq 3$, the triangle category $D^b\mathsf{coh}\,\mathbb{X}$ admits $\mathsf{mod}\,\C \hat{\mathfrak{g}}$ as the core of a $\boldsymbol{t}$-structure (here $\hat{\mathfrak{g}}$ is an acyclic affine quiver in the mutation class of the model \cite{cecotti2011classification}; see also \textbf{Example \ref{kzanncv}}.} Now, in the cluster category of a weighted projective line, $\mathcal{C}(\mathsf{coh}\,\mathbb{X})\equiv D^b(\mathsf{coh}\,\mathbb{X})/\langle \tau^{-1}[1]\rangle$, one has $\tau\cong[1]$, while $[1]$ acts in the UV as the half-monodromy, that is, as a UV $U(1)_r$ rotation by $\pi$.   
 In the normalization of ref.\cite{seiberg1994monopoles} (see their eqn.(4.3)), the complexified $SU(2)$ Yang-Mills coupling at weak coupling, $a\to\infty$, is
$$
\frac{\theta}{\pi}+\frac{8\pi i}{g^2}= -\frac{b}{\pi i}\log a+\cdots,
$$
Under a $U(1)_r$ rotation by $\pi$, $a\to e^{\pi i}a$, the vacuum angle shifts as $\theta\to\theta-b\pi$.
Since a dyon of magnetic charge $m$ carries an electric charge
$m\,\theta/2\pi\;\text{mod}\,1$ (the Witten effect \cite{Witten:1979ey}),
under the action of $\tau$ the IR electric/magnetic charges $(e,m)$ should undergo the flow 
\begin{equation}\label{nnna129c}
\tau\colon (e,m)\to (e-m b/2,m).
\end{equation}
For an object of $D^b(\mathsf{coh}\,\mathbb{X})$ the magnetic (electric) charge correspond to its rank (degree); then comparing eqns.\eqref{jzxce},\eqref{nnna129c} we get $b=-2\deg\mathcal{K}=2\,\chi(\mathbb{X}).$ 
 
\paragraph{Finer flavor structures (\S.\,\ref{nonabenh}).}   The Grothendieck group of
 $\mathsf{coh}\,\mathbb{X}(2,\dots,2)$ is generated by  $[\mathcal{O}]$, $[\mathcal{S}_0]$, $[\mathcal{S}_{i,j}]$ ($i=1,\dots,N_f$, $j\in\Z/2\Z$) subjected to the relations
 $[\mathcal{S}_0]=[\mathcal{S}_{i,0}]+[\mathcal{S}_{i,1}]$ $\forall\,i$, see \textbf{Proposition 2.1} of  \cite{barot2008grothendieck}. The action of $\tau$ in $K_0(\mathsf{coh}\,\mathbb{X})$ is
\begin{equation}\label{az129}
 [\tau\mathcal{S}_{i,j}]=[\mathcal{S}_{i,j+1}],\qquad\quad [\tau\mathcal{O}]-[\mathcal{O}]=(N_f-2)[\mathcal{S}_0]-\sum_{i=1}^{N_f}[\mathcal{S}_{i,0}].
\end{equation}
The difference $[\tau\mathcal{O}]-[\mathcal{O}]$ measures the non-triviality of the canonical sheaf, that is, the $\beta$-function/$U(1)_r$ anomaly. In the cluster category, for all sheaf $[\tau\mathcal{A}]=-[\mathcal{A}]$, so that $[\mathcal{S}_{i,0}]=0$ and
the second eqn.\eqref{az129} reduces to \eqref{kkazqw5}. Hence, as suggested by the physical arguments at the end of \S.\,\ref{nonabenh}, the non-perturbative flavor enhancement 
$SO(2N_f)\to\mathrm{Spin}(2N_f)$ follows from the counting of the Fermi zero-modes implied by the axial anomaly.

\subsubsection{$K_0(\mathcal{C}(\Gamma))_\text{torsion}$: the Weil pairing}

Let $X\in \mathcal{C}(\Gamma)$
The projection
$$
\langle S_i, F_\Gamma X\rangle \in \Z^n/B\Z^n,
$$
depends only on $[X]$. Rewrite the integral vector $\langle S_i, F_\Gamma X\rangle$ in the $\Z$-basis where $B$ takes the normal form \eqref{normaaaal}
$$
\big(\langle S_1, F_\Gamma X\rangle,\cdots, \langle S_n, F_\Gamma X\rangle\big)\xrightarrow{\text{normal form basis}} (w_1, w_2,\cdots, w_f, u_{1,1},u_{2,1},\cdots, u_{1,s}, u_{2,s},\cdots)
$$ and see its class  as an element of $(\mathbb{Q}^{2}/\Z^{2})^r$
$$
(w_1, w_2,\cdots, w_f, u_{1,1},u_{2,1},\cdots, u_{1,s}, u_{2,s},\cdots)\mapsto
\left(\frac{u_{1,1}}{d_1},\frac{u_{2,1}}{d_1},\cdots, \frac{u_{1,s}}{d_s},\frac{u_{2,s}}{d_s}\cdots\right) \in (\mathbb{Q}^{2}/\mathbb{Z}^{2})^r.
$$
The skew-symmetric matrix $B$ then defines a skew-symmetric pairing
$$
2\pi i\sum_{s=1}^r \frac{\epsilon^{ab}\,u_{a,s}\,u^\prime_{b,s}}{d_s}\in 2\pi i\,\mathbb{Q}/\mathbb{Z}.
$$
The exponential of this expression is the canonical Weil pairing. Let us check one example.

\begin{example}[Pure $SU(2)$] The basis $[P_1]$, $[P_2]$ is canonical. Then the Weil pairing is
$$
(Z/2\Z)^2\times (Z/2\Z)^2\ni (e,m) \times (e^\prime,m^\prime)\mapsto (-1)^{em^\prime -me^\prime}.
$$
\end{example}

\subsection{The cluster category as the UV line operators}\label{lllaz231}

We have seen that for a $\mathcal{N}=2$ theory (with quiver property) the Grothendieck group $K_0(\mathcal{C}(\Gamma))$ is the extended 't Hooft group of additive and multiplicative conserved quantum numbers of the UV line operators and that this group is naturally endowed with all the structures required by physics, including the finer ones.

This amazing correspondence makes almost inevitable the identification of the the cluster category $\mathcal{C}(\Gamma)$ of the mutation-class of quivers with (super)potentials associated to a $4d$ $\mathcal{N}=2$ model with the triangle category describing its UV BPS line \emph{operators}. This identification has been pointed out by several authors working from different points of view \cite{gaiotto2013framed,Cecotti:2010qn, cordova2013line}. In particular, the structure of the mutations of the $Y$--seeds in the cluster algebras lead to the Kontsevich-Soibelman wall crossing formula \cite{Kontsevich:2008fj} (see \cite{Cecotti:2010qn,cecotti2014systems} for  details).  This is just the action of the shift $[1]$ on the cluster category which implements the quantum half monodromy $\mathbb{K}$ (cfr.\! \S.\,\ref{kkkaq12}).  

In section \ref{sec:surfaces} below we  check explicitly this identification by relating the geometrical description of the cluster category of a surface as given in the mathematical literature with the WKB analysis of line operators by GMN \cite{gaiotto2013wall,gaiotto2013framed}.
\medskip

For BPS line operators we also had a notion of `charge' which is useful to distinguish them, see \S.\,\ref{llzaq10}. We already mentioned there that both the definition and the properties of this `charge' have a precise correspondent in the mathematical notion of the \emph{index} of a cluster object.
Now we may identify these two quantities. Note that, while the 't Hooft charge is invariant under quantum monodromy (i.e.\! under the shift $[2]$), the
index is not. This is the effect of non-trivial wall-crossing and, essentially, measures it \cite{Cecotti:2010qn}.

In \S.\,\ref{kkazqw} we saw that the index is fine enough to distinguish rigid objects of the cluster category. This is reminiscent  of our discussion in \S.\ref{kkkazx32} about a (necessary) condition for UV completeness.  

In section \S.\,\ref{sec:lindef}, building over refs.\cite{gaiotto2013framed, cordova2013line},
we discuss how the interpretation of the cluster category $\mathcal{C}(\Gamma)$ as describing UV BPS line operators $L_{\mathrm{ind}\,X}(\zeta)$ (labeled by the index of the corresponding cluster object $X$ and the phase $\zeta$ of the preserved supersymmetry) leads to concrete expressions for their vacuum expectation values in the vacuum $u$
$$
\langle L_{\mathrm{ind}\,X}(\zeta)\rangle_u.
$$

\subsection{The perfect derived category $\mathfrak{Per}\,\Gamma$}

To complete the understanding of the web of categories and functors 
describing the BPS physics of 
a $4d$ $\mathcal{N}=2$ theory, it remains to discuss the physical meaning of the perfect category 
$\mathfrak{Per}\,\Gamma$.
To the best of our knowledge,
an interpretation of the perfect category of a Ginzburg DG algebra has not appeared before in the physics literature.

We may extract some properties of the BPS objected described by the perfect category already from its Grothendieck group $K_0(\mathfrak{Per}\,\Gamma)$ and the basic sequence of functors
\begin{equation}\label{thesequenceagain}
0\to D^b\Gamma\xrightarrow{\;\mathsf{s}\;} \mathfrak{Per}\,\Gamma
\xrightarrow{\;\mathsf{r}\;}\mathcal{C}(\Gamma)\to 0.
\end{equation}
The Grothendieck group $K_0(\mathfrak{Per}\,\Gamma)$ is isomorphic to the IR charge lattice $\Lambda$, so $\mathfrak{Per}\,\Gamma$ is a category of IR BPS objects whose existence (i.e.\! ``stability'') depends on the particular vacuum $u$. $\mathfrak{Per}\,\Gamma$ yields the description of these physical objects from the viewpoint of the Seiberg-Witten low-energy effective Abelian theory. This is already clear from the fact that $\mathfrak{Per}\,\Gamma$ contains the category describing the IR BPS particles i.e.\!
$D^b\Gamma$; BPS particles then form part of the physics described by $\mathfrak{Per}\,\Gamma$. A general object in $\mathfrak{Per}\,\Gamma\setminus D^b\Gamma$ differs from an object in the category $D^b\Gamma$ in one crucial aspect: its total homology has infinite dimension, so (typically) infinite \textsc{susy} central charge and hence infinite energy. Then $\mathfrak{Per}\, \Gamma$ is naturally interpreted as the category yielding the IR description of
 half-BPS \emph{branes} of some kind. They may have infinite energy just because their volume may be  infinite.
 Although their central charge is not well defined, its phase is: it is just the angle $\theta$ corresponding to the subalgebra of supersymmetries under which the brane is invariant.
 
 On the other hand, the RG functor $\mathsf{r}$ in \eqref{thesequenceagain} associates to each IR object in $\mathcal{O}\in\mathfrak{Per}\,\Gamma\setminus D^b\Gamma$ a non-trivial UV line operator $\mathsf{r}(\mathcal{O})$.
 This suggests a heuristic physical picture: let  $\mathcal{O}\in\mathfrak{Per}\,\Gamma\setminus D^b\Gamma$ describe a BPS brane which is stable in the Coulomb vacuum $u$; this brane should be identified with the ``state'' obtained by acting with the UV line operator $\mathsf{r}(\mathcal{O})$ on the vacuum $u$ as seen in the low-energy Seiberg-Witten effective Abelian theory.  
 
 In order to make this proposal explicit, in the next section we shall consider a particular class of examples, namely the class $\mathcal{S}[A_1]$ theories \cite{gaiotto2012n,gaiotto2013wall}. In this case all three categories $D^b\Gamma$, $\mathfrak{Per}\,\Gamma$ and $\mathcal{C}(\Gamma)$ are explicitly understood both from the Representation-Theoretical side (in terms of string/band modules \cite{assem2010gentle}) as well as in terms of the geometry of curves on the Gaiotto surface $C$.
In this setting BPS objects are also well understood from the physical side since WKB is exact in the BPS sector.

Comparing the mathematical definition of the various triangle categories associated to a class $\mathcal{S}[A_1]$ model, and the physical description of the BPS objects, we shall check that the above interpretation of $\mathfrak{Per}\,\Gamma$ is correct. 

\subsubsection{``Calibrations'' of perfect categories} To complete the story we need to introduce a notion of ``calibration'' on the objects of $\mathfrak{Per}\,\Gamma$ which restricts in the full subcategory $D^b\Gamma$ to the Bridgeland notion of stability.
The specification of a ``calibration''
requires the datum of the Coulomb vacuum $u$ and a phase $\theta=\pi \phi\in\R$. Given an $u$ (corresponding to specifying a central charge $Z$), the
$\phi$-calibrated objects form
a full additive subcategory of $\mathfrak{Per}\,\Gamma$, $\mathcal{K}(\phi)$, such that
$$
\mathcal{P}(\phi)\subset \mathcal{K}(\phi)\subset \mathfrak{Per}\,\Gamma,\qquad \forall\,\phi\in\R.
$$ 
We use the term ``calibration'' instead of ``stability'' since it is quite a different notion with respect to Bridgeland stability (in a sense, it has ``opposite'' properties), and it does not correspond to the physical idea of stability.
These aspects are already clear from the fact that the central charge $Z$ is not defined for general objects in  
$\mathfrak{Per}\,\Gamma$. 

In the special case of the perfect categories arising from class $\mathcal{S}[A_1]$ QFTs, where everything is explicit and geometric, the calibration condition may be expressed in terms of flows of quadratic differentials, see \S.\,\ref{sec:surfaces}. 

We leave a more precise discussion of calibrations for perfect categories to future work. Here we limit ourselves to make some observations we learn from the class $\mathcal{S}[A_1]$ example.

\begin{definition}
 A phase $\pi\phi\in\R$ is called a \emph{BPS phase} if the slice $\mathcal{P}(\phi)\subset D^b\Gamma$ contains non-zero objects. A phase $\pi\phi$ is \emph{generic} if it is not a BPS phase nor an accumulation point of BPS phases.
\end{definition}

\begin{fact} In a class $\mathcal{S}[A_1]$ theory, assume there is no
BPS phase in the range $[\pi\phi,\pi \phi^\prime]$. Then
$$
\mathcal{K}(\phi)\cong \mathcal{K}(\phi^\prime).
$$
Moreover,
let $\pi\phi$ be a \emph{generic} phase. Then the $\phi$-calibrated category  $\mathcal{K}(\phi)\subset \mathfrak{Per}\,\Gamma$ has the form 
$$\mathcal{K}(\phi)\cong \mathsf{add}\,\mathcal{T}_\phi$$
for an object $\mathcal{T}_\phi\in \mathfrak{Per}\,\Gamma$ such that
$$\mathsf{r}(\mathcal{T}_\phi)\in\mathcal{C}(\Gamma)\ \text{is cluster-tilting.}$$
In other words, the generic $\mathcal{T}_\phi$ is a silting object of $\mathfrak{Per}\,\Gamma$.
\end{fact}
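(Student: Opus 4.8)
The plan is to prove both assertions inside the geometric model of class $\mathcal{S}[A_1]$ theories set up in \S\,\ref{sec:surfaces}. Recall the dictionary: for a Gaiotto surface $C$, an ideal triangulation yields a gentle algebra whose Ginzburg DG algebra is $\Gamma$, and one has geometric incarnations of the three categories --- $D^b\Gamma$ as graded closed arcs and closed loops on $C$, $\mathfrak{Per}\,\Gamma$ as graded \emph{open} arcs ending at the decoration points of the decorated surface $\widehat C$, and $\mathcal{C}(\Gamma)$ as (tagged) arcs of the undecorated surface --- with $\mathsf{r}$ the functor forgetting the decoration, and $\mathrm{Hom}$-spaces computed from oriented intersections of the representative curves \cite{assem2010gentle,qiu2016decorated}. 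A Coulomb vacuum $u$ corresponds to a GMN framed quadratic differential $(C,\phi_2)$ with $Z_u(\gamma)=\oint_\gamma\sqrt{\phi_2}$, and the phase-$\psi$ WKB data is the horizontal foliation $\mathcal{F}_\psi$ of $e^{-2\pi i\psi}\phi_2$. I would take as the definition of the calibration (to be spelled out in \S\,\ref{sec:surfaces}) that $\mathcal{K}(\phi)\subset\mathfrak{Per}\,\Gamma$ is the full additive subcategory on those open arcs whose geodesic representative in the flat metric $|\phi_2|$ is a straight segment of constant phase $\pi\phi$, i.e.\ a finite union of closures of leaves of $\mathcal{F}_\phi$ joining decoration points. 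With this definition the inclusion $\mathcal{P}(\phi)\subset\mathcal{K}(\phi)$ is immediate: the $Z_u$-semistable objects of $D^b\Gamma$ of phase $\phi$ are precisely the saddle connections and closed $\mathcal{F}_\phi$-geodesics, which are in particular $\phi$-calibrated.

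For the first assertion I would appeal to the basic structural input of the WKB method: the combinatorial type of $\mathcal{F}_\psi$ --- its horizontal strips, half-plane domains and separating leaves, hence the whole collection of $\psi$-calibrated arcs together with their intersection pattern --- is locally constant in $\psi$ and jumps only across a value where $\mathcal{F}_\psi$ admits a saddle connection, i.e.\ exactly a BPS phase \cite{gaiotto2013wall,gaiotto2013framed}. So, if $[\pi\phi,\pi\phi']$ contains no BPS phase, rotating the foliation from phase $\phi$ to phase $\phi'$ produces a canonical bijection between the $\phi$-calibrated and the $\phi'$-calibrated open arcs that preserves their oriented intersection numbers; promoting this to an equivalence of additive categories gives $\mathcal{K}(\phi)\cong\mathcal{K}(\phi')$. (This is the categorical shadow of ``no wall-crossing between consecutive BPS rays''; when $\psi$ does cross a BPS phase, $\mathcal{T}_\psi$ undergoes a silting mutation which $\mathsf{r}$ sends to a flip of the WKB triangulation, i.e.\ a cluster mutation.)

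For the second assertion, let $\pi\phi$ be generic. Then $\mathcal{F}_\phi$ has no saddle connection, so it is saddle-free and decomposes $C$ into horizontal strips and half-plane domains; choosing one generic horizontal trajectory through each domain gives a WKB ideal triangulation $T_\phi$ of $C$ \cite{gaiotto2013wall}. Its edges lift canonically to open arcs of $\widehat C$; set $\mathcal{T}_\phi$ to be their direct sum in $\mathfrak{Per}\,\Gamma$. Since no two of these arcs cross and their gradings are the WKB ones, a direct computation gives $\mathrm{Hom}_{\mathfrak{Per}\,\Gamma}(\mathcal{T}_\phi,\mathcal{T}_\phi[k])=0$ for $k>0$ and $\mathcal{T}_\phi$ generates $\mathfrak{Per}\,\Gamma$; equivalently $\mathsf{r}(\mathcal{T}_\phi)$ is the cluster-tilting object attached to $T_\phi$, which (the flip graph of triangulations of $C$ being connected) is mutation-reachable from the canonical one, whence $\mathcal{T}_\phi$ is silting by the silting/cluster-tilting correspondence \cite{plamondon2011cluster}. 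It then remains to show $\mathcal{K}(\phi)=\mathsf{add}\,\mathcal{T}_\phi$: a $\phi$-calibrated open arc is by definition a horizontal segment of $\mathcal{F}_\phi$, so it cannot cross a separating leaf (it would acquire a vertex there) and hence lies in the closure of a single strip or half-plane; inside such a domain the only constant-phase open arcs between decoration points are copies of the edges of $T_\phi$. Thus every object of $\mathcal{K}(\phi)$ is a finite direct sum of summands of $\mathcal{T}_\phi$, while each summand of $\mathcal{T}_\phi$ is $\phi$-calibrated by construction.

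The step I expect to be the main obstacle is the exhaustiveness claim in the second assertion: controlling \emph{all} constant-phase geodesics between decoration points of a saddle-free differential, in particular ruling out ``long'' calibrated arcs that could in principle wind through several domains, and handling carefully the behaviour of leaves near the punctures where the decoration points sit, as well as the closed-loop (band-module) objects. A secondary difficulty, which I would leave to future work, is to pin down a model-independent, homological definition of ``calibration'' that manifestly yields a full additive subcategory sitting between $\mathcal{P}(\phi)$ and $\mathfrak{Per}\,\Gamma$ and reduces to the geometric one in the class $\mathcal{S}[A_1]$ case.
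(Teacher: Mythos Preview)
The paper does not supply a proof of this \textbf{Fact}; it is presented as an observation extracted from the class~$\mathcal{S}[A_1]$ geometric picture, with the surrounding text explicitly saying ``We leave a more precise discussion of calibrations for perfect categories to future work. Here we limit ourselves to make some observations we learn from the class $\mathcal{S}[A_1]$ example.'' The justification is deferred to the WKB/quadratic-differential machinery of \S\,\ref{sec:surfaces} and the references \cite{gaiotto2013wall,gaiotto2013framed,bridgeland2015quadratic,qiu2016decorated}.

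Your proposal is therefore not competing against a proof in the paper but rather supplying the argument the paper only gestures at --- and you do so using exactly the framework the paper intends: the horizontal foliation $\mathcal{F}_\phi$ of $e^{-2\pi i\phi}\phi_2$, the fact that its combinatorial type is locally constant in $\phi$ and jumps precisely at BPS phases (saddle connections), and the WKB triangulation $T_\phi$ attached to a generic (saddle-free) phase whose arcs give the silting object $\mathcal{T}_\phi$ with $\mathsf{r}(\mathcal{T}_\phi)$ cluster-tilting. This is the Bridgeland--Smith/GMN story the paper has in mind, and your sketch tracks it correctly.

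One point worth flagging: the paper never pins down a precise definition of $\mathcal{K}(\phi)$ either, so your proposed definition (full additive subcategory on open arcs with constant-phase geodesic representatives) is a choice rather than something to be matched. Your self-identified obstacle --- exhaustiveness of $\mathsf{add}\,\mathcal{T}_\phi$ inside $\mathcal{K}(\phi)$, and the treatment of band objects --- is real and is precisely the kind of detail the paper declines to address; the paper's phrasing ``some observations'' and the subsequent conjecture for general $\mathcal{N}=2$ theories signal that a fully rigorous treatment was not the goal here.
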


We conjecture that something like the above \textbf{Fact} holds for general $4d$ $\mathcal{N}=2$ theories.

\section{Cluster automorphisms and $S$-duality}
\label{sec:sdualmap}

\subsection{Generalities}

A duality between two supersymmetric theories  induces a (triangle) equivalence between
the triangle categories describing its 
BPS objects. The celebrate example is mirror symmetry between 
IIA and IIB string theories compactified on a pair of mirror Calabi-Yau 3-folds, $\mathcal{M}$, $\mathcal{M}^\vee$.
At the level of the corresponding categories of BPS branes, mirror symmetry duality induces \emph{homological mirror symmetry}, that is the equivalences of triangle categories \cite{kontsevich1994homological, kapustin2008homological}
$$D^b(\mathsf{Coh}\,\mathcal{M})\cong D^b(\mathsf{Fuk}\,\mathcal{M}^\vee),\qquad
D^b(\mathsf{Coh}\,\mathcal{M}^\vee)\cong D^b(\mathsf{Fuk}\,\mathcal{M}).$$

In fact, since to
a supersymmetric theory $\mathcal{T}$ we associate a family of triangle categories, $\{\mathfrak{T}_{(a)}\}_{a\in I}$, depending on the class of BPS objects and the physical picture (e.g.\! IR versus UV), a dual pair of theories $\mathcal{T}$, $\mathcal{T}^\vee$, yields
a \emph{family} of equivalences of categories labeled by the index set $I$ 
$$
\mathfrak{T}_{(a)}\xrightarrow{\;\mathsf{d}_{(a)}\;}\mathfrak{T}_{(a)}^\vee\qquad a\in I.
$$
The several categories associated to the theory, $\{\mathfrak{T}_{(a)}\}_{a\in I}$, are related by
physical compatibility functors having the schematic form
$$\mathfrak{T}_{(a)}\xrightarrow{\;\mathsf{c}_{(a,b)}\;}\mathfrak{T}_{(b)}\qquad a,b\in I$$
(e.g.\! the `inverse RG flow' functor $\mathsf{r}$
in eqn.\eqref{exactse}). Physical consistency of the duality then require that we have commutative diagrams of functors of the form
$$
\xymatrix{\mathfrak{T}_{(a)}\ar[r]^{\mathsf{d}_{(a)}}\ar[d]_{\mathsf{c}_{(a,b)}} &
\mathfrak{T}_{(a)}^\vee\ar[d]^{\mathsf{c}_{(a,b)}^\vee}\\
\mathfrak{T}_{(b)}\ar[r]^{\mathsf{d}_{(b)}}&
\mathfrak{T}_{(b)}^\vee}
$$
The philosophy of the present review is that the dualities are better understood in terms of such   diagrams of exact functors  between the relevant triangle categories. This idea may be applied to all kinds of dualities; here we are particularly interested in \emph{auto-dualities}, that is, dualities of the theory with itself. The prime examples of auto-dualities is $S$-duality in $\mathcal{N}=2^*$ SYM and Gaiotto's $\mathcal{N}=2$ generalized $S$-dualities \cite{gaiotto2012n}. 
One of the motivation of this paper is to use categorical methods to compute the group  $\mathbb{S}$ of $S$-dualities which generalize the $PSL(2,\Z)$ group for  $\mathcal{N}=2^*$ as well as the results by Gaiotto.
\medskip 

An auto-duality induces a family of
exact functors $\mathsf{d}_{(a)}\colon
\mathfrak{T}_{(a)}\to \mathfrak{T}_{(a)}$, one for each BPS category $\mathfrak{T}_{(a)}$, such that:
\begin{itemize}
\item[\it a)] for all $a\in I$, $\mathsf{d}_{(a)}$ is an\emph{autoequivalence} of the triangle category $\mathfrak{T}_{(a)}$;
\item[\it b)] the $\{\mathsf{d}_{(a)}\}$ satisfy physical consistency conditions in the form of commutative diagrams 
\begin{equation}\label{conscon}
\begin{gathered}\xymatrix{\mathfrak{T}_{(a)}\ar[r]^{\mathsf{d}_{(a)}}\ar[d]_{\mathsf{c}_{(a,b)}} &
\mathfrak{T}_{(a)}\ar[d]^{\mathsf{c}_{(a,b)}}\\
\mathfrak{T}_{(b)}\ar[r]^{\mathsf{d}_{(b)}}&
\mathfrak{T}_{(b)}}\end{gathered}
\end{equation}
\end{itemize}

\begin{definition}\label{defSS2}
1) The group $\mathfrak{S}$ of \textit{generalized auto-dualities} is the group of families $\mathsf{d}_{(a)}$ of autoequivalences satisfying eqn.\eqref{conscon} \emph{modulo} its  subgroup acting trivially on the physical observables. 
2) The group $\mathbb{S}$ of \emph{(generalized) $S$-dualities} is the quotient group
of $\mathfrak{S}$ which acts effectively on the (UV) microscopic local degrees of freedom of the theory.
\end{definition}

\begin{remark}
With our definition of the $S$-duality group, the Weyl group of the flavor group is always part of the duality group $\mathbb{S}$. It action on the free part of the cluster Grothendieck group is the natural one on the weight lattice.
\end{remark}

\begin{example}\label{lllz120x} With this definition, the group $\mathbb{S}$ for $SU(2)$ SQCD with $N_f=4$ is \cite{cecotti2015higher}
 $$
 \mathbb{S}_{SU(2),\;N_f=4}=SL(2,\Z)\rtimes \mathrm{Weyl}(SO(8)).
 $$
 \end{example} 
 
 \begin{remark} We shall see in {\bf Example \ref{ex:su2n4}} that with this definition the $S$-duality group of a class $\mathcal{S}[A_1]$ theory is the tagged mapping class group of its Gaiotto surface, in agreement with the geometric picture in \cite{gaiotto2012n}, see also  \cite{drukker2009loop}.\end{remark}
 
 \subsection{Specializing to $\mathcal{N}=2$ in $4d$}
 
 We specialize the discussion to the case of a $4d$ $\mathcal{N}=2$ theory having the BPS quiver property. Such a theory is associated to a mutation-class of quivers with potential, hence to the three categories $D^b\Gamma$, $\mathfrak{Per}\,\Gamma$, $\mathcal{C}(\Gamma)$, discussed in the previous sections. They are related by the compatibility functors $\mathsf{s}$, $\mathsf{r}$ as in the exact sequence
 \eqref{exactse}.

Applying \textbf{Definition \ref{defSS2}} to the present set-up, we are lead to consider the  diagram of triangle functors 
$$
\begin{gathered}\xymatrix{0 \ar[r] & D^b\Gamma \ar[d]_{d_D}\ar[r]^{\mathsf{s}} & \mathfrak{Per}\,\Gamma\ar[d]_{d_\mathfrak{P}}\ar[r]^{\mathsf{r}} & \mathcal{C}(\Gamma)\ar[d]_{d_\mathcal{C}}\ar[r] &0\\
0 \ar[r] & D^b\Gamma \ar[r]^{\mathsf{s}} & \mathfrak{Per}\,\Gamma\ar[r]^{\mathsf{r}} & \mathcal{C}(\Gamma)\ar[r] &0}\end{gathered}
$$ 
having exact rows and commuting squares,
where
$$d_D\in \mathrm{Aut}\,D^b\Gamma,\quad d_\mathfrak{P}\in\mathrm{Aut}\,\mathfrak{Per}\,\Gamma,\quad
d_\mathcal{C}\in \mathrm{Aut}\,\mathcal{C}(\Gamma).$$
The group $\mathfrak{S}$ is the group of such triples $(d_D,d_\mathfrak{P},d_\mathcal{C})$ modulo the subgroup which acts trivially on the observables. The $S$-duality group $\mathbb{S}$ is the image of $\mathfrak{S}$ under the homomorphism
\begin{equation}\label{whatSs}
r\colon \mathfrak{S}\to \mathrm{Aut}\,\mathcal{C}(\Gamma)/\mathrm{Aut}\,\mathcal{C}(\Gamma)_\text{trivial},\qquad
(d_D,d_\mathfrak{P},d_\mathcal{C})\mapsto d_\mathcal{C}.
\end{equation}

\subsubsection{The trivial subgroup $(\mathrm{Aut}\,D^b\Gamma)_0$}

We start by characterizing the subgroup $(\mathrm{Aut}\,D^b\Gamma)_0\subset \mathrm{Aut}\,D^b\Gamma$ of `trivial' auto-equivalences, i.e.\! the ones which leave the physical observables invariant. Since the Grothendieck group is identified with the IR charge lattice $\Lambda$, and charge is an observable, $(\mathrm{Aut}\,D^b\Gamma)_0$ is a subgroup of the kernel $\mathrm{Aut}\,D^b\Gamma\to \mathrm{Aut}\,K_0(D^b\Gamma)$.
Next all $\varrho\in(\mathrm{Aut}\,D^b\Gamma)_0$ should leave invariant the stability condition, that is the slicing $\mathcal{P}(\phi)$, and hence the canonical heart $\mathsf{mod}\,J(Q,W)$ of $D^b\Gamma$. Since $\varrho$ acts trivially on the Grothendieck group, it should fix all simples $S_i$.
Hence the projection $\mathrm{Aut}\,D^b\Gamma\to \mathrm{Aut}\,D^b\Gamma/(\mathrm{Aut}\,D^b\Gamma)_0$ factors through the quotient group
$$
\mathrm{Autph}\, D^b\Gamma :=\mathrm{Aut}\, D^b\Gamma\bigg/
\Big\{\text{autoequivalences preserving the simples $S_i$ (element-wise)}
\Big\}.
$$
An equivalence in the kernel of the projection $\mathrm{Aut}\,D^b\Gamma\to \mathrm{Autph}\,D^b\Gamma$ preserves $(Q,W)$, the central charge $Z$, and the Grothendieck class $\lambda$. Hence it maps stable objects of charge $\lambda$ into stable objects of charge $\lambda$. Comparing with eqn.\eqref{whatMgamma}, we see that the net effect of an autoequivalence in the kernel is to produce an automorphism of projective varieties $M_\lambda\to M_{\lambda}$ for each $\lambda$. Since the BPS states are the \textsc{susy} vacua of the $1d$ sigma-model with target space $M_\lambda$, this is just a change of variables in the SQM path integral, which leave invariant all physical observables\footnote{\ 
The simplest example of such a negligible equivalence is the case of pure $SU(2)$ whose quiver is the Kronecker quiver, $\mathsf{Kr}= \bullet\rightrightarrows \bullet$.
The stable representations associated to the $W$ boson are the simples in the homogeneous tube which form a $\mathbb{P}^1$ family (i.e.\! $M_\text{W boson}\equiv\mathbb{P}^1$) since the $W$ boson belongs to a \emph{vector} superfield. Then a negligible auto-equivalence is just a projective automorphism of $\mathbb{P}^1$.}.
Since the auto-duality groups
are defined modulo transformations acting trivially on the observables,
$\mathrm{Auteq}\, D^b$ is the proper auto-duality group $\mathcal{S}_\text{IR}$ at the level of the BPS category $D^b\Gamma$. 

The automorphisms of the quiver
extend to automorphisms of $D^b\Gamma$; let $\mathrm{Aut}(Q)$ be the group of quiver automorphisms modulo the ones which fix the nodes. Clearly,
$$
\mathrm{Autph}\,D^b\Gamma= 
\mathrm{Auteq}\,D^b\Gamma \rtimes \mathrm{Aut}(Q),
$$
where
$$
\mathrm{Auteq}\, D^b\Gamma :=\mathrm{Aut}\, D^b\Gamma\bigg/
\Big\{\text{autoequivalences preserving the simples $S_i$ (as a set)}
\Big\}.
$$

\subsubsection{The duality groups $\mathfrak{S}$ and $\mathbb{S}$}

With the notation of section \ref{sec:cluster}, Bridgeland in \cite{bridgeland2015quadratic} and Goncharov in \cite{goncharov2016ideal} showed that the following sequence
\begin{equation}
0 \to \mathrm{Sph}\,D^b\Gamma\to \mathrm{Auteq}\, D^b\Gamma \to \mathrm{Aut}_Q(CEG)\to 0
\label{eq:seqsdual2}
\end{equation}
is exact. 
Here $CEG$ stands for the cluster exchange graph (cfr.\! \S.\,\ref{CEG}): the clusters of the cluster algebra $C_\Gamma$ are the vertices of the CEG and the edges are single mutations connecting two seeds; $\mathrm{Aut}_Q(CEG)$ is the graph automorphism group that sends the quiver to itself up to relabeling of the vertices. By construction this graph is connected.

\begin{theorem}[Goncharov \cite{goncharov2016ideal}, see also
\cite{keller2011cluster}] One has 
$$\mathrm{Aut}_Q(CEG)\subset \mathrm{Aut}\,\mathcal{C}(\Gamma),$$
i.e. the graph automorphisms (see {\rm \cite{assem2012cluster}}) are a subgroup of the autoequivalences of the cluster category.
\end{theorem}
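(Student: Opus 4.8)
The plan is to produce a natural group homomorphism $\iota\colon\mathrm{Aut}_Q(CEG)\to\mathrm{Aut}\,\mathcal{C}(\Gamma)$ by descending autoequivalences of $D^b\Gamma$ through the Verdier quotient $\mathcal{C}(\Gamma)=\mathfrak{Per}\,\Gamma/D^b\Gamma$ of eqn.\eqref{treexa}, and then to check that $\iota$ is injective, so that it realizes $\mathrm{Aut}_Q(CEG)$ as a subgroup of $\mathrm{Aut}\,\mathcal{C}(\Gamma)$.

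First I would construct $\iota$. One uses that every autoequivalence in $\mathrm{Auteq}\,D^b\Gamma$ extends to an autoequivalence of $\mathfrak{Per}\,\Gamma$ preserving the full subcategory $D^b\Gamma$; for the Seidel--Thomas twists this is explicit from their realization as cones (\S\ref{braidactions}, eqn.\eqref{kkz12axl}). Since the quotient functor $\mathsf{r}$ inverts precisely the morphisms with cone in $D^b\Gamma$, each such autoequivalence descends to an autoequivalence of $\mathcal{C}(\Gamma)$, giving a homomorphism $\mathrm{Auteq}\,D^b\Gamma\to\mathrm{Aut}\,\mathcal{C}(\Gamma)$. Now the simples $S_i$ become zero in $\mathcal{C}(\Gamma)$, so the defining triangle of the twist $T_{S_i}$ in eqn.\eqref{kkz12axl} collapses to $X\xrightarrow{\ \sim\ }T_{S_i}(X)$; hence $T_{S_i}$ descends to the identity and $\mathrm{Sph}\,D^b\Gamma$ lies in the kernel. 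By the exact sequence \eqref{eq:seqsdual2} the homomorphism therefore factors through $\mathrm{Auteq}\,D^b\Gamma/\mathrm{Sph}\,D^b\Gamma\cong\mathrm{Aut}_Q(CEG)$, producing $\iota$.

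For injectivity I would use that the vertices of the $CEG$ are in bijection with the cluster-tilting objects of $\mathcal{C}(\Gamma)$ in the Iyama--Yoshino mutation class of the canonical object $\mathsf{r}(\Gamma)$: an iterated mutation of the canonical heart of $D^b\Gamma$ maps under $\mathsf{r}$ to the corresponding cluster-tilting object (via the $\Phi^{\pm}$ picture of \S\ref{sec:cluster} and Theorem \ref{ttthma}), and by the Dehy--Keller theorem distinct seeds yield non-isomorphic rigid objects, while edges correspond to IY-mutations. One then checks that for a lift $\varrho\in\mathrm{Auteq}\,D^b\Gamma$ of $\phi\in\mathrm{Aut}_Q(CEG)$ the descended autoequivalence $\iota(\phi)$ acts on these cluster-tilting objects exactly by the vertex permutation $\phi$, compatibly with the labellings of the local quivers. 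Hence if $\iota(\phi)\cong\mathrm{id}$ then $\phi$ fixes every vertex of the (connected) $CEG$ together with its quiver-decoration, which forces $\phi=\mathrm{id}$; thus $\iota$ is injective.

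I expect the main obstacle to be precisely this last compatibility: showing rigorously that the categorical action of $\iota(\phi)$ on cluster-tilting objects reproduces the combinatorial action of $\phi$ on the decorated $CEG$. This requires controlling how $\mathsf{r}$ intertwines mutation of hearts in $D^b\Gamma$ with Iyama--Yoshino mutation of cluster-tilting objects in $\mathcal{C}(\Gamma)$ --- including the bookkeeping of the two functors $\Phi^{\pm}$ related by $T_{S_k}$ and of the identification $\mathrm{End}_{\mathcal{C}(\Gamma)}\big(\mathsf{r}(\text{mutated }\Gamma)\big)\cong J(\mu_{\mathbf i}(Q,W))$ at each vertex --- and checking that the labelled edges at a vertex (the ``mutation directions'') are matched correctly, so that a decorated graph automorphism trivial at one vertex is trivial everywhere by connectedness.
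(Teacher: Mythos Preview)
The paper does not supply its own proof of this theorem: it is stated as a cited result of Goncharov \cite{goncharov2016ideal} (see also Keller \cite{keller2011cluster}), and the text immediately moves on. So there is no argument in the paper to compare yours against, only the surrounding discussion, which does touch on closely related facts (extension of autoequivalences from $D^b\Gamma$ to $\mathfrak{Per}\,\Gamma$, the exact sequence \eqref{eq:seqsdual2}, and the identification of $CEG$ vertices with cluster-tilting objects via mutation).

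Your strategy is the natural one and is consistent with the paper's framework: build the map $\mathrm{Auteq}\,D^b\Gamma\to\mathrm{Aut}\,\mathcal{C}(\Gamma)$ via the Verdier quotient, kill $\mathrm{Sph}\,D^b\Gamma$ using that the $S_i$ vanish in $\mathcal{C}(\Gamma)$, and invoke \eqref{eq:seqsdual2} to factor through $\mathrm{Aut}_Q(CEG)$. Your injectivity argument via the bijection between $CEG$ vertices and reachable cluster-tilting objects, and the Dehy--Keller rigidity statement, is also the expected one. The obstacle you flag---that $\mathsf{r}$ intertwines heart-mutation in $D^b\Gamma$ with IY-mutation in $\mathcal{C}(\Gamma)$, compatibly with the $\Phi^{\pm}$ and the endomorphism-algebra identifications---is exactly the substantive point, and it is this compatibility (together with the construction and exactness of \eqref{eq:seqsdual2} itself) that the paper outsources to the references rather than proving. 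In short: your outline is sound, and the paper offers no independent argument to compare it with.
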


Note that $\mathrm{Auteq}\,D^b\Gamma\equiv \mathrm{Auteq}\,\mathfrak{Per}\,\Gamma$,
the quotient group of $\mathrm{Aut}\,\mathfrak{Per}\,\Gamma$ by the subgroup fixing the $\Gamma_i$ (as a set).
Indeed, all autoequivalences of $\mathfrak{Per}\,\Gamma$ preserve
 the subcategory $D^b\Gamma$ and hence restrict to autoequivalences of the bounded category; an autoequivalence $\varrho\in\mathrm{Aut}\,\mathfrak{Per}\,\Gamma$ which does not preserve the $\Gamma_i$'s restricts to an element $\bar\varrho\in\mathrm{Aut}\,D^b\Gamma$ which does not preserve the $S_i$'s. Hence the restriction homomorphism
 $$
 \mathrm{Auteq}\,\mathfrak{Per}\,\Gamma\to \mathrm{Auteq}\,D^b\Gamma,
 $$
 is injective. On the other hand, from eqn.\eqref{eq:seqsdual2} we see that all autoequivalences in $\mathrm{Auteq}\,D^b\Gamma$ extend to autoequivalences in $\mathrm{Auteq}\,\mathfrak{Per}\,\Gamma$:  indeed, the objects which are spherical in the subcategory $D^b\Gamma$ remain spherical and 3-CY in the larger category $\mathfrak{Per}\,\Gamma$ (cfr.\!\! eqn.\eqref{uuuu}), so the auto-equivalences is $\mathrm{Sph}\,D^b\Gamma$ extend to $\mathfrak{Per}\,\Gamma$; the autoequivalences in $\mathrm{Aut}_Q(CEG)$ are induced by quiver mutations, and hence induce auto-equivalences of $\mathfrak{Per}\,\Gamma$.  

Comparing with our discussion around eqn.\eqref{whatSs} we conclude:

\begin{corollary} For a $4d$ $\mathcal{N}=2$ theory with the BPS quiver property
\begin{align}
\mathfrak{S}&\cong\mathrm{Auteq}\,D^b\Gamma \rtimes \mathrm{Aut}(Q),\\
\mathbb{S}&\cong\mathrm{Aut}_Q(CEG)\rtimes \mathrm{Aut}(Q).
\end{align}
\end{corollary}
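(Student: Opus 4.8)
The plan is to assemble the statement from ingredients already in hand, treating both isomorphisms as consequences of the structure of $\mathrm{Aut}\,\mathfrak{Per}\,\Gamma$. First I would note that a compatible triple $(d_D,d_\mathfrak{P},d_\mathcal{C})$ is determined by its middle component $d_\mathfrak{P}\in\mathrm{Aut}\,\mathfrak{Per}\,\Gamma$: commutativity of the left square in \eqref{conscon} with $\mathsf{s}$ the inclusion forces $d_D=d_\mathfrak{P}|_{D^b\Gamma}$ (this is well defined because every autoequivalence of $\mathfrak{Per}\,\Gamma$ preserves the full subcategory $D^b\Gamma$), and commutativity of the right square with $\mathsf{r}$ the Verdier localization $\mathfrak{Per}\,\Gamma\to\mathfrak{Per}\,\Gamma/D^b\Gamma$ forces $d_\mathcal{C}$ to be the autoequivalence induced on the quotient, which is unique up to natural isomorphism. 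Conversely any $d_\mathfrak{P}$ produces such a triple. Hence the group of triples is canonically $\mathrm{Aut}\,\mathfrak{Per}\,\Gamma$ and $\mathfrak{S}$ is its quotient by the physically trivial autoequivalences.

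Next I would identify that quotient with $\mathrm{Autph}\,\mathfrak{Per}\,\Gamma$. By the analysis of the trivial subgroup in $\mathrm{Aut}\,D^b\Gamma$, the restriction $d_D$ is physically trivial exactly when it fixes $K_0$, preserves the canonical heart, and fixes each simple $S_i$ — equivalently when $d_\mathfrak{P}$ fixes each $\Gamma_i$ — the residual freedom being automorphisms of the moduli $M_\lambda$ that leave all observables untouched. Thus $\mathfrak{S}\cong\mathrm{Autph}\,\mathfrak{Per}\,\Gamma$. Since quiver automorphisms act on $k\hat Q$ and hence on $\Gamma$ and on $\mathfrak{Per}\,\Gamma$ permuting the $\Gamma_i$, one has $\mathrm{Autph}\,\mathfrak{Per}\,\Gamma=\mathrm{Auteq}\,\mathfrak{Per}\,\Gamma\rtimes\mathrm{Aut}(Q)$, and by the restriction isomorphism $\mathrm{Auteq}\,\mathfrak{Per}\,\Gamma\cong\mathrm{Auteq}\,D^b\Gamma$ proved above together with the already established identity $\mathrm{Autph}\,D^b\Gamma=\mathrm{Auteq}\,D^b\Gamma\rtimes\mathrm{Aut}(Q)$, I would conclude $\mathfrak{S}\cong\mathrm{Auteq}\,D^b\Gamma\rtimes\mathrm{Aut}(Q)$.

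For the second isomorphism I would compute the image of the homomorphism $r$ in \eqref{whatSs}. The crucial observation is that $\mathrm{Sph}\,D^b\Gamma$ maps to the identity in $\mathrm{Aut}\,\mathcal{C}(\Gamma)$: in the Verdier quotient $\mathcal{C}(\Gamma)=\mathfrak{Per}\,\Gamma/D^b\Gamma$ every object of $D^b\Gamma$ — in particular each spherical $S_i$ — becomes the zero object, so the defining triangle \eqref{kkz12axl} of a Seidel–Thomas twist collapses to $0\to X\to T_{S_i}(X)\to$, giving a natural isomorphism $\mathsf{r}\circ T_{S_i}\cong\mathsf{r}$, i.e.\ $r(T_{S_i})=\mathrm{id}$. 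Plugging this into the exact sequence \eqref{eq:seqsdual2}, the image of $\mathrm{Auteq}\,D^b\Gamma$ under $r$ is $\mathrm{Auteq}\,D^b\Gamma/\mathrm{Sph}\,D^b\Gamma\cong\mathrm{Aut}_Q(CEG)$, which by Goncharov's theorem is a subgroup of $\mathrm{Aut}\,\mathcal{C}(\Gamma)$; meanwhile the $\mathrm{Aut}(Q)$ factor maps onto its evident copy of vertex-relabelling autoequivalences of $\mathcal{C}(\Gamma)$ and the semidirect structure is respected, so $\mathbb{S}=r(\mathfrak{S})\cong\mathrm{Aut}_Q(CEG)\rtimes\mathrm{Aut}(Q)$.

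The hard part will be the bookkeeping around "physically trivial": one must check that the trivial subgroups of $\mathrm{Aut}\,D^b\Gamma$, $\mathrm{Aut}\,\mathfrak{Per}\,\Gamma$, $\mathrm{Aut}\,\mathcal{C}(\Gamma)$ are mutually compatible under $\mathsf{s}$ and $\mathsf{r}$, so the three quotients glue into the semidirect-product picture without any unexpected collapse, and in particular that $\mathrm{Aut}_Q(CEG)\rtimes\mathrm{Aut}(Q)$ genuinely acts effectively on the UV microscopic degrees of freedom, so that no further quotient is taken in passing from $r(\mathfrak{S})$ to $\mathbb{S}$. The purely categorical inputs — preservation of $D^b\Gamma$ by all autoequivalences, the extension of spherical twists and of mutation functors to $\mathfrak{Per}\,\Gamma$, the exactness of \eqref{eq:seqsdual2}, and the collapse of \eqref{kkz12axl} in the quotient — are each either cited above or immediate.
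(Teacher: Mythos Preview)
Your proposal is correct and follows essentially the same route as the paper: the paper states the corollary immediately after assembling the same ingredients you use (the identification $\mathrm{Auteq}\,\mathfrak{Per}\,\Gamma\cong\mathrm{Auteq}\,D^b\Gamma$, the exact sequence \eqref{eq:seqsdual2}, Goncharov's theorem, and the semidirect decomposition with $\mathrm{Aut}(Q)$), writing only ``Comparing with our discussion around eqn.\eqref{whatSs} we conclude''. Your argument that Seidel--Thomas twists become trivial in $\mathcal{C}(\Gamma)$ via the collapse of the triangle \eqref{kkz12axl} is the natural way to make explicit what the paper leaves implicit in invoking \eqref{eq:seqsdual2}.
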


\subsubsection{Example: the group $\mathbb{S}$ for $SU(2)$ $\mathcal{N}=2^*$}\label{ex2star}

The mutation class of $SU(2)$ $\mathcal{N}=2^*$ contains a single quiver, the Markoff one
\begin{displaymath}
   Q_\text{Mar}\equiv\begin{gathered} \xymatrix{\bullet_1 \ar@2{->}[r] & \bullet_2 \ar@2{->}[ld] \\
              \bullet_3 \ar@2{->}[u] &  }\end{gathered}
\end{displaymath}
which is the quiver associated to the once punctured torus \cite{fomin2008cluster,cecotti2011classification}. Clearly $\mathrm{Aut}(Q_\text{Mar})\cong\Z_3$, while all mutations leave $Q_\text{Mar}$ invariant up to a permutation of the nodes. Consider the covering graph
$\widetilde{CEG}$ of $CEG$ where we do not mod out the permutations of the nodes. Then $\widetilde{CEG}$ is the trivalent tree whose edges are decorated by $\{1,2,3\}$, the number attached to an edge corresponding to the nodes which gets mutated along that edge. One can check that there are no identifications between the nodes of this tree.

One may compare this ($\{1,2,3\}$-decorated) trivalent tree  with the ($\{1,2,3\}$-decorated) standard triangulation of the upper half-plane $\mathbb{H}$
given by the reflections of the  geodesic triangle of vertices $0,1,\infty$ (see ref.\cite{fock2009cluster}). One labels
the nodes of a triangle of the standard triangulation by elements of $\{1,2,3\}$, and then extends (uniquely) the numeration to all other vertices so that the vertices of each triangle get different labels.
The sides of a triangle are numbered as their opposite vertex.
The dual of this decorated triangulation is our decorated trivalent graph $\widetilde{CEG}$, see figure \ref{hhhxz}.
The arithmetic subgroup of the
hyperbolic isometry group,
$PGL(2,\Z)\subset PGL(2,\R)$ preserves the standard triangulation of $\mathbb{H}$ while permuting the decorations $\{1,2,3\}$. Since permutations are valid $S$-dualities, we get
$$
\mathbb{S}\cong PGL(2,\Z)\cong PSL(2,\Z)\rtimes \Z_2
$$
where the extra $\Z_2$ may be identified with the Weyl group of the flavor $SU(2)$.
Thus we recover as $S$-duality group in the usual sense ($\equiv$ the kernel of 
$\mathbb{S}\to \text{Weyl}(F)$)
the modular group $PSL(2,\Z)$ \cite{seiberg1994monopoles}. 
In the case of $SU(2)$ $\mathcal{N}=2^*$ we have
$$K_0(\mathcal{C}_\text{Mar})\equiv\mathrm{coker}\,B\cong\Z_2\oplus\Z_2\oplus\Z,$$
as expected for a quark in the adjoint representation (since $\pi_1(G_\text{eff})=\Z_2$), with the free part the weight lattice of $SU(2)_\text{flav}$. Hence the flavor Weyl group acts on $K_0(\mathcal{C}_\text{Mar})$ as $-1$, that is, as the cluster auto-equivalence $[1]$. Notice that the cluster category is $2$-periodic, as expected for a UV SCFT with integral dimensions $\Delta$.
 
 \begin{figure}
\centering
\includegraphics[width=0.8\textwidth]{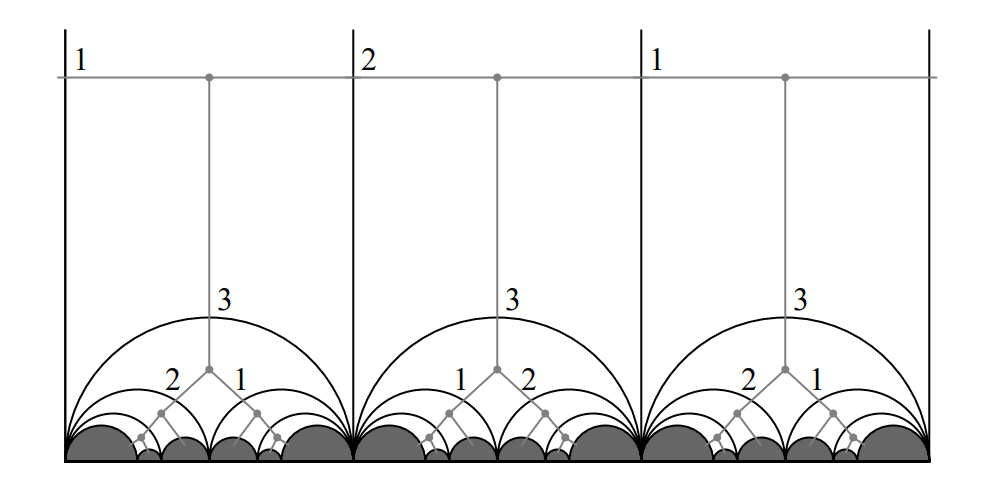}
\caption{\label{hhhxz}The modular triangulation of the upper half plane and its dual graph $\widetilde{CEG}$. The picture is reproduced from \cite{fock2009cluster}. }
\end{figure}

\subsection{Relation to duality walls and $3d$ mirrors}
\label{sec:dualitywalls}
The UV $S$-duality group $\mathbb{S}$ has a clear interpretation: it is the usual $S$-duality group of the $\mathcal{N}=2$ theory (twisted by the flavor Weyl group).
What about its IR counterpart $\mathfrak{S}$?

For Argyres-Douglas models we can put forward a precise physical interpretation based on the findings of \cite{braidmirror}. Similar statements should hold in general. 
\medskip 

Given an element of the $S$-duality group, $\sigma\in\mathbb{S}$ we may construct a half-BPS duality wall in the $4d$ theory \cite{terash,dimo1,dimo2}: just take the theory for $x_3<0$ to be the image through $\sigma$ of the theory for $x_3>0$ and adjust the field profiles along the hyperplane $x_3=0$ in such a way that the resulting Janus configuration is $\tfrac{1}{2}$-BPS.  It is a domain wall interpolating between two dual $\mathcal{N}=2$ theories in complementary half-spaces. On the wall live suitable $3d$ degrees of freedom interacting with the bulk $4d$ fields on both sides \cite{terash,dimo1,dimo2}. 
In this construction we may use a UV duality as well as an IR one \cite{dimo2}. Hence we expect to
get duality walls for all elements of $\mathfrak{S}$.
An element $\mathfrak{s}\in\mathfrak{S}$ acts non-trivially
on the central charge $Z$ so, in general, as we go from $x_3=-\infty$ to $x_3=+\infty$ we induce a non-trivial flow of the central charge $Z$ in the space of stability functions. If $\lim_{x_3\to\pm\infty}Z$ is such that all the bulk degrees of freedom get an infinite mass and decouple, we remain with a pure $3d$ $\mathcal{N}=2$ theory on the wall. Of course this may happen only for  special choices of $\mathfrak{s}$.
Thus we may use (suitable) $4d$ dualities to engineer $3d$ $\mathcal{N}=2$ QFTs.  

The engineering of $3d$ $\mathcal{N}=2$ theories as a domain wall in a $4d$ $\mathcal{N}=2$ QFT, by central-charge flow in the normal direction, is precisely the set-up of
ref.\cite{braidmirror}. 
In that paper one started from a $4d$ Argyres-Douglas of type $\mathfrak{g}\in ADE$. The $Z$-flow along the $x_3$-axis was such that 
asymptotic behaviors as $x_3\to-\infty$ and $x_3\to+\infty$ were related in the UV by the action of the quantum half-monodromy $\mathbb{K}$, that is, in the categorical language by the shift $[1]\in\mathbb{S}$. Two choices of IR duality elements, $\mathfrak{s}, \mathfrak{s}^\prime\in\mathfrak{S}$, which produce the half-monodromy in the UV,  differ by an element of the spherical twist group (cfr.\! eqn.\eqref{eq:seqsdual2})
$$
\mathfrak{s}^\prime\mathfrak{s}^{-1}\in \mathrm{Sph}\,D^b.
$$
The arguments at the end of \S.\,\ref{braidactions} imply that for Argyres-Douglas of type $\mathfrak{g}$ the group $\mathrm{Sph}\,D^b$ is isomorphic to the Artin braid group of type $\mathfrak{g}$, $\mathcal{B}_\mathfrak{g}$.

As the title of ref.\cite{braidmirror} implies, the explicit engineering of a $3d$ $\mathcal{N}=2$ theory along those lines requires a specification of a braid, i.e.\! of an element of $\mathcal{B}_\mathfrak{g}$. More precisely, in \S.5.3.2 of ref.\cite{braidmirror} is given an explicit map (for $\mathfrak{g}=A_r$)
$$
\big(\text{a braid in }\mathcal{B}_\mathfrak{g}\big)\longleftrightarrow
\big(\text{a $3d$ $\mathcal{N}=2$ Lagrangian}\big). 
$$
So the Lagrangian description/$Z$-flow engineering of the $3d$ theories are in one-to-one correspondence with the $\mathfrak{s}\in\mathfrak{S}$ such that $r(\mathfrak{s})=[1]$. It is natural to think of the $3d$ Lagrangian theory associated to $\mathfrak{s}\in\mathfrak{S}$ as the duality wall associated to the IR duality $\mathfrak{s}$.
 Distinct $\mathfrak{s}$ lead to $3d$ theories which  superficially look quite different. However,  in this context, \textit{$3d$ mirror symmetry} is precisely the statement that two theories defined by different IR dualities $\mathfrak{s},\mathfrak{s}^\prime\in\mathfrak{S}$ which induce the same UV duality, $r(\mathfrak{s}^\prime)=r(\mathfrak{s})$ produce equivalent $3d$ QFTs. From this viewpoint $3d$ mirror symmetry is a bit tautological, since the condition $r(\mathfrak{s}^\prime)=r(\mathfrak{s})$ just says that the two $3d$ theories have the same description in terms of $4d$
 microscopic degrees of freedom.

\subsection{$S$-duality for Argyres-Douglas and $SU(2)$ gauge theories}

When $(Q,W)$ is in the mutation-class of an $ADE$ Dynkin graph (corresponding to an Argyres-Douglas model \cite{alim2014mathcal, alim2013bps}) or of an $\widehat{A}\widehat{D}\widehat{E}$ acyclic affine quiver (corresponding to $SU(2)$ SYM coupled to matter such that the YM coupling is asymptotically-free \cite{alim2013bps}) to get $\mathbb{S}$ we can equivalently study the automorphism of the transjective component of the AR quiver associated to the cluster category $\mathcal{C}(\Gamma)$: the inclusion above is due to the fact that we only consider the transjective component:
\begin{theorem}[See \cite{assem2012cluster}]
Let $C$ be an acyclic cluster algebra and $\Gamma_\mathrm{tr}$ the transjective component of
the Auslander-Reiten quiver of the associated cluster category $\mathcal{C}(\Gamma)$. Then $\mathrm{Aut}^+C$ is the quotient
of the group $\mathrm{Aut}\,\Gamma_\mathrm{tr}$ of the quiver automorphisms of $\Gamma_\mathrm{tr}$, modulo the stabilizer $\mathrm{Stab}(\Gamma_\mathrm{tr})_0$ of the points of this component. Moreover, if $\Gamma_\mathrm{tr} \cong \Z\Delta,$ where $\Delta$ is a tree or of type $\hat A$ then
$$\mathrm{Aut}\, C = \mathrm{Aut}^+C \rtimes \Z_2$$
 and this semidirect product is not direct.
\end{theorem}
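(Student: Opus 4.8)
The plan is to transport the whole statement to the cluster category $\mathcal{C}(\Gamma)$, where mutation combinatorics turns into the geometry of the Auslander--Reiten (AR) quiver. The first step is to use, in the acyclic case, the bijection (Caldero--Chapoton, Caldero--Keller) between the cluster variables of $C$ and the isoclasses of indecomposable rigid objects of $\mathcal{C}(\Gamma)$, given by the cluster character $X\mapsto x_X$, under which IY-mutation of cluster-tilting objects corresponds to seed mutation. A \emph{direct} cluster automorphism --- a $\Z$-algebra automorphism of $C$ sending one seed to another and preserving the quiver orientation --- therefore lifts to a triangle autoequivalence of $\mathcal{C}(\Gamma)$, uniquely up to autoequivalences fixing every cluster variable, and conversely every such autoequivalence descends to a direct cluster automorphism. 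This identifies $\mathrm{Aut}^+C$ with the quotient of $\mathrm{Auteq}\,\mathcal{C}(\Gamma)$ by the subgroup acting trivially on all cluster variables.

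Next I would let an autoequivalence act on the AR quiver of $\mathcal{C}(\Gamma)$. Any triangle autoequivalence commutes with the AR translation $\tau$ and with the shift, so it permutes the connected components of the AR quiver as a (stable) translation quiver; and the transjective component $\Gamma_\mathrm{tr}$ is intrinsically singled out --- it is the unique non-$\tau$-periodic component, equivalently the one carrying the images of the indecomposable summands of the canonical cluster-tilting object --- hence it is preserved setwise. Restriction then yields a homomorphism $\mathrm{Auteq}\,\mathcal{C}(\Gamma)\to\mathrm{Aut}\,\Gamma_\mathrm{tr}$ whose kernel is, by definition, $\mathrm{Stab}(\Gamma_\mathrm{tr})_0$, the autoequivalences acting trivially on the transjective component. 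Using $\Gamma_\mathrm{tr}\cong\Z\Delta$ with $\Delta$ a tree (or the analogous description for type $\hat A$), one knows $\mathrm{Aut}\,\Z\Delta$ is generated by $\tau$, by the graph automorphisms of $\Delta$, and --- when $\Delta$ admits a reflection symmetry --- by one further involution; each of these is realised by an honest autoequivalence ($\tau$ itself, the permutations of the $P_i$ coming from quiver symmetries, and so on), giving surjectivity. Since in the acyclic case the cluster structure of $C$ is already pinned down by its transjective seeds, the kernel $\mathrm{Stab}(\Gamma_\mathrm{tr})_0$ coincides with the subgroup acting trivially on cluster variables, and the first assertion $\mathrm{Aut}^+C\cong\mathrm{Aut}\,\Gamma_\mathrm{tr}/\mathrm{Stab}(\Gamma_\mathrm{tr})_0$ follows.

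For the ``moreover'' one must bring in the orientation-reversing (negative) cluster automorphisms. When $\Delta$ is a tree one has $Q\cong Q^{\mathrm{op}}$ up to relabelling, and for type $\hat A$ the quiver $Q^{\mathrm{op}}$ is still mutation-equivalent to $Q$ in an orientation-reversing way; equivalently $\Z\Delta$ carries a reflection automorphism not lying in the image of $\mathrm{Aut}^+C$. This produces a negative cluster automorphism, so the sign homomorphism $\mathrm{Aut}\,C\to\{\pm1\}$ (recording whether the exchange quiver is sent to itself or to its opposite) is surjective with kernel exactly $\mathrm{Aut}^+C$; choosing the reflection so that it is an \emph{involution} splits the short exact sequence $1\to\mathrm{Aut}^+C\to\mathrm{Aut}\,C\to\Z_2\to1$, giving $\mathrm{Aut}\,C=\mathrm{Aut}^+C\rtimes\Z_2$. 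Non-directness is then checked by noting that the chosen involution conjugates $\tau$ (viewed in $\mathrm{Aut}^+C$ through the first part) to $\tau^{-1}$ up to shift --- it reverses the postprojective/preinjective orientation of $\Gamma_\mathrm{tr}$ --- hence acts nontrivially on $\mathrm{Aut}^+C$.

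The step I expect to be the real obstacle is the precise control of the two ``trivial'' subgroups: showing that an autoequivalence fixing every transjective object (equivalently every transjective cluster variable) induces the identity cluster automorphism, and dually that every translation-quiver automorphism of $\Z\Delta$ lifts to $\mathcal{C}(\Gamma)$. For affine types this is delicate, because the rigid objects lying in the tubes also contribute cluster variables, so one has to argue that a seed --- and hence the entire cluster pattern, and any automorphism of it --- is already determined by its transjective part; for these points I would invoke the detailed analysis of \cite{assem2012cluster}.
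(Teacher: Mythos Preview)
The paper does not prove this theorem at all: it is stated with the attribution ``[See \cite{assem2012cluster}]'' and then immediately applied to tabulate $S$-duality groups for the $ADE$ and affine $\widehat{ADE}$ cases. There is no argument in the paper to compare your proposal against; the authors simply import the result from Assem--Schiffler--Shramchenko.

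That said, your sketch is a faithful outline of the strategy actually used in \cite{assem2012cluster}: passage to the cluster category via the Caldero--Chapoton bijection, the fact that autoequivalences preserve the transjective component (since it is the unique non-periodic one), and realisation of the generators of $\mathrm{Aut}\,\Z\Delta$ by explicit autoequivalences. Your own caveat at the end is well placed: the subtle step is indeed that in the affine case the cluster pattern is controlled by its transjective part, so that fixing the transjective component pointwise forces the identity on cluster variables; this is exactly where the detailed case analysis of \cite{assem2012cluster} is needed, and you are right not to claim it as obvious. One small inaccuracy: you describe $\Gamma_\mathrm{tr}$ as ``the unique non-$\tau$-periodic component,'' but in the finite (Dynkin) case the whole AR quiver is $\tau$-periodic and equals $\Gamma_\mathrm{tr}$; the correct intrinsic characterisation is that $\Gamma_\mathrm{tr}$ is the component containing the local slices (equivalently, the summands of a cluster-tilting object), which is what the paper's subsequent theorem from \cite{assem2013modules} uses.
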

In order to understand why this is the relevant component, we first recall that the Auslander-Reiten quiver of a cluster-tilted algebra always has a unique component containing local slices, which coincides with the whole
Auslander-Reiten quiver whenever the cluster-tilted algebra is representation-finite. This component
is called the \emph{transjective component} and an indecomposable module lying in it is called a transjective
module. With this terminology, the main result is:
\begin{theorem}[See \cite{assem2013modules}]
Let $C$ be a cluster-tilted algebra and $M, N$ be indecomposable transjective $C$-modules.
Then $M$ is isomorphic to $N$ if and only if $M$ and $N$ have the same dimension vector.
\end{theorem}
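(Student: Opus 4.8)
The plan is to prove the nontrivial ``if'' direction by lifting the problem off the cluster-tilted algebra $C$ onto its cluster category, where objects on the transjective component are rigid, and then to finish either with the Dehy--Keller theorem (rigid objects are separated by their index) or by pushing all the way down to a hereditary algebra, where the statement is classical. Concretely, since $C$ is cluster-tilted we may write $C=\mathrm{End}_{\mathcal{C}}(T)$ for $H=kQ$ hereditary ($Q$ acyclic), $\mathcal{C}=\mathcal{C}(H)=D^b(\mathsf{mod}\,H)/\langle\tau^{-1}[1]\rangle^{\Z}$ as in eqn.\,\eqref{mzxa}, and $T$ a cluster-tilting object; the bridge is the Keller--Reiten functor $F_T=\mathrm{Hom}_{\mathcal{C}}(T,-)$ of \textbf{Proposition \ref{kellerrein}}. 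So assume $M,N$ are indecomposable transjective $C$-modules with $\underline{\dim}\,M=\underline{\dim}\,N$; the goal is $M\cong N$.

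The first step is to set up the correspondence. By \textbf{Proposition \ref{kellerrein}}, $F_T$ induces an equivalence $\mathcal{C}/\mathsf{add}\,T[1]\cong\mathsf{mod}\,C$, and it carries those AR triangles of $\mathcal{C}$ not meeting $\mathsf{add}\,T[1]$ to AR sequences of $\mathsf{mod}\,C$; hence the (unique) component of the AR quiver of $\mathsf{mod}\,C$ containing local slices lifts under $F_T^{-1}$ into the transjective component $\Gamma_\mathrm{tr}$ of $\mathcal{C}$, and $M=F_TX$, $N=F_TY$ for uniquely determined indecomposables $X,Y\in\Gamma_\mathrm{tr}$, $X,Y\notin\mathsf{add}\,T[1]$. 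Next, every object of $\Gamma_\mathrm{tr}$ is rigid in $\mathcal{C}$: the transjective component is the image of the preprojective and preinjective components of $\mathsf{mod}\,H$ together with the shifted projectives $P_i[1]$, and for $X$ an indecomposable preprojective or preinjective $H$-module one has $\mathrm{Ext}^1_{\mathcal{C}}(X,X)\cong\mathrm{Ext}^1_H(X,X)\oplus D\,\mathrm{Ext}^1_H(X,X)=0$ because such $X$ is exceptional over $H$, and likewise $\mathrm{Ext}^1_{\mathcal{C}}(P_i[1],P_i[1])\cong\mathrm{Ext}^1_H(P_i,P_i)\oplus D\,\mathrm{Ext}^1_H(P_i,P_i)=0$. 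Thus $X,Y$ are rigid, and by the Dehy--Keller theorem it now suffices to show $\mathsf{ind}(X)=\mathsf{ind}(Y)$.

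The remaining step is to show that, \emph{restricted to the transjective component}, the index is a function of $\underline{\dim}\,F_TX$. Expanding $\underline{\dim}\,F_TX=\big(\dim\mathrm{Hom}_{\mathcal{C}}(T_i,X)\big)_i$ and combining it with \textbf{Palu's Lemma}, $\mathsf{ind}\,X=\sum_i\langle F_TX,S_i\rangle\,[\Gamma_i]$ (applicable since $X\not\cong T_i[1]$), one produces a fixed $\Z$-affine map $g$ with $\mathsf{ind}(X)=g(\underline{\dim}\,F_TX)$, provided the pairings $\langle\,\cdot\,,S_i\rangle$ depend only on the dimension vector for modules on this component; then $\underline{\dim}\,M=\underline{\dim}\,N$ gives $\mathsf{ind}(X)=\mathsf{ind}(Y)$ and hence $X\cong Y$, i.e.\ $M\cong N$. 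Equivalently one can avoid the index: transport $X,Y$ to their canonical $\mathsf{mod}\,H$-representatives (preprojective or preinjective modules), verify that $\underline{\dim}\,F_TX$ determines $\underline{\dim}_HX$ via a unimodular change of coordinates, with the preprojective and preinjective ranges landing in disjoint sets so the two cases cannot be confused, and then invoke the classical fact that two preprojective (resp.\ preinjective) indecomposables over a hereditary algebra with equal dimension vectors are isomorphic --- in the Dynkin case this is Gabriel's theorem, and in general it follows from $\underline{\dim}\,\tau^{-m}P_i=\Phi^{-m}\,\underline{\dim}\,P_i$ (with $\Phi$ the Coxeter transformation, valid while no injective intervenes) together with the directedness of the preprojective component.

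The main obstacle I anticipate is exactly the link between the dimension vector of $M=F_TX$ as a $C$-module and the index of $X$ (equivalently $\underline{\dim}_HX$): a cluster-tilted algebra typically has infinite global dimension, so the naive Euler form $\sum_k(-1)^k\dim\mathrm{Ext}^k_C(-,-)$ is not defined on $K_0(\mathsf{mod}\,C)$, and one must argue that the specific combinations entering Palu's formula still depend only on dimension vectors --- using, for instance, that $C$ is Gorenstein of dimension $\le 1$, or by controlling $\mathrm{Hom}$ and $\mathrm{Ext}^1$ directly on transjective modules (where directedness bounds them). A secondary, genuinely non-formal point is the non-Dynkin case of the hereditary input, namely that the infinitely many preprojective indecomposables have pairwise distinct dimension vectors, which rests on spectral properties of the Coxeter transformation.
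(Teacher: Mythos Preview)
The paper does not supply its own proof of this theorem: it is quoted verbatim with attribution to Assem--Dupont \cite{assem2013modules} and used as a black box. There is therefore no in-paper argument to compare your proposal against.

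On the merits of your proposal itself: the architecture is the right one, and your rigidity argument for objects on $\Gamma_{\mathrm{tr}}$ is correct. But you have correctly identified, and not closed, the decisive gap. You invoke Palu's formula $\mathsf{ind}\,X=\sum_i\langle F_TX,S_i\rangle[\Gamma_i]$ and then assert this is a function of $\underline{\dim}\,F_TX$ ``provided the pairings $\langle\cdot,S_i\rangle$ depend only on the dimension vector''. Over a cluster-tilted algebra of infinite global dimension that proviso is exactly what is at stake: the truncated form $\langle M,N\rangle=\dim\mathrm{Hom}(M,N)-\dim\mathrm{Ext}^1(M,N)$ does \emph{not} in general descend to $K_0(\mathsf{mod}\,C)$, because $\mathrm{Ext}^{\ge 2}$ need not vanish. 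Your alternative route has the same hole repackaged: you claim a ``unimodular change of coordinates'' between $\underline{\dim}_C F_TX$ and $\underline{\dim}_H X$, but the entries of $\underline{\dim}_C F_TX$ are $\dim\mathrm{Hom}_{\mathcal{C}}(T_i,X)=\dim\mathrm{Hom}_H(T_i,X)+\dim D\mathrm{Hom}_H(X,\tau^2 T_i)$, and the second summand is not linear in $[X]\in K_0(\mathsf{mod}\,H)$ without further work.

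So as written this is a plan with the crucial lemma missing. Your own hint --- that $C$ is Gorenstein of dimension $\le 1$ (Keller--Reiten) --- is the right thread to pull: it tells you projective dimension and injective dimension coincide with being $\le 1$, which lets you control the relevant $\mathrm{Ext}$'s on transjective modules and show that, \emph{restricted to that component}, the index really is an affine function of the $C$-dimension vector. Until that step is actually carried out, the argument is incomplete.
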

Therefore, since the dimension vector is the physical charge, we focus our attention to this class of autoequivalences. The classification results are summarized in table \ref{hhhaz}
where 
$$\begin{gathered}
H_{p,q}:=\vev{r,s|r^p=s^q,\ sr=rs}\\
G=\vev{\tau,\sigma,\rho_1,\rho_n|\rho_1^2=\rho_n^2=1, \tau \rho_1=\rho_1\tau,\tau\rho_n=\rho_n\tau, \tau\sigma=\sigma\tau, \sigma^2=\tau^{n-3},
\rho_1\sigma=\sigma \rho_n, \sigma\rho_1 = \rho_n\sigma}\end{gathered}$$

\begin{table}
\begin{center}
\begin{tabular}{c|c||c|c}
Q & $\mathrm{Aut}_Q(CEG)$ & Q & $\mathrm{Aut}_Q(CEG)$\\
\hline 
$A_{n>1}$ & $\Z_{n+3}$ &
$D_4$ & $\Z_{4}\times S_3$\\
$D_{n>4}$ & $\Z_{n}\times \Z_2$ &
$E_6$ & $\Z_{14}$\\
$E_7$ & $\Z_{10}$&
$E_8$ & $\Z_{16}$\\
$\hat A_{p,q}$ & 
$H_{p,q}$&
$\hat A_{p,p>1,1}$ & $H_{p,p}\rtimes \Z_2$\\
$\hat D_4$ & $\Z \times S_4$&
$\hat A_{1,1}$ & $\Z$\\
$\hat D_{n>4}$ & $G$&$\hat E_6$ & $\Z \times S_3$\\
$\hat E_7$ & $\Z \times \Z_2$ & $\hat E_8$ & $\Z $\\
\end{tabular}
\end{center}
\caption{\label{hhhaz} $S$-duality groups for $\mathcal{N}=2$ theory with an acyclic quiver.}
\end{table}

\begin{example}[$SU(2)$ with $N_f\leq3$]\label{kzanncv} $SU(2)$ SQCD with $N_f=0,1,2,3$ correspond, respectively, to the following four affine $\mathcal{N}=2$ theories \cite{cecotti2011classification}
$$
\hat A_{1,1},\quad \hat A_{2,1},\quad \hat A_{2,2},\quad \hat D_4. 
$$
A part for the flavor Weyl group $\mathrm{Weyl}(\mathfrak{spin}(2N_f))$ (cfr.\! \textbf{Example \ref{Nf3}}) we get a duality group $\Z$ generated by the shift $[1]$.
As discussed around eqn.\eqref{nnna129c}, this is equivalent to the shift of the Yang-Mills angle $\theta$
$$
\theta\to\theta-4\pi +N_f\pi.
$$
The case $N_f=0$ is special; physically one expects that the shift of $\theta$ by $-2\pi$ should also be a valid $S$-duality. This shift should correspond to an auto-equivalence $\xi$ of the $N_f=0$ cluster category with $\xi^2=\tau$. Indeed, this is what one obtains from the automorphism of the transjective component see figure \ref{rrrrv}.
Alternatively, we may see the cluster category of pure $SU(2)$ as the category of coherent sheaves on $\mathbb{P}^1$ endowed with extra \emph{odd}
morphisms \cite{barot}. In this language $\tau$ acts as the tensor product with the canonical bundle $\tau\colon\ca\mapsto\ca\otimes \mathcal{K}$ (cfr.\! eqn.\eqref{jzxce}). Let $\mathcal{L}$ be the unique \emph{spin structure} on $\mathbb{P}^1$; we have the obvious auto-equivalence $\xi\colon\ca\mapsto\ca\otimes \mathcal{L}$.
From $\mathcal{L}^2=\mathcal{K}$ we see that $\xi^2=\tau$.
\end{example}

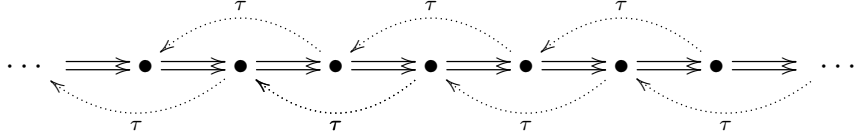
\begin{figure}
$$
\xymatrix{\cdots\text{\ }\ar@<0.3ex>[r]\ar@<-0.3ex>[r]&\bullet\ar@<0.3ex>[r]\ar@<-0.3ex>[r]&\ar@{..>}@/^1.5pc/[ll]^\tau\bullet\ar@<0.3ex>[r]\ar@<-0.3ex>[r]&\bullet\ar@{..>}@/_1.5pc/[ll]_\tau\ar@<0.3ex>[r]\ar@<-0.3ex>[r]&\ar@{..>}@/^1.5pc/[ll]^\tau\ar@{..>}@/^1.5pc/[ll]^\tau\bullet\ar@<0.3ex>[r]\ar@<-0.3ex>[r]&\ar@{..>}@/_1.5pc/[ll]_\tau\bullet\ar@<0.3ex>[r]\ar@<-0.3ex>[r]&\ar@{..>}@/^1.5pc/[ll]^\tau\bullet\ar@<0.3ex>[r]\ar@<-0.3ex>[r]&
\ar@{..>}@/_1.5pc/[ll]_\tau\bullet\ar@<0.3ex>[r]\ar@<-0.3ex>[r]&\ar@{..>}@/^1.5pc/[ll]^\tau\text{\ }\cdots
}
$$
\caption{The translation quiver $\Z\hat A_{1,1}$ ($\equiv$ the AR quiver of the transjective component of the cluster category for pure $SU(2)$). Dotted arrows stands for the action of the AR translation $\tau$. Clearly $\tau$ is the translation to the left by 2 nodes. The auto-equivalence $\xi$ is translation to the left by 1 node: $\xi^2=\tau$.}
\label{rrrrv}
\end{figure}

\section{Computer algorithm to determine the $S$-duality group}\label{duacccom}
\label{sec:algo}
The identification of the $S$-duality group with $\mathrm{Aut}_Q(CEG)$ yield a combinatoric characterization of $S$-dualities which leads to an algorithm to search $S$-dualities for an arbitrary $\mathcal{N}=2$ model having a BPS quiver. This algorithm is similar in spirit to the \emph{mutation algorithm} to find the BPS spectrum \cite{alim2013bps} but in a sense more efficient. The algorithm may be easily implemented on a computer; if the ranks of the gauge and flavor groups are not too big (say $< 10$), running the procedure on a laptop typically produces the generators of the duality group in a matter of minutes.

\subsection{The algorithm}

The group $\mathrm{Aut}_Q(CEG)$ 
may be defined in terms of the transformations under quiver mutations of the $d$-vectors which specify the denominators of the generic cluster variables \cite{dupont2011generic}. The actions of the elementary quiver mutation at the $k$--th node, $\mu_k$, on the exchange matrix $B$ and the $d$-vector $d_i$ are
\begin{align}
\mu_k(B)_{ij}&=\begin{cases}
- B_{ij}, & i=k \ \text{or }j=k\\
B_{ij}+\max[-B_{ik},0]\,B_{kj}+B_{ik}\,\max[B_{kj},0]\phantom{-----} & \text{otherwise.}
\end{cases}\\
\mu_k(d)_{l}&=\left\{
\begin{array}{lr}
d_{l}, & l\neq k\\
-d_{k}+\max\!\Big[\sum_i\max\!\big[B_{ik},0\big]d_{i},\sum_i\max\!\big[-B_{ik},0\big]d_i\Big] & l=k
\end{array}
\right.
\label{eq:dvecs}
\end{align}

A quiver mutation $\mu=\mu_{k_s}\mu_{k_{s-1}}\cdots\mu_{k_1}$ is the composition  of a finite sequence of elementary quiver mutations $\mu_{k_1}, \mu_{k_2},\cdots,\mu_{k_s}$. We write $\mathsf{Mut}$ for the set of all quiver mutations.
$\mathrm{Aut}_Q(CEG)$
is the group of quiver mutations which  
 leave invariant the quiver $Q$ up to a permutation $\pi$ of its nodes, 
modulo the ones which leave the $d$-vector invariant up to $\pi$: 
\begin{equation}\label{lllzx}
\mathrm{Aut}_Q(CEG)=\frac{
\Big\{\mu\in\mathsf{Mut}\;\big|\;\exists\,\pi\in {S}_n\colon
\mu(B)_{i,j}=B_{\pi(i),\pi(j)}\Big\}}{\Big\{\mu\in\mathsf{Mut}\;\big|\;\exists\,\pi\in {S}_n\colon
\mu(B)_{i,j}=B_{\pi(i),\pi(j)}\ \text{and }
\mu(d)_{i}=d_{\pi(i)}
\Big\}},
\end{equation}
while $\mathbb{S}=\mathrm{Aut}_Q(CEG)\rtimes \mathrm{Aut}(Q)$.
\begin{example}[$A_2$ cluster automorphisms]
\label{ex:a2c}
Consider the quiver $\bullet_1\to \bullet_2$. The CEG is the pentagon in figure \ref{ggaq}: every vertex is associated to a quiver of the form $\bullet_1\to \bullet_2$ or $\bullet_2\to \bullet_1$. 
\begin{figure}
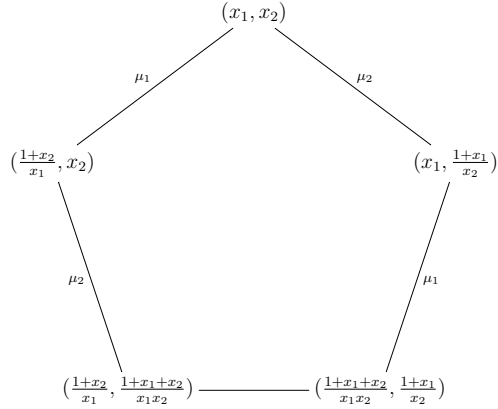

\centering
\resizebox{0.40\textwidth}{!}{\xygraph{
!{<0cm,0cm>;<0.5cm,0cm>:<0cm,0.5cm>::}
!{(0,10) }*+{{(x_1,x_2)}}="b1"
!{(8,4) }*+{{(x_1,\frac{1+x_1}{x_2})}{}}="b3"
!{(-8,4) }*+{{(\frac{1+x_2}{x_1},x_2)}}="b2"
!{(-5,-5) }*+{{(\frac{1+x_2}{x_1},\frac{1+x_1+x_2}{x_1x_2})}}="b4"
!{(5,-5) }*+{{(\frac{1+x_1+x_2}{x_1x_2},\frac{1+x_1}{x_2})}}="b5"
"b1" - "b2"_{\mu_1} "b2" - "b4"_{\mu_2} "b4" - "b5" "b5" - "b3"_{\mu_1} "b3" - "b1" _{\mu_2}
}}
\caption{\label{ggaq}The CEG of the $A_2$ Argyres-Douglas theory.}
\end{figure}
Thus, in this case every sequence of mutations gives rise to a cluster automorphism. For example, consider $\mu_1$: the quiver nodes get permuted under $\pi=(1 \ 2)$. We explicitly check -- for example using Keller applet\footnote{See \url{https://webusers.imj-prg.fr/~bernhard.keller/quivermutation/}.} -- that
$$(\mu_\text{source})^5=\mu_1\mu_2\mu_1\mu_2\mu_1=1$$
since $(\mu_\text{source})^5$ leaves the $d$-vectors invariant. From figure \ref{ggaq} one sees that $\Z_5$ is indeed the full automorphism group of the CEG of $A_2$. This result is coherent with
the analysis leading to table \ref{hhhaz}, as well as with the 
 tagged mapping class group of the associated Gaiotto surface, see \textbf{Example \ref{ex:a2}}.
\end{example}
 The explicit expression \eqref{lllzx} of the $S$-duality group is the basis of a computer search for $S$-dualities.
Schematically: let the computer generate a finite sequence of nodes of $Q$, $k_1,\cdots, k_s$, then construct the corresponding mutation $\mu_{k_s}\mu_{k_{s-1}}\cdots\mu_{k_1}=\mu$, and check whether it leaves the exchange matrix $B$ invariant up to a permutation $\pi$. If the answer is \textit{yes,} let the machine check whether
 $\mu(d)_i\neq d_{\pi(i)}$. If the answer is again \textit{yes}
the computer has discovered a non-trivial $S$-duality and prints it. Then the computer generates another  sequence and go cyclically through  the same steps again and again.
After running the procedure for some time $t$, we get a print-out with a list $\mathcal{L}_t$ of non-trivial $S$-dualities of our $\mathcal{N}=2$ theory.
A \textsc{Mathematica} Code performing this routine is presented in Appendix \ref{app:mathemut}.
\medskip 

If the $S$-duality group is finite (and not too huge) $\mathcal{L}_t$ will contain the full list of $S$-dualities. However, the most interesting $S$-duality groups are \emph{infinite}, and the computer cannot find all its elements in finite time.
This is not a fundamental problem for the automatic computation of the $S$-duality group. Indeed, the $S$-duality groups, while often infinite,
are expected to be \emph{finitely generated,} and in fact finitely presented. If this is the case, we need only that the finite list $\mathcal{L}_t$ produced by the computer contains a complete set of generators of $\mathbb{S}$.
Taking various products of these generators, and checking which products act trivially on the $d$-vectors, we may find the finitely many relations. The method works better if we have some physical hint on what the generators and relations may be.

Of course, the duality group obtained from the computer search is \emph{a priori} only a subgroup of the actual $\mathbb{S}$ because there is always the possibility of further generators of the group which are outside our range of search. However, pragmatically, running the procedure for enough time, the group one gets is the full one at a high confidence level. 

\subsection{Sample determinations of $S$-duality groups}

We present a sample of the results obtained 
by running our \textsc{Mathematica} Code.  

\begin{example}[$SU(2)$ $\mathcal{N}=2^*$ again] The $CEG$ automorphism group for this model was already described in \S.\,\ref{ex2star}.
Recall that $PSL(2,\mathbb Z)$ is the quotient of the braid group over three strands, $\mathcal{B}_3$ by its center $\boldsymbol{Z}(\mathcal{B}_3)$
$$PSL(2,\mathbb Z)\cong \mathcal{B}_3\big/\boldsymbol{Z}(\mathcal{B}_3).
$$ 
Running our algorithm for a short time returns a list of dualities which in particular contains the two standard generators of the braid group $\sigma_1,\sigma_2\in\mathcal B_3$, which correspond to  the following sequences of elementary quiver mutations:
\begin{equation}\label{lllazq6}
\sigma_1:=\mu_1\mu_2,\qquad\text{and}\qquad \sigma_2:=\mu_1\mu_3, \text{ with permutation }\pi=(1\,3\,2).
\end{equation}
One easily checks the braid relation
$$\sigma_1\sigma_2\sigma_1=\sigma_2\sigma_1\sigma_2\quad \text{up to permutation,}
$$
as well as that the generator of the center $\boldsymbol{Z}(\mathcal{B}_3)$, $(\sigma_2\sigma_1)^3$,
acts trivially on the cluster category: indeed, it sends the initial dimension vector $\vec d=-Id_{3 \times 3}$ to itself. From eqn.\eqref{lllazq6} we conclude that the two $S$-dualities $\sigma_1$, $\sigma_2$ generate a $PSL(2,\Z)$ duality (sub)group. 
In facts, $\mathbb{S}/PSL(2,\Z)\cong\Z_2$ where the class of the non-trivial $\Z_2$ element may be
represented (say) by $\mu_1$. Indeed the map
$$
\mathbb{S}\to \Z_2\equiv\text{Weyl}(F_\text{flav})
$$ 
send the mutation $\mu$ to $(-1)^{\ell(\mu)}$, where the \emph{length} $\ell(\mu)$ of
$\mu\equiv\mu_{k_s}\mu_{k_{s-1}}\cdots\mu_{k_1}$ is $s$ (length is well defined mod 2).
\end{example}

\begin{example}[$SU(2)$ with $N_f=4$] \label{ex:su2n4}We use the quiver in figure \ref{fig:su2n4} where for future reference we also draw the corresponding ideal triangulation of the sphere with 4 punctures \cite{cecotti2011classification}.
\begin{figure}
\centering
\begin{tabular}{cc}
\resizebox{0.42\textwidth}{!}{\xygraph{
!{<0cm,0cm>;<1.5cm,0cm>:<0cm,1.2cm>::}
!{(0,0) }*+{\bullet}="a"
!{(0,-2.5) }*+{\bullet}="b"
!{(2.5,0) }*+{\bullet}="c"
!{(2.5,-2.5)}*+{\bullet}="d"
!{(0,0.5)}*+{}="au"
!{(0,-0.5)}*+{}="ad"
!{(-0.5,0)}*+{}="al"
!{(2.5,-2)}*+{}="du"
!{(2.5,-3)}*+{}="dd"
!{(3,-2.5)}*+{}="dr"
!{(1.25,-1.25) }*+{}*\cir<107pt>{}="center"
"a"-@[blue]"c"^{4} "b"-@[blue]"d"^{2} "b"-@[blue]"c"^{0} 
"c"-@/_3.5cm/@[blue]"b" ^{1}
"c" -@`{"au", "al","ad"}@[blue] "c"^{5}
"b" -@`{"du", "dr","dd"}@[blue] "b"^{3}
}} &
  \ \ \  \xymatrix{0 \ar[ddr]\ar[ddrr]\ar[ddrrr]\ar[ddrrrr] &  & & & \\
&  & & & \\
 & 2 \ar[ddl]& 3\ar[ddll] & 4 \ar[ddlll] & 5 \ar[ddllll]\\
&  & & & \\
             1 \ar@2{->}[uuuu] & & & &}
\\
\text{Surface and triangulation} & \text{Quiver}
\end{tabular}
\caption{The Gaiotto surface $(S,M)$ of the theory $SU(2)$, $N_f=4$ and its associated quiver. }
\label{fig:su2n4}
\end{figure}
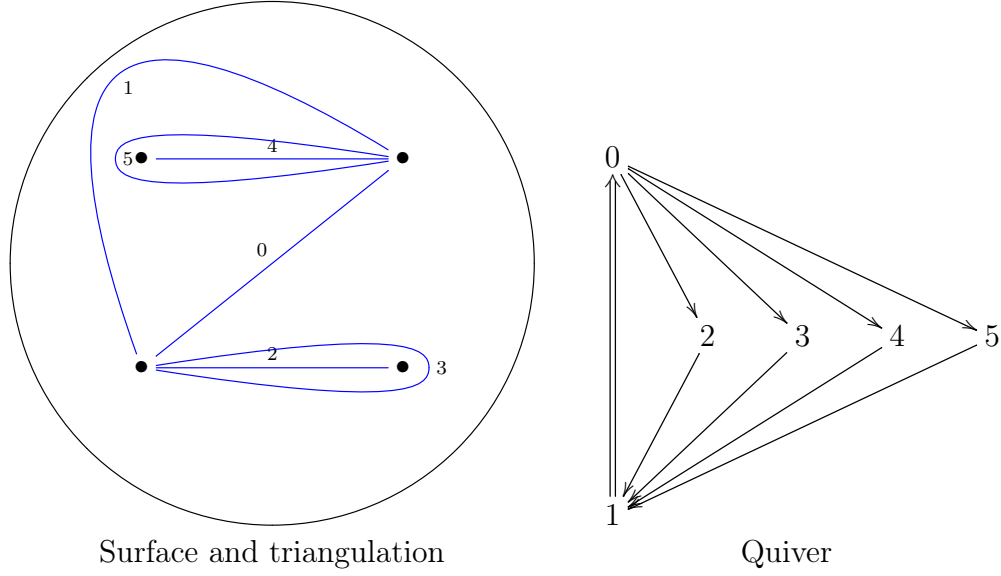
The following two even-length sequences of mutations leave the quiver invariant:
$$
\begin{gathered}
S=\mu_2\mu_3\mu_2\mu_0\mu_2\mu_5\mu_3\mu_0,\\
T= \mu_5\mu_2\mu_0\mu_3\mu_5\mu_3\mu_4\mu_2\mu_4\mu_1\mu_4\mu_2\mu_4\mu_5\mu_1\mu_2.
\end{gathered}
$$
These sequences of mutations satisfy the following relations:
$$S^4=1, \ \ (ST)^6=1, \ \ T \text{ has infinite order.}$$
Moreover, $T$ and $S$ commute with $S^2$ and $(ST)^3$. Write $\Z_2\times \Z_2$ for the subgroup generated by $S^2$ and $(ST)^3$.
Then we have
$$
1 \to \Z_2\times \Z_2\to\langle\, S, T\,\rangle\to PSL(2,\Z)\to 1.
$$
Again 
this shows that
the duality sub-group $\langle\,S,T\,\rangle$ generated by $S$ and $T$ is equal to the
 the mapping class group of the sphere with four punctures (cfr.\! \textbf{Proposition 2.7} of \cite{bookmap}). 
 In fact one has $\mathrm{Aut}_Q(CEG)/\langle\,S,T\,\rangle\cong\Z_2$; geometrically (see next section) the extra $\Z_2$ arises because for class $\mathcal{S}[A_1]$ theories $\mathrm{Aut}_Q(CEG)$
 is the \emph{tagged} mapping class group of the corresponding Gaiotto surface (Bridgeland theorem\cite{bridgeland2015quadratic});
 the extra $\Z_2$ is just the change
 in tagging. This extra $\Z_2$ is also detected by the computer program which turns out dualities of order $12$ and $8$ which are not contained in $\langle\,S,T\,\rangle$ but in its $\Z_2$ extension. 
Taking into account the $S_4$ automorphism of the quiver, we recover $PSL(2,\Z)\ltimes \text{Weyl}(\mathfrak{spin}(8))$
with the proper triality action of the modular group on the flavor weights
\cite{seiberg1994monopoles}. For an alternative discussion of the $S$-duality group of this model as the automorphism group of the corresponding cluster category, see ref.\cite{cecotti2015higher}.
\end{example}

\begin{example}[$E_6$ Minahan-Nemeschanski] This SCFT is the $T_3$ theory, that is, the Gaiotto theory obtained by compactifying the $6d$ $(2,0)$ SCFT of type $A_2$ on a sphere with three maximal punctures \cite{gaiotto2012n}. Since the three-punctured sphere is rigid, geometrically we expect a finite $S$-duality group. The homological methods of \cite{caorsi2016homological} confirm this expectation.
The computer search produced a list of group elements of order
$2,3,4,5,6,8,9,10,12$ and $18$. Since, with our definition, the $S$-duality group should contain the 
Weyl group of $E_6$, we may compare this list with the list of orders of elements of $\mathrm{Weyl}(E_6)$,
$$\big\{2,3,4,5,6,8,9,10,12\big\}.$$ We see that the two lists coincide, except for $18$.
Thus the $S$-duality group is slightly larger than the Weyl group, possibly just $\mathrm{Weyl}(E_6)\rtimes\Z_2$, where $\Z_2$ is the automorphism of the Dynkin diagram.
Notice that this is the largest group which may act on the free part of the cluster Grothendieck group
(since it should act by isometries of the Tits form).
\end{example}

\begin{example}[Generic $T_\mathfrak{g}$ theories]
By the same argument as in the 
previous \textbf{Example}, we expect the $S$-duality group to be finite for all $T_\mathfrak{g}$ ($\mathfrak{g}\in ADE$) theories. 
We performed a few sample computer searches getting agreement with the expectation.
\end{example}

\begin{example}[$E_7$ Minahan-Nemeschanski]
The computer search for this example produced a list of group elements of order
$ 2, 3, 4, 5, 6, 7, 8, 9, 10, 12, 14, 15, 18$ and $30$. Since our $S$-duality group contains the 
Weyl group of the flavor $E_7$, we  compare this list with the list of orders of elements of $\mathrm{Weyl}(E_7)$,
$\{2, 3, 4, 5, 6, 7, 8, 9, 10, 12, 14, 15, 18, 30\}$. We see that the two lists coincide. It is reasonable to believe that the full S-duality group coincides with $\mathrm{Weyl}(E_7)$. This is also the largest group preserving the flavor Tits form. 
\end{example}

\subsection{Asymptotic-free examples}

As an appetizer, let us consider a 
$\mathcal{N}=2$ gauge theory 
with a gauge group of the form $SU(2)^k$ coupled to (half-)hypermultiplets in some representation of the gauge group so that all Yang-Mills couplings $g_i$ ($i=1,\dots,k$) have strictly negative $\beta$-functions. As discussed in \S.\,\ref{34zaq}, the fact that the theory is asymptotically-free means that its cluster category $\mathcal{C}$ is not periodic. However, its Coulomb branch is parametrized by $k$ operators whose dimension in the UV limit $g_i\to0$ becomes $\Delta=2$. As in \textbf{Example \ref{kkz19ab}}, this implies the existence of a $1$-periodic sub-category $\mathcal{F}(1)\subset\mathcal{C
}$. Iff all YM couplings $g_i$ are strictly asymptotically-free, the category $\mathcal{F}(1)$ consists of $k$ copies of the $1$-periodic sub-category of pure $SU(2)$, \textbf{Example \ref{kkz19ab}}.
In such an asymptotic-free theory the $S$-duality group is bound to be `small' since all auto-equivalence $\sigma$ of the cluster category should preserve the 1-periodic sub-category $\mathcal{F}(1)$; therefore, up to (possibly)  permutations of the various $SU(2)$ gauge factors, $\sigma$ should restrict to a subgroup of autoequivalences of the periodic category $\mathcal{F}(1)_\text{pure}$ of pure $SU(2)$ SYM.
As we saw in \textbf{Example \ref{kzanncv}}, the $S$-dualities corresponding to shifts of the Yang-Mills angle $\theta$ preserve\footnote{\ Physically this is obvious. Mathematically, consider e.g.\! the shift shift $\theta\to \theta-4\pi+N_f\pi$ in $SU(2)$ with $N_f$ flavors. It corresponds to the auto-equivalence $\ca\mapsto\ca[1]$, which acts trivially on the 1-periodic subcategory.} the subcategory $\mathcal{F}(1)_\text{pure}$.
Thus
besides shifts of the various theta angles, permutations of identical subsectors, and flavor Weyl groups/Dynkin graph automorphism, we do not expect additional $S$-dualities in these models.
Let us check this expectaction against the computer search for dualities in a tricky example.

\begin{example}[$SU(2)^3$ with $\tfrac{1}{2}(\boldsymbol{2},\boldsymbol{2},\boldsymbol{2})$] 
A quiver for this model is given in figure \ref{fig:quiv}. 
\begin{figure}
\centering
\includegraphics[width=0.4\textwidth]{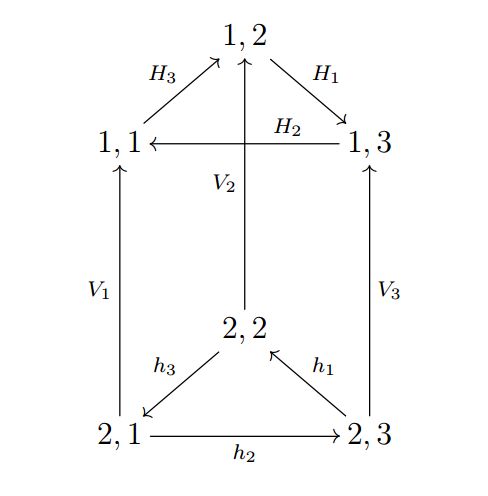}
\caption{A quiver $Q_\text{pris}$ for the gauge theory with $G_\text{gauge}=SU(2)^3$ coupled to a half-hyper in the three-fundamental.
The superpotential for $Q_\text{pris}$ is 
$W_\text{pris}=\mathrm{Tr}(H_1H_2H_3)+\mathrm{Tr}(h_1h_2h_3).$ }
\label{fig:quiv}
\end{figure}
In this case the cluster Grothendieck group $K_0(\mathcal{C}_\text{pris})$  is pure torsion, since a single half-hyper carries no flavor charge.
The three $SU(2)$ gauge couplings $g_i$ are asymptotically-free and the cluster category $\mathcal{C}_\text{pris}$ is \emph{not} periodic but it contains the
$1$-periodic subcategory $\mathcal{F}(1)\subset \mathcal{C}_\text{pris}$ described above\footnote{\ Notice that there is no periodic sub-category associated to the quark sector; this is related to the absence of conserved flavor currents in this model.}. The $S$-duality group is then expected to consists of permutations of the three $SU(2)$'s and the three independent shifts of the Yang-Mills angles $\theta_i\to \theta_i-2\pi$, that is,
$\mathbb{S}=S_3\ltimes \Z^3$.

The computer algorithm produced the following three commuting generators of the cluster automorphism group of infinite
$$\theta_1=\mu_{23}\mu_{22}\mu_{11},\qquad\theta_2=\mu_{21}\mu_{23}\mu_{12},\qquad
\theta_3=\mu_{22}\mu_{21}\mu_{13}.$$
These three generators are identified with the three $\theta$-shifts.
\end{example}

\begin{remark} Since the model is of class $\mathcal{S}[A_1]$ (with irregular poles), the $S$-duality group may also be computed geometrically (see section 7). The
computer result is of course consistent with geometry: each $\theta_i$ translation correspond to a twist around one of the three holes on the sphere: their order is clearly infinite and the three twists commute with one another.
\end{remark}

\subsection{$Q$-systems as groups of $S$-duality}

The above discussion may be generalized to all $\mathcal{N}=2$ QFTs having a weakly coupled Lagrangian formulation. If the gauge group $G$ is a product of $k$ simple factors $G_i$, we expect the $S$-duality group to contain a universal subgroup $\Z^k$ consisting of shifts $\theta_i\to\theta_i-b_i\pi$, with $b_i$ the $\beta$-function coefficient of the $i$-th YM coupling. One may run the algorithm and find the universal subgroup; however, just because it is \emph{universal,} its description in terms of quiver mutations also has a universal form which is easy to describe.

We begin with an example.

\begin{example}[Pure SYM: simply-laced gauge group] If the gauge group $G$ is simply-laced, the exchange matrix of its quiver may be put in the form \cite{cecotti2010r,alim2014mathcal,cecotti2013categorical}\footnote{\ In particular, $\mathrm{coker}\,B=\boldsymbol{Z}(G)^\vee\oplus\boldsymbol{Z}(G)$ is the correct 't Hooft group for pure SYM.}
$$
B=\left(\begin{array}{c|c}
0 & \phantom{-}C\\\hline
-C & 0\end{array}\right)\equiv C\otimes i\sigma_2,\qquad C\equiv \text{the Cartan matrix of $G$.}
$$
For instance, the quiver for $SU(N)$ SYM is represented in figure \ref{quiv:sun}.
\begin{figure}
\begin{displaymath}
    \xymatrix{\circ \ar[dr] & \circ \ar[ld] \ar[rd] & \circ \ar[ld] \ar[rd]& ... \ar[rd] \ar[ld] &  \circ \ar[ld]\\
              \bullet \ar@2{->}[u] & \bullet \ar@2{->}[u] &  \bullet \ar@2{->}[u] & ... & \bullet \ar@2{->}[u]}
\end{displaymath}
\caption{The BPS quiver for pure SYM theory with gauge group $SU(N)$.}
\label{quiv:sun}
\end{figure}
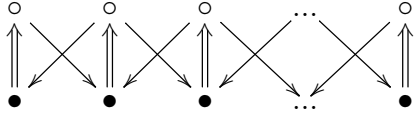
These quivers are \emph{bipartite}:
we may color the nodes black and white so that a node is linked only to nodes of the opposite color. Quiver mutations at nodes of the same color commute, so the product $$\nu=\prod_{i\ \text{white}}\mu_i$$ is well-defined. Moreover interchanging (black) $\leftrightarrow$ (white) yields the opposite quiver $Q^\text{opp}$ which is isomorphic to $Q$ \emph{via} the node permutation $\pi=\boldsymbol{1}\otimes\sigma_1$. The effect of the canonical mutation $\nu$ on the quiver is to invert all arrows i.e.\! it gives back the same quiver up to the involution $\pi$. Thus $\nu$ corresponds to an universal duality of pure SYM. One checks that it has infinite order, i.e.\! generates a subgroup of $S$-dualities isomorphic to $\Z$.

This sub-group $\Z$ of $S$-dualities has different physical interpretations/applications in statistical physics \cite{kedem1,kedem2,DiFrancesco:2008mc} as well as in the context of the Thermodynamical Bethe Ansatz \cite{cecotti2014systems}.
Indeed, consider its index 2 subgroup generated by the square of $\nu$
$$ \nu^2=\prod_{j \ \text{black}}\mu_j \prod_{i\ \text{white}}\mu_i.$$
The repeated application of the $S$-duality $\nu^2$ generates a recursion relation for the cluster variables which is known as the \textit{$Q$-system of type $G$}. It has deep relation with the theory of quantum groups; moreover it generates a linear recursion relation of finite length and has many other ``magical'' properties
\cite{kedem2,DiFrancesco:2008mc,cecotti2014systems}.

We claim that the duality $\nu^2$
corresponds to a shift of $\theta$.
Indeed, the cluster category in this case is the triangular hull of the orbit category of $D^b(\mathsf{mod}\,\C\hat A_{1,1}\otimes \C G)$ and $\nu^2$ corresponds to the auto-equivalence $\tau\otimes \mathrm{Id}$
\cite{caorsi2016homological}. Comparing the action of $\tau\otimes\mathrm{Id}$ in the covering category with the Witten effect (along the lines of \S.\ref{pppq12x}) one gets the claim.
\end{example}

\begin{example}[SYM with non-simply laced gauge group]
The authors of ref.\cite{DiFrancesco:2008mc} defined $Q$-systems also for non-simply laced Lie groups. To a simple Lie group $G$ one associates a quiver
and a mutation $\nu^2$ which generates a group $\Z$ which has all the required ``magic'' properties.
In ref.\cite{Cecotti:2012gh}  it was shown that the non-simply-laced $Q$-system does give the quiver description of the BPS sectors of the corresponding SYM theories.
The $Q$-system group is again the group of $S$-dualities corresponding to $\theta$-shifts.  
\end{example}

\begin{example}[General $\mathcal{N}=2$ SQCD models]
We may consider the general Lagrangian case in which the gauge group is a product of simple Lie groups, $\prod_j G_j$ and we have hypermultiplets in some representation of the gauge group.
The quivers for such a theory may be found in refs.\cite{alim2014mathcal,Cecotti:2012gh,cecotti2012quiver}.
For instance figure \ref{quiv:sunmatter} shows the quiver for $SU(M)\times SU(N)$ gauge theory coupled to 2 flavors of quark in the $(\boldsymbol{N},\boldsymbol{1})$ and a quark bifundamental in the $(\boldsymbol{\bar N},\boldsymbol{M})$. It is easy to check that the two canonical mutations of the subquivers associated to the two simple factors of the
gauge group 
$$
\nu_\circ=\prod_{i=\circ}\mu_i,\qquad \nu_\square=\prod_{i=\square}\mu_i,
$$  
leave the quiver invariant up to the permutation 
$\circ\leftrightarrow\bullet$ and, respectively, $\square\leftrightarrow\blacksquare$. The construction extends straightforwardly to any number of gauge factors $G_j$ and all matter representations. The conclusion is that  we have a canonical $\Z$ subgroup of the $S$-duality group per simple factor of the gauge group. It corresponds to  shifts of the corresponding $\theta$-angle.
If the matter is such that the $\beta$-function vanishes, the full cluster category becomes periodic, and we typically get a larger $S$-duality group.
\end{example}

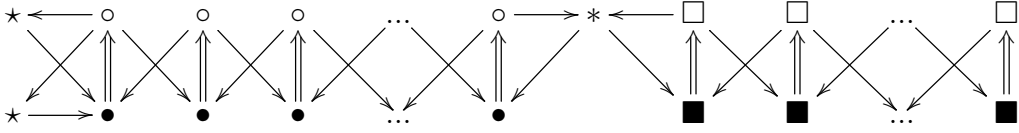
\begin{figure}
\begin{displaymath}
    \xymatrix{\star\ar[dr]&\circ \ar[dr] \ar[l] \ar[dl]& \circ \ar[ld] \ar[rd] & \circ \ar[ld] \ar[rd]& ... \ar[rd] \ar[ld] &  \circ \ar[ld] \ar[r] & \ast\ar[dr]\ar[dl] &\square \ar[dr]\ar[l] & \square \ar[ld] \ar[rd] &  ... \ar[rd] \ar[ld] &  \square \ar[ld]\\
            \star\ar[r]&  \bullet \ar@2{->}[u] & \bullet \ar@2{->}[u] &  \bullet \ar@2{->}[u] & ... & \bullet \ar@2{->}[u] & & \blacksquare \ar@2{->}[u] & \blacksquare \ar@2{->}[u] &   ... & \blacksquare \ar@2{->}[u]  }
\end{displaymath}
\caption{The BPS quiver of $SU(N) \times SU(M)$ SQCD with two quarks ($\star$ nodes) in the fundamental representation of $SU(N)$ and one (the $\ast$ node)  in the bifundamental representation. The number of $\circ$ (resp.\! $\square$)  is $N$ (resp.\! $M$).}
\label{quiv:sunmatter}
\end{figure}
One can convince himself that the sequence of mutations $\mu$ does not change the quiver and that its order, in all the above cases, is infinite, as it is for the shift is the $\theta_j$'s.

\section{Class $\mathcal{S}$ QFTs: Surfaces, triangulations, and categories}
\label{sec:surfaces}

In this section we focus on a  special class of $\mathcal{N}=2$ theories: the Gaiotto $\mathcal{S}[A_1]$ models \cite{gaiotto2012n}. We study them for two reasons: first of all they are interesting for their own sake, and second for these theories the three categories $D^b\Gamma$, $\mathfrak{Per}\,\Gamma$, 
and $\mathcal{C}(\Gamma)$ have geometric constructions, directly related to the WKB analysis of \cite{gaiotto2013wall,gaiotto2013framed}. Comparing the categorical description with the results of refs.\cite{gaiotto2013wall,gaiotto2013framed} we check the correctness of our physical interpretation of the various categories and functors. 
\medskip

Class $\mathcal{S}[A_1]$ theories are obtained by the compactification of the $6d$ $(2,0)$ SCFT of type $A_1$ over a complex curve $C$ having regular and irregular punctures \cite{gaiotto2013wall}. 
If there is at least one puncture, these theory have the quiver property \cite{cecotti2011classification}, and their quivers with superpotentials are constructed in terms of an ideal triangulation of $C$ \cite{labardini2008quivers}. In the geometrical setting of Gaiotto curves, we can interpret the categories defined in section \ref{sec:cat} as categories of (real) curves on the spectral cover of the Gaiotto curve $C$. 

When only irregular punctures are present, the quiver with potential arising from these theories \cite{labardini2008quivers} has a Jacobian algebra which is \emph{gentle} \cite{assem2010gentle,cecotti2015galois}, and hence all triangle categories associated to its BPS sector, eqn.\eqref{exactse}, have a simple explicit description.\footnote{\ In facts, there is a systematic procedure, called \emph{gentling} in ref.\cite{cecotti2015galois} which allow to reduce the general class $\mathcal{S}[A_1]$ model to one having a gente Jacobian algebra.} 
When only regular punctures are present, the $\mathcal{N}=2$ theory has a Lagrangian formulation (which is weakly coupled in some corner of its moduli space)
and is UV superconformal. In particular the corresponding cluster category is 2-periodic, as the arguments of \S.\,\ref{kkkaq12} imply.

\subsection{UV and IR descriptions}
The main reference for this part is \cite{gaiotto2012n}. 

In the deep UV a class $\mathcal{S}[A_1]$ $\mathcal{N}=2$ theory is described by the Gaiotto curve $C$, namely a complex curve of genus $g$
with a number of punctures $x_i\in C$. Punctures are of two kinds: \emph{regular punctures} (called simply \emph{punctures}) and \emph{irregular} ones (called \emph{boundaries}). The $i$--th boundary carries a positive integer $k_i\geq1$ (the number of its \emph{cilia}); sometimes it is convenient to regard regular punctures as  boundaries with $k_i=0$. Iff $k_i\leq 2$ for all $i$, the $\mathcal{N}=2$ theory is a Lagrangian model with gauge group\footnote{\ When $g=0$, the theory is defined only if $b\geq 1$ or $b=0$ and $p\geq 3$; in case $p=0$, $b=1$ we require $k\geq 4$; when $g=1$ we need $p+b\geq 1$. Except for the case $p=0$, $b=1$, corresponding to Argyres-Douglas of type $A$, $m$ in eqn.\eqref{gaugeggr} is $\geq0$. $m=0$ only for Argyres-Douglas of type $D$ \cite{cecotti2013more}.}
\begin{equation}\label{gaugeggr}
\mathcal{G}=SU(2)^m,\qquad m=3g-3+p+2b\quad \text{where }\begin{cases}
p=\#\{\text{regular punctures}\}\\
b=\#\{\text{boundaries}\}.\end{cases}
\end{equation} 
If $b=0$ the theory is superconformal in the UV, and the space of exactly
marginal coupling coincides with the moduli space of genus $g$ curves with $p$ punctures, $\mathcal{M}_{g,p}$, whose complex dimension is $m$ $\equiv$ the rank of the gauge group $\mathcal{G}$.
Instead, if $b\geq 1$ (and $m\geq 2$),
$b$ out of the $m$ $SU(2)$ factors in the Yang-Mills group $\mathcal{G}$ have \emph{asymptotically free} couplings; these $b$ YM couplings
go to zero in the extreme UV,
so that the UV marginal couplings are again equal in number to the complex deformations $\mathcal{M}_{g,p+b}$ of $C$.

If some of the boundaries have $k_i\geq 3$, we have a gauge theory with the same gauge group $SU(2)^m$ coupled to ``matter'' consisting, besides free quarks (in the fundamental, bi-fundamental, and three-fundamental of $\mathcal{G}$), in an Argyres-Douglas SCFT of type $D_{k_i}$ for each boundary\footnote{\ Argyres-Douglas of type $D_1$ is the empty theory and the one of type $D_2$ a fundamental quark doublet.} \cite{cecotti2013more}. The space of exactly marginal deformations is as before.\medskip

The IR description of the model is given by the Seiberg-Witten curve $\Sigma$ which, for class $\mathcal{S}[A_1]$, is a double cover of $C$. More precisely, one considers in the total space of the $\mathbb{P}^1$-bundle
\begin{equation*}
\mathbb{P}(K_C\oplus \mathcal{O}_C)\to C\qquad \text{where }\begin{cases}K_C&\text{canonical line bundle on }C\\
\mathcal{O}_C&\text{trivial line bundle on }C
\end{cases}
\end{equation*}
the curve 
\begin{equation*}
\Sigma\equiv\Big\{ y^2=\phi_2(x)\,z^2\;\Big|\; (y,z)\ \text{homogeneous coordinates in the fiber}\Big\}\to C,
\end{equation*}
where $\phi_2(x)$ is a quadratic differential on $C$ with poles of degree at most $k_i+2$ at $x_i$.
The Seiberg-Witten differential is the tautological one
\begin{equation*}
\lambda= \frac{y}{z}\,dx
\end{equation*}
whose periods in $\Sigma$ yield the $\mathcal{N}=2$ central charges of the BPS states.

The dimension of the space of IR deformations is then\footnote{\label{foota}\ This formula holds under the condition $\dim\mathcal{M}_{g,p+b}=3g-3+p+b\geq0$.} \footnote{\ Here the asymptotically-free gauge couplings are counted as IR deformations.}
\begin{equation*}
\begin{split}
s&=\dim H^0(C, P K_C^2)\equiv 3g-3+2p+2b+\sum_i k_i\\
&\text{where }P=\sum_i(k_i+2)[x_i],
\end{split}
\end{equation*}  
so that the total space of parameters, UV+IR, has dimension
\begin{equation}\label{whatrank}
n=m+s= 6g-6+3p+3b+\sum_ik_i.
\end{equation}
There are two kinds of IR deformations, \emph{normalizable} and \emph{unnormalizable} ones. The unnormalizable ones correspond to deformations of the Lagrangian, while the normalizable ones  to moduli space of vacua (that is, Coulomb branch parameters);
their dimensions are\footnote{\ As written, these equations hold even if the condition in footnote \ref{foota} is not satisfied. Notice that we count also the dimensions of the internal Coulomb branches of the matter Argyres-Douglas systems.}
\begin{equation*}
\begin{split}
s_\text{nor}&\equiv \dim \text{(Coulomb branch)}=3g-3+p+b+\sum_i \left(k_i-\left[\frac{k_i}{2}\right]\right),\\ s_\text{un-nor}&=s-s_\text{nor}.
\end{split}
\end{equation*}
\medskip

The double cover $\Sigma\to C$ is branched over the zeros 
$w_a\in C$ of the quadratic differential $\phi_2(x)$. Their number is
$$
t= 4g-4+2p+2b+\sum_i k_i,
$$
but their positions are constrained by the condition that the divisor $\sum_a[w_a]$ is linear equivalent to $PK_C^2$, so that their positions 
depend on only $t-g$ parameters;
$\phi_2(x)$ depends on one more parameter
$$s=t-g+1$$
since the positions of its zeros fix $\phi_2(x)$ only up to an overall scale (that is, up to the overall normalization of the Seiberg-Witten differential, which is the overall mass scale).

Therefore, up to the overall mass scale, giving the cover $\Sigma\to C$ is equivalent to specifying the zeros $w_a\in C$ of the quadratic differential. Indeed, double covers are fixed, up to isomorphism, by their branching points. 
We shall refer to the points $w_a\in C$ as \emph{decorated points} on the Gaiotto curve $C$. 

In summary: the UV description of a class $\mathcal{S}[A_1]$ amounts to giving the datum of the Gaiotto curve $C$, that is, a complex structure of a genus $g$ Riemann surface and a number of \emph{marked} points $x_i\in C$ together with a non-negative integer $k_i$ at each marked point. To get the IR description we have to specify, in addition, the \emph{decorated} points
$w_a\in C$ (whose divisor is constrained to be linear equivalent to $PK_C^2$). We may equivalently state this result in the form:

\begin{principle}
In theories of class $\mathcal{S}[A_1]$, to go from the IR to the UV description we simply delete (i.e.\! forget)
the decorated points of $C$.
\end{principle}

We shall see in \S.\ref{sec:geomrepcat} below that this `forget the decoration' prescription is precisely the map denoted $\mathsf{r}$ in the exact sequence of triangle categories of eqn.\eqref{exactse}.

\subsection{BPS states}

In class $\mathcal{S}[A_1]$ $\mathcal{N}=2$ theories, the natural BPS objects are described by (real) curves $\eta$ on the Seiberg-Witten
curve $\Sigma$ which are \emph{calibrated} by the Seiberg-Witten differential \cite{seiberg1994monopoles}
\begin{equation*}
\lambda= \frac{y}{z}\,dx\equiv \sqrt{\phi_2(x)}\,dx,
\end{equation*}
that is, they are required to satisfy the condition (we set $\phi\equiv \phi_2\,(dx)^2$),
%
%
%
%
%
%
%
\begin{equation}\label{whattheta}
\sqrt{\phi}\Big|_\eta=e^{i\theta}\,dt,\qquad \text{here }t \in \R,
\end{equation}
for some real constant $\theta$, and are \emph{maximal} with respect to this condition. Being maximal, $\eta$ may terminate only at marked or decorated points.\footnote{\ We call
marked/decorated points in $\Sigma$ the pre-images of marked/decorated points on $C$.}  BPS particles have finite mass, i.e.\! they correspond to calibrated arcs $\eta$ with $|Z(\eta)|<\infty$ where 
the central charge of the would be BPS state $\eta$,  is 
\begin{equation}\label{whatZ}
Z(\eta)=\int_\eta \lambda.
\end{equation}
In this case, the parameter $\theta$ in eqn.\eqref{whattheta} is given by $\theta=\arg Z(\eta)$.
Arcs $\eta$ associated to BPS particles may end only at decorations.
All other maximal calibrated arcs have infinite mass and are interpreted as BPS \emph{branes} \cite{shapere1999bps}.

There are two possibilities for BPS particles:
\begin{itemize}
\item they are closed arcs  connecting zeros of $\phi$. These calibrated arcs are rigid and hence correspond to BPS hypermultiplets;
\item they are loops. Such calibrated arcs form $\mathbb{P}^1$-families and give rise to BPS vector multiplets.
\end{itemize}

\paragraph{Conserved charges.} From eqn.\eqref{whatZ} we see that the central charge of an arc $\eta$
factors through its homology class $\eta\in H_1(\Sigma,\Z)$. More precisely,
since the Seiberg-Witten differential $\lambda$ is \emph{odd} under the covering group $\Z_2\cong \mathsf{Gal}(\Sigma/C)$, $Z$ factors 
through the free Abelian group
\begin{align}
&\Lambda\equiv H_1(\Sigma,\Z)_\text{odd},\\ &\mathrm{rank}\,\Lambda=2\big(g(\Sigma)-g(C)\big)+\#\big\{k_i\ \text{even}\big\}.\label{raanklamn}
\end{align}
Applying the Riemann-Hurwitz formula\footnote{\ Compare eqns.(6.26)-(6.28) in ref.\cite{cecotti2011classification}.} to the cover $\Sigma\to C$, we see that the rank of $\Lambda$ is equal to the number $n$ of UV+IR deformations, see eqn.\eqref{whatrank}. In turn $n$ is the number of conserved charges (electric, magnetic, flavor) of the IR theory. Hence the group homomorphism 
\begin{equation*}
Z\colon \Lambda\to \C,\qquad [\eta]\mapsto \int_\eta \lambda,
\end{equation*}
is the map which associates to the IR charge $\gamma\in\Lambda$ of a state of the $\mathcal{N}=2$ theory
the corresponding central charge $Z(\gamma)$. An arc with homology
$[\eta]\in\Lambda$ then has `mass'
$$
\int_\eta |\lambda|\geq |Z(\eta)|
$$
with equality iff and only if it is calibrated, that is, BPS.
\bigskip

To get the corresponding  UV statements,
we apply to these results our
\textbf{RG principle}, that is, we forget the decorations. BPS particles then disappear (as they should form the UV perspective), while BPS branes project to arcs on the Gaiotto curve $C$. Several IR branes project to the same arc on $C$. The arcs on $C$ have the interpretation of UV BPS \emph{line operators}, and the branes which project to it are the objects they create in the given vacuum (specified by the cover $\Sigma\to C$) which may be dressed (screened) in various ways by BPS states, so that the IR-to-UV correspondence is
many-to-one in the line sector. 

The UV conserved charges is the projection of $\Lambda$; over $\mathbb{Q}$ all electric/magnetic charges are projected out by the oddness condition, and we remain with just the flavor lattice. However over $\Z$ the story is more interesting and we get \cite{cecotti2011classification}
\begin{equation}\label{lalala}
\Lambda_\text{UV}\cong\Z^{\#\{k_i\ \text{even}\}}\oplus \text{2-torsion.}
\end{equation}

Comparing with our discussion in the \textbf{Introduction}, we see that
$\Lambda$ and $\Lambda_\text{UV}$ should be identified with the Grothedieck group of the triangle categories $D^b\Gamma$ and $\mathcal{C}(\Gamma)$, respectively. We shall check these identifications below.

\subsection{Quadratic differentials}
\label{sec:phys}
We want to study the BPS equations to find the BPS spectrum of the theory. We start by analyzing the local behavior of the flow of the quadratic differential.
The quadratic differential near a zero can be locally analyzed in a coordinate patch where $\phi \sim w$; thus we have to solve the following equation:
$$\sqrt{w}\,\frac{dw}{dt}=e^{i \theta},$$
which gives $w(t)=(\frac{3}{2}te^{i \theta}+w_0^{3/2})^{2/3}$. We plot here the solution:
\begin{figure}[H]
\centering
\includegraphics[width=0.6\textwidth]{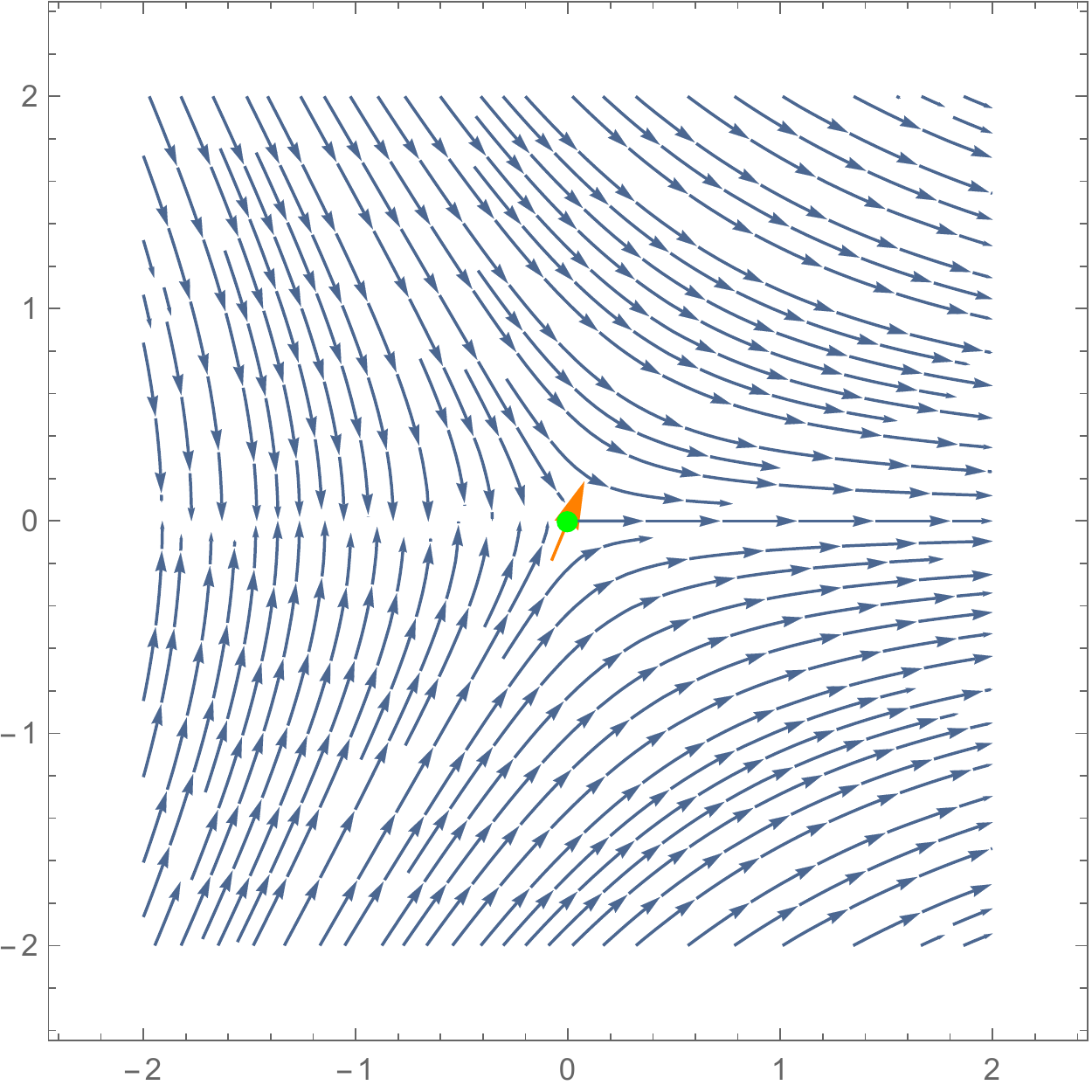}
\end{figure}
The three straight trajectories, which all start at the zero of $\phi$, end at infinity in the poles of $\phi$, i.e. the marked points on the boundaries of $C$. Since all zeros of $\phi$ are simple by hypothesis, it is possible to associate a triangulation to a quadratic differential by selecting the flow lines connecting the marked points as shown in the following figure: 
\begin{figure}[H]
\centering
\begin{tabular}{cc}
\resizebox{0.48\textwidth}{!}{\xygraph{
!{<0cm,0cm>;<0.5cm,0cm>:<0cm,0.5cm>::}
!{(0,0) }*+{\bullet}="b1"
!{(0,-10) }*+{\bullet}="b3"
!{(-5,-5) }*+{\bullet}="b2"
!{(5,-5) }*+{\bullet}="b4"
!{(2.5,-5) }*+[green]{\bullet}="b5"
!{(-2.5,-5) }*+[green]{\bullet}="b6"
"b1" -@[blue] "b2" "b2" -@[blue] "b3" "b3" -@[blue] "b4" "b4" -@[blue] "b1" "b3" -@[blue] "b1"
"b5" -@[green] "b1" "b5" -@[green] "b3" "b5" -@[green] "b4" "b6" -@[green] "b1" "b6" -@[green] "b3" "b6" -@[green] "b2"
"b1"-@/^0.3cm/"b2" 
"b1"-@/_0.3cm/"b2" 
"b1"-@/_0.6cm/"b2" 
"b3"-@/^0.3cm/"b2" 
"b3"-@/_0.3cm/"b2" 
"b3"-@/^0.6cm/"b2" 
"b3"-@/^0.3cm/"b4" 
"b3"-@/_0.3cm/"b4" 
"b3"-@/_0.6cm/"b4" 
"b1"-@/^0.3cm/"b4" 
"b1"-@/_0.3cm/"b4" 
"b1"-@/^0.6cm/"b4" 
"b1"-@/^0.3cm/"b3" 
"b1"-@/_0.3cm/"b3" 
"b1"-@/_0.6cm/"b3" 
"b1"-@/^0.6cm/"b3" 
}} &
\resizebox{0.48\textwidth}{!}{\xygraph{
!{<0cm,0cm>;<0.5cm,0cm>:<0cm,0.5cm>::}
!{(0,0) }*+{\bullet}="b1"
!{(0,-10) }*+{\bullet}="b3"
!{(-5,-5) }*+{\bullet}="b2"
!{(5,-5) }*+{\bullet}="b4"
!{(2.5,-5) }*+[green]{\bullet}="b5"
!{(-2.5,-5) }*+[green]{\bullet}="b6"
"b1" -@[blue] "b2" "b2" -@[blue] "b3" "b3" -@[blue] "b4" "b4" -@[blue] "b1" "b3" -@[blue] "b1"
}}\\
\text{Flow lines} & \text{Triangulation}
\end{tabular}
\end{figure}
Moreover, by construction, it is clear that to each triangle we associate a zero of $\phi$. These are the decorating points $\Delta$ of section \ref{sec:geomrepcat}.
If we make $\theta$ vary, we deform the triangulation up to a point in which the triangulation jumps: at that value of $\theta=\theta_c$, two zeros of $\phi$ are connected by a curve $\eta$: this curve is the stable BPS state. From what we have just stated, it will be clear that the closed arcs of section \ref{sec:geomrepcat} will correspond to BPS states.
Before and after the critical value of $\theta=\theta_c$, the triangulation undergoes a \emph{flip}. Flips of the triangulation correspond to mutations at the level of BPS quiver (see section \ref{sec:geomrepcat} on how to associate a quiver to a triangulation). This topic is develop in full details in \cite{bridgeland2015quadratic,assem2010gentle,qiu2013cluster}.\\
Moreover, the second class of BPS objects, i.e. loops representing vector-multiplets BPS states, appears in one-parameter families and behave as in example \ref{ex:su2}; the map $X$ of section \ref{sec:geomrepcat} will allow us to write the corresponding graded module $X(a) \in D^b(\Gamma)$.
\begin{figure}[H]
\centering
 \begin{tabular}{ccc}
\includegraphics[width=0.31\textwidth]{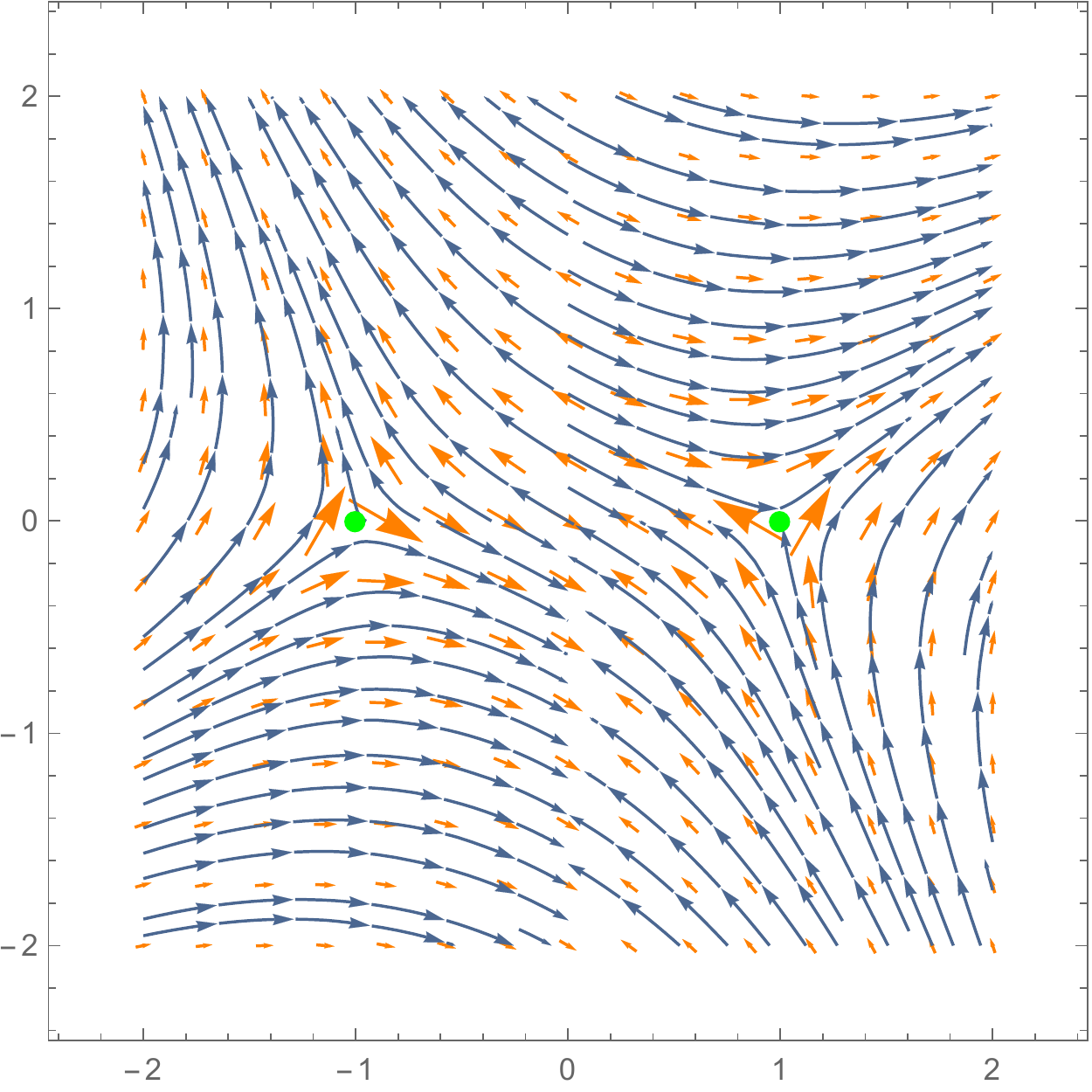} 
&
\includegraphics[width=0.31\textwidth]{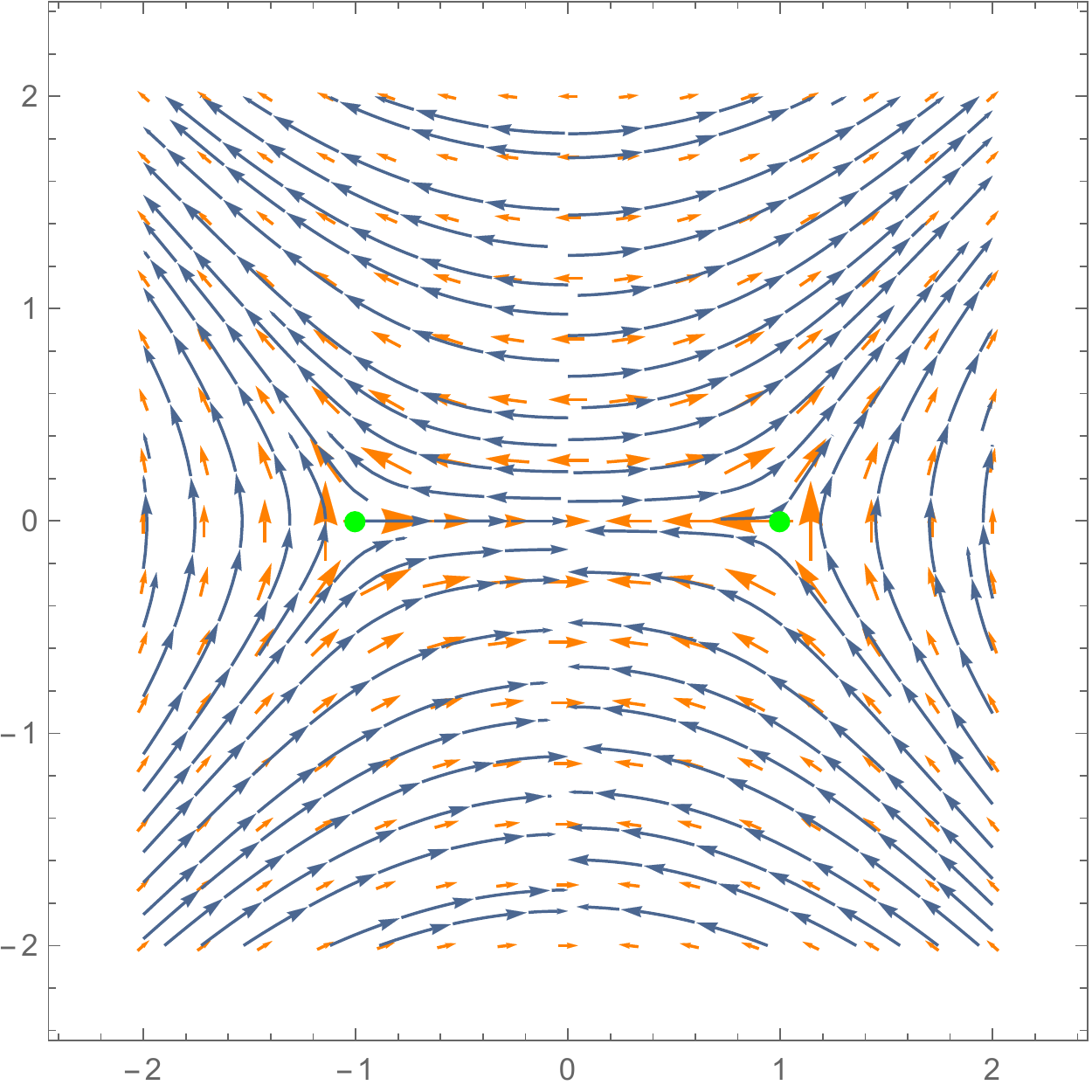} 
&
\includegraphics[width=0.31\textwidth]{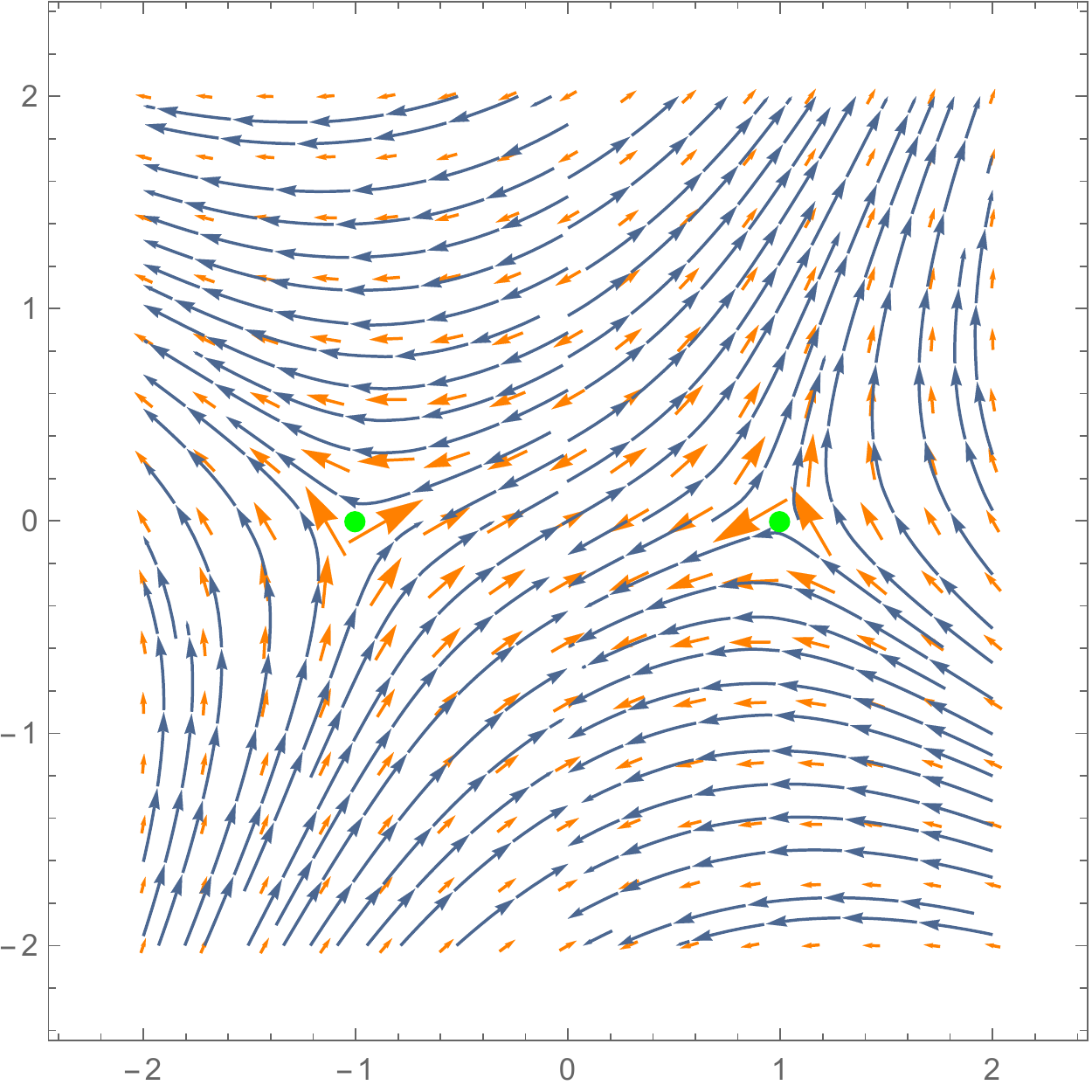}
 \\
$\theta < \theta_c$ & $\theta = \theta_c$ & $\theta > \theta_c$ 
\end{tabular}
\caption{For $\theta=\theta_c$ we have an arc corresponding to a BPS hypermultiplet: it is the solution connecting the two zero of the differential.}
\end{figure}
\begin{figure}[H]
\centering
 \begin{tabular}{ccc}
\includegraphics[width=0.31\textwidth]{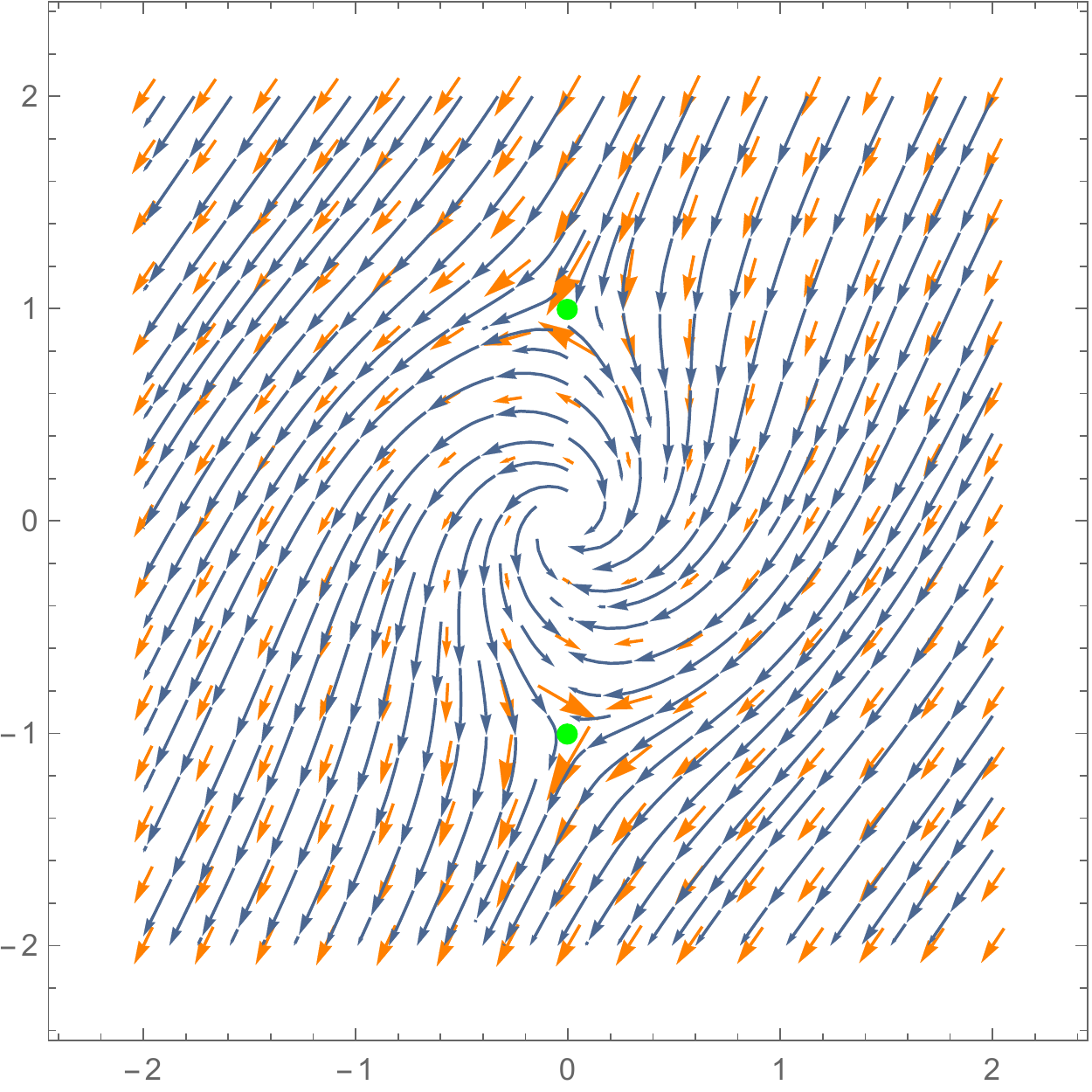}
 &
\includegraphics[width=0.31\textwidth]{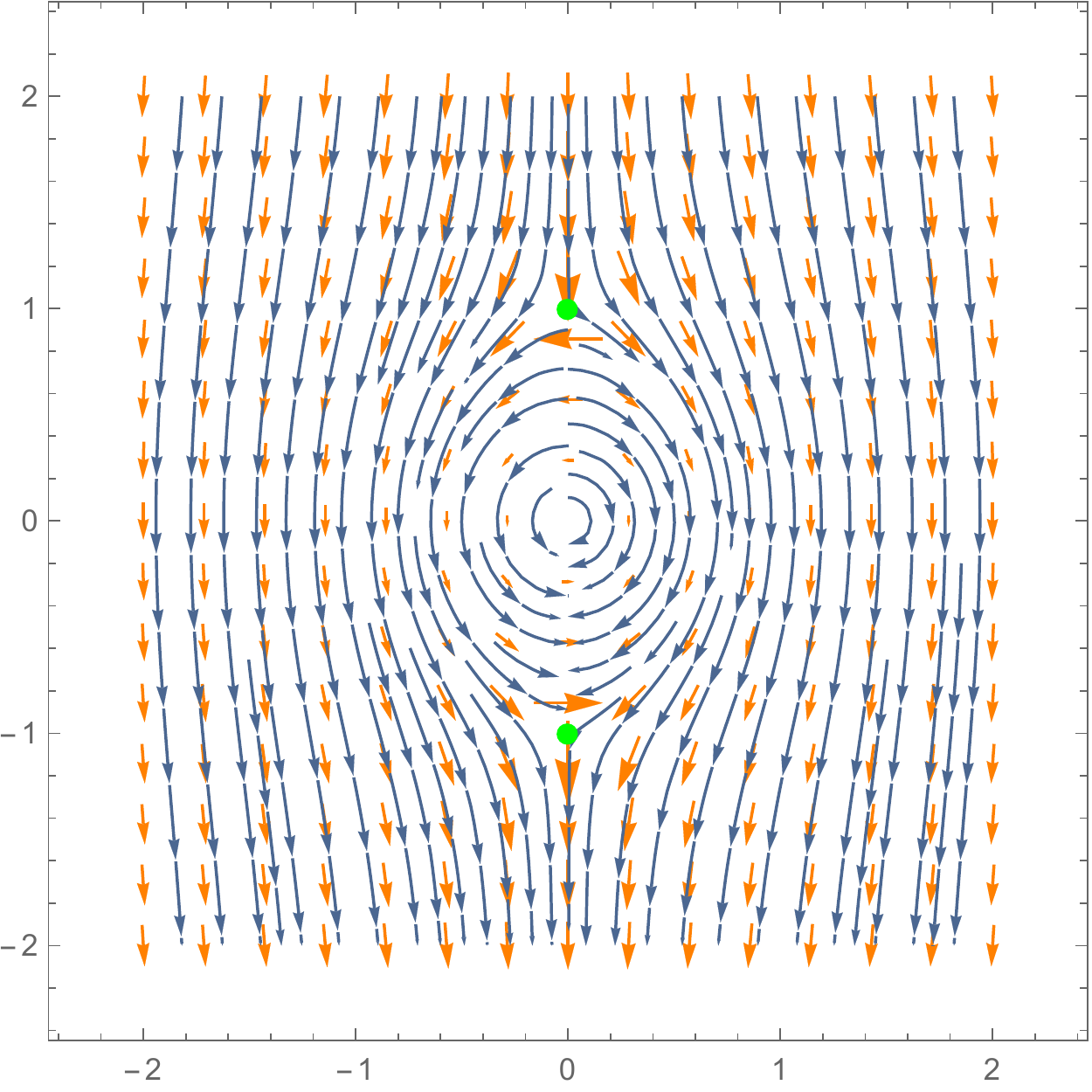} 
&
\includegraphics[width=0.31\textwidth]{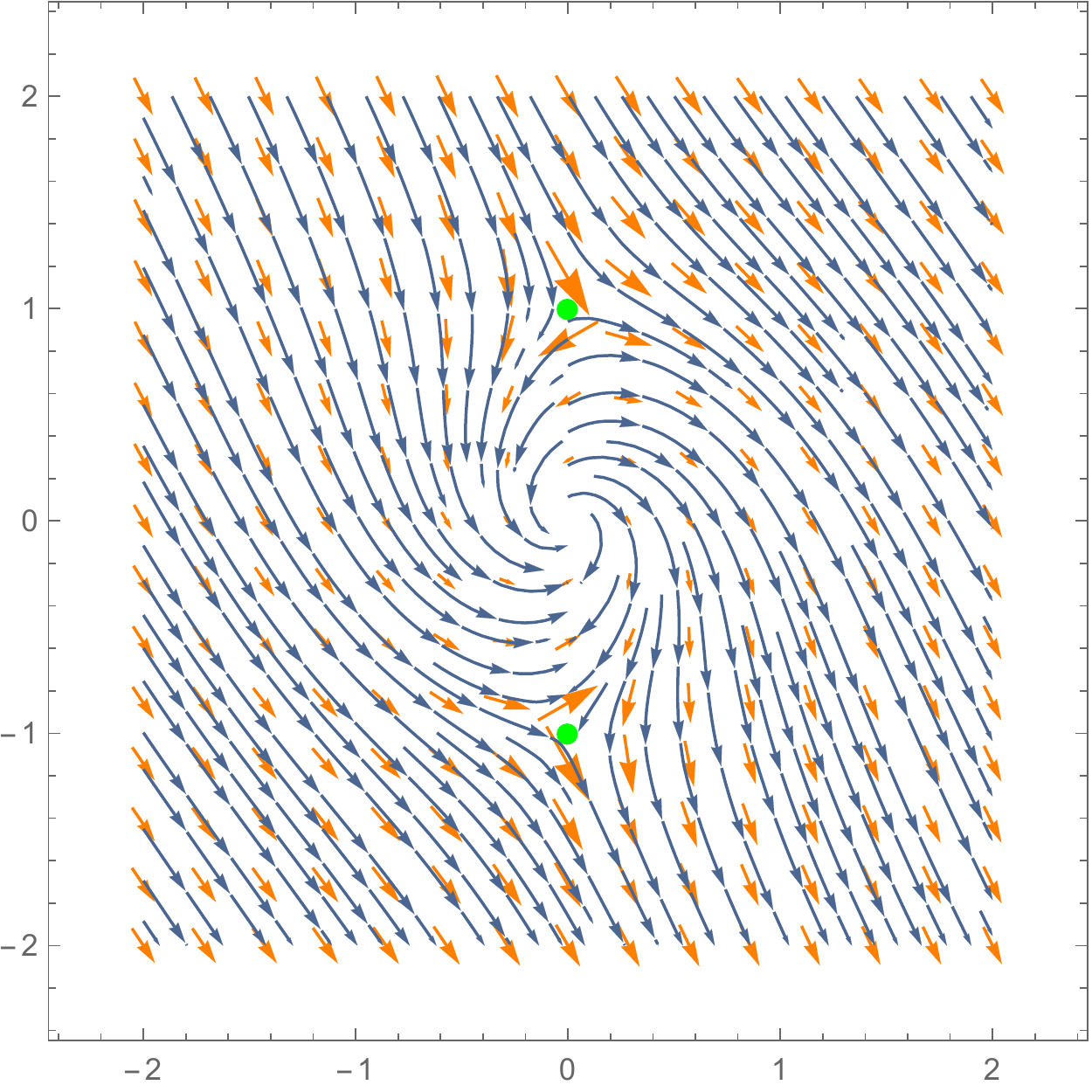} 
\\
$\theta < \theta_c$ & $\theta = \theta_c$ & $\theta > \theta_c$ 
\end{tabular}
\caption{The 1-parameter family of BPS curves corresponding to a vector multiplet appears for $\theta=\theta_c$.}
\end{figure}
What about the curves connecting punctures or marked points (but not zeros of the quadratic differential)?\footnote{\ These are the objects in the perfect derived category $\mathfrak{Per}\,\Gamma$} To answer this question, we have to take a detour into line operators.
\begin{remark}
We point out that the space of quadratic differential is isomorphic to the Coulomb branch of the theory. In a recent paper of Bridgeland \cite{bridgeland2015quadratic} it is stated that the space of stability conditions of $D^b \Gamma$ satisfy the following equation:
$$\mathrm{Stab}^0(D^b \Gamma)/\mathrm{Sph}(D^b\Gamma)\cong \mathrm{Quad}(S,M).$$
Thus, the Coulomb branch is isomorphic to the space of stability conditions, up to a spherical twist. Indeed, a stability condition is a pair $(Z,\mathcal P)$, where $Z$ is the stability function (i.e. the central charge) and the category $\mathcal P$ is the category of stable objects, i.e. stable BPS states (see section \ref{sec:stab} for more details).
\end{remark}
\subsection{Geometric interpretation of defect operators}
The main reference for this section is \cite{alday2010loop} and also \cite{gaiotto2013framed}. There it is explained how to use M-theory to construct vertex operators, line operators and surface operators by intersecting an M5 brane with an M2 one. In particular,
\begin{itemize} 
\item a \emph{vertex operator} corresponds to a point in the physical space: it means that the remaining two dimensions of the M2 brane are wrapped on the Gaiotto surface $S$, forming a co-dimension 0 object.
\item a \emph{line operator} corresponds to a one-dimensional object in the physical space: the remaining one dimension of the M2 brane is wrapped around the surface $S$ as a 1-cycle (i.e. non self-intersecting closed loop) or as an arc connecting two punctures or marked points.
\item a \emph{surface operator} has two spacial dimensions and thus it is represented by a puncture over the surface $C$.
\end{itemize}
These operators are physically and geometrically related. A vertex operator, since it is a sub-variety of codimension 0 over the complex curve $S$, can be interpreted as connecting line operators (i.e the loops at the boundary of the sub-variety representing the vertex operator). Moreover, a line operator $\gamma$ can be interpreted as acting on a surface operator $x$ and transporting the point corresponding to the surface operator around the curve $S$; the action on its dual Liouville field (\cite{alday2010loop}) is the monodromy action along the line associated to the line operator $\gamma$. Finally, as explained in \cite{gaiotto2014open}, when we describe the curve $S$ via a quadratic differential $\phi$, we can interpret the poles of $\phi$ as surface operators and the arcs connecting marked points and the closed loops correspond to Verlinde line operators\footnote{\ Traditionally, they are defined by
composing a sequence of elementary operations on conformal blocks, each corresponding
to a map between spaces of conformal blocks which may differ in the number or type
of insertions. Roughly speaking, one inserts an identity operator into the original
conformal block, splits into two conjugate chiral operators $\phi_a$ and $\bar \phi_a$, transports $\phi_a$
along $\gamma$ and then fuses the operators $\phi_a$ and $\bar\phi_a$ back to the identity channel.}. If we consider two curves $\gamma_1, \gamma_2$ that intersect at some point, the line operators $L(\gamma_1,\zeta),L(\gamma_2,\zeta)$ corresponding to these curves do not commute (see section \ref{sec:lindef}). In the case of self-intersecting arcs, we get more general Verlinde operators: as line operators, they can be decomposed into a linear combination of non self-intersecting line operators by splitting the intersection in all possible pairs.
\subsection{Punctures and tagged arcs}
The main reference, for this short section, is \cite{gaiotto2013wall}. In there it is pointed out how the arcs over the Gaiotto curve $C$ are tagged arcs: at each singularity (extrema of the arc) we have to choose the eigenvalue of the monodromy operator around that singular point. In particular, in section 8 of \cite{gaiotto2013wall}, we discover that for irregular singularity, the tagging is not necessary, since it is equal to an overall rotation of the marked points around that boundary component. For punctures, on the other hand, it is not the case: we have to specify a $\Z_2$-tagging. This boils down to a tagged triangulation, as explained in \cite{qiu2013cluster,gaiotto2013wall}. Therefore, when considering surfaces with punctures, we have to consider tagged arcs and not simple arcs. This observation will be important when we discover that the S-duality group is the tagged mapping class group of the Gaiotto surface $C$ (see section \ref{sec:sdualmap}).

%

\subsection{Ideal triangulations
surfaces and quivers with potential}

Let $C$ be the Gaiotto curve of a class $\mathcal{S}[A_1]$ model. The  invariant of the family of QFTs obtained by continuous deformations of it is the topological type of $C$.
More precisely, we
define the underlying topological surface $S$ of $C$ by the following steps: \textit{i)} forget the complex structure, and
\textit{ii)} replace each irregular punture with a boundary component $\partial S_i$ with $k_i\geq 1$ marked points (ordinary punctures on $C$ remain punctures on $S$). $S$ is then the invariant datum which describes the continuous family of $\mathcal{S}[A_1]$ theories.

 An \emph{ideal triangulation} of $S$ is a maximal set of pairwise non-isotopic arcs ending in punctures and marked points which do not intersect (except at the end points) and are not homotopic to a boundary arc.\footnote{\ A boundary arc is the part of a boundary component between two adjacent marked points.} All ideal triangulations have the same number of arcs \cite{fomin2008cluster}
\begin{equation*}
n=6g-6+3p+3b+\sum_ik_i.
\end{equation*}
Note that is the same number as the number of UV+IR deformations, eqn.\eqref{whatrank}, as well as the number of IR conserved charges $\mathrm{rank}\,\Lambda$, eqn.\eqref{raanklamn}.

To an ideal triangulation $T$ of the surface of $S$ we associate a quiver with superpotential $(Q,W)$. 
The association $S\leftrightarrow (Q,W)$ is intrinsic in the following sense:

\begin{proposition}[Labardini-Fragoso \cite{labardini2008quivers}] Let $(Q,W)$ be the quiver with potential associated to an ideal triangulation $T$ of the surface $S$. A quiver with potential $(Q^\prime,W^\prime)$ is \emph{mutation equivalent} to $(Q,W)$ if and only if it\footnote{\ This is slightly imprecise since, in presence of regular punctures $W$ contains free parameters \cite{labardini2008quivers}. The statement in the text refers to the full family of allowed $W$'s.} arises from an ideal triangulation $T^\prime$ of the \emph{same} surface $S$. 
\end{proposition}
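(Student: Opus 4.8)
The plan is to reduce the statement to a combinatorial fact about ideal triangulations of $S$, namely that flips of ideal triangulations correspond exactly to mutations of the associated quiver with potential, together with the connectivity of the flip graph. First I would recall the construction $T \mapsto (Q_T, W_T)$ of Labardini-Fragoso: the arrows of $Q_T$ record which arcs of $T$ bound a common triangle (with orientation induced by the orientation of $S$), and $W_T$ is the sum of the $3$-cycles coming from internal triangles, minus the cycles coming from arcs surrounding a puncture. The key input is the theorem (proved in \cite{labardini2008quivers}) that if $T'$ is obtained from $T$ by flipping an arc $k$ (which is always possible when $k$ is not the folded side of a self-folded triangle), then $(Q_{T'}, W_{T'})$ is right-equivalent to $\mu_k(Q_T, W_T)$; when $k$ is the folded side of a self-folded triangle one first passes to an equivalent triangulation where the flip is admissible. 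Granting this, one direction of the proposition is immediate: a triangulation $T'$ of the same surface $S$ is connected to $T$ by a finite sequence of flips, since the flip graph (equivalently the exchange graph of tagged triangulations) of a fixed bordered surface with marked points is connected — this is the theorem of Fomin–Shapiro–Thurston on which the whole theory rests — and hence $(Q_{T'},W_{T'})$ is mutation equivalent to $(Q_T,W_T)$.

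For the converse I would argue as follows. Suppose $(Q',W')$ is mutation equivalent to $(Q_T,W_T)$, say $(Q',W') = \mu_{k_s}\cdots\mu_{k_1}(Q_T,W_T)$ up to right-equivalence. I build a corresponding sequence of triangulations inductively: start with $T_0 = T$; having produced $T_{j-1}$ with $(Q_{T_{j-1}},W_{T_{j-1}})$ right-equivalent to $\mu_{k_{j-1}}\cdots\mu_{k_1}(Q_T,W_T)$, I want to realize the mutation $\mu_{k_j}$ as a flip. The subtlety is that $\mu_{k_j}$ may not be directly admissible at the level of the current triangulation — the arc labelled $k_j$ could be the folded side of a self-folded triangle, where no flip exists. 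Here one uses that the mutation of quivers with potential is still defined combinatorially (one mutates the exchange matrix and the potential even when no flip exists), but that $\mu_{k_j}(Q_{T_{j-1}},W_{T_{j-1}})$ is then right-equivalent to $(Q_{T'_{j-1}}, W_{T'_{j-1}})$ for a \emph{tagged} triangulation $T'_{j-1}$ obtained by changing the tagging at the enclosed puncture and then flipping. Thus the inductive step is carried out in the larger set of tagged triangulations, and one invokes the bijection between tagged triangulations of $S$ and clusters, compatible with mutation/flip, established in \cite{fomin2008cluster}. At the end of the sequence one obtains a tagged triangulation $T'$ of $S$ with $(Q_{T'},W_{T'})$ right-equivalent to $(Q',W')$; since untagged ideal triangulations and tagged ones with all tags at a chosen basepoint differ only by a global change of tagging (which does not change the mutation class and which itself corresponds to a sequence of mutations), one may take $T'$ to be an honest ideal triangulation.

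The main obstacle, as the previous paragraph indicates, is the bookkeeping around self-folded triangles and the passage between ideal and tagged triangulations: the naive statement ``flip $\leftrightarrow$ mutation'' fails precisely at those configurations, and making the induction go through requires working in the tagged setting throughout and carefully tracking the free parameters that appear in $W$ when $S$ has ordinary punctures (so that ``$=$'' really means ``right-equivalent for some choice of parameters in the allowed family,'' as in the footnote to the proposition). Once this is handled, the two implications assemble into the claimed equivalence. I would also remark that the invariance of the \emph{topological} type of $S$ along the family — rather than, say, the combinatorial type of a particular triangulation — is exactly what makes the statement meaningful: deforming the $\mathcal{N}=2$ theory continuously changes the complex structure and hence the ``preferred'' triangulation coming from the WKB foliation of a quadratic differential, but never the underlying surface, so all the quivers one encounters along the way lie in a single mutation class, namely that of $(Q_T,W_T)$ for any one ideal triangulation $T$ of $S$.
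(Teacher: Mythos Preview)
The paper does not give its own proof of this proposition: it is stated as a cited result of Labardini-Fragoso \cite{labardini2008quivers}, with no argument supplied beyond the attribution. So there is nothing in the paper to compare your proposal against line by line.

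That said, your sketch is the standard route and is essentially how the result is established in the cited reference. The two ingredients you isolate --- Labardini-Fragoso's compatibility theorem ``flip of $T$ at arc $k$ $\Longleftrightarrow$ QP-mutation $\mu_k$ of $(Q_T,W_T)$ up to right-equivalence'' and the Fomin--Shapiro--Thurston connectivity of the (tagged) flip graph --- are exactly the right ones, and your handling of the self-folded obstruction by passing to tagged triangulations is the correct fix. One small sharpening: in the converse direction you do not need to worry about returning from a tagged triangulation to an ordinary ideal one at the end, since the proposition (with its footnote) is really a statement about the mutation \emph{class}, and every tagged triangulation already yields a $(Q,W)$ in that class; insisting on an untagged $T'$ is harmless but not required for the assertion as the paper uses it.
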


In view of \textbf{Corollary \ref{corco}}, an important result is:

\begin{proposition}[Labardini-Fragoso \cite{labardini2008quivers}] The quiver with superpotential of an ideal triangulation is always Jacobi-finite.
\end{proposition}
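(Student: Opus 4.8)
The plan is to reduce the statement to a purely combinatorial/homological property of the Jacobian algebra $J(Q,W)$ attached to an ideal triangulation $T$ of $S$, and then verify that property by a direct analysis of the potential $W=W_T$ constructed by Labardini-Fragoso. Recall that $J(Q,W)$ is Jacobi-finite precisely when $J(Q,W)$ is finite-dimensional over $k$. So the whole task is to exhibit a uniform bound on the lengths of paths in $kQ$ that survive modulo the ideal $\langle\partial_a W\rangle$. The key structural input is the local shape of $(Q_T,W_T)$: every arrow of $Q_T$ sits inside at least one of the triangles of $T$, each internal triangle of $T$ contributes an oriented $3$-cycle to $W_T$ (the ``triangle term''), and each puncture contributes, in addition, a cycle going once around that puncture (the ``puncture term''), possibly with some sign/parameter decoration. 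Thus $W_T$ is a sum of cycles of controlled length, localized at the faces and the punctures of $T$.

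\textbf{Key steps.} First I would set up the combinatorial model: list the arrows of $Q_T$ as the ``edges shared by two triangles'' data, and write $W_T$ explicitly as $\sum_{\triangle} (\text{3-cycle of }\triangle) + \sum_{p\ \text{puncture}} \lambda_p(\text{cycle around }p)$, following Labardini-Fragoso's recipe (including the degenerate cases: self-folded triangles and punctures of valency one or two, which need the $\lambda_p$ parameters and minor modifications). Second, for each arrow $a$ I would compute $\partial_a W_T$; because $a$ appears in at most two of the summands (the at most two triangles it borders, plus at most the puncture cycles at its endpoints), $\partial_a W_T$ is a sum of a bounded number of short paths. The relations $\partial_a W_T = 0$ then say, roughly, that ``going around a triangle is the same as the detour through the opposite side'' and ``going once around a puncture equals a fixed length-$\leq 2$ expression''. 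Third, I would argue that these relations let one rewrite any sufficiently long path in $kQ_T$ into a combination of paths that wind around punctures, and then the puncture relations collapse those — so every path of length exceeding some explicit $N(S)$ (depending only on the number of triangles, i.e. on $n$) is zero in $J(Q_T,W_T)$. Hence $\dim_k J(Q_T,W_T)<\infty$. Finally, by Labardini-Fragoso's first Proposition (mutation-invariance of the surface), it suffices to prove this for one triangulation of each $S$, and by \textbf{Theorem \ref{ttthma}} (mutation commutes with passing to $\Gamma$, $D^b\Gamma$, etc.) Jacobi-finiteness is a mutation-class invariant, so the conclusion propagates to all $(Q,W)$ arising from $S$.

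\textbf{Main obstacle.} The genuinely delicate point is the bookkeeping around punctures, especially punctures of low valency and self-folded triangles, where $W_T$ acquires the extra scalar parameters $\lambda_p$ and the naive ``$3$-cycle at each triangle'' description breaks down; one must check that the cyclic derivatives still generate enough relations to kill long paths, and that no ``accidental'' infinite families of nonzero paths sneak through a badly-behaved puncture. Handling this cleanly is essentially the content of Labardini-Fragoso's proof; my plan would be to isolate a finite list of local configurations (generic internal triangle, self-folded triangle, puncture of valency $1$, $2$, $\geq 3$, boundary segment) and verify the path-truncation claim in each, then glue. An alternative, possibly cleaner route is to invoke the known description of $J(Q_T,W_T)$ as (a quotient related to) the \emph{gentle} or \emph{skew-gentle} algebra of the triangulation — for surfaces with boundary and no punctures it is literally gentle, as the excerpt notes, and gentle algebras of a fixed finite quiver with relations are automatically finite-dimensional when the relations are ``enough'' — reducing the punctured case to the unpunctured one by the ``gentling'' procedure referenced in the text. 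Either way, the real work is the local combinatorics at punctures; everything else is formal.
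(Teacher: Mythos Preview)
The paper does not prove this proposition at all; it simply cites Labardini-Fragoso \cite{labardini2008quivers} and uses the result as a black box. So there is no ``paper's own proof'' to compare your proposal against.

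That said, a few comments on your sketch. The overall strategy --- write $W_T$ as triangle cycles plus puncture cycles, compute cyclic derivatives, and show they force long paths to vanish --- is indeed the shape of Labardini-Fragoso's argument, and your identification of the punctures (especially self-folded triangles and low-valency punctures) as the delicate part is correct. However, two of your shortcuts are not solid. First, your reduction ``by mutation invariance it suffices to check one triangulation per surface'' is at best circular: the Keller--Yang derived equivalence you cite via \textbf{Theorem \ref{ttthma}} does not \emph{a priori} transport Jacobi-finiteness (and in some formulations the cluster-categorical consequences you want already assume Jacobi-finiteness). Labardini-Fragoso in fact proves finiteness directly for \emph{every} triangulation, not by reducing to a convenient one. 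Second, your remark that gentle algebras ``are automatically finite-dimensional when the relations are enough'' is too vague to be an argument: a gentle algebra on a quiver with oriented cycles can perfectly well be infinite-dimensional, and the point for triangulated surfaces without punctures is precisely that every oriented cycle in $Q_T$ must traverse a triangle and hence hit a zero relation --- this is exactly the path-truncation claim you still need to prove, not a consequence of gentleness per se. So your alternative ``cleaner route'' collapses back into the main one.
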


Thus an ideal triangulation $T$ defines a Jacobi-finite Ginzburg DG algebra $\Gamma\equiv \Gamma(Q,W)$ and therefore also the three triangle categories $D^b\Gamma$,
$\mathfrak{Per}\,\Gamma$ and $\mathcal{C}(\Gamma)$ described in \S\S.\,2.5, 2.6. Then

\begin{corollary} Up to isomorphism, the three triangle categories $D^b\Gamma$,
$\mathfrak{Per}\,\Gamma$ and $\mathcal{C}(\Gamma)$ are independent of the chosen triangulation $T$, and hence are intrinsic properties of the topological surface $S$.
\end{corollary}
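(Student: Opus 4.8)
The plan is to deduce the statement from two inputs already available: the mutation--invariance of the three categories (\textbf{Theorem \ref{ttthma}}) and the two \textbf{Propositions} of Labardini-Fragoso recalled just above. First I would fix two ideal triangulations $T$ and $T'$ of the same topological surface $S$, with associated quivers with potential $(Q,W)$ and $(Q',W')$ and Ginzburg DG algebras $\Gamma=\Gamma(Q,W)$, $\Gamma'=\Gamma(Q',W')$. By the second Labardini-Fragoso \textbf{Proposition}, both $(Q,W)$ and $(Q',W')$ are Jacobi-finite, so the three categories $D^b\Gamma$, $\mathfrak{Per}\,\Gamma$, $\mathcal{C}(\Gamma)$ (and their primed analogues) are defined as in \S\S.\,2.5--2.6. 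By the first Labardini-Fragoso \textbf{Proposition}, $(Q',W')$ is mutation equivalent to $(Q,W)$: there is a finite sequence of vertices $i_1,\dots,i_N$ and intermediate quivers with potential $(Q_a,W_a)$, with $(Q_0,W_0)=(Q,W)$, $(Q_N,W_N)$ right-equivalent to $(Q',W')$, and $(Q_{a+1},W_{a+1})=\mu_{i_a}(Q_a,W_a)$, where at each step $i_a$ is not on a $2$-cycle.

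Next I would run \textbf{Theorem \ref{ttthma}} along this chain. Each elementary mutation $\mu_{i_a}$ produces a commuting square of triangle equivalences
\[
\mathfrak{Per}\,\Gamma_a \xrightarrow{\ \sim\ } \mathfrak{Per}\,\Gamma_{a+1},\qquad D^b\Gamma_a \xrightarrow{\ \sim\ } D^b\Gamma_{a+1},
\]
compatible with the full inclusions $D^b\Gamma_a\hookrightarrow\mathfrak{Per}\,\Gamma_a$; passing to Verdier quotients by \textbf{Theorem \ref{amiotthm}} gives a triangle equivalence $\mathcal{C}(\Gamma_a)\cong\mathcal{C}(\Gamma_{a+1})$. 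Composing the $N$ steps, and using that right-equivalent quivers with potential have isomorphic Ginzburg algebras (hence identical derived, perfect and cluster categories), I obtain triangle equivalences $D^b\Gamma\cong D^b\Gamma'$, $\mathfrak{Per}\,\Gamma\cong\mathfrak{Per}\,\Gamma'$ and $\mathcal{C}(\Gamma)\cong\mathcal{C}(\Gamma')$, all compatible with the functors $\mathsf{s}$, $\mathsf{r}$ of eqn.\eqref{treexa}. Since $T$, $T'$ were arbitrary triangulations of $S$, this proves that the three categories depend, up to triangle equivalence, only on the mutation class of $(Q,W)$, which by Labardini-Fragoso is precisely the topological type of $S$. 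I would close by remarking that the claim is only the \emph{existence} of such equivalences, not a canonical choice.

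The main obstacle is entirely inside the chain-of-mutations step, and it is the familiar one: a flip of an arc enclosed in a self-folded triangle is not a quiver mutation in the naive sense, and in the presence of regular punctures the superpotential $W$ carries free parameters (so $(Q,W)$ is determined by $T$ only up to a family). The remedy I would adopt is to pass to \emph{tagged} triangulations in the sense of Fomin-Shapiro-Thurston: the flip graph of tagged triangulations of $S$ is connected, every tagged flip is an honest mutation at a vertex off all $2$-cycles, and (Labardini-Fragoso) the quiver with potential of a tagged triangulation is well-defined up to right-equivalence, so \textbf{Theorem \ref{ttthma}} applies verbatim; the free parameters in $W$ are absorbed because all admissible $W$'s for a given triangulation are connected by right-equivalences and mutations, to which the theorem again applies. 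This is exactly the place where the $\Z_2$ tagging at punctures enters, consistent with the role of tagged triangulations flagged in \S.\,7.4.
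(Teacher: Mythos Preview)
Your argument is correct and is precisely the approach the paper intends: the Corollary is stated as an immediate consequence of the two Labardini-Fragoso \textbf{Propositions} together with \textbf{Theorem \ref{ttthma}}, and the paper offers no further proof beyond this juxtaposition. Your write-up is in fact more careful than the paper's, since you spell out the chain-of-mutations argument, the passage to Verdier quotients for $\mathcal{C}(\Gamma)$, and the tagged-triangulation fix for self-folded triangles and the free parameters in $W$ (a subtlety the paper only alludes to in a footnote).
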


It remains to describe the quiver with potential $(Q,W)$ associated to the ideal triangulation $T$. The nodes of $Q$ are in one-to-one correspondence with the arcs $\gamma_i$ of $T$ (their number being equal to the number of IR charges, as required for the BPS quiver of any $\mathcal{N}=2$ theory). Giving the quiver, is equivalent to specifying its exchange matrix:

\begin{definition}
 For any triangle $D$ in $T=\{\gamma_i\}_{i=1}^n$ which is not self-folded, we define a matrix $B^D = (b^{D})_{ij}$ ,$1\leq i \leq n,1\leq j\leq n$ as follows.
\begin{itemize}
\item $b^D_{ij} = 1$ and $b^D_{ji} = -1$ in each of the following cases:
\begin{enumerate}
\item $\gamma_i$ and $\gamma_j$ are sides of $D$ with $\gamma_j$ following $\gamma_i$
in the clockwise order;
\item $\gamma_j$ is a radius in a self-folded triangle enclosed by a loop $\gamma_l$, and $\gamma_i$ and $\gamma_l$ are sides of
$D$ with $\gamma_l$ following $\gamma_i$ in the clockwise order;
\item $\gamma_i$ is a radius in a self-folded triangle enclosed by a loop $\gamma_i$, and $\gamma_l$ and $\gamma_j$ are sides of
$D$ with $\gamma_j$ following $\gamma_l$ in the clockwise order;
\end{enumerate}
\item $b^D_{ij} = 0$ otherwise.
\end{itemize}
Then define the matrix $B^T := (b_{ij} ),1\leq i \leq n,1\leq j\leq n$ by $b_{ij} =\sum_D b^D_{ij}$ , where the sum is taken over all triangles in $T$ that are not self-folded. The matrix $B^T$ is a skew-symmetric matrix whose incidence graph is the quiver $Q$ associated to the triangulation.
\end{definition}

\paragraph{The superpotential.} The superpotential $W$ is the sum of two parts. The first one is a sum over all internal triangles of $T$ (that is, triangles having no side on a boundary component). The full subquiver over the three nodes of $Q$ associated with an internal triangle of $T$ has the form
$$
\xymatrix{& \bullet\ar[dl]_\alpha\\
\bullet\ar[rr]_\beta&&\bullet\ar[ul]_\gamma}
$$
Such a triangle contributes a term
$\gamma\beta\alpha$ to $W$. The second part of $W$ is a sum over the regular punctures. Let $\gamma_1,\gamma_2\cdots, \gamma_n$  be the set of arcs ending at the puncture $p$ taken in the clockwise order. The full subquiver of $Q$ over the nodes corresponding to this set of arcs: it is an oriented $n$-cycle. The contribution to $W$ from the puncture $p$ is $\lambda_p$ times the associated $n$-cycle, where $\lambda_p\neq0$ is a complex coefficient. 
\medskip

\paragraph{No regular puncture: gentle algebras.} Suppose $S$ has no regular puncture. Since an arc of $T$ belongs to two triangles (which may be internal or not), 
\begin{equation}\label{ggg}
\text{at a node of $Q$ end (start) at most 2 arrows.}
\end{equation} The superpotential is a sum over the internal triangles
$\sum_i\gamma_i\beta_i\alpha_i$
and the Jacobi relations are of the form
\begin{equation}\label{gggg}
\text{the arrows }\alpha,\ \beta\ \text{arise from the same internal triangle}\ \Longrightarrow\ \alpha\beta=0.\end{equation}
A finite-dimensional algebra whose quiver satisfies \eqref{ggg} and whose relations have the form \eqref{gggg} is called a \emph{gentle} algebra \cite{assem2010gentle}, a special case of a string algebra. Thus, in absence of regular punctures, the Jacobian algebra $J(Q,W)$ is gentle. Indecomposable modules of a gentle algebra may be explicitly constructed in terms of string and band modules \cite{assem2010gentle} (for a review in the physics literature see \cite{cecotti2013categorical}). A gentle algebra is then automatically \emph{tame}; in particular, the BPS particles are either hypermultiplets or vector multiplets, higher spin BPS particles being excluded. 

How to reduce the general case of a class $\mathcal{S}[A_1]$ theory to this gentle situation is explained in ref. \cite{cecotti2015galois}.

\begin{example}
We give here an example of the quiver associated to an ideal triangulation $T$, whose incidence matrix is $B^T$:
\begin{figure}[H]
\begin{tabular}{cc}
\includegraphics[width=0.48\textwidth]{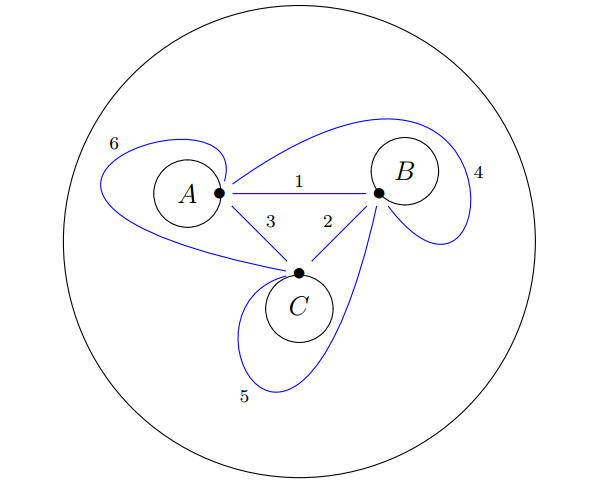} &  \includegraphics[width=0.48\textwidth]{quiver.png}
\end{tabular}
\caption{\textsc{Left}: the sphere with three holes and one marked point per each boundary component. \textsc{Right}: the quiver associated to the triangulation $T$ drawn on the surface on the right. Its adjacency matrix is the matrix $B^T$. }
\label{fig:su23}
\end{figure}
\end{example}

\subsection{Geometric representation of categories}
\label{sec:geomrepcat}
The main reference is \cite{qiu2016decorated}. There is a precise dictionary between curves over a decorated marked surface and the objects in the category $D\Gamma$. 

Then, let $\Delta$ be the set of \emph{decorated points}: to each triangle of $T$, choose a point in the interior of that triangle. With respect to a quadratic differential $\phi$ of section \ref{sec:phys}, the decorated points correspond to the simple zeros of $\phi$. The marked points, on the other hand, correspond to poles of $\phi$ of order $m_i+2$. Let $S_\Delta$ be the surface $S$ with the decorated points. The basic correspondence between geometry and categories is, on the one hand, between objects in $D(\Gamma)$ and curves over $S_{\Delta}$ and on the other hand between morphisms in $D(\Gamma)$ and intersections between curves. Here follow the complete dictionary:
\begin{enumerate}
\item \label{item:1}Recall that an object $S\in D^b (\Gamma)$ is spherical iff $$\mathrm{Hom}_{D^b\Gamma}(S,S[j])\cong k (\delta_{j,0}+\delta_{j,3}).$$ A spherical object in the category $D^b (\Gamma)$ corresponds to a simple\footnote{\ A simple arc does not have self intersections.} closed\footnote{\ A closed arc starts and ends in $\Delta$.} arc (CA) between 2 points in $\Delta$. In particular, simple objects are elements of the dual triangulation and are all spherical. These are some of the BPS hypermultiplets. We can describe these curves as elements of the relative homology with $\Z$-coefficients $H_1(S,\Delta,\mathbb Z)$ over the curve $S$ and the set of points $\Delta$: indeed, the operation of ``sum'' is well defined and it corresponds to the relation that defines the Grothendieck group $K_0(D^b\Gamma)$. We now describe the map that associates a graded module (or equivalently a complex) to a closed arc. Consider a closed arc $\gamma$ that is in minimal position with respect to the triangulation: every time $\gamma$ intersects the triangulation, we add to the complex a simple shifted module $S_i[j]$ and we connect it to the complex with a graded arrow: the grading of the map depends on how the curve and the decorated points are related. In particular, the grading -- corresponding to the Ginzburg algebra grading -- is defined in the following figure\footnote{\ The figure is taken from \cite{qiu2016decorated}.}:
\begin{figure}[H]
\centering
\includegraphics[width=\textwidth]{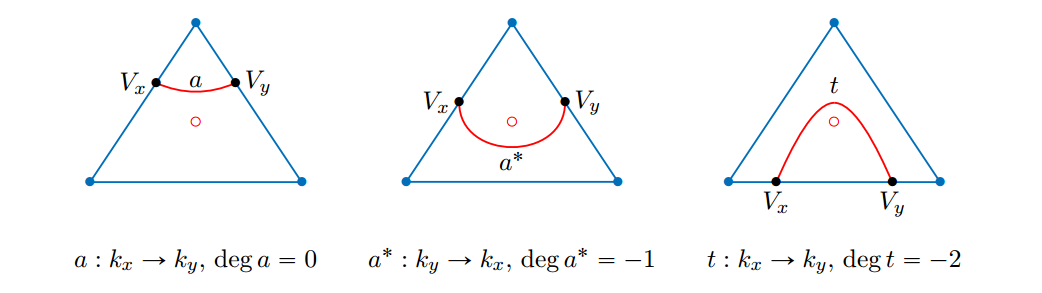}
\label{fig:grad}
\end{figure}
We call $X$ both the map $X: CA(S_\Delta) \to D^b (\Gamma)$ and $X:OA(S_\Delta) \to \mathfrak{Per}\, \Gamma$, where $CA(S_\Delta)$ are the arcs of the decorated surface $S_\Delta$ connecting at most two points in $\Delta$ and $OA(S_\Delta)$ are the arcs connecting marked points. Notice that for the open arc (OA) case, we also have to take into account the tagging at the punctures. In particular, the situation is the following:
\begin{itemize}
\item For an open curve ending on a puncture inside a monogon: if the curve is not tagged, then it intersects only the monogon boundary; if the curve is tagged, then the curve intersect the ray inside the monogon.
\item For an open curve ending on a puncture that is not inside a monogon, then the untagged curve does intersect the curve it would intersect as if it were untagged; if the curve is tagged, then it is as if the curve made a little loop around the puncture and so changes the intersection.
\end{itemize}
\begin{example}
\label{ex:primo}
Let us consider the quiver $A_2$ again. Then, consider the curves $\gamma_1$, $\gamma_2$ and $\gamma_3$ as in the picture. 
\begin{figure}[H]
\centering
\resizebox{0.60\textwidth}{!}{\xygraph{
!{<0cm,0cm>;<0.5cm,0cm>:<0cm,0.5cm>::}
!{(0,10) }*+{\bullet}="b1"
!{(8,4) }*+{\bullet}="b3"
!{(-8,4) }*+{\bullet}="b2"
!{(-5,-5) }*+{\bullet}="b4"
!{(5,-5) }*+{\bullet}="b5"
!{(0,0) }*+{\bullet_{a}}="a"
!{(-3,0) }*+{}="al"
!{(0,5) }*+{}="au"
!{(0,-5) }*+{}="ad"
!{(5,5) }*+{\bullet_{b}}="b"
!{(-5,5) }*+{\bullet_{c}}="c"
"b1" - "b2" "b2" - "b4" "b4" - "b5" "b5" - "b3" "b3" - "b1" "b1"-"b4"_{1} "b1" - "b5"^{2}
"b"-@[blue]"c" _{\gamma_1} 
"b" -@`{"ad","al","au"}@[red] "b"^{\gamma_3}
"b"-@/^3cm/@[green]"c" _{\gamma_2}
}}
\label{fig:A2123}
\end{figure}
By applying the rules above we get:
  $$X(\gamma_1): \ \ \ S_1 \overset{a=1}{\to}S_2 \ \cong \ k\overset{1}{\to}k$$
  $$X(\gamma_2): \ \ \ S_1[-1] \overset{a^*=1}{\leftarrow}S_2 \ \cong \ k[-1]\overset{1}{\leftarrow}k$$
  $$X(\gamma_3): \ \ \ S_2 \overset{t_1=1}{\to}S_2[-2] \ \cong  \ 0 \overset{0}{\to}k\oplus k[-2]\CircleArrowleft 1$$
\end{example}
This map also works for the graded modules corresponding to closed loops: both those starting at the zeros of the quadratic differential and those not. As we will see in the example \ref{ex:su2} of the Kronecker quiver, only certain loops are in the category $D^b \Gamma$. All other possible loops belong to $\mathfrak{Per}\, \Gamma$. Moreover, even in the case with punctures, the algorithm to get (graded) modules form the curves is the same (thanks to proposition 4.4 of \cite{qiu2013cluster}).
\item Since every simple object is a spherical objects in $D^b \Gamma$ -- in particular, since $D^b \Gamma$ is 3-CY, the simple objects are 3-spherical -- we can define the Thomas-Seidel twist $T_S$ associated to such a spherical object $S$ by the following triangle
$$\mathrm{Hom}_{D^b \Gamma}^\bullet(X,S)\otimes S \to X \to T_S(X) \to.$$
These twists are autoequivalences of $D\Gamma$. Geometrically, these spherical twists correspond to \emph{braid twists} associated to the simple closed arc $\gamma_S$: let $BT(S_\Delta)$ denote the full group generated by all the braid twists. The action of the braid twist is like in figure
\begin{figure}[H]
\centering
\includegraphics[width=0.6\textwidth]{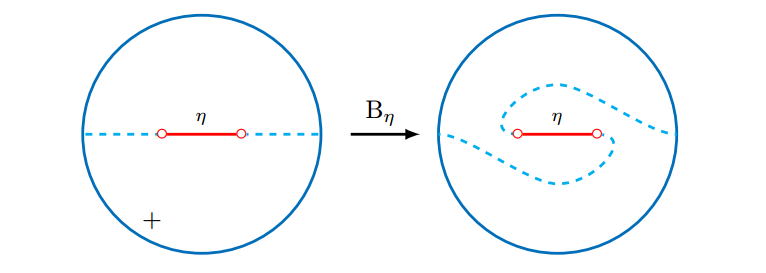}
\label{fig:braid}
\end{figure}
Moreover, being $T^*$ the dual triangulation of the triangulation $T$, $$CA(S_{\Delta})=BT(S_\Delta)\cdot T^*$$ as shown in \cite{qiu2016decorated}.
\item  In general, braid twists of $S_\Delta$ correspond to spherical twists of $D^b \Gamma$: $BT(T_{S_{\Delta}})=\mathrm{Sph}(D^b (\Gamma))$. Moreover, the quiver representing the braid relations is exactly the quiver $Q$: to each vertex we associate a twist $T_{S_i}$ and to each single arrow $i \to j$ a braid relation $T_{S_i}T_{S_j}T_{S_i}=T_{S_j}T_{S_i}T_{S_j}.$ If there is no arrow between $i$ and $j$, then $[T_{S_i},T_{S_j}]=0.$
\item \label{item:4}Rigid and reachable objects in $\mathfrak{Per}\,\Gamma$ correspond to the simple open arcs, i.e.\! simple curves connecting marked points. The other objects in $\mathfrak{Per}\,\Gamma$ correspond to generic arcs: both those arcs connecting two different punctures or marked points and closed loops encircling decorations or boundaries or punctures. We can describe these curves as elements of the relative homology $H_1(S_\Delta,M,\mathbb Z)$ over the curve $S_\Delta$ (where the points in $\Delta$ are topological points in $S_\Delta$) and the set of marked points $M$: indeed, the operation of ``sum'' is well defined and it corresponds to the relation that defines the Grothendieck group $K_0(\mathfrak{Per}\,\Gamma)$. 
\item Let $T$ be the triangulation of the surface. The arcs of the triangulations are associated to the $\Gamma e_i$ objects in $\mathfrak{Per}\,\Gamma$ and the elements of the dual triangulations are the simple objects in $D^b\Gamma$. This is the geometrical version of the simple-projective duality:
$$\dim \mathrm{Hom}^j(\Gamma e_i,S_k[l])=\delta_{lj}\,\delta_{ki}.$$
The choice of a heart in $D^b\Gamma$ corresponds to the choice of the simple objects; thus, via the simple-projective duality it also corresponds to the choice of a triangulation $T$. The relation between the Grothendieck groups of $D^b\Gamma$ and $\mathfrak{Per}\, \Gamma$ is via the Euler form (which corresponds to the intersection form, as pointed out in item \ref{item:int}) and it corresponds to Poincar\'e duality of the relative homology groups.

\item Flips of the triangulation (forward and backward) correspond to right and left mutations $\mu_i^\pm$. Two different flips of the same arc are connected by a braid twist assocaited to that simple closed arc: $T_{S_i}=\mu^+_i(\mu_i^-)^{-1}$. 
\item \label{item:int}$\mathrm{Hom}$ spaces correspond to intersection numbers:\footnote{\ $CA=$closed arc, $OA=$open arc.} the full proof of the following facts can be found in \cite{qiu2014decorated}:
$$\dim \mathrm{Hom}(X(CA),X(CA))=2\,  \mathrm{Int}(CA,CA)$$
$$\dim(X(OA),X(CA))= \mathrm{Int}(OA,CA)$$
The intersection numbers between arcs in $S_\Delta$ are defined as follows:
\begin{itemize} 
\item For an open arc $\gamma$ and any arc $\eta$, their intersection number is the geometric
intersection number in $S_\Delta -M$:
$$\mathrm{Int}(\gamma, \eta) = \min\big\{|\gamma^\prime\cap \eta^\prime \cap  (S_\Delta - M)||\gamma^\prime \cong \gamma, \eta^\prime \cong \eta\big\}.$$
\item For two closed arcs $\alpha, \beta$ in $CA(S_\Delta)$, their intersection number is an half integer
in $\frac{1}{2}\mathbb Z$ and defined as follows:
$$\mathrm{Int}(\alpha,\beta) = \frac{1}{2}\,\mathrm{Int}_\Delta (\alpha,\beta) + \mathrm{Int}_{S-\Delta}(\alpha,\beta),$$
where
$$\mathrm{Int}_{S_\Delta-\Delta}(\alpha,\beta) = \min\!\big\{|\alpha^\prime \cap \beta^\prime \cap  S_\Delta -\Delta|\; \big|\; \alpha \cong \alpha^\prime, \beta \cong \beta^\prime\big\}$$
and
$$\mathrm{Int}_\Delta(\alpha,\beta) = \sum_{Z\in \Delta}\Big|\big\{t\; \big|\; \alpha(t) = Z\big\}\Big| \cdot \Big|\big\{r \;\big|\; \beta(r) = Z\big\}\Big|.$$
Let $T_0$ be a triangulation and $\eta$ any arc; it is straightforward to see $\mathrm{Int}(T_0,\eta) \geq 2$ for a loop, and the equality holds if and
only if $\eta$ is contained within two triangles of $T_0$ (in this case, $\eta$ encircles exactly one decorating
point).
\end{itemize}

\begin{example}[Sphere with three holes and one marked point per each boundary component.]
In this particular case (the following results do not hold in general), the matrix corresponding to the bilinear form $\mathrm{Int}(-,-)$ can be obtained by taking the Cartan matrix of the quiver with potential $(Q,W)$, i.e.\! the matrix whose columns are the dimensions of the projective modules, inverting and transposing it. 
In particular, the incidence matrix for a sphere with three boundary components with $m_i=1, \forall i \in \{1,2,3\}$ -- over the basis of simples (corresponding to the edges of the dual triangulation) -- is 
\begin{center}
\begin{tabular}{cc}
\resizebox{0.35\textwidth}{!}{\xygraph{
!{<0cm,0cm>;<0.5cm,0cm>:<0cm,0.5cm>::}
!{(0,0) }*+{\bullet}="b1"
!{(5,0) }*+{\bullet}="b3"
!{(2.5,-2.5) }*+{\bullet}="b2"
!{(-1,0) }*+{}*\cir<15pt>{}="b4"
!{(5.8,0.7) }*+{}*\cir<15pt>{}="b5"
!{(2.5,-3.6) }*+{}*\cir<15pt>{}="b6"
!{(2.5,-1.5) }*+{}*\cir<105pt>{}="center"
"b1" -@[blue] "b2"^{3} "b2" -@[blue] "b3"^{2} "b3" -@[blue] "b1"_{1}
"b1" -@`{(10,7.5),(9,-6)}@[blue] "b3"^{4}
"b3" -@`{(2.5,-12),(-2,-3)}@[blue] "b2"^{5}
"b2"-@`{(-11,0),(2,4)}@[blue]"b1" ^{6}
}}&
$\mathrm{Int}=\left(
\begin{array}{cccccc}
 \frac{1}{2} & -\frac{1}{2} & \frac{1}{2} & 0 & 0 & 0 \\
 \frac{1}{2} & \frac{1}{2} & -\frac{1}{2} & 0 & 0 & 0 \\
 -\frac{1}{2} & \frac{1}{2} & \frac{1}{2} & 0 & 0 & 0 \\
 -1 & 0 & 0 & \frac{1}{2} & \frac{1}{2} & -\frac{1}{2} \\
 0 & -1 & 0 & -\frac{1}{2} & \frac{1}{2} & \frac{1}{2} \\
 0 & 0 & -1 & \frac{1}{2} & -\frac{1}{2} & \frac{1}{2} \\
\end{array}
\right)$
\end{tabular}
\end{center}
The Euler characteristic of $D^b (\Gamma)$, as a bilinear form defined in \ref{eq:euler}, on the other hand, is an antisymmetric integral matrix that is the antisymmetric part of $\mathrm{Int}(-,-)$:
$$\chi(-,-)=\left(
\begin{array}{cccccc}
 0 & 1 & -1 & -1 & 0 & 0 \\
 -1 & 0 & 1 & 0 & -1 & 0 \\
 1 & -1 & 0 & 0 & 0 & -1 \\
 1 & 0 & 0 & 0 & -1 & 1 \\
 0 & 1 & 0 & 1 & 0 & -1 \\
 0 & 0 & 1 & -1 & 1 & 0 \\
\end{array}
\right).
$$
\end{example}
Notice that the skew-symmetric matrix we have just found corresponds to the matrix $B^T$ associated to the ideal triangulation of figure \ref{fig:su23}. 
\item Relations amongst the exchange graphs (EG) of the surface $S$ (to each vertex of this graph we associate a triangulation and to each edge of the graph a flip of the triangulation) and the cluster exchange graph (CEG) of the cluster algebra (to each vertex of the graph we associate a cluster and to each edge a mutation):
$$EG(S) = CEG(\Gamma)$$
$$EG(D^b (\Gamma))/\mathrm{Sph}(D^b (\Gamma))=CEG(\Gamma)$$
\item A distinguished triangle in $D^b (\Gamma)$ corresponds to a \emph{contractible} triangle in $S_\Delta$ whose edges are 3 closed arcs $(\alpha,\beta,\eta)$ as in the figure, such that the categorical triangle is $X(\alpha) \to X(\eta) \to X(\beta) \to$.
\begin{figure}[H]
\centering
\includegraphics[width=0.4\textwidth]{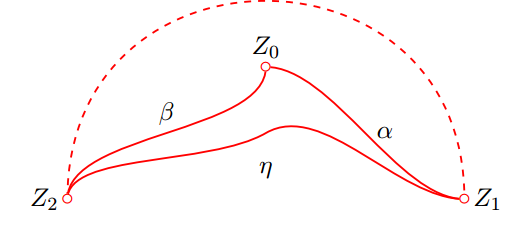}
\label{fig:triangle}
\end{figure}
Exploiting the group structure of the homology group $H_1(S,\Delta,\mathbb Z)$, we see that we have the following relation:
$$[\alpha]-[\eta]+[\beta]=0;$$
this is the defining relation of the Grothendieck group $K_0(D^b\Gamma)$.
\item The triangulated structure of $\mathfrak{Per}\,\Gamma$ is less easy to represent geometrically. Before proceeding with an example, we define the left and right mutations in $\mathfrak{Per}\,\Gamma$ starting from the \emph{silting set}. 
\begin{definition}
A \emph{silting set} $\mathbb P$ in a category $D$ is an $\mathrm{Ext}^{>0}$-configuration, i.e.\! a maximal collection of non-isomorphic indecomposables
such that $\mathrm{Ext}^i(P, T) = 0$ for any $P, T \in \mathbb P$ and integer $i > 0$. The forward mutation $\mu^-_P$ at an element $P \in \mathbb  P$ is another silting set $\mathbb P^-_P$, obtained from $\mathbb P$ by replacing $P$ with
\begin{equation}P^- = \text{Cone}\!\left(P \to \bigoplus_{T \in \mathbb P\\ \{P\} }D\mathrm{Hom}_{\mathrm{irr}}(P, T)\otimes T \right),
\label{eq:p-}
\end{equation}
where $\mathrm{Hom}_\mathrm{irr}(X, Y )$ is the space of irreducible maps $X \to Y$ , in the additive subcategory
$\mathsf{add}\,\bigoplus_{T \in \mathbb P} T$ of $D$. The backward mutation $\mu^+_P$ at an element $P \in \mathbb P$ is another silting
set $\mathbb P^+_P$, obtained from $\mathbb P$ by replacing $P$ with
\begin{equation}
P^+ = \mathrm{Cone}\!\left(
\bigoplus_{T \in \mathbb P\\ \{P\}} \mathrm{Hom}_\mathrm{irr}(T, P) \otimes T \to P\right)\![-1]
\label{eq:p+}
\end{equation}
\end{definition}
\begin{remark}
Notice that equations (\ref{eq:m-}) and (\ref{eq:m+}) are exactly the same as (\ref{eq:p-}) and (\ref{eq:p+}): the notation of the latter is more straightforward for the next computations. 
\end{remark}
We give now an example to show how one can get the triangulated structure of $per \Gamma$ and we relate it to the group structure of $H_1(S_\Delta,M,\Z)$.
\begin{example}[$A_2$ example]
The silting set for the quiver $A_2: \bullet_1 \to \bullet_2$ is $\mathbb P=\{\Gamma e_1,\Gamma e_2\}$ which we denote as $\{\Gamma_1,\Gamma_2\}$. We apply the left mutation corresponding to a forward flip: it gives the following triangle
$$\Gamma_2 \to \Gamma_1 \to \Gamma_2^-\to .$$
The element $\Gamma_2^-$ is an infinite complex of the form $S_1 \overset{t_1}{\to} \Gamma_1[-2]$. At a geometrical level it corresponds to the red curve in this figure:
\begin{figure}[H]
\centering
\resizebox{0.60\textwidth}{!}{\xygraph{
!{<0cm,0cm>;<0.5cm,0cm>:<0cm,0.5cm>::}
!{(0,10) }*+{\bullet}="b1"
!{(8,4) }*+{\bullet}="b3"
!{(-8,4) }*+{\bullet}="b2"
!{(-5,-5) }*+{\bullet}="b4"
!{(5,-5) }*+{\bullet}="b5"
!{(0,0) }*+{\bullet_{a}}="a"
!{(-3,0) }*+{}="al"
!{(0,5) }*+{}="au"
!{(0,-5) }*+{}="ad"
!{(5,5) }*+{\bullet_{b}}="b"
!{(-5,5) }*+{\bullet_{c}}="c"
"b1" - "b2" "b2" - "b4" "b4" - "b5" "b5" - "b3" "b3" - "b1" "b1"-"b4"_{\Gamma_1} "b1" - "b5"^{\Gamma_2}
"b4" -@`{"ad","al","au"}@[red] "b3"^{\Gamma_2^-}
}}
\label{fig:A2gamma2}
\end{figure}
As one can see, these three open arcs do not seem to be geometrically easily related. The next step, which is fundamental for consistency of the geometric representation, is to consider the following triangle:
$$\Gamma_1[-2]\to \Gamma_2^- \to S_1 \to$$
This triangle is exact and moreover $S_1 \in D^b\Gamma$. This implies that in the cluster category, both $\Gamma_2^-$ and $\Gamma_1[-2]$ map to the same curve. We can exploit the geometric effect of the shift (see item \ref{item:shift}) to compute $\Gamma_1[-2]$: it is the green curve in the next figure
\begin{figure}[H]
\centering
\includegraphics[width=0.9\textwidth]{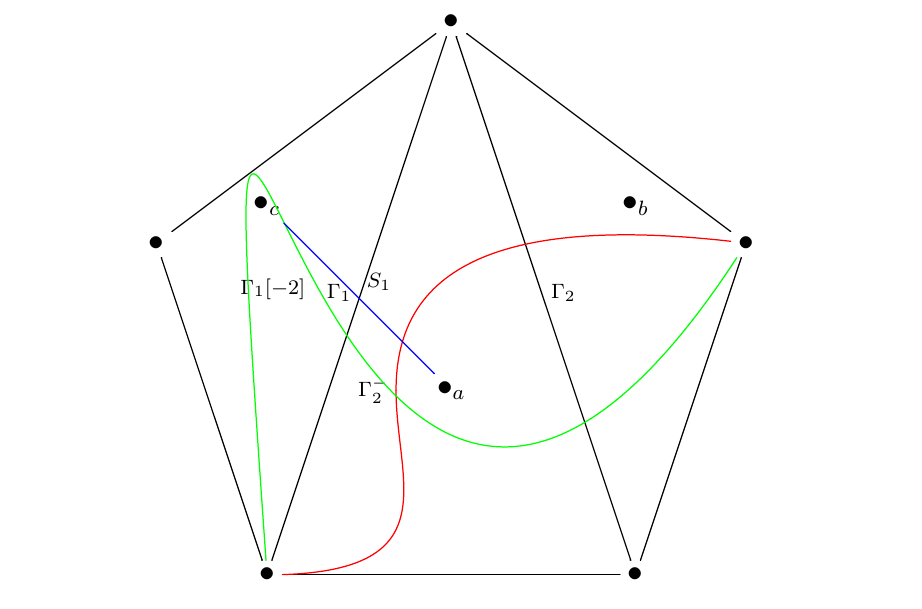}
\caption{The geometric representation of the distinguished triangle $\Gamma_1[-2]\to \Gamma_2^- \to S_1 \to$ in the perfect category $\mathfrak{Per}\, \Gamma$.}
\label{fig:perf}
\end{figure}
When we map to the cluster category via the forgetful functor (see item \ref{item:quot}), we have that both $\Gamma_1[-2]$ and $\Gamma_2^-$ are sent to the same object: $S_2$. Indeed, the corresponding cluster category is made of 5 indecomposables which form the following periodic AR diagram
\begin{displaymath}
    \xymatrix{ & \Gamma_1 \ar[dr]&  &  P_1\ar[dr] &  \\
            \Gamma_2  \ar[ur] &  & S_2 \ar[ur]&  & S_1 }
\end{displaymath}
where we can read the corresponding triangles in the cluster category. The geometric picture is
\begin{figure}[H]
\centering
\resizebox{0.50\textwidth}{!}{\xygraph{
!{<0cm,0cm>;<0.5cm,0cm>:<0cm,0.5cm>::}
!{(0,10) }*+{\bullet}="b1"
!{(8,4) }*+{\bullet}="b3"
!{(-8,4) }*+{\bullet}="b2"
!{(-5,-5) }*+{\bullet}="b4"
!{(5,-5) }*+{\bullet}="b5"
"b1" - "b2" "b2" - "b4" "b4" - "b5" "b5" - "b3" "b3" - "b1" "b1"-@[blue]"b4"_{\Gamma_1} "b1" -@[blue] "b5"^{\Gamma_2} "b5" -@[blue] "b2"^{S_1} "b2" -@[blue] "b3"^{P_1} "b3" -@[blue] "b4"^{S_2}}}
\label{fig:A2gamma222}
\end{figure}
\end{example}
We can now generalize the above results by stating that the triangulated structure is consistent with the group structure of the relative homology group $H_1(S_\Delta,M,\mathbb Z)$ paired with the relative homology of $H_1(S,\Delta,\mathbb Z)$: indeed, we see that in $H_1(S_\Delta,M,\mathbb Z)$ -- after choosing opposite direction for the red and green path of figure \ref{fig:perf}-- we have
$$[\Gamma_1[-2]]-[\Gamma_2^-]+[S_1]=0.$$
The relations defining the Grothendieck $K_0(\mathfrak{Per}\,\Gamma)$, such as
$$[\Gamma_2]-[\Gamma_1]+[\Gamma_2^-]=0,$$
are less obvious from a homological viewpoint: there is no other way but compute them explicitly when needed.
\item \label{item:quot} The Amiot quotient $\mathfrak{Per}\,\Gamma/ D^b (\Gamma)$ \cite{amiot2011generalized} -- through which the cluster category is defined -- corresponds to the forgetful map $F:S_\Delta \to S$. For a short reminder of triangulated quotients, see section \ref{sec:cat}. In particular, we recover easily the results of \cite{brustle2011cluster}: the indecomposable objects of the cluster categories are string modules or band modules. Geometrically a string is an open arc and the procedure to associated a module to it is the same as the one described in item \ref{item:1} for closed arcs. So indeed, since open curves are indecomposable objects of $\mathfrak{Per}\,\Gamma$, as pointed out in item \ref{item:4}, the following diagram -- at least for string modules -- commutes:
\begin{displaymath}
    \xymatrix{ 
  \mathcal OA(S_\Delta) \ar[r]^F \ar[d]^X & OA(S) \ar[d]^{X} \\
\mathfrak{Per}\,\Gamma \ar[r]^\pi   & \mathcal C(\Gamma)
              }
\end{displaymath}
We expect no difference in the case of band modules (which correspond to families of loops).
 \item \label{item:shift} In the cluster category $\mathcal C(\Gamma)$ and in the perfect derived category $\mathfrak{Per}\, \Gamma$, the shift [1] corresponds to a global anticlockwise rotation of all the marked points on each boundary component. For punctures, the action of $[1]$ corresponds to a change in the tagging. In $\mathcal{C}(\Gamma)$, it is equivalent to the operation $\tau$ (the AR translation), as defined in \cite{brustle2011cluster}. In particular we see that in presence of only regular punctures, $[2]$ flips twice the tagging getting back to the original situation, that is, in this case the cluster category is 2-periodic, as expected on physical grounds.
\end{enumerate}
Let us consider here a simple example that will allow us to clarify some aspects.
\begin{example}[$A_2$ again]
\label{ex:a2}
Let us consider the following curves over a disk with 5 marked points on the boundary and no punctures. The triangulation is made of the black lines $1$ and $2$ and the boundary arcs:
\begin{figure}[H]
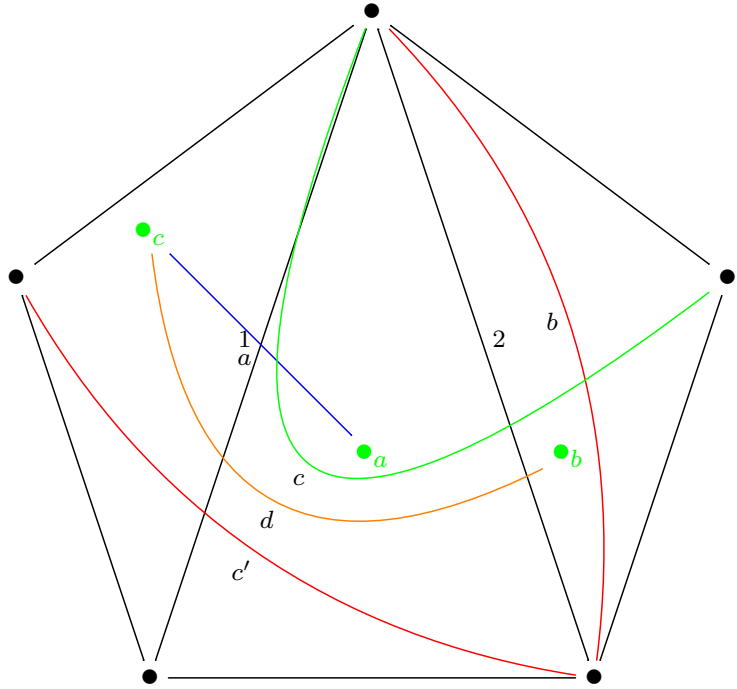

\centering
\resizebox{0.60\textwidth}{!}{\xygraph{
!{<0cm,0cm>;<0.5cm,0cm>:<0cm,0.5cm>::}
!{(0,10) }*+{\bullet}="b1"
!{(8,4) }*+{\bullet}="b3"
!{(-8,4) }*+{\bullet}="b2"
!{(-5,-5) }*+{\bullet}="b4"
!{(5,-5) }*+{\bullet}="b5"
!{(0,0) }*+[green]{\bullet_{a}}="a"
!{(-3,0) }*+{}="al"
!{(0,5) }*+{}="au"
!{(0,-5) }*+{}="ad"
!{(4.4,0) }*+[green]{\bullet_{b}}="b"
!{(-5,5) }*+[green]{\bullet_{c}}="c"
"b1" - "b2" "b2" - "b4" "b4" - "b5" "b5" - "b3" "b3" - "b1" "b1"-"b4"_{1} "b1" - "b5"^{2}
"c"-@[blue]"a" _{a} 
"b5" -@/^1cm/@[red] "b2"^{c'}
"b5" -@/_1cm/@[red] "b1"^{b}
"b3" -@/^4.5cm/@[green] "b1"^{c}
"b" -@/^2cm/@[orange] "c"^{d}
}}
\caption{This is the surface corresponding to the quiver $A_2:1 \to 2$.}
\label{fig:A2big}
\end{figure}
The curves $a$ and $d$ correspond to graded modules in $D^b \Gamma$ as shown in the example \ref{ex:primo}; indeed, the curve $d$ corresponds to the graded module $S_1 \leftarrow S_2[-1]$ in $D^b \Gamma$.
The red curves $b$ and $c'$, on the other hand, cannot be associated to any closed curve, since they cannot intersect any closed curve in minimal position: they are elements only of $\mathfrak{Per}\,\Gamma$ and not of $D^b \Gamma$. When we take the Amiot quotient, the green dots disappear and so do the curves $a$ and $d$. Moreover, the curve $c$ is homotopic to a boundary arc and $b$ is homotopic to $2$. Thus the quotient does what we expect: only the curves in $\mathfrak{Per}\,\Gamma$ that are not in $D^b \Gamma$ do not vanish. 
The intersection form in this example is 
$$\mathrm{Int}=
\left(
\begin{array}{cc}
 1 & -1 \\
 0 & 1 \\
\end{array}
\right)
$$
and thus the Euler characteristic is 
$$\chi=
\left(
\begin{array}{cc}
 0 & 1 \\
 -1 & 0 \\
\end{array}
\right)
$$
We can thus construct the Thomas-Seidel twists associated to the simple modules:
$$T_{S_1}=Id-\ket{S_1}\bra{S_1}\cdot\chi=\left(
\begin{array}{cc}
 1 & -1 \\
 0 & 1 \\
\end{array}
\right)$$
$$T_{S_2}=Id-\ket{S_2}\bra{S_2}\cdot\chi=\left(
\begin{array}{cc}
 1 & 0 \\
 1 & 1 \\
\end{array}
\right)$$
And we can explicitly check the braid relation 
$$T_{S_1}\cdot T_{S_2}\cdot T_{S_1}=T_{S_2}\cdot 
T_{S_1}\cdot T_{S_2}$$ 
both on $K_0(D^b \Gamma)$
 and geometrically. Moreover it is clear from the matrix representation that $T_{S_1}$ and $T_{S_2}$ generate $SL_2(\mathbb Z)$. We can also act with these twists on the graded modules:
$$ T_{S_1}\cdot X(S_2)=X((-1,1))$$
The graded module whose dimension vector is $(-1,1)$ is $k[-1] \overset{a^*=1}{\leftarrow }k$. This is consistent with the geometric picture, as one can verify:
\begin{figure}[H]
\centering
\resizebox{0.60\textwidth}{!}{\xygraph{
!{<0cm,0cm>;<0.5cm,0cm>:<0cm,0.5cm>::}
!{(0,10) }*+{\bullet}="b1"
!{(8,4) }*+{\bullet}="b3"
!{(-8,4) }*+{\bullet}="b2"
!{(-5,-5) }*+{\bullet}="b4"
!{(5,-5) }*+{\bullet}="b5"
!{(0,0) }*+[green]{\bullet_{a}}="a"
!{(-3,0) }*+{}="al"
!{(0,5) }*+{}="au"
!{(0,-5) }*+{}="ad"
!{(5,5) }*+[green]{\bullet_{b}}="b"
!{(-5,5) }*+[green]{\bullet_{c}}="c"
"b1" - "b2" "b2" - "b4" "b4" - "b5" "b5" - "b3" "b3" - "b1" "b1"-"b4" "b1" - "b5"
"b"-@[green]"a" _{S_1} 
"c"-@[orange]"a" _{S_2}
"b" -@/^4cm/@[red] "c"^{T_{S_1}S_2}
}}
\label{fig:A2braid}
\end{figure}

\end{example}
\begin{example}[Kronecker quiver]
\label{ex:su2}
The surface corresponding to the Kronecker quiver is an annulus with one marked point on each boundary component. This theory corresponds to a pure $SU(2)$ SYM. Thus, we shall find closed curves corresponding to BPS vector bosons. They must also appear a one-parameter family. In the following figure, the green loops around the two points represent a band module, the black dots the marked points; the black lines are the flow lines associated to the quadratic differential $\phi$ described in section \ref{sec:phys}.
\begin{figure}[H]
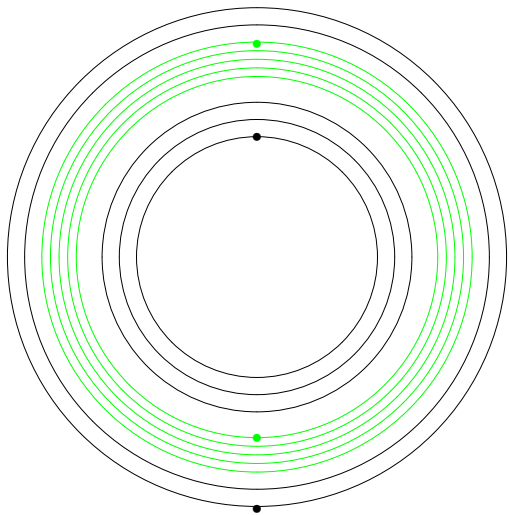

\centering
\resizebox{0.40\textwidth}{!}{\xygraph{
!{<0cm,0cm>;<0.5cm,0cm>:<0cm,0.5cm>::}
!{(0,0) }*+{}*\cir<70pt>{}
!{(0,0) }*+{}*\cir<80pt>{}
!{(0,0) }*+{}*\cir<90pt>{}
!{(0,0) }*+{}*[green]\cir<105pt>{}
!{(0,0) }*+{}*[green]\cir<110pt>{}
!{(0,0) }*+{}*[green]\cir<115pt>{}
!{(0,0) }*+{}*[green]\cir<120pt>{}
!{(0,0) }*+{}*[green]\cir<125pt>{}
!{(0,0) }*+{}*\cir<135pt>{}
!{(0,0) }*+{}*\cir<145pt>{}
!{(0,8.7) }*+[green]{\bullet}="b1"
!{(0,-7.4) }*+[green]{\bullet}="b3"
!{(0,4.9) }*+{\bullet}="b1"
!{(0,-10.3) }*+{\bullet}="b3"
}}
\caption{The 1-parameter family of curves corresponding to the vector multiplet of charge $(1,1).$}
\label{fig:puresu2}
\end{figure}
The module corresponding to (one of) the green curve can be computed via the map $X: a \mapsto X(a) \in D^b \Gamma$, starting at a generic point along the curve. We get
$$X(a): \ \ S_1 \overset{a=1,b=\lambda}{\Longrightarrow} S_2\ \cong\ k \overset{a=1,b=\lambda}{\Longrightarrow} k .$$
This module is stable (see section \ref{sec:stab}) since it is equivalent to the regular module in the homogeneous tube of $\text{\textsf{mod}-}k\,\mathsf{Kr}$.
\end{example}
\subsection{Summary}
We give here a sketchy summary of what we have written so far; recall that, given a quadratic differential $\phi(z) dz\otimes dz$ there are three kind of markings: the zeros of $\phi$ (decorations), the simple poles of $\sqrt{\phi}$ (punctures) and the irregular singularities of $\phi$, which generate the marked points on the boundary segments.
\begin{itemize}
\item BPS vector multiplets correspond to loops, not crossing the separating arcs of the flow of $\phi$, for $\theta=\arg Z(BPS)=\theta_c$. They belong to the category $D^b \Gamma$ via the maps $X$.
\item BPS hypermultiplets correspond to arcs connecting two zeros of $\phi$. They belong to the category $D^b \Gamma$ via the map $X$ and can be identified with elements in the relative homology $H_1(S,\Delta,\Z)$, where $\Delta$ are the zeros of $\phi$.
\item Surface operators corresponds to punctures and marked points.
\item The objects in the perfect derived category $\mathfrak{Per}\,\Gamma$ are the screening states created by line operators acting on the vacuum. They correspond to arcs connecting marked points and punctures and thus belong to $H_1(S_{\Delta},M,\Z)$.
\item UV line operators correspond to arcs over $S$ connecting marked points and punctures (but not zeros of $\phi$).  They belong to the cluster category $\mathcal C(\Gamma)$ and can also be identified with elements of the relative homology $H_1(S,M,\mathbb Z)$, where $M$ is the set of markings of the surface $S$. 
\end{itemize}
With this dictionary in mind, we can now exploit the categorical language to compute physical quantities (such as vacuum expectation values of UV line operators). 
\begin{remark}
A generalization of these concepts, in particular towards ideal webs and dimer models can be found in \cite{goncharov2016ideal}. There it is argued that the 3-CY category $D^b\Gamma$ is the physical BPS states category, in accordance with our analysis. Moreover, in the case in which we have a geometrical interpretation via bipartite graphs, the mapping class group of the punctured surface, is a subgroup of the full S-duality group.
Here we give a simple example.
\begin{example}[Pure $SU(3)$]
The pure $SU(3)$ theory can be described as a $S[A_2]$ Gaiotto theory over a cylinder with one full punctures per each boundary \cite{xie2012network}. The mapping class group is generated by a single Dehn twist around the cylinder. It acts on the bipartite graph associated to the $SU(3)$ theory and it must be a subgroup of the full S-duality group. It is isomorphic to $\mathbb Z$. Since we also have that pure $SU(3)$ theory can be described by the quiver
$$\mathsf{Kr} \boxtimes A_2,$$
We explicitly find -- using he techniques of \cite{caorsi2016homological} -- that a cluster automorphism is given by $\tau_\mathsf{Kr}\otimes \tau_{A_2}$, which generates a free group, thus isomorphic to $\mathbb Z$.
\end{example}
\end{remark}

\section{Cluster characters and line operators}
\label{sec:linop}
\subsection{Quick review of line operators}
\label{sec:lindef}
The main reference for this part is \cite{cordova2013line}. We are going to study, in the following section, the algebra of line operators: we shall discover that this algebra is closely related to the cluster algebra of Fomin and Zelevinski \cite{fomin2002cluster}. Recall that an IR line operator (also called \emph{framed BPS state} \cite{gaiotto2013framed}) is characterize by a central charge $\zeta$ and a charge $\gamma$. Similarly for a UV line operator. The starting point is to consider the RG flow:
$$
\begin{gathered}
RG(\cdot,\alpha,\zeta):\{\text{UV \ line \ operators}\}\to \{\text{IR \ line \ operators}\}\\
L(\alpha,\zeta) \mapsto \sum_{\gamma\in \Gamma}\bar \Omega(\alpha,\zeta,\gamma,u,y)\,L(\gamma,\zeta),
\end{gathered}
$$
where $L(\alpha,\zeta)$ is a supersymmetric UV line operator of UV charge $\alpha$ (see \S \ref{sec:uvcharge}). We can think of it as a supersymmetric Wilson line operator:
$$L(\alpha,\zeta):=\exp\!\left(i \alpha \int_\mathrm{time} A+\frac{1}{2}(\zeta^{-1}\phi +\zeta \bar \phi)\right).$$
where $A$ is the gauge connection and $\phi$ and $\bar \phi$ are the supersymmetric partners. The idea is that \emph{cluster characters} provide the coefficients $\bar \Omega(\alpha,\zeta,\gamma,u,y)$; moreover, the OPE's of line operators can be identified with the \emph{cluster exchange relations}. The physical definition of $\bar \Omega(\alpha,\zeta,\gamma,u,y)$ as supersymmetric index is the following. Define the Hilbert space of our system with a line operator in it polarizing the vacuum: $H_{L,\zeta,u}=\bigoplus_{\gamma \in \Gamma_u}H_{L,\zeta,u,\gamma},$ where $\Gamma_u$ is the charge lattice and $u$ a point in the Coulomb branch. When we restrict only to BPS states we have $H^{BPS}_{L,\zeta,u}$. We now define the following index (i.e. a number that counts the line operators):
$$\bar \Omega(\alpha,\zeta,\gamma,u,y)=\mathrm{Tr}_{H_{L,\zeta,u,\gamma}}(y^{2J_3}(-y)^{2I_3}),$$
where the $I$ and $J$ operators are the Cartan generators associated to the unbroken Lorentz symmetry $SO(3)$ and unbroken R-symmetry $SU(2)_R$ by the presence of the line operator which moves along a straight line in the time direction. In particular, if $y=1$:
$$\bar \Omega(\alpha,\zeta,\gamma,u,1)=\mathrm{Tr}_{H_{L,\zeta,u,\gamma}}(1^{2J_3}(-1)^{2I_3})=\sum_m1^{2m}(-1)^0=\dim H_{L,\zeta,u,\gamma}^{BPS}.$$
This index corresponds to the Poincar\'e polynomial of the quiver Grassmannian $Gr_\gamma(\alpha)$, where we interpret the UV line operator $L$ of charge $\alpha$ as the quiver representation of which we compute the cluster character (see \S\ref{sec:char} for more details).
\subsubsection{Algebra of UV line operators}
Let the OPE's of UV line operators be defined as follows
$$L(\alpha, \zeta)L(\alpha',\zeta)=\sum_\beta c(\alpha,\alpha',\beta)L(\beta,\zeta).$$
From now on, let us fix the generic point of the Coulomb branch $u$. We also define the generating functions for the indexes $ \bar \Omega(L, \gamma)= \bar \Omega (L,\gamma,u=\text{fixed},y=1)$:
$$F(L)=\sum_\gamma \bar \Omega (L,\gamma)X_\gamma,$$
where the formal variable $X_\gamma$ is such that $X_\gamma X_{\gamma'}=X_{\gamma+\gamma'}$. One can check that $F(LL')=F(L)F(L')$. This equality gives a recursive formula to compute $F(LL')$. Furthermore, we can study the wall-crossing of UV line operators via the formula of KS \cite{gaiotto2013framed}:
$$F^+(L_{\gamma_c})=F^-(L_{\gamma_c})\prod_{\gamma}\prod_{m=-M_\gamma}^{M_\gamma}(1+(-1)^mX_\gamma)^{|\vev{\gamma,\gamma_c}|a_{m,\gamma}}$$
where $\gamma_c$ is the charge of the wall we are crossing, and $M_\gamma$ is the maximal value of the operator $\mathcal J_3=J_2+I_3$ and the $a_{m,\gamma}$ are the coefficients of the index:
$$\bar \Omega(\alpha, \zeta,\gamma,u,y)=\sum_{m=-M_\gamma}^{M_\gamma} a_{m,\gamma}y^m.$$
We can transfer this formula on the newly defined variables to implement the wall-crossing more efficiently: 
\begin{equation}
X_\gamma'=X_\gamma \prod_{\gamma}\prod_{m=-M_\gamma}^{M_\gamma}(1+(-1)^mX_\gamma)^{\vev{\gamma,\gamma_c}a_{m,\gamma}}.
\label{eq:linopvar}
\end{equation}
Moreover, the transformation of the indices $\bar \Omega(L, \gamma)$ is an
automorphism of the OPE algebra. Thus, the algebra obeyed by the generating functionals
\emph{is in fact an invariant of the UV sector theory}. The properties of the KS wall-crossing formula of \ref{eq:linopvar} and the fact that these generating functionals are invariants of the UV theory (i.e. of the cluster category $C(\Gamma)$), tell us that $F(L)$ are \emph{exactly the cluster character}.
The non-commutative generalization is done via the star product \cite{cordova2013line}:
$$L(\alpha, \zeta)*_yL(\alpha',\zeta)=\sum_\beta c(\alpha,\alpha',\beta,y)L(\beta,\zeta)$$
and $X_{\gamma}*_yX_{\gamma'}=y^{\vev{\gamma,\gamma'}_D}X_{\gamma+\gamma'}$. We define again the generating functions $F(L)=\sum_\gamma \bar \Omega (L,\gamma,y)X_\gamma,$ and then verify that 
$$F(L*_yL')=F(L)*_yF(L').$$
Indeed we discover that the non-commutative version of $F$ behaves exactly like a \emph{quantum cluster character} (usual cluster characters are obtained by setting $y=1$). We can thus analyze these objects from a purely algebraic point of view and study cluster algebras and cluster characters: we shall do this is the next section.
\subsection{Cluster characters}
The main references for this section are \cite{dominguez2014caldero,palu2008cluster}.
We begin by recalling some basic definitions and properties of cluster characters. Let $\mathcal C:=C(\Gamma)$ be a cluster category.
\begin{definition} A cluster character on $\mathcal C$ with values in a commutative ring $A$ is
a map
$$X : \mathrm{obj}(\mathcal C) \to A$$
such that
\begin{itemize}
\item for all isomorphic objects $L$ and $M$, we have $X(L) = X(M)$,
\item for all objects $L$ and $M$ of $\mathcal C$, we have $X(L \oplus M) = X(L)X(M)$,
\item for all objects $L$ and $M$ of $\mathcal C$ such that $\dim \mathrm{Ext}^1_\mathcal C(L, M) = 1$, we have
$$X(L)X(M) = X(B) + X(B'),$$
where $B$ and $B'$ are the middle terms of the non-split triangles
$$L \to  B \to M \to  \ \ \  \text{and} \ \ \  M \to B^\prime \to  L \to $$
with end terms L and M.\footnote{\ If $B^\prime$ does not exist, then $X(B^\prime)=1.$}
\end{itemize}
\end{definition}
Let $T=\bigoplus_i T_i$ be a cluster tilting object, and let $B=\mathrm{Ent}_\mathcal C T$. The functor $$F_T:\mathcal C \to \text{\textsf{mod}-}B,\qquad X \mapsto \mathrm{Hom}(T,X)$$
 is the projection functor that induces an equivalence between $\mathcal C/\mathsf{add}\,T[1] \to \mathsf{mod}$-$B$. Then the Caldero-Chapoton map\cite{dominguez2014caldero}, 
$$X^T_?: \mathrm{ind}\,\mathcal C \to \mathbb Q(x_1, . . . , x_n)$$
is given by
$$X^T_M =\left\{
\begin{array}{lr}
x_i \ \ \text{if} \  M \cong  \Sigma T_i\\
\sum_e \chi(\mathrm{Gr}_e F_TM)\prod^n_{i=1} x_i^{\vev{S_i,e}_D-\vev{S_i,FM}}\ \ \text{else,}
\end{array}\right.$$
where the summation is over the isoclasses of submodules\footnote{\ Recall that a module $N$ is a submodule of $M$ iff there exists an injective map $N \to M$.} of $M$ and $S_i$ are the simple B-modules. Moreover, the Euler form and Dirac form in the formula above, are those of $\mathsf{mod}$-$B$.
We now recall some properties of quiver Grassmannians\footnote{\ $\mathrm{Gr}_e(FM):=\{N \subset FM\; |\; \dim N =e\}$, i.e. it is the space of subrepresentations of $M$ with fixed dimension $e$.} and in particular their Euler Poincar\'e characteristic $\chi$ (with respect to the \'etale cohomology).
\begin{definition} Let $\Lambda$ be a finite dimensional basic $\mathbb C$-algebra. For a $\Lambda$-module $M$ we define the \emph{$F$-polynomial} to be the generating function for the
Euler characteristic of all possible quiver grassmanians, i.e.
$$F_M := \sum_e\chi(\mathrm{Gr}_e(M))y^e \in \mathbb Z[y_1, . . . , y_n]$$
where the sum runs over all possible dimension vectors of submodules of $M$.
\end{definition}
Moreover, we assume that $S_1, . . . , S_n$ is a complete system of representatives
of the simple $\Lambda$-modules, and we identify the classes $[S_i] \in K_0(\Lambda)$ with the
natural basis of $\mathbb Z^n.$
\begin{proposition}Let $\Lambda$ be a finite dimensional basic $\mathbb C$-algebra. Then the
following holds:
\begin{enumerate}
\item If $$0 \to L \overset{i}{\to} M \overset{\pi}{\to} N \to 0$$ is an Auslander-Reiten sequence in
$\Lambda$-mod, then
$$F_L  F_N = F_M + y^{\dim N }.$$
\item For the indecomposable projective $\Lambda$-module $P_i$ with top $S_i$ we have
$$F_{P_i} = F_{\mathrm{rad}\, P_i} + y^{\dim P_i}$$
for $i = 1, . . . , n.$
\item For the indecomposable injective $\Lambda$-module $I_j$ module with socle $S_j$
we have
$$F_{I_j} = y_j  F_{I_j/S_j} + 1$$
for $j = 1, 2, . . . , n.$
\end{enumerate}
\end{proposition}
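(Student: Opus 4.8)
The plan is to prove all three identities by comparing, for each dimension vector $e$, the coefficient of $y^e$ on the two sides, i.e. by relating $\chi(\mathrm{Gr}_e(M))$ to the Euler characteristics of the submodule Grassmannians of the other modules appearing in a suitable short exact sequence $0\to L\to M\xrightarrow{\pi}N\to 0$: the given almost split sequence for part~(1), the sequence $0\to\mathrm{rad}\,P_i\to P_i\to S_i\to 0$ for part~(2), and $0\to S_j\to I_j\to I_j/S_j\to 0$ for part~(3). The common device is the canonical map $U\mapsto (U\cap L,\ \pi(U))$ from $\mathrm{Gr}_e(M)$ to $\bigsqcup_{a+b=e}\mathrm{Gr}_a(L)\times\mathrm{Gr}_b(N)$. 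First I would stratify $\mathrm{Gr}_e(M)$ by the upper semicontinuous function $U\mapsto\underline{\dim}(U\cap L)$, so that this assignment becomes an honest morphism on each stratum, then identify the fibres, and finally invoke that $\chi$ is additive over constructible stratifications and multiplicative along fibrations, together with $\chi(\mathbb{A}^m)=1$ and $\chi(\varnothing)=0$.

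For part~(1), the crucial step is the fibre over a pair $(V,W)$ with $\underline{\dim}V+\underline{\dim}W=e$: a submodule $U\subseteq M$ with $U\cap L=V$ and $\pi(U)=W$ is exactly a complement of $L/V$ inside the extension $\pi^{-1}(W)/V$ of $W$ by $L/V$, so this fibre is empty if that extension is non-split and is an affine space (a torsor under $\mathrm{Hom}_\Lambda(W,L/V)$) otherwise. I would then use the almost split property to decide splitness: if $W\subsetneq N$ the inclusion $W\hookrightarrow N$ is not a split epimorphism, hence factors through $\pi$, producing a section of $\pi^{-1}(W)\twoheadrightarrow W$; if $W=N$ but $V\neq 0$ the quotient $L\twoheadrightarrow L/V$ is not a split monomorphism, hence factors through $L\hookrightarrow M$, so the pushout $M/V$ of the almost split sequence splits. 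In both cases the fibre is a non-empty affine space with $\chi=1$. The only pair left out is $(V,W)=(0,N)$, over which the fibre is the set of complements of $L$ in $M$, which is empty precisely because an almost split sequence does not split. Adding up stratum by stratum then gives $\chi(\mathrm{Gr}_e(M))=\sum_{a+b=e}\chi(\mathrm{Gr}_a(L))\chi(\mathrm{Gr}_b(N))$ for $e\neq\underline{\dim}N$ and the same sum minus $1$ for $e=\underline{\dim}N$, which is exactly $F_LF_N=F_M+y^{\underline{\dim}N}$.

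Parts~(2) and~(3) are then easy degenerations of this picture and I would treat them directly. For~(2): since $\Lambda$ is basic, $\mathrm{rad}\,P_i$ is the unique maximal submodule of $P_i$, so every proper submodule of $P_i$ is contained in it; hence $\mathrm{Gr}_e(P_i)=\mathrm{Gr}_e(\mathrm{rad}\,P_i)$ for every $e\neq\underline{\dim}P_i$, while $\mathrm{Gr}_{\underline{\dim}P_i}(P_i)$ is a single point and $\mathrm{Gr}_{\underline{\dim}P_i}(\mathrm{rad}\,P_i)=\varnothing$, which gives $F_{P_i}=F_{\mathrm{rad}\,P_i}+y^{\underline{\dim}P_i}$. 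For~(3): $I_j$ has simple essential socle $S_j$ (any finite dimensional module has essential socle), so every non-zero submodule of $I_j$ contains $S_j$; passing to the quotient by $S_j$ is then a $\underline{\dim}$-shifting bijection between non-zero submodules of $I_j$ of class $e$ and submodules of $I_j/S_j$ of class $e-\mathbf{e}_j$, and the zero submodule contributes the constant term, yielding $F_{I_j}=y_jF_{I_j/S_j}+1$.

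The main obstacle is the Euler-characteristic bookkeeping in part~(1): one must check that the stratification of $\mathrm{Gr}_e(M)$ by $\underline{\dim}(U\cap L)$ is constructible and that $U\mapsto(U\cap L,\pi(U))$ is a morphism on each stratum, verify that the complements of a submodule inside an extension genuinely form an affine space when non-empty, and justify that $\chi$ of the total space of such a morphism equals the $\chi$-weighted count of its fibres (all affine spaces except over the single point $(0,N)$). These are standard --- upper semicontinuity of intersection dimension, the linear algebra of splittings, and additivity of the compactly supported Euler characteristic over a constructible partition combined with $\chi(\mathbb{A}^m)=1$ --- but they are where the care lies; one must also apply the almost split factorizations with the correct variance (right almost splitness controls pullbacks along maps into $N$, left almost splitness controls pushouts along maps out of $L$).
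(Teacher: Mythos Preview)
Your argument is correct. The stratification of $\mathrm{Gr}_e(M)$ by $\underline{\dim}(U\cap L)$, the identification of the fibre over $(V,W)$ with the set of complements of $L/V$ in the induced extension of $W$ by $L/V$, and the use of right/left almost splitness to force this extension to split whenever $(V,W)\neq(0,N)$ are exactly the standard ingredients (this is essentially the Caldero--Chapoton computation). Parts~(2) and~(3) are handled cleanly via the unique maximal submodule of $P_i$ and the essential simple socle of $I_j$, respectively; these are the right observations and no further machinery is needed.

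There is nothing to compare on the paper's side: the proposition is stated there without proof, as a recall of known properties of $F$-polynomials used only to set up the recursive computations that follow. Your write-up supplies precisely the argument the paper omits, and the care you flag in the final paragraph (constructibility of the strata, regularity of $U\mapsto(U\cap L,\pi(U))$ on each stratum, additivity of $\chi$) is indeed where the technical content lies; all of it is standard over $\mathbb{C}$ for compactly supported Euler characteristic.
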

The recursive relations of cluster characters and $F$-polynomials are the key tools to find a computational recipe: the next section is devoted to pointing out this algorithm. All aspects will be clarified in {\bf Example \ref{ex:cluchar}}.
\subsubsection{Computing cluster characters}
\label{sec:char}
The best way to compute cluster characters, is to exploit the results in \cite{assem2012friezes}. The idea is to associate a Laurent polynomial to a path in the quiver. If the algebra is gentle, to a path we can associate a string module: computing the cluster character associated to string modules (up to an overall monomial factor) becomes a simple combinatorics problem. For any locally finite quiver $Q$, we define a family of
matrices with coefficients in $\mathbb Z[x_Q] = \mathbb Z[x_i|i \in Q_0]$ as follows.
For any arrow $\beta \in Q_1$, we set
$$A(\beta) :=\left[\begin{array}{cc}
x_{t(\beta)} & 0\\
1 & x_{s(\beta)}
\end{array}\right]
\ \ 
\ and \ \ A(\beta^{-1}) :=\left[\begin{array}{cc}
x_{t(\beta)} & 1\\
0 & x_{s(\beta)}
\end{array}\right].$$
Let $c = c_1 ... c_n$ be a walk of length $n \geq 1$ in $Q$. For any $i \in \{0, . . . , n\}$ we set
$$v_{i+1 }= t(c_i)$$
(still with the notation $c_0 = e_{s(c)}$) and
$$V_c(i): =
\left[\begin{array}{cc}
\prod_{\alpha \in Q_1(v_i,-), \alpha \neq c_i^{\pm 1},c_{i-1}^{\pm 1}}x_{t(\alpha)} & 0\\
0 & \prod_{\alpha \in Q_1(-,v_i), \alpha \neq c_i^{\pm 1},c_{i-1}^{\pm 1}}x_{s(\alpha)}
\end{array}\right].$$
We then set
$$L_c =\frac{1}{x_{v_1}... x_{v_{n+1}}}[1, 1]V_c(1) \prod_{i=1}^nA(c_i)V_c(i + 1) \left[\begin{array}{c}1\\ 1\end{array}\right]
\in \mathcal L (x_Q).$$
If $c = e_i$ is a walk of length 0 at a point $i$, we similarly set
$$V_{e_i}(1): =
\left[\begin{array}{cc}
\prod_{\alpha \in Q_1(v_i,-)}x_{t(\alpha)} & 0\\
0 & \prod_{\alpha \in Q_1(-,v_i)}x_{s(\alpha)}
\end{array}\right].$$
and
$$L_{e_i} =\frac{1}{x_i}[1, 1]V_{e_i}\left[\begin{array}{c}1\\ 1\end{array}\right] \in \mathcal L (x_Q).$$
In other words, if $c$ is any walk, either of length zero, or of the form $c = c_1 ... c_n,$
we have
$$L_c =\frac{1}{\prod^n_{i=0} x_{t(c_i)}}[1, 1]\prod^n_{i=0}A(c_i)V_c(i + 1)\left[\begin{array}{c}1\\ 1\end{array}\right] \in \mathcal L (x_Q).$$
with the convention that $A(c_0)$ is the identity matrix. In general, we have the following result:
$$X_M=\frac{1}{x^{n_M}}L_c,$$
where $M$ is the string module associated to the path $c$ and the monomial $x^{n_M}$ is the normalization coefficient.
\begin{example}
\label{ex:cluchar}
Let us consider the cluster category of $A_4$: its AR quiver is the following
\begin{figure}[H]
\includegraphics[width=0.8\textwidth]{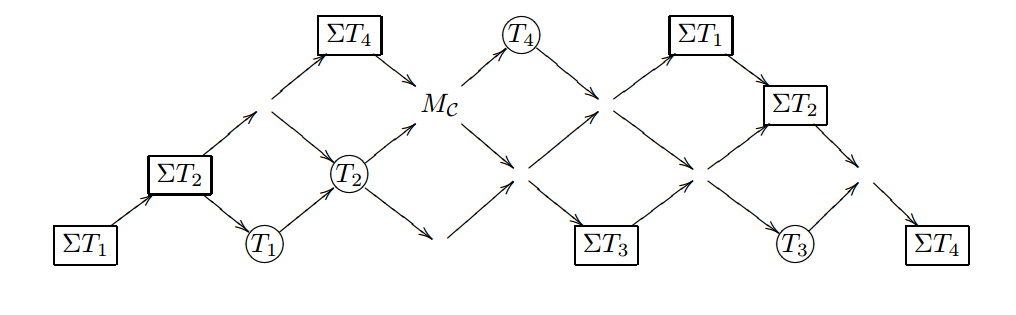}
\centering
\end{figure}
We have also made the choice of tilting objects. The algebra $\mathrm{End}\, T$ is given by the following quiver:\footnote{\ The vertices are the $T_i$ and the arrows $j \to i$ correspond to $\mathrm{Hom}_{\mathcal C}(T_i,T_j).$}
\begin{figure}[H]
\includegraphics[width=0.3\textwidth]{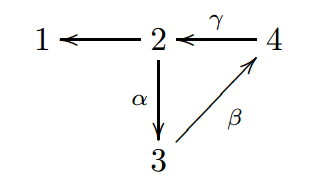}
\centering
\end{figure}
with relations $\beta \alpha = \gamma \beta = \alpha \gamma = 0.$ The Dirac form is the following:
$$\vev{-,-}_D=\left(
\begin{array}{cccc}
 0 & 1 & 0 & 0 \\
 -1 & 0 & -1 & 1 \\
 0 & 1 & 0 & -1 \\
 0 & -1 & 1 & 0 \\
\end{array}
\right),$$
whereas the Euler form is $\vev{a,b}:=\dim\mathrm{Hom}_\mathcal C(a,b)-\dim\mathrm{Hom}_\mathcal C(a,b[1])$.
Consider the B-module $F_TM=(1,1,0,0)$. Its submodules are $0, S_1$ and $F_TM$ itself. The corresponding path is just the arrow $c:2 \to 1$. By applying the formulas above we get:
\begin{eqnarray}
L_c&=&\frac{1}{x_1x_2}[1,1] A(c_0)\cdot V_c(1)\cdot A(c)\cdot V_c(2)\left[\begin{array}{c}1\\ 1\end{array}\right]\\
&=&\frac{1}{x_1x_2}[1,1] \mathrm{Id} \left[\begin{array}{cc}x_3& 0\\ 0& x_4 \end{array}\right] \left[\begin{array}{cc}x_1& 1\\ 0& x_2 \end{array}\right] \left[\begin{array}{cc}1& 0\\ 0& 1 \end{array}\right]  \left[\begin{array}{c}1\\ 1\end{array}\right]\\
&=&\frac{x_1x_3+x_4+x_2x_4}{x_1x_2}.
\end{eqnarray}
Notice that the denominator is exactly $x^{\dim FM}$: this is a general feature for the decategorification process \cite{dupont2011generic}.
Moreover, we know that the Euler characteristic of a point is 1 and thus $\chi(\mathrm{Gr}_0F_TM)=1=\chi(\mathrm{Gr}_{F_TM}F_TM)$. We then exploit the AR sequence $$0 \to S_1 \to F_TM \to S_2 \to 0$$ and get the recursive relation
$$F_{S_1}F_{S_2}=F_{F_TM}+y_2,$$
which is equivalent to the following polynomial equation:
$$1+y_1 \chi_{S_1}+y_2 \chi_{S_2}+y_1y_2 \chi_{S_1}\chi_{S_2}=1+\chi_{S_1}y_1+\chi_{FM}y_1y_2+y_2,$$
which implies that $\chi_{S_1}=\chi_{S_2}=1$ and this is consistent with the previous result. One can check this and many other computations using appendix \ref{app:mathe}.
\end{example}
\begin{remark}
One final remark is needed: we could have computed the cluster characters by a sequence of mutations of the standard seed of the cluster algebra associated to the quiver of $B=\mathrm{End}\,T$. For the non-commutative case, i.e. when $$x^\alpha x^\beta=q^{\frac{1}{2}\vev{\alpha,\beta}_D}\,x^{\alpha+\beta},$$
this procedure is the only one we know to compute quantum cluster characters. From the physics point of view, this is the important quantity: since cluster variables behave like UV line operators, they must satisfy the same non-commutative algebra.
\end{remark}

\subsection{Cluster characters and \textsc{vev}'s of UV line operators}
Let us start by recalling how the vacuum expectation values of line operators are computed in \cite{gaiotto2013framed}. The idea is associate to a loop over a punctured Gaiotto surface a product of matrices. In the case of irregular singularities, since these singularities can be understood as coming from a collision of punctures, loops can get pinched and become \emph{laminations}. Thus to both string modules (associated to \emph{laminations}) and band modules (associated to loops), we can associate a rational function in some shear variables $Y_i$. Their expression turns out to be equal to cluster characters: for string modules we can use section \ref{sec:char}, whereas for band modules we can use the bangle basis of \cite{musiker2013bases} and the multiplication formula or the Galois covering technique of \cite{cecotti2015galois}. We shall now give some detailed examples in which we apply what we just described.
\begin{example}[$A_2$ quiver]
The computations of \cite{gaiotto2013framed} of the \textsc{vev}'s of the UV line operators can be found in section 10.1. They have been made using the ``traffic rule''. The idea is to follow the lamination path and create a sequence of matrices according to the traffic rule. In the end, one takes the trace of the product of matrices (loop case) or contract the product of matrices with special vectors (open arcs case). For example,
\begin{figure}[H]
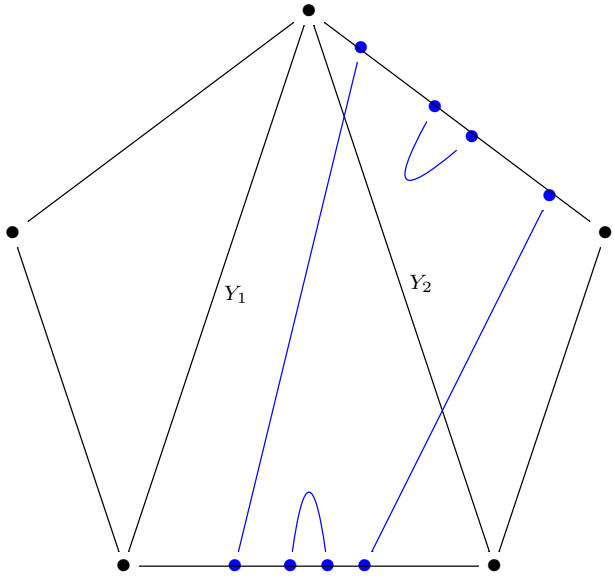

\centering
\resizebox{0.50\textwidth}{!}{\xygraph{
!{<0cm,0cm>;<0.5cm,0cm>:<0cm,0.5cm>::}
!{(0,10) }*+{\bullet}="b1"
!{(8,4) }*+{\bullet}="b3"
!{(-8,4) }*+{\bullet}="b2"
!{(-5,-5) }*+{\bullet}="b4"
!{(5,-5) }*+{\bullet}="b5"
!{(1.5,-5) }*+[blue]{\bullet}="l1"
!{(6.5,5) }*+[blue]{\bullet}="l2"
!{(1.4,9) }*+[blue]{\bullet}="l3"
!{(-2,-5) }*+[blue]{\bullet}="l4"
!{(0.5,-5) }*+[blue]{\bullet}="m1"
!{(-0.5,-5) }*+[blue]{\bullet}="m2"
!{(3.4,7.4) }*+[blue]{\bullet}="m3"
!{(4.4,6.6) }*+[blue]{\bullet}="m4"
"b1" - "b2" "b2" - "b4" "b4" - "b5" "b5" - "b3" "b3" - "b1" "b1"-"b4"^{Y_1} "b1" - "b5"^{Y_2}
"l1" -@[blue] "l2" "l3" -@[blue] "l4"
"m1" -@/_1cm/@[blue] "m2"
"m3" -@/_1cm/@[blue] "m4"
}}
\caption{The $L_1$ lamination of the $A_2$ theory.}
\end{figure}
The matrix product is the following:
$$\vev{L_1}=(B_R\cdot R\cdot M_{Y_2}\cdot L\cdot E_R) (B_R\cdot R\cdot M_{Y_2}\cdot L\cdot E_R),$$
where the matrices are
$$L=\left(\begin{array}{cc}
1 & 1\\
0 & 1
\end{array}\right)
, \ \ \  R=\left(\begin{array}{cc}
1 & 0\\
1 & 1
\end{array}\right)
, \ \ \  M_X=\left(\begin{array}{cc}
\sqrt{X} & 0\\
0 & 1/\sqrt{X}
\end{array}\right),
$$
and the vectors are
$$B_R=(1 \ 0),\ \ B_L=(0\ 1),\  \ E_L=(1 \ 0)^t, \ \ E_R=(0 \ 1).$$
Then we get:
$$R.M_{Y_2}.L=
\left(
\begin{array}{cc}
 \sqrt{Y_2} & \sqrt{Y_2} \\
 \sqrt{Y_2} & \sqrt{Y_2}+\frac{1}{\sqrt{Y_2}} \\
\end{array}
\right) ,$$
and finally
$$\vev{L_1}=\sqrt{Y_2}\sqrt{Y_2}=Y_2.$$
The other four line operators, corresponding to the four remaining indecomposable objects of the cluster category of $\mathcal{C}(\Gamma_{A_2})$ (or equivalently the remaining four cluster variables) are:
$$
\vev{L_2}= Y_1 + Y_2Y_1, \ \vev{L_3}=\frac{1}{Y_2}+\frac{Y_1}{Y_2}+ Y_1,\ \vev{L_4} = \frac{1}{Y_2}+\frac{1}{Y_2Y_1}, \ \vev{L_5} =\frac{1}{Y_1}.$$
On the other hand, the cluster characters of $A_2$ are:
$$x_1,\;x_2,\;\frac{1}{x_2}+\frac{x_1}{x_2},\;\frac{1}{x_1}+\frac{x_2}{x_1},\frac{1}{x_1x_2}+\frac{1}{x_1}+\frac{1}{x_2}.$$
The following map $(Y_1,Y_2) \mapsto (\frac{1}{x_2},x_1)$ transforms one set to the other. This map is the tropicalization map of Fock and Goncharov \cite{fock2005dual}:
$$Y_i=\prod_jx_j^{B_{ij}}.$$
Notice that this result was expected from the general algebraic properties of the line operators algebra and the cluster algebras.
\end{example}
We now proceed to a more interesting example: the pure $SU(2)$ theory. The computations of section \ref{sec:char} has to be modified a bit: as we will see, it is convenient to exploit the Galois covering techniques of \cite{cecotti2015galois}.
\begin{example}[Kronecker quiver]
Let us focus our attention to the non rigid modules, i.e. those belonging to the homogeneous tubes of the AR quiver of the cluster category $C(\Gamma_{Kr})$. There is a $\mathbb P^1$ family of these modules and, amongst them, two of them are string modules (those of the form $1\underset{1}{\overset{\lambda=0}{\Rightarrow}} 2$). By the theorems of \cite{dupont2011generic}, the cluster characters do not depend of the value of the parameter $\lambda$ and we are thus free to choose the simplest one to compute the character. Geometrically, this family of modules corresponds to loops around the cylinder (see figure \ref{fig:puresu2}). With the traffic rule techniques -- with the slight modification of taking the trace instead of using the $B$ and $E$ vectors -- we compute the \textsc{vev} of the line operator whose e.m. charge is $(1,1)$:
\begin{equation}
\vev{L_{(1,1)}}=\sqrt{Y_1Y_2}+\sqrt{\frac{Y_1}{Y_2}}+\frac{1}{\sqrt{Y_1Y_2}}.
\label{eq:yvars}
\end{equation}
We can reproduce this result using cluster characters. The only observation is that we cannot simply use section \ref{sec:char} to compute the character associated to the module $dim M=(1,1)$: we have a path ambiguity. We thus have to construct a $\Z_2$ Galois cover \cite{cecotti2015galois}, compute the character on the cover, and then project it down to the $Kr$ quiver. The reason is that the Kronecker quiver has a double arrow and we have to be able to specify the path we follow uniquely. On the $\Z_2$ cover the ambiguity is lifted and the character can be computed. The $\Z_2$ cover is:
\begin{displaymath}
    \xymatrix{  &  \bullet_3  & & & \bullet_{2'}  \\
            \bullet_1  \ar[ur]_{c} \ar[dr]&   & \bullet_2 \ar[ul] \ar[dl] & \overset{\pi}{\to} &   \\
 & \bullet_4 &  & & \bullet_{1'} \ar@{=>}[uu]}
\end{displaymath}
where the covering map $\pi$ sends $1,2 \mapsto 1^\prime$ and $3,4 \mapsto 2'$. The character corresponding to the string $c$ with respect to the covering quiver is
$$ 
\frac{1+x_1x_2+x_3x_4}{x_1x_3}
$$
Therefore, if we identify the cluster variables following the covering map $\pi$, we get
\begin{equation}
\frac{1+x_1'^2+x_2'^2}{x_1'x_2'}.
\label{eq:xvars}
\end{equation}
This result is consistent with what we find in literature (e.g. \cite{dupont2011generic}).
Also in this case, we find that the tropicalization map \footnote{\ \textit{Id est} $Y_i=\prod_jx_j^{B_{ij}}.$} sends the rational function \ref{eq:yvars} to \ref{eq:xvars}:
$$(Y_1,Y_2) \mapsto (x_2^{-2},x_1^2).$$
\end{example}
In this final example we show how to compute the cluster character associated to a band module in a more complicated quiver.
\begin{example}[$SU(2)$ with $N_f=4$]
We are interested in the module $M$ corresponding to the purple loop in the following figure
\begin{figure}[H]
\centering
\resizebox{0.52\textwidth}{!}{\xygraph{
!{<0cm,0cm>;<1.5cm,0cm>:<0cm,1.2cm>::}
!{(0,0) }*+{\bullet}="a"
!{(0,-2.5) }*+{\bullet}="b"
!{(2.5,0) }*+{\bullet}="c"
!{(2.5,-2.5)}*+{\bullet}="d"
!{(0,0.5)}*+{}="au"
!{(0,-0.5)}*+{}="ad"
!{(-0.5,0)}*+{}="al"
!{(2.5,-2)}*+{}="du"
!{(2.5,-3)}*+{}="dd"
!{(3,-2.5)}*+{}="dr"
!{(0,-1.25)}*+{}*[purple]\cir<60pt>{}="cr"
!{(1.25,-1.25) }*+{}*\cir<117pt>{}="center"
"a"-@[blue]"c"^{2} "b"-@[blue]"d"^{5} "b"-@[blue]"c"^{1} 
"c"-@/_3.5cm/@[blue]"b" ^{4}
"a" -@[blue] "b"^{3}
"c" -@[blue] "d"^{6}
}}
\end{figure}
Using the traffic rule, it is rather straightforward to compute the VEV of the line operator associated to the module $M$. The result -- which is similar to the ones computed in \cite{gaiotto2013framed} -- is:
\begin{equation}
\vev{L_M}=\mathrm{Tr}(L\cdot M_{Y_1}\cdot R\cdot M_{Y_2}\cdot R.M_{Y_4}\cdot L\cdot M_{Y_5})=\frac{1 + Y_4 + Y_2 Y_4 + Y_4 Y_5 + Y_2 Y_4 Y_5 + Y_1 Y_2 Y_4 Y_5}{\sqrt{{Y_1} {Y_2} {Y_4} {Y_5}}}
\label{eq:gaiot}
\end{equation}
The cluster character computation is more involved than the simple application of section \ref{sec:char}: we exploit the techniques of \cite{fock2005dual}. The idea is similar to the traffic rule: we find a path over the hexagonal graph of \cite{fock2005dual} that is homotopic to the path considered. Then, to each edge of the hexagonal graph we associate a matrix with the following rule:
\begin{figure}[H]
\centering
\includegraphics[width=0.8\textwidth]{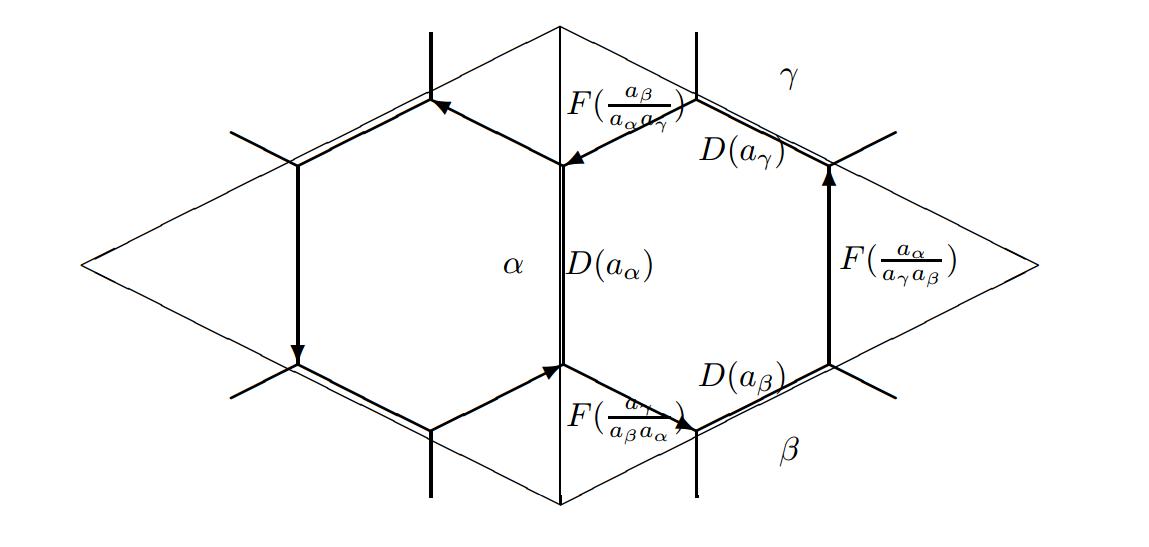}
\end{figure}
The matrices $D$ and $F$ are:
$$D(x)=\left(
\begin{array}{cc}
 0 & x \\
 -\frac{1}{x} & 0 \\
\end{array}
\right),\  \ \ F\left(\frac{x_\alpha}{x_\beta x_\gamma}\right)=\left(
\begin{array}{cc}
 1 & 0 \\
 \frac{x_\alpha}{x_\beta x_\gamma} & 1 \\
\end{array}
\right).$$
In our example we find:
$$\mathrm{Tr}\!\left(D(x_1)F(\frac{x_3}{x_1x_2})F(\frac{x_3}{x_4x_2})D(x_4)F^{-1}(\frac{x_6}{x_4x_5})F^{-1}(\frac{x_6}{x_1x_5})\right)$$
\begin{equation}
=\frac{{x_2} {x_5} {x_1}^2+{x_2} {x_5}{x_4}^2+{x_3} {x_6} {x_1^2}+{x_3} {x_6} {x_4}^2+2 x_1x_4{x_3} {x_6} }{{x_1} {x_2} {x_4} {x_5}}.
\label{eq:hexa}
\end{equation}
Again we can check that the result \eqref{eq:gaiot} is the tropicalization of \eqref{eq:hexa}.
\end{example}

\section*{Acknowlegments}
We have benefit from discussions with Michele Del Zotto, Dirk Kussin and Pierre-Guy Plamondon. SC thanks the Simons Center for Geometry and Physics, where this work was completed, for hospitality.

\appendix

\section{Code for cluster characters}
\label{app:mathe}
This is a short \textsc{Mathematica} code that computes the $L_c$ polynomials of section \ref{sec:char}. Up to an overall normalization factor, the $L_c$ polynomials are the cluster characters. The algorithm follows precisely the procedure described in section \ref{sec:char}.
\begin{verbatim}

(*set the incidence matrix of the cluster algebra*)
Dirac = {{0, 0, 1, 0, 0, 0, 0, 1}, {0, 0, 0, 1, 0, -1, 0, 0}, {-1, 0, 
   0, 1, 0, 0, 0, 0}, {0, -1, -1, 0, 0, 0, 0, 0}, {0, 0, 0, 0, 0, 
   0, -1, -1}, {0, 1, 0, 0, 0, 0, 1, 0}, {0, 0, 0, 0, 1, -1, 0, 
   0}, {-1, 0, 0, 0, 1, 0, 0, 0}}
(*the rows are the dimension vectors of the projectives*)
string = {1, 3, 4, 2};
(*=======================================================*)

Print["Pfaffian: ", Sqrt[Det[Dirac]] ];
AllArrows = Position[-Dirac, 1];
var = Table[ToExpression["x" <> ToString[i]], {i, 1, Length[Dirac]}];
(*indicare come stringa i vertici successivi raggiunti*)

arrows[string_] := 
  Table[{string[[i]], string[[i + 1]]}, {i, 1, Length[string] - 1}];
arrowsandinverse[str_] := Join[arrows[str], Reverse /@ (arrows[str])];
NoStringArrows = Complement[AllArrows, arrowsandinverse[string]];
Amat[ci_] := 
  If[MemberQ[AllArrows, 
    ci], {{var[[ ci[[2]] ]], 0}, {1, 
     var[[ ci[[1]] ]]}}, {{var[[ ci[[1]] ]], 1}, {0, 
     var[[ ci[[2]] ]]}}];
texp[str_, n_] := 
  Plus @@ (SparseArray[# -> 1, Length[Dirac]] & /@ (#[[2]] & /@ 
       Select[NoStringArrows, #[[1]] == str[[n]] &]));
sexp[str_, n_] := 
  Plus @@ (SparseArray[# -> 1, Length[Dirac]] & /@ (#[[1]] & /@ 
       Select[NoStringArrows, #[[2]] == str[[n]] &]));
Vmat[str_, 
   n_] := {{Times @@ (var^texp[str, n]), 0}, {0, 
    Times @@ (var^sexp[str, n])}};
char[str_] := 
 1/(Product[var[[ str[[i]] ]], {i, 1, Length[str]}]) ({1, 1}.Vmat[str,
      1].Dot @@ 
     Table[Amat[str[[i ;; i + 1]] ].Vmat[str, i + 1], {i, 1, 
       Length[str] - 1}].{1, 1})
charloop[str_] := 
 1/(Product[
     var[[ str[[i]] ]], {i, 1, Length[str]}]) Tr@(Vmat[str, 1].Dot @@ 
      Table[Amat[str[[i ;; i + 1]] ].Vmat[str, i + 1], {i, 1, 
        Length[str] - 1}])
(*example*)
Print["The cluster character corresponding to ", string, " is ", 
  If[Length[string] > 1 && string[[1]] == string[[Length[string]]], 
   Simplify[charloop[string]],
   Simplify[char[string]] ]];


\end{verbatim}
%
%
%
%
\section{Code for cluster automorphisms}
\label{app:mathemut}
This short \textsc{Mathematica} script is useful to find generators and relations for the automorphisms of the cluster exchange graph. The formulas used to implement the mutations for the exchange matrix $B_{ij}$ and the dimension vectors $d_l$ (where $l$ is an index that runs over the nodes) are the following:
\begin{align}
\mu_k(B)_{ij}&=\begin{cases}
- B_{ij}, & i=k \ \text{or }j=k\\
B_{ij}+\max[-B_{ik},0]\,B_{kj}+B_{ik}\,\max[B_{kj},0]\phantom{-----} & \text{otherwise.}
\end{cases}\\
\mu_k(d)_{l}&=\left\{
\begin{array}{lr}
d_{l}, & l\neq k\\
-d_{k}+\max\!\Big[\sum_i\max\!\big[B_{ik},0\big]d_{i},\sum_i\max\!\big[-B_{ik},0\big]d_i\Big] & l=k
\end{array}
\right.
\label{eq:dvecs}
\end{align} 
The procedure of this script is explained in section \ref{sec:algo}.

\begin{verbatim}

(*general functions*)
(*mutation b matrix*)
mub[b_, k_] := 
  Table[If[i == k || j == k, -b[[i, j]], 
    b[[i, j]] + Max[0, -b[[i, k]]] b[[k, j]] + 
     b[[i, k]] Max[0, b[[k, j]]]], {i, 1, Length[b]}, {j, 1, 
    Length[b]}];
(*mutation d-vectors*)
mud[d_, {b_, k_}] := 
 Table[If[l != k, 
   d[[l]], -d[[k]] + 
    Max /@ Transpose[{Sum[
        Max[b[[i, k]], 0] d[[i]], {i, 1, Length[d]}], 
       Sum[Max[-b[[i, k]], 0] d[[i]], {i, 1, Length[d]}]}]], {l, 1, 
   Length[d]}]
(*how b transforms after a sequence of mutations*)
mudseqb[seq_, b_] := 
  Thread[List[FoldList[mub, b, seq], Join[seq, {0}]]];
(*how a d-vector transforms after a sequence of mutations*)
mudseq[seqBmenoLast_, b_, d_] := Fold[mud, d, seqBmenoLast];
(*creating the permutation associated to a mutation sequence*)
PermD[seq_, b_, d_] := 
  FindPermutation[
   Plus @@ b + Sqrt[2] (Max /@ Transpose[b]) + 
    Sqrt[3] (Min /@ Transpose[b]), 
   Plus @@ Last[mudseqb[seq, b]][[1]] + 
    Sqrt[2] (Max /@ Transpose[Last[mudseqb[seq, b]][[1]] ]) + 
    Sqrt[3] (Min /@ Transpose[Last[mudseqb[seq, b]][[1]] ])];

(*composing different sequences*)
ComposizioneSequenzeConPermutazione[{seq2_, perm2_}, {seq1_, 
   perm1_}] := {Join[seq1, PermutationReplace[seq2, perm1]], 
  PermutationProduct[perm2, perm1]}
(*checking whether two b matrices are related by a permutation*)
EqualPermb[A_, B_] := 
  Expand[CharacteristicPolynomial[A, z]] == 
    Expand[CharacteristicPolynomial[B, z]] && (Sort[
      Plus @@ A + Sqrt[2] (Max /@ Transpose[A]) + 
       Sqrt[3] (Min /@ Transpose[A])] == 
     Sort[Plus @@ B + Sqrt[2] (Max /@ Transpose[B]) + 
       Sqrt[3] (Min /@ Transpose[B])]);
(*checking whether two d vectors are related by a permutation*)
EqualPermd[A_, B_] := Sort[A] == Sort[B];



(*checking the order of a sequence*)
OrdineNEW[randseqCONPerm_] := 
 Module[{ord = 0, index = 2, randseq1 = randseqCONPerm,
   randseqtemp = randseqCONPerm, b = b, d = d},
  If[EqualPermd[mudseq[Drop[mudseqb[randseq1[[1]], b], -1], b, d], d],
    ord = 1, ord = 0];
  While[ord == 0 && index <= 45, 
   If[EqualPermd[
      mudseq[Drop[
        mudseqb[ComposizioneSequenzeConPermutazione[randseqtemp, 
           randseq1][[1]], b], -1], b, d], d], ord = index; 
     randseq1 = 
      ComposizioneSequenzeConPermutazione[randseqtemp, randseq1];, 
     index++; 
     randseq1 = 
      ComposizioneSequenzeConPermutazione[randseqtemp, randseq1];];];
  ord]

(*checking the sl2Z relations for S and T generators*)
Relationssl2NEW[{Sconperm_, Tconperm_}] := 
  OrdineNEW[
     ComposizioneSequenzeConPermutazione[Sconperm, Tconperm]] == 6 && 
   EqualPermd[
    mudseq[Drop[
      mudseqb[ComposizioneSequenzeConPermutazione[Sconperm, 
         ComposizioneSequenzeConPermutazione[Sconperm, Tconperm]][[
        1]], b], -1], b, d],
    mudseq[
     Drop[mudseqb[
       ComposizioneSequenzeConPermutazione[Tconperm, 
         ComposizioneSequenzeConPermutazione[Sconperm, Sconperm]][[
        1]], b], -1], b, d] ] && 
   EqualPermd[
    mudseq[Drop[
      mudseqb[ComposizioneSequenzeConPermutazione[Sconperm, 
         ComposizioneSequenzeConPermutazione[Tconperm, 
          ComposizioneSequenzeConPermutazione[Tconperm, Tconperm]]][[
        1]], b], -1], b, d],
    mudseq[
     Drop[mudseqb[
       ComposizioneSequenzeConPermutazione[Tconperm, 
         ComposizioneSequenzeConPermutazione[Tconperm, 
          ComposizioneSequenzeConPermutazione[Tconperm, Sconperm]]][[
        1]], b], -1], b, d] ];

(*Example E7 MN*)
b = {{0, 3, -1, -1, -1, -1, -1, -1, -1}, {-3, 0, 1, 1, 1, 1, 1, 1, 
    1}, {1, -1, 0, 0, 0, 0, 0, 0, 0}, {1, -1, 0, 0, 0, 0, 0, 0, 
    0}, {1, -1, 0, 0, 0, 0, 0, 0, 0}, {1, -1, 0, 0, 0, 0, 0, 0, 
    0}, {1, -1, 0, 0, 0, 0, 0, 0, 0}, {1, -1, 0, 0, 0, 0, 0, 0, 
    0}, {1, -1, 0, 0, 0, 0, 0, 0, 0}};
d = -IdentityMatrix[9];




(*NEW Algorithm to find some generators and their order*)
Print[Dynamic[ii]];
MaxLength = 60;
MAX = 200000000;
ListAutomorph = {};
For[ii = 0, ii < MAX, ii++,
 length = RandomInteger[{1, MaxLength}];
 randseq = RandomInteger[{1, Length@b}, length];
 
 If[EqualPermb[Last[mudseqb[randseq, b]][[1]], b], index = 2; 
  randseq = {randseq, PermD[randseq, b, d]};
  randseqtemp = randseq;
  AppendTo[ListAutomorph, randseq];
  If[EqualPermd[mudseq[Drop[mudseqb[randseq[[1]], b], -1], b, d], d], 
   ord = 1, ord = 0]; 
  While[ord == 0 && index <= 19, 
   If[EqualPermd[
      mudseq[Drop[
        mudseqb[ComposizioneSequenzeConPermutazione[randseqtemp, 
           randseq][[1]], b], -1], b, d], d], ord = index; 
     randseq = 
      ComposizioneSequenzeConPermutazione[randseqtemp, randseq];, 
     index++; 
     randseq = 
      ComposizioneSequenzeConPermutazione[randseqtemp, randseq]; 
     If[! EqualPermb[Last[mudseqb[randseq[[1]], b]][[1]], b], 
      Print["Failed: "]; index = 10000;];];]; 
  If[ord != 1 , Print["Order: ", ord, "  ; Sequence: ", randseqtemp];]
  ];]

\end{verbatim}
\section{Weyl group of $E_6$}
\label{app:weyl}
With this short \emph{Mathematica} script, we explicitly construct the Weyl group of $E_6$ over the basis of simple roots.
\begin{verbatim}
n = 6;
Projectives = {{1, 1, 1, 0, 0, 0}, {0, 1, 1, 0, 0, 0}, {0, 0, 1, 0, 0,
     0}, {0, 0, 1, 1, 0, 0}, {0, 0, 1, 0, 1, 0}, {0, 0, 1, 0, 1, 1}};
(*Cartan Matrix*)
Cartan = Inverse[Projectives] + Transpose[Inverse[Projectives]];
SimpleRoots = IdentityMatrix[n];
SimpleWeylGroup = 
  Join[Table[
    IdentityMatrix[n] - 
     KroneckerProduct[SimpleRoots[[i]], SimpleRoots[[i]]].Cartan, {i, 
     1, n}], {IdentityMatrix[6]}];
sr1 = SimpleWeylGroup[[1]];
sr3 = SimpleWeylGroup[[2]];
sr3 = SimpleWeylGroup[[3]];
sr4 = SimpleWeylGroup[[4]];
sr5 = SimpleWeylGroup[[5]];
sr6 = SimpleWeylGroup[[6]];
WeylGroup = 
  FixedPoint[
   DeleteDuplicates@
     Partition[
      Partition[Flatten[Outer[Dot, SimpleWeylGroup, #, 1]], 6], 
      6] &, {IdentityMatrix[6]}, 36];
Print["Order of the group: "]
Length@WeylGroup
Print["Order of the elements: "]
MatrixOrder[M_, i0_] := 
 Module[{i = i0, Mat = M}, 
  While[MatrixPower[Mat, i] != IdentityMatrix[n], i++]; i]
DeleteDuplicates[MatrixOrder[#, 1] & /@ WeylGroup]
\end{verbatim}
We directly checked that the longest elements has length $36$, that the order of the Weyl group is
$$51840=2^7 3^45$$
and the order of each element belongs to this set:
$$\{1,3,2,5,4,6,12,8,10,9\}.$$


\end{document}